\documentclass[11pt,a4paper]{article}

\usepackage[T1]{fontenc}
\usepackage[utf8]{inputenc}
\usepackage{lmodern}
\usepackage[margin=1.15in]{geometry}
\usepackage{microtype}
\usepackage{amsmath,amssymb,amsfonts,amsthm,mathtools}
\usepackage{mathrsfs}
\usepackage{bbold}
\usepackage{stmaryrd}
\usepackage{physics}
\usepackage{tikz-cd}
\usepackage{prftree}
\usepackage{enumitem}
\usepackage{xcolor}
\usepackage{xparse}
\usepackage{csquotes}
\usepackage[round,authoryear]{natbib}
\usepackage[colorlinks=true,citecolor=blue,linkcolor=blue,urlcolor=blue]{hyperref}
\usepackage[nameinlink,capitalize]{cleveref}

\allowdisplaybreaks
\newtheorem{theorem}{Theorem}[section]
\newtheorem{proposition}[theorem]{Proposition}
\newtheorem{lemma}[theorem]{Lemma}

\theoremstyle{definition}
\newtheorem{definition}[theorem]{Definition}

\theoremstyle{remark}

\ExplSyntaxOn
\NewDocumentCommand{\maybeopsup}{m}{
  \tl_if_blank:nF {#1}
    {^{\tl_if_eq:nnTF {#1} {o} {\mathrm{op}} {#1}}}
}
\NewDocumentCommand{\maybebracket}{m}{
  \tl_if_blank:nF {#1}{\!\left[#1\right]}
}
\ExplSyntaxOff

\DeclareDocumentCommand{\Set}{O{}}{\mathbf{Set}\maybeopsup{#1}}
\DeclareDocumentCommand{\Cat}{O{}}{\mathbf{Cat}\maybeopsup{#1}}
\DeclareDocumentCommand{\Vect}{O{}}{\mathbf{Vect}\maybeopsup{#1}}
\DeclareDocumentCommand{\R}{O{}}{\mathbb{R}\maybeopsup{#1}}
\providecommand{\N}{\mathbb{N}}
\providecommand{\Id}{\mathrm{Id}}
\providecommand{\id}{\mathrm{id}}

\providecommand{\const}{\mathrm{const}}
\providecommand{\AST}{\mathsf{AST}}
\providecommand{\Syn}{\mathsf{Syn}}
\providecommand{\Fam}[1]{\mathsf{Fam}(#1)}
\providecommand{\TE}{T^{\ast}\E}
\DeclareDocumentCommand{\TR}{O{}}{T^{\ast}\R\maybeopsup{#1}}
\providecommand{\E}{\mathbb{E}}
\providecommand{\blankdash}{\mathord{-}}
\DeclareDocumentCommand{\srcreal}{O{}}{\mathsf{real}\maybeopsup{#1}}
\providecommand{\srcop}{\mathsf{op}}
\providecommand{\tgtop}{\mathsf{op}}
\providecommand{\casubst}[3]{#1[#3/#2]}
\providecommand{\casest}[5]{\mathsf{case}\;#1\;\mathsf{of}\;\mathsf{inl}(#2)\Rightarrow #3\;|\;\mathsf{inr}(#4)\Rightarrow #5}
\providecommand{\letst}[3]{\mathsf{let}\;#1=#2\;\mathsf{in}\;#3}
\providecommand{\supp}{\operatorname{supp}}
\providecommand{\base}[1]{\left(#1\right)_{\mathrm{set}}}
\providecommand{\fibr}[1]{\left(#1\right)_{\mathrm{fbr}}}
\providecommand{\dompt}[1]{\left(#1\right)_{\mathrm{dom}}}
\providecommand{\codpt}[1]{\left(#1\right)_{\mathrm{cod}}}
\providecommand{\indicator}[1]{\mathbb{1}_{#1}}
\providecommand{\st}{\;\middle|\;}
\providecommand{\lt}{<}
\providecommand{\gt}{>}
\providecommand{\lintype}[1]{\underline{#1}}
\DeclareDocumentCommand{\chad}{O{}}{\mathsf{CHAD}\maybebracket{#1}}
\providecommand{\setint}[1]{\llbracket #1\rrbracket_{\Set}}
\providecommand{\genint}[1]{\llbracket #1\rrbracket}
\providecommand{\lrint}[1]{\llbracket #1\rrbracket_{\cat[]{LR}}}
\providecommand{\fvoint}[1]{\llbracket #1\rrbracket_{\cat[]{F}}}
\providecommand{\linint}[2]{\llbracket #1\rrbracket_{#2}^{\mathrm{lin}}}
\providecommand{\linto}{\multimap}
\providecommand{\MFin}{\mathsf{MFin}}
\DeclareDocumentCommand{\LSyn}{O{}}{\mathsf{LSyn}\maybebracket{#1}}
\DeclareDocumentCommand{\CSyn}{O{}}{\mathsf{CSyn}\maybeopsup{#1}}
\providecommand{\Tgt}{\mathsf{Tgt}}

\NewDocumentCommand{\cat}{O{} m}{\mathsf{#2}}

\NewDocumentCommand{\opn}{O{} m}{\operatorname{#2}}
\NewDocumentCommand{\fn}{O{} m m}{#2\!\left(#3\right)}
\NewDocumentCommand{\Fun}{O{} m m}{#2\!\left(#3\right)}
\NewDocumentCommand{\Nat}{O{} m m}{\left(#2\right)_{#3}}
\NewDocumentCommand{\tuple}{m}{\left\langle #1\right\rangle}
\NewDocumentCommand{\tupling}{m}{\left\langle #1\right\rangle}
\NewDocumentCommand{\cotupling}{m}{\left[ #1\right]}
\NewDocumentCommand{\inthom}{m m}{\left[#1,#2\right]}
\NewDocumentCommand{\commacat}{m m}{\left(#1\downarrow #2\right)}
\RenewDocumentCommand{\abs}{m}{\left|#1\right|}
\DeclareDocumentCommand{\opmor}{O{} m}{#2^{\mathrm{op}}}
\NewDocumentCommand{\casenst}{m m m m m}{\operatorname{case}^{#3}_{#2}\left(#1;#4.#5\right)}
\NewDocumentCommand{\caseprod}{m m m}{\operatorname{caseprod}\left(#1;#2.#3\right)}
\NewDocumentCommand{\linapp}{m m}{#1\, #2}

\providecommand{\categ}{\operatorname{categorical}}
\providecommand{\iwrt}{\int}
\providecommand{\ws}{\operatorname{ws}}
\providecommand{\bary}{\operatorname{bary}}
\providecommand{\cont}{\mathsf{K}}
\providecommand{\runit}{1_R}
\providecommand{\rcdot}{\cdot_R}
\providecommand{\vplus}{+_V}
\providecommand{\vzero}{0_V}
\providecommand{\vrdot}{\cdot_R^V}

\providecommand{\str}{\operatorname{st}}
\providecommand{\vect}[1]{\mathbf{#1}}

\title{Backpropagation for Effectful Languages I: Finite Probability and Discrete Output Algebraic Effects}
\author{Diogo Simm\\
  Utrecht University, The Netherlands\\
  \texttt{d.l.simmsallesvianna@uu.nl}
  \and
  Fernando Lucatelli Nunes\\
  Centre for Mathematics, University of Coimbra, Portugal\\
  \texttt{fln@uc.pt}
  \and
  Matthijs V\'ak\'ar\\
  Utrecht University, The Netherlands\\
  \texttt{m.i.l.vakar@uu.nl}}
\date{}

\begin{document}

\maketitle

\begin{abstract}
We analyse \textit{reverse-mode} automatic differentiation (AD) for discrete probabilistic programs. Our construction is formulated in the framework of Combinatory Homomorphic Automatic Differentiation (CHAD), treating AD as a structure-preserving transformation of programs, guided by a denotational semantics.

	The main case study is the finite atomic distribution monad, whose computations have finite support and differentiable weights. The key point is that differentiating probabilistic programs requires cotangents to flow backwards not only through deterministic computations, but also through the probabilistic structure itself. We define the corresponding reverse-mode code transformation and prove its correctness, for handled real-output programs, by a categorical logical-relations argument.

	Although the paper focuses on finite discrete probability, the construction gives a reusable pattern for differentiating discrete-output algebraic effects, including finite multiset non-determinism (e.g., from fork-join parallelism), exceptions, and writer-style accumulation (e.g., for in-place accumulation of high-dimensional vectors). More broadly, we view this work as a foundational step towards extending CHAD to richer probabilistic languages and to other algebraic effects with handlers.
\end{abstract}

\paragraph{Keywords.} reverse-mode automatic differentiation, probabilistic programming, algebraic effects, categorical semantics.

\section{Introduction}

Probabilistic programming languages provide constructs for specifying
probabilistic models directly and for manipulating their statistical
properties within the language. Their promise is to make common tasks in
probabilistic data analysis, such as sampling, inference, and the optimisation
of expected objectives, more compositional and closer to the structure of the
models themselves.

Reverse-mode automatic differentiation (AD), often called
backpropagation, underlies much of modern optimisation and machine learning.
Its practical importance comes from the need to compute gradients of scalar or
low-dimensional objectives with respect to large numbers of parameters
\cite{baydinAutomaticDifferentiationMachine2018}.
Compared to the simpler forward-mode of AD, reverse-mode AD inverts data flow:
forward-mode propagates
sensitivities along the computation, whereas reverse-mode propagates and
accumulates cotangents against the direction of computation. This makes the
design and semantics of reverse-mode automatic differentiation substantially
more delicate.

Combinatory Homomorphic Automatic Differentiation (CHAD) gives a particularly
clean categorical account of reverse-mode AD. In the CHAD framework, automatic
differentiation is not specified as an \emph{ad hoc} syntactic traversal, but
as a semantics-driven program transformation: a structure-preserving map out of
the syntactic categorical model of the source language
\cite{lucatellinunesCHADExpressiveTotal2023,
	lucatellinunesUnravelingIterativeCHAD2025}. This account is not merely
explanatory, but it guides us in how to differentiate more complex forms of computation.
Once the relevant semantic structure has been
identified, the corresponding program transformation is obtained by following
the universal structure of the language. In the deterministic setting, this has
led both to correctness proofs by categorical logical relations and to efficient
implementations via familiar functional-programming techniques, such as
sparse representations and defunctionalisation or closure
conversion \cite{smedingEfficientCHAD2024}.

The goal of the present paper is to explain how the reverse-mode CHAD transformation
applies to programs with finite-support probabilistic effects and differentiable
weights. In this impure setting, backpropagation is not only a matter of propagating
cotangents through the deterministic input-output behaviour of a program. The
weights of probabilistic traces, and more generally the effectful structure of
 computations, become part of the reverse derivative communication
channel. We formulate this using monadic types and handlers: monadic operations
generate probabilistic computations, while handlers consume them, for instance
by taking expected values.

There is already foundational work on forward-mode automatic differentiation
for probabilistic programming, notably ADEV
\cite{lewADEVSoundAutomatic2023}. The present paper addresses the corresponding
reverse-mode problem. The semantic challenges of the two modes are different:
forward-mode propagates sensitivities with the computation, whereas reverse
mode must accumulate cotangent contributions backwards through both ordinary
program structure and probabilistic weights.
Our contribution is to give a
compositional reverse-mode transformation for this finite probabilistic
setting, formulated using monadic types and handlers, and to prove its
correctness by a categorical logical-relations argument. The CHAD framework is
essential here: it derives the transformation from the semantic structure
rather than retrofitting a correctness proof after the fact, while keeping
the construction close to implementation.

\subsection{Motivating example: expected-profit optimisation under finite probabilistic choice}
\label{sec:motivating-example}

To make the problem concrete, we begin with a small finite probabilistic
optimisation example. The model is deliberately simple, but it contains the
features that drive the paper: parameter-dependent probabilistic weights,
deterministic arithmetic, and a handler that consumes a probabilistic
computation by taking an expected value. Reverse-mode AD for such a program
must propagate cotangents through all of these components, not only through the
deterministic arithmetic. We first describe the model mathematically, and later
return to its encoding in the source language and to the corresponding code
transformation (Section~\ref{sec:returning-to-the-example}).

We consider the problem of introducing a product to a population. To obtain a
finite probabilistic model, we make the simplifying assumption that the
population is highly trend-following. There are then two possible outcomes:
either the trend does not catch on and no one buys the product, or it catches
on and everyone who can afford the product buys one.

The parameters of the model are the per-product profit \(p\) and the marketing
budget \(m\). Marketing influences the probability of the successful-trend
scenario, while the per-product profit determines how many people can purchase
the product if that scenario occurs.

In symbols, the probability of zero total sales is
\[
\fn[n]{\opn{sig}}{-s_m\cdot(m-m_0)},
\]
where \( \fn[n]{\opn{sig}}{x}=(1+e^{-x})^{-1} \) is the sigmoid function and
\(s_m,m_0\in\R\) are constants: \(s_m\) represents the model's
sensitivity to marketing, while \(m_0\) is the marketing budget at which the
success and failure scenarios are equally likely.
If, instead, the campaign is successful, the number of purchases is
\[
P\cdot \fn[n]{\opn{sig}}{-s_p\cdot(p-p_0)},
\]
where \(P\) is the total population, \(s_p\) is a known
constant controlling the sensitivity to changes in \(p\), and \(p_0\) is the
per-product profit at which the product is affordable to half the population.
Thus, by reducing the per-product profit, a larger portion of the population is
able to afford the product.

The expected total profit, which we intend to maximise, is therefore
\begin{equation*}
	\fn[n]{\opn{ETP}}{p,m}
	=
	p \cdot
	\underbrace{
		P
		\cdot \fn[n]{\opn{sig}}{-s_p\cdot(p-p_0)}
		\cdot \fn[n]{\opn{sig}}{s_m\cdot(m-m_0)}
	}_{\text{expected number of sales}}
	- m .
\end{equation*}
The stationarity conditions obtained by differentiating this expected objective
are a pair of coupled transcendental equations. Thus, even in this simple
example, optimisation is naturally performed numerically, for instance by
gradient descent.

\subsection{Contributions and Outline}

The example above illustrates the central semantic issue addressed in this
paper: probabilistic programs carry differentiable information not only through
their deterministic computations, but also through the effectful structure by
which probabilistic computations are generated and handled. The body of the
paper addresses this issue as follows.

First, we give a factorisation-based account of the finite atomic distribution
(FAD) monad. In \( \Set \), a FAD is a finitely supported nonnegative weight
function; we reconstruct it by factorising the weighted-sum strong monad
morphism induced by the barycentre algebra on lists of point/log-weight pairs
(Section~\ref{sec:barycentre-and-weighted-sums}). This is the bridge to the
differentiated semantics: in \( \Fam{\Vect[o]} \), a distribution \(w\) on a
space \(X\) with cotangent spaces \(A(x)\) has fibre
\[\bigoplus_{x\in\supp(w)}(A(x)\oplus\R),\] whose summands record cotangents
for the atom \(x\) and for its log-weight. Thus the lifted FAD monad captures
movement of atoms and variation of weights
(Sections~\ref{sec:generalising-the-set-procedure}
and~\ref{sec:fad-monad-on-fam-vectop}).

Second, we use this lift to extend reverse-mode CHAD to a first-order
probabilistic language with products, coproducts, Euclidean base types
\(\R[n]\), differentiable primitives, the monadic type constructor
\(\opn[b]{M}\), categorical distributions, and expectation-like handler
algebras. Its ordinary semantics lives in \( \Set \), its differentiated
semantics in \( \Fam{\Vect[o]} \), and the transformation is the induced
structure-preserving translation into a target language for derivatives.

Third, we prove correctness by logical relations. We construct a category
\(\cat[]{LR}\) relating the ordinary \( \Set \)-semantics of source programs to
the \( \Fam{\Vect[o]} \)-semantics of their differentiated programs. Its
projections recover, respectively, the ordinary source denotation and the CHAD
denotation. The resulting correctness theorem
(Theorem~\ref{th:correctness-theorem}) says that, for real-input/real-output
terms, the transformed program computes the original value and its reverse
derivative.

Fourth, we encode the motivating example in the source language using
categorical distributions, monadic composition, and an expectation handler. The
transformed program exhibits the intended cotangent flow through deterministic
arithmetic, atom locations, and log-weights.

Finally, although the paper focuses on finite discrete probability, the
construction also identifies a reusable pattern for discrete-output monadic
effects. Section~\ref{sec:beyond-probability} explains this general pattern,
and Appendix~\ref{sec:appendix-additional-effects} spells it out for finite
multiset non-determinism, exceptions, and \( \R[n] \)-writer accumulation.
As such, this paper makes important steps
towards both richer probabilistic-differentiable languages and towards general differentiable programming with algebraic effects and handlers.

\section{The Finite Atomic Distribution Monad}

The probabilistic effect studied in the main body of the paper is represented
by the \emph{finite atomic distribution} (\emph{FAD}) monad. For a set \(X\),
its elements are finitely supported, nonnegative weight distributions on \(X\).
Operationally, this is the monad of finite probabilistic computations: monadic
composition sums the weights of finite execution traces, while handlers, such
as expectation, consume the resulting weighted distribution. For reverse-mode
CHAD, however, the ordinary set-level monad is not enough. We need a lifted
version that also records how cotangents propagate through the probabilistic
structure itself, including both the locations of atoms and their weights.

This section develops the FAD monad in the form needed later in the paper. We
begin in \( \Set \), recalling its familiar description as the monad of
finitely supported nonnegative weight functions. We then give an equivalent
factorisation presentation in \( \Set \): lists of point/log-weight pairs serve
as raw weighted syntax, their weighted-sum semantics gives a strong monad
morphism, and factorising this morphism identifies precisely the presentations
that determine the same finite atomic distribution. The following subsection packages these data as a FAD factorisation
structure. This structure is then instantiated in \( \Fam{\Vect[o]} \), where
the associated FAD monad carries cotangent information for atoms and
log-weights, and in \( \cat[]{LR} \), where the same construction is used in
the logical-relations correctness argument.

\subsection{FAD Monad on \texorpdfstring{\( \Set \)}{Set}}
\label{sec:fad-monad-on-set}

We first recall the direct set-level description of the FAD monad. This is the
familiar finite-support distribution monad on \( \Set \), used here in its
unnormalised weighted form; it appears both in the probabilistic-programming
literature \cite{erwigProbabilisticFunctionalProgramming2006} and in
implementations \cite{erwigProbabilisticFunctionalProgramming2025}.

We use standard categorical notation: products and coproducts are written
\(1\), \(X_1\times X_2\), \( \prod_{i\in I}X_i \), and
\(0\), \(X_1\amalg X_2\), \( \coprod_{i\in I}X_i \), with projections
\(\pi_i\), inclusions \(\iota_i\), tuplings \( \tupling{f_i}_{i\in I} \), and
cotuplings \( \cotupling{f_i}_{i\in I} \). Exponentials are denoted by
\( \inthom{X}{Y} \) or by \( Y^{X} \).

A finite atomic distribution (FAD) on a set \(X\) is a finitely supported
nonnegative function \(w:X\to\R\). Explicitly, we define a FAD as the following notion of \emph{unnormalised weight distribution}:
\[
\fn[n]{w}{x}\geq 0 \ \text{for all }x\in X,
\qquad
\abs{\supp w}<\infty,
\qquad
\supp w:=\{x\in X\mid \fn[n]{w}{x}\neq 0\}.
\]
The set \( \supp w \) is the \emph{support} of \(w\); its elements are the
\emph{atoms} of the distribution, and \( \fn[n]{w}{x} \) is the
\emph{weight} of the atom \(x\). We write \( \Fun[nn]{M}{X} \) for the set of
finite atomic distributions on \(X\).

The functorial action pushes weights forward. For \(f:X\to Y\), define
\( \Fun[nn]{M}{f}:\Fun[nn]{M}{X}\to\Fun[nn]{M}{Y} \) by
\[
\fn[n]{\Fun[nn]{M}{f}}{w}(y)
:=
\sum_{\substack{x\in\supp w\\ \fn[n]{f}{x}=y}}
\fn[n]{w}{x}.
\]
The unit \( \Nat[nn]{\eta}{X}:X\to\Fun[nn]{M}{X} \) sends each point to its
Dirac distribution, that is,
\( \fn[n]{\Nat[nn]{\eta}{X}}{x}(x')=1 \) if \(x=x'\), and
\( \fn[n]{\Nat[nn]{\eta}{X}}{x}(x')=0 \) otherwise. The multiplication
\( \Nat[nn]{\mu}{X}:\Fun[nn]{M^2}{X}\to\Fun[nn]{M}{X} \), where
\(M^2:=M\circ M\), flattens a FAD of FADs by summing all contributions to each
point:
\[
\fn[n]{\Nat[nn]{\mu}{X}}{\omega}(x)
:=
\sum_{\substack{w\in\supp\omega\\ x\in\supp w}}
\fn[n]{\omega}{w}\cdot\fn[n]{w}{x}.
\]
The maps \( \eta:\Id_{\Set}\to M \) and \( \mu:M^2\to M \) are natural and
equip \(M\) with a monad structure.

This completes the direct description of the FAD monad on \( \Set \). We next
give a second presentation of the same monad, not by changing the object being
defined, but by changing how it is obtained: finite atomic distributions are
recovered as the image of a weighted-sum map from ordered point/log-weight
presentations. This factorisation presentation is the one that admits the
axiomatic formulation used in the rest of the section.

\subsection{Recovering FAD by Weighted-Sum Factorisation}
\label{sec:barycentre-and-weighted-sums}

The preceding subsection described \( \Fun[nn]{M}{X} \) directly as the set of
finitely supported nonnegative weight functions on \(X\). We now recover the
same object as the image of a weighted-sum transformation. The domain of this
transformation is the ordered-list representation of weighted finite support:
a list of point/log-weight pairs over \(X\) acts on a test function
\(f:X\to\R\) by evaluating \(f\) at the listed points and summing with weights
\( \exp\gamma_j \). Its image factorisation has \( \Fun[nn]{M}{X} \) as the
middle object: the left factor forms the corresponding finite atomic
distribution by adding the weights of coincident points, while the right
factor embeds such a distribution as its integration functional. This is the
calculation in \( \Set \) whose categorical content is isolated in the next
subsection.

We begin with the real-valued case, which provides the barycentre algebra used
below. For a list of point/log-weight pairs
\(
\tuple{
	\tuple{t_{1}, \gamma_{1}},
	\cdots,
	\tuple{t_{n}, \gamma_{n}}
}
\in \coprod_{n \in \N} \left( \R \times \R \right)^{n}\),
where \( t_{j},\gamma_{j}\in\R \), define its unnormalised barycentre by\footnote{A normalised barycentre would require division by the total mass. Here we deliberately omit this normalisation.}
\begin{equation}
	\label{eq:barycentre-in-set}
	\fn[n]{ \bary }{
		\tuple{
			\tuple{t_{j}, \gamma_{j}}
		}_{j = 1}^{n}
	} :=
	\sum_{j = 1}^{n} t_{j} \cdot \exp \gamma_{j}.
\end{equation}
Here \( \gamma_{j} \) is a log-weight, and \( \exp\gamma_{j} \) is the
corresponding positive weight assigned to \(t_j\). If the list contains
\emph{collisions}, meaning that \( t_{i}=t_{j}=:t \) for distinct
\( i,j\in\{1,\cdots,n\} \), then the total weight assigned to \(t\) is the sum
of \( \exp\gamma_{k} \) over all \(k\) such that \(t_k=t\).

For an arbitrary set \(X\), we write
\[
\Fun[nn]{ \AST }{ X }
:=
\coprod_{n \in \N} \left( X \times \R \right)^{n}
\]
for the set of lists of point/log-weight pairs over \(X\). Given a function
\( f:X\to\R \), define
\begin{equation}
	\label{eq:definition-ast-in-set}
	\begin{aligned}
		\Fun[nn]{ \AST }{ f }:
		\Fun[nn]{ \AST }{ X } &\to \Fun[nn]{ \AST }{ \R };\qquad
		\tuple{\tuple{x_{j}, \gamma_{j}}}_{j = 1}^{n}
		\mapsto
		\tuple{\tuple{\fn[n]{ f }{ x_{j} }, \gamma_{j}}}_{j = 1}^{n}.
	\end{aligned}
\end{equation}
Here \( \AST \) stands for \emph{abstract syntax tree}, for reasons explained
in Section~\ref{sec:generalising-the-set-procedure}. Thus
\( \Fun[nn]{ \AST }{ f } \) acts on a raw weighted presentation by applying
\(f\) to its points and leaving the log-weights unchanged. Composing with the
barycentre gives the weighted-sum map \( \Nat[nn]{ \ws }{ X } \):
\begin{equation}
	\label{eq:weighted-sum-definition-in-set}
	\begin{aligned}
		\Nat[nn]{ \ws }{ X }:
		\Fun[nn]{ \AST }{ X }
		&\to \inthom{\inthom{X}{\R}}{\R}
		\\
		\tuple{\tuple{x_{j}, \gamma_{j}}}_{j = 1}^{n}
		&\mapsto
		\left(
		f \mapsto  \fn[p]{ \bary \circ \Fun[nn]{ \AST }{ f } }{
			\tuple{\tuple{x_{j}, \gamma_{j}}}_{j = 1}^{n}
		}
		=  f \mapsto \sum_{j = 1}^{n} \fn[n]{ f }{ x_{j} } \cdot \exp \gamma_{j}
		\right).
	\end{aligned}
\end{equation}

We factorise \( \Nat[nn]{ \ws }{ X } \) through
\( \Fun[nn]{ M }{ X } \). Define maps
\( \Nat[nn]{ \categ }{ X } \) and \( \Nat[nn]{ \iwrt }{ X } \) such that
\begin{equation*}
	\Nat[nn]{ \ws }{ X } = \Nat[nn]{ \iwrt }{ X } \circ \Nat[nn]{ \categ }{ X }:
	\Fun[nn]{ \AST }{ X }
	\to
	\Fun[nn]{ M }{ X }
	\to
	\inthom{\inthom{X}{\R}}{\R}.
\end{equation*}
The left factor \( \Nat[nn]{ \categ }{ X } \) sends a raw
presentation to the corresponding finite atomic distribution, accumulating the
weights of all occurrences of each point. For any
\( \tuple{\tuple{x_{j}, \gamma_{j}}}_{j = 1}^{n} \in \Fun[nn]{ \AST }{ X } \),
define
\begin{equation}
	\label{eq:categorical-definition-in-set}
	\fn[n]{ \Nat[nn]{ \categ }{ X } }{
		\tuple{\tuple{x_{j}, \gamma_{j}}}_{j = 1}^{n}
	}:
	X \to \R
	\qq{s.t.}
	x \mapsto \sum_{\substack{1 \le j \le n \\ x_{j} = x}} \exp \gamma_{j}.
\end{equation}
Its support is exactly \( \{x_{1},\cdots,x_{n}\} \), since the displayed sum
is zero precisely when it is empty. The name \( \categ \) reflects the standard
probability theory nomenclature of finitely supported weight distributions.
\( \Nat[nn]{ \categ }{ X } \) is surjective, but not injective; it forgets about ordering and collisions.

The right factor \( \Nat[nn]{ \iwrt }{ X } \) sends a finite atomic
distribution to the corresponding integration functional on test functions.
For \( w \in \Fun[nn]{ M }{ X } \), define
\begin{equation}
	\label{eq:iwrt-definition-in-set}
	\fn[n]{ \Nat[nn]{ \iwrt }{ X } }{ w }:
	\inthom{X}{\R} \to \R
	\qq{s.t.}
	f \mapsto \sum_{x \in \supp w} \fn[n]{ f }{ x } \cdot \fn[n]{ w }{ x }.
\end{equation}
This map is injective because indicator functions separate finite atomic
distributions: for each \(x\in X\),
\(
\fn[p]{ \fn[n]{ \Nat[nn]{ \iwrt }{ X } }{ w } }{
	\indicator{\left\{ x \right\}}
}
=
\fn[n]{w}{x}.
\)
Hence, for \( w,w'\in\Fun[nn]{M}{X} \),
\( \fn[n]{ \Nat[nn]{ \iwrt }{ X } }{ w }
=
\fn[n]{ \Nat[nn]{ \iwrt }{ X } }{ w' } \) if and only if \( w=w' \).

The composite has the required weighted-sum semantics. Indeed, for any list
\( \tuple{\tuple{x_{j}, \gamma_{j}}}_{j = 1}^{n}
\in \Fun[nn]{ \AST }{ X } \) and any function \( f:X\to\R \),
\begin{equation*}
	\begin{aligned}
		\fn[p]{ \fn[p]{ \Nat[nn]{ \iwrt }{ X } \circ \Nat[nn]{ \categ }{ X } }{ \tuple{\tuple{x_{j}, \gamma_{j}}}_{j = 1}^{n} } }{ f }
		&=
		\sum_{x \in \left\{ x_{1}, \cdots, x_{n} \right\} }
		\left(
		\fn[n]{ f }{ x } \cdot
		\sum_{\substack{1 \le j \le n \\ x_{j} = x}} \exp \gamma_{j}
		\right)
		\\
		&=
		\sum_{j = 1}^{n} \fn[n]{ f }{ x_{j} } \cdot \exp \gamma_{j}=
		\fn[p]{ \fn[n]{ \Nat[nn]{ \ws }{ X } }{ \tuple{\tuple{x_{j}, \gamma_{j}}}_{j = 1}^{n} } }{ f }.
	\end{aligned}
\end{equation*}
Thus \( \Nat[nn]{ \ws }{ X } = \Nat[nn]{ \iwrt }{ X } \circ
\Nat[nn]{ \categ }{ X } \). Since \( \Nat[nn]{ \categ }{ X } \) is
surjective and \( \Nat[nn]{ \iwrt }{ X } \) is injective, this is the
\emph{image factorisation} of \( \Nat[nn]{ \ws }{ X } \) in \( \Set \).

The calculation has isolated the three ingredients that we will abstract
next. The ordered lists form the raw weighted-syntax monad \( \AST \); the
barycentre algebra induces the weighted-sum morphism into a continuation
monad; and the FAD monad on \( \Set \) is recovered as the middle object of
the resulting image factorisation. We next package these
ingredients as a FAD factorisation structure on a category \( \cat[]{X} \) and
define the associated FAD monad on \( \cat[]{X} \).

\subsection{FAD Monads by Axiomatic Factorisation}
\label{sec:generalising-the-set-procedure}

We now abstract the factorisation just described in \( \Set \). The aim is to
specify the structure on a category \( \cat[]{X} \) that is sufficient to form
weighted raw syntax, interpret it by a barycentre algebra, and factorise the
resulting strong monad morphism. Once these data are fixed, the FAD monad on
\( \cat[]{X} \) is not additional primitive structure: it is the monad obtained
as the middle term of the factorisation
\[
\AST^{\cat[]{X}}
\xrightarrow{\;\categ^{\cat[]{X}}\;}
M^{\cat[]{X}}
\xrightarrow{\;\iwrt^{\cat[]{X}}\;}
\cont{V}^{\cat[]{X}} .
\]
For \( \cat[]{X}=\Set \), with the choices specified below, this construction
recovers the FAD monad on \( \Set \).

We assume that \( \cat[]{X} \) is bicartesian closed with countable
coproducts. Thus we fix a terminal object \(1\), an initial object \(0\),
binary products and coproducts, countable coproducts, and, for each object
\(A\), a right adjoint
\(
\inthom{A}{\blankdash}:\cat[]{X}\to\cat[]{X}
\)
to \( (\blankdash)\times A:\cat[]{X}\to\cat[]{X} \).

We also fix a monoid object \(R\) in \( \cat[]{X} \), with unit
\( \runit:1\to R \) and multiplication \( \rcdot:R\times R\to R \). This is
the \emph{monoid of weights}; in the motivating instance, it is the additive
monoid of log-weights. Finally, we fix an \(R\)-semimodule \(V\), called the
\emph{barycentre semimodule}, with structure maps
\[
\vplus:V\times V\to V,
\qquad
\vzero:1\to V,
\qquad
\vrdot:V\times R\to V .
\]
Here an \(R\)-semimodule means a commutative monoid object \(V\) equipped with a right action \(V\times R\to V\) of the monoid object \(R\) by commutative-monoid endomorphisms.

In \( \Set \), the relevant choices are \(R=\R\) (thought of as log-weights), with
\begin{equation}
	\label{eq:monoid-of-weights-operations-in-set}
	\begin{aligned}
		\runit: 1 &\to \R
		\\
		\ast &\mapsto 0
	\end{aligned}
	\qq{and}
	\begin{aligned}
		\rcdot: \R \times \R &\to \R
		\\
		\tuple{a, b} &\mapsto a + b,
	\end{aligned}
\end{equation}
and \(V=\R\) (thought of as real values), with
\begin{equation}
	\label{eq:barycentre-semimodule-operations-in-set}
	\begin{aligned}
		\vzero: 1 &\to \R
		\\
		\ast &\mapsto 0
	\end{aligned}
	\qq{and}
	\begin{aligned}
		\vplus: \R \times \R &\to \R
		\\
		\tuple{a, b} &\mapsto a + b
	\end{aligned}
	\qq{and}
	\begin{aligned}
		\vrdot: \R \times \R &\to \R
		\\
		\tuple{v, r} &\mapsto v \cdot \exp r .
	\end{aligned}
\end{equation}
The case \(V=\R[p]\), for \(p\in\N\), is obtained in the same way.

From \(R\) and \(V\) we form two monads. The raw weighted-syntax monad and the
continuation monad on \(V\) have underlying endofunctors
\begin{equation}
	\label{eq:definition-ast-continuation-general}
	\AST^{\cat[]{X}} := \coprod_{n \in \N} \left( \blankdash \times R \right)^{n}
	\qq{and}
	\cont{V}^{\cat[]{X}} := \inthom{\inthom{\blankdash}{V}}{V}.
\end{equation}
The monad structure on \( \AST^{\cat[]{X}} \) is obtained by combining the
ordered-list monad \( A\mapsto\coprod_{n\in\N}A^n \)
\cite{dubucFreeMonoids1974} with the \(R\)-writer monad
\(A\mapsto A\times R\), induced by the monoid
\( \tuple{R,\rcdot,\runit} \); see, for instance, Chapter~7 of
\cite{maclaneCategoriesWorkingMathematician1997}. These monads are combined
using the canonical distributive law recalled in
Section~6.3 of \cite{bacciSumTensorQuantitative2024}. Thus
\( \AST^{\cat[]{X}} \) is an ordered-list presentation of weighted finite
support. The identifications imposed by permutations and collisions are not
built into \( \AST^{\cat[]{X}} \); they are produced by factorisation.

Both \( \AST^{\cat[]{X}} \) and \( \cont{V}^{\cat[]{X}} \) are strong monads.
For \( \AST^{\cat[]{X}} \), the strength is induced by distributivity of
products over countable coproducts and by the canonical maps \( A \times B^{n} \to A^{n} \times B^{n} \xrightarrow{\sim} \left( A \times B \right)^{n} \) obtained from the diagonal morphism and tuplings of projections. Then, by using \(B\times R\) in place of \(B\), this gives a natural
transformation
\begin{equation*}
A \times \Fun[nn]{ \AST^{\cat[]{X}} }{ B }
\to
\Fun[np]{ \AST^{\cat[]{X}} }{ A \times B }.
\end{equation*}
The continuation monad carries its standard strength in every cartesian closed
category.

The semimodule structure on \(V\) defines an
\( \AST^{\cat[]{X}} \)-algebra structure on \(V\), the
\emph{barycentre algebra}. Explicitly, it is the morphism
\begin{equation}
	\label{eq:definition-barycentre-algebra-general}
\bary^{\cat[]{X}}:
\Fun[nn]{ \AST^{\cat[]{X}} }{ V }
\to V;
\quad
\textnormal{defined by}
\quad
	\bary^{\cat[]{X}}
	=
	\cotupling{\underset{n}{\Sigma} \circ \left( \vrdot \right)^{n}}_{n \in \N},
\end{equation}
where \( \underset{n}{\Sigma}:V^n\to V \) is the \(n\)-ary sum determined by
\( \vplus \) and \( \vzero \), and
\( \left(\vrdot\right)^n:(V\times R)^n\to V^n \) is the product of \(n\)
copies of \( \vrdot \).

Under continuation-passing style, the barycentre algebra corresponds to a strong monad morphism $\ws^{\cat[]{X}}$ to the continuation monad $\inthom{\inthom{-}{V}}{V}$ whose \(A\)-component sends \(a \in \AST^{\cat[]{X}}A\) to \(\lambda k.\,\bary^{\cat[]{X}}\!\left(\AST^{\cat[]{X}}k(a)\right)\).
\begin{lemma}[Algebra CPS; see Appendix~\ref{sec:appendix-algebras-and-continuations}]
	\label{lem:algebras-and-strong-monad-morphisms-to-the-continuation-monad}
	Let \( \tuple{T, \mu^{T}, \eta^{T}, \str^{T}} \) be a strong monad on a
	cartesian closed category \( \cat[]{X} \), and let \( V \) be an object of
	\( \cat[]{X} \). Then there is a 1:1-correspondence between
	\( T \)-algebra structures on \( V \) and strong monad morphisms
	\( T \to \cont{V}^{\cat[]{X}} \).
\end{lemma}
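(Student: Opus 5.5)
We construct the two directions explicitly and check that they are mutually inverse. Write \(\mathrm{ev}\) for evaluation and use the internal language of the cartesian closed category \(\cat[]{X}\), which is legitimate for equational reasoning about morphisms.

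\emph{From algebras to morphisms.} Given a \(T\)-algebra \(\alpha:TV\to V\), define \(\sigma_A:TA\to\inthom{\inthom{A}{V}}{V}\) as the curry of
\[
TA\times\inthom{A}{V}\xrightarrow{\cong}\inthom{A}{V}\times TA\xrightarrow{\str^{T}}T(\inthom{A}{V}\times A)\xrightarrow{T\mathrm{ev}}TV\xrightarrow{\alpha}V,
\]
that is, \(a\mapsto\lambda k.\,\alpha(Tk(a))\), as in the text. The strength is used because the continuation \(k\) is a variable rather than a global element.

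We then verify the following properties.
\begin{enumerate}[label=(\roman*)]
\item \emph{Naturality.} This is naturality of \(\str^{T}\) together with dinaturality of evaluation.
\item \emph{Unit preservation.} \(\sigma\circ\eta^{T}=\eta^{\cont{V}}\) follows from \(\alpha\circ\eta^{T}_V=\id\) and the strength axiom relating \(\str^{T}\) to \(\eta^{T}\).
\item \emph{Multiplication preservation.} \(\sigma\circ\mu^{T}=\mu^{\cont{V}}\circ\cont{V}\sigma\circ\sigma T\). The left side is \(\lambda k.\,\alpha(Tk(\mu a))=\lambda k.\,\alpha(\mu(TTk(a)))\). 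The right side unfolds to \(\lambda k.\,\alpha(T(\lambda b.\,\alpha(Tk(b)))(a))=\lambda k.\,\alpha(T\alpha(TTk(a)))\). These agree by the algebra law \(\alpha\circ\mu^{T}=\alpha\circ T\alpha\).
\item \emph{Strength preservation.} \(\sigma_{A\times B}\circ\str^{T}=\str^{\cont{V}}\circ(A\times\sigma_B)\). This uses the strength coherence axioms, in particular compatibility of \(\str^{T}\) with the associativity of products, to pass the parameter \(a\) into the continuation.
\end{enumerate}

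\emph{From morphisms to algebras.} Given a strong monad morphism \(\sigma\), set
\[
\alpha:=\mathrm{ev}_{\id_V}\circ\sigma_V:TV\to\inthom{\inthom{V}{V}}{V}\to V,
\]
where \(\mathrm{ev}_{\id_V}\) is evaluation at the global element \(\ulcorner\id_V\urcorner:1\to\inthom{V}{V}\).
\begin{enumerate}[label=(\roman*)]
\item The unit law \(\alpha\circ\eta^{T}=\id\) follows from \(\sigma\circ\eta^{T}=\eta^{\cont{V}}\), since \(\eta^{\cont{V}}(v)(\id)=v\).
\item The multiplication law follows from \(\sigma\circ\mu^{T}=\mu^{\cont{V}}\circ\cont{V}\sigma\circ\sigma T\), evaluated at \(\id\), together with naturality of \(\sigma\) applied to the map \(\alpha\) itself.
\end{enumerate}

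\emph{Bijectivity.} Starting from \(\alpha\), we recover \(\alpha(T\id(t))=\alpha(t)\) immediately.

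The converse is the main obstacle. Starting from \(\sigma\), we must show \(\sigma_A(a)(k)=\alpha(Tk(a))\) for every continuation \(k\). For a global element \(k:A\to V\), naturality gives
\[
\sigma_A(a)(k)=\sigma_A(a)(\id\circ k)=\bigl(\cont{V}k\,(\sigma_A a)\bigr)(\id)=\sigma_V(Tk\,a)(\id),
\]
as required. However, \(k\) appears as a variable of type \(\inthom{A}{V}\), so naturality alone does not suffice, and this is exactly where strength preservation enters. We will instantiate the strength condition with parameter object \(\inthom{A}{V}\). Composing
\[
\inthom{A}{V}\times TA\xrightarrow{\str^{T}}T(\inthom{A}{V}\times A)\xrightarrow{T\mathrm{ev}}TV
\]
and using naturality of \(\sigma\) along \(\mathrm{ev}\), we express \(\sigma_V(T\mathrm{ev}(\str(k,a)))(\id)\). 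Strength preservation then rewrites this as \(\str^{\cont{V}}(k,\sigma_A a)\) pushed along \(\mathrm{ev}\) and evaluated at \(\id\). Unfolding the standard strength of \(\cont{V}\) gives \(\sigma_A(a)(\lambda x.\,k\,x)=\sigma_A(a)(k)\).

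Carrying out this unfolding carefully, as an equality of morphisms \(\inthom{A}{V}\times TA\to V\) rather than pointwise, is the delicate diagram chase. Everything else is routine.
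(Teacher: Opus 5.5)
Your proposal is correct and follows the paper's proof in Appendix~\ref{sec:appendix-algebras-and-continuations} essentially step for step. The steps that coincide are the transpose $t\mapsto\lambda k.\,a(Tk(t))$ built from $\str^{T}$ and evaluation, the inverse $\mathrm{ev}_{\id_V}\circ\theta_V$, and the key move for $\theta^{a^{\theta}}=\theta$: naturality of $\theta$ along $\mathrm{ev}_{A,V}$, then strength preservation at the parameter object $\inthom{A}{V}$, then unfolding the canonical strength of $\cont{V}^{\cat[]{X}}$. Your remarks on where the strength axioms are silently used (variable continuations in the naturality, unit and strength checks) only make explicit what the paper leaves to the internal language.
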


Applying Lemma~\ref{lem:algebras-and-strong-monad-morphisms-to-the-continuation-monad}
to the barycentre algebra gives a strong monad morphism
\[
\ws^{\cat[]{X}}:
\AST^{\cat[]{X}}
\to
\cont{V}^{\cat[]{X}}
\qquad\textnormal{with components}\qquad
\Nat[nn]{ \ws^{\cat[]{X}} }{ A }:
\Fun[nn]{ \AST^{\cat[]{X}} }{ A }
\to
\inthom{\inthom{A}{V}}{V}.
\]

It remains to specify the factorisation. In \( \Set \), we
used the image factorisation. In \( \cat[]{X} \), we fix an orthogonal
factorisation system \( \tuple{\mathcal{E},\mathcal{M}} \). Thus every
 \(f:A\to B\) factors as \(f=m\circ e\), with
\(e\in\mathcal{E}\) and \(m\in\mathcal{M}\), and
\( \mathcal{E} \) is orthogonal to \( \mathcal{M} \): for every equation
\(v\circ e=m\circ u\), with \(e\in\mathcal{E}\) and \(m\in\mathcal{M}\), there
is a unique \(w\) such that \(u=w\circ e\) and \(v=m\circ w\). In \( \Set \),
taking \( \mathcal{E} \) to be the epis (i.e., surjections) and \( \mathcal{M} \) to be the
(regular) monos (i.e., injections) recovers the usual image factorisation.
We assume that \( \mathcal{E} \) is closed under binary products. Since
left classes of orthogonal factorisation systems are closed under coproducts,
this implies closure of \( \mathcal{E} \) under \( \AST^{\cat[]{X}} \).

For each object \(A\), factorise the component
\( \Nat[nn]{\ws^{\cat[]{X}}}{A} \) as
\[
\begin{tikzcd}[column sep = 2 cm]
	\Fun[nn]{ \AST^{\cat[]{X}} }{ A }
	&&
	\Fun[nn]{ \cont{V}^{\cat[]{X}} }{ A }
	\\
	& \Fun[nn]{ M^{\cat[]{X}} }{ A } &
	\arrow[from = 1-1, to = 1-3, "{\Nat[nn]{ \ws^{\cat[]{X}} }{ A }}"]
	\arrow[from = 1-1, to = 2-2, swap, "{\Nat[nn]{ \categ^{\cat[]{X}} }{ A }}"]
	\arrow[from = 2-2, to = 1-3, swap, "{\Nat[nn]{ \iwrt^{\cat[]{X}} }{ A }}"]
\end{tikzcd}
\]
with \( \Nat[nn]{ \categ^{\cat[]{X}} }{ A }\in\mathcal{E} \) and
\( \Nat[nn]{ \iwrt^{\cat[]{X}} }{ A }\in\mathcal{M} \). At this stage
\( \Fun[nn]{ M^{\cat[]{X}} }{ A } \) denotes only the intermediate objects of
the factorisation. The next theorem ensures that these objects
assemble into a strong monad.

\begin{lemma}[Factorisation of strong monad morphisms, {\cite[Theorem~2.5]{kammarFactorisationSystemsLogical2018}}]\label{lem:factorisation-of-strong-monad-morphisms}
	Let \( \cat[]{X} \) be a category, \( \tuple{\mathcal{E},\mathcal{M}} \) a
	factorisation system, \( S \) and \( T \) monads on \( \cat[]{X} \), and
	\( m:S\to T \) a monad morphism. Denote by \( m[S] \) the functor obtained
	by pointwise factorisation of \(m\), with \( m^{\mathbf{e}} \) and
	\( m^{\mathbf{m}} \) denoting the pointwise left- and right-class factors.
	If \( \mathcal{E} \) is closed under \(S\), then \(m[S]\) is a monad, and
	\( m^{\mathbf{e}} \) and \( m^{\mathbf{m}} \) are monad morphisms. If,
	moreover, \( \mathcal{E} \) is closed under binary products, \(S\) and \(T\)
	are strong monads, and \(m\) is a strong monad morphism, then \(m[S]\) is a
	strong monad and \( m^{\mathbf{e}} \) and \( m^{\mathbf{m}} \) are strong
	monad morphisms.
\end{lemma}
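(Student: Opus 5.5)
Our approach is to use the orthogonality of $\tuple{\mathcal{E},\mathcal{M}}$ to transport every piece of monad structure through the pointwise factorisation $m_A = m^{\mathbf{m}}_A\circ m^{\mathbf{e}}_A$, writing $M := m[S]$. In each case we exhibit a commuting square with an $\mathcal{E}$-map on one side and an $\mathcal{M}$-map on the other, and take the unique diagonal. Every required equation then follows from uniqueness of diagonals, and since maps in $\mathcal{E}$ are epic relative to such squares, we can cancel them.

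\emph{Functoriality.} For $f:A\to B$, naturality of $m$ gives $Tf\circ m^{\mathbf{m}}_A\circ m^{\mathbf{e}}_A = m^{\mathbf{m}}_B\circ m^{\mathbf{e}}_B\circ Sf$. Orthogonality of $m^{\mathbf{e}}_A$ against $m^{\mathbf{m}}_B$ yields a unique $Mf$ with $Mf\circ m^{\mathbf{e}}_A = m^{\mathbf{e}}_B\circ Sf$ and $m^{\mathbf{m}}_B\circ Mf = Tf\circ m^{\mathbf{m}}_A$. Preservation of identities and composites follows from uniqueness, and both factors are natural by construction.

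\emph{Unit.} Set $\eta^M := m^{\mathbf{e}}\circ\eta^S$. Then $m^{\mathbf{m}}\circ\eta^M = \eta^T$.

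\emph{Multiplication.} This is the main obstacle, because it requires $MM A$ to be reachable from $SSA$ by an $\mathcal{E}$-map. Consider the composite $e := M(m^{\mathbf{e}}_A)\circ m^{\mathbf{e}}_{SA} = m^{\mathbf{e}}_{MA}\circ S(m^{\mathbf{e}}_A) : SSA\to MMA$. Here $S(m^{\mathbf{e}}_A)\in\mathcal{E}$ because $\mathcal{E}$ is closed under $S$, and $m^{\mathbf{e}}_{MA}\in\mathcal{E}$ by construction, so $e\in\mathcal{E}$. Next we check the square
\[
m^{\mathbf{m}}_A\circ m^{\mathbf{e}}_A\circ\mu^S_A \;=\; \mu^T_A\circ TT(m^{\mathbf{m}}_A)\circ T(m^{\mathbf{e}}_{MA})\circ \ldots
\]
More cleanly, we take the bottom map $MMA\to TTA\to TA$ to be $\mu^T_A\circ (m^{\mathbf{m}}\ast m^{\mathbf{m}})_A$, where $m^{\mathbf{m}}\ast m^{\mathbf{m}} = T m^{\mathbf{m}}\circ m^{\mathbf{m}}_M$ is the horizontal composite. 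The square commutes because $(m^{\mathbf{m}}\ast m^{\mathbf{m}})\circ(m^{\mathbf{e}}\ast m^{\mathbf{e}}) = m\ast m$ by the interchange law, and $m$ is a monad morphism, so $\mu^T\circ(m\ast m) = m\circ\mu^S$. The diagonal is $\mu^M_A$, which is natural by uniqueness. By construction both factors are compatible with the multiplications.

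\emph{Monad laws for $M$.} We check each law after precomposing with a suitable $\mathcal{E}$-map, namely $m^{\mathbf{e}}$, $e$, or the triple analogue $SSSA\to MMMA$, which also lies in $\mathcal{E}$ by the same closure argument. After that precomposition, each law reduces to the corresponding law for $S$. Since we only cancel $\mathcal{E}$-maps in the situation where the other side factors through $\mathcal{M}$, we can instead phrase each step as uniqueness of a diagonal: both sides of the law, postcomposed with $m^{\mathbf{m}}$, agree by the laws of $T$.

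\emph{Strength.} Assume in addition that $\mathcal{E}$ is closed under binary products. The map $A\times m^{\mathbf{e}}_B = \id_A\times m^{\mathbf{e}}_B$ then lies in $\mathcal{E}$, since identities are in $\mathcal{E}$ and $\mathcal{E}$ is closed under products. We form the square with top edge $m^{\mathbf{e}}_{A\times B}\circ \str^S$ and bottom edge $\str^T\circ(A\times m^{\mathbf{m}}_B)$. It commutes because $m$ is strong, and the diagonal gives $\str^M$. The strength axioms (unit, associativity with $\times$, and compatibility with $\eta$ and $\mu$) follow exactly as before: we precompose with the relevant product-closed $\mathcal{E}$-maps and appeal to uniqueness of diagonals, using the axioms for $\str^S$ and $\str^T$. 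The factors are strong monad morphisms by the defining triangles.
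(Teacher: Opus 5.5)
Your proposal is correct. It is the standard orthogonality (diagonal-fill) argument behind the cited result of Kammar and McDermott; the paper itself gives no proof beyond the citation. You use the hypotheses exactly where they are needed: closure under \(S\) gives \(m^{\mathbf{e}}_{MA}\circ S(m^{\mathbf{e}}_A)\in\mathcal{E}\) and its triple analogue, and closure under binary products gives \(\mathrm{id}_A\times m^{\mathbf{e}}_B\in\mathcal{E}\) and, for compatibility of the strength with \(\mu\), \(\mathrm{id}_A\times (m^{\mathbf{e}}\ast m^{\mathbf{e}})_B\in\mathcal{E}\). Your insistence on checking agreement both after precomposition with an \(\mathcal{E}\)-map and after postcomposition with \(m^{\mathbf{m}}\), rather than cancelling \(\mathcal{E}\)-maps (which need not be epic), is what makes each uniqueness step rigorous; the aborted display in the multiplication paragraph should simply be deleted.
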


We can now name the structure and the monad it determines.

\begin{definition}[FAD factorisation structure and associated FAD monad]
	A \emph{FAD factorisation structure} on \( \cat[]{X} \) consists of the
	bicartesian closed structure with countable coproducts fixed above, a monoid
	of weights \( \tuple{R,\rcdot,\runit} \), a barycentre semimodule
	\( \tuple{V,\vplus,\vzero,\vrdot} \), and an orthogonal factorisation system
	\( \tuple{\mathcal{E},\mathcal{M}} \) whose left class is closed under binary
	products.

	Its associated \emph{FAD monad} is the
	monad \(M^{\cat[]{X}}\) obtained by the pointwise factorisation
  	\[
	\AST^{\cat[]{X}}
	\xrightarrow{\;\categ^{\cat[]{X}}\;}
	M^{\cat[]{X}}
	\xrightarrow{\;\iwrt^{\cat[]{X}}\;}
	\cont{V}^{\cat[]{X}}
	\qquad\text{of}\qquad
	\ws^{\cat[]{X}}:
	\AST^{\cat[]{X}}
	\to
	\cont{V}^{\cat[]{X}} .
	\]

\end{definition}

\begin{proposition}
	For every FAD factorisation structure on \( \cat[]{X} \), the associated
	FAD monad \(M^{\cat[]{X}}\) is strong, and the maps
	\(
	\categ^{\cat[]{X}}:
	\AST^{\cat[]{X}}\to M^{\cat[]{X}}\) and
	\(\iwrt^{\cat[]{X}}:
	M^{\cat[]{X}}\to \cont{V}^{\cat[]{X}}
	\)
	are strong monad morphisms.
\end{proposition}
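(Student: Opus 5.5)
The plan is to apply Lemma~\ref{lem:factorisation-of-strong-monad-morphisms} with \(S=\AST^{\cat[]{X}}\), \(T=\cont{V}^{\cat[]{X}}\) and \(m=\ws^{\cat[]{X}}\). The proof then reduces to verifying the lemma's hypotheses one at a time. Once they hold, the lemma gives three things: the pointwise factorisation \(m[S]=M^{\cat[]{X}}\) is a strong monad, the left factor \(m^{\mathbf{e}}=\categ^{\cat[]{X}}\) is a strong monad morphism, and the right factor \(m^{\mathbf{m}}=\iwrt^{\cat[]{X}}\) is a strong monad morphism. This is exactly the statement.

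The hypotheses are checked in the following order.
\begin{enumerate}
\item \(\AST^{\cat[]{X}}\) is a strong monad. Its monad structure comes from the distributive law combining the list monad with the \(R\)-writer monad. Its strength is the one described above, built from distributivity of products over countable coproducts and the canonical maps \(A\times B^n\to(A\times B)^n\).
\item \(\cont{V}^{\cat[]{X}}\) carries its standard strength in the cartesian closed category \(\cat[]{X}\).
\item \(\ws^{\cat[]{X}}\) is a strong monad morphism. This follows from Lemma~\ref{lem:algebras-and-strong-monad-morphisms-to-the-continuation-monad}, provided \(\bary^{\cat[]{X}}\) is a genuine \(\AST^{\cat[]{X}}\)-algebra.
\item The left class \(\mathcal{E}\) is closed under binary products, which is part of the definition of a FAD factorisation structure.
\item \(\mathcal{E}\) is closed under \(S\): if \(e\in\mathcal{E}\) then \(\AST^{\cat[]{X}}(e)\in\mathcal{E}\). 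First, \(\mathcal{E}\) contains identities, so \(e\times\id_R\in\mathcal{E}\) by closure under binary products. Iterating gives \((e\times\id_R)^n\in\mathcal{E}\) for each \(n\). Finally, \(\coprod_n(e\times\id_R)^n\in\mathcal{E}\), because left classes of orthogonal factorisation systems are closed under all coproducts existing in \(\cat[]{X}\), being defined by a left lifting property.
\end{enumerate}

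The main obstacle is the claim in step~3 that \(\bary^{\cat[]{X}}\) is an Eilenberg--Moore algebra. Two laws must be checked.
\begin{itemize}
\item The unit law \(\bary^{\cat[]{X}}\circ\eta^{\AST}_V=\id_V\). The unit includes \(V\) as singleton lists with weight \(\runit\), so this reduces to the semimodule axiom \(v\vrdot\runit=v\).
\item The multiplication law \(\bary^{\cat[]{X}}\circ\mu^{\AST}_V=\bary^{\cat[]{X}}\circ\AST^{\cat[]{X}}(\bary^{\cat[]{X}})\). The multiplication concatenates nested lists and multiplies the outer weight into the inner weights. Both sides send a list \(\langle\langle\langle v_{ij},r_{ij}\rangle\rangle_j,s_i\rangle_i\) to \(\sum_i\sum_j (v_{ij}\vrdot r_{ij})\vrdot s_i\).
\end{itemize}
The multiplication law uses four semimodule axioms: associativity of \(\vplus\) to flatten the double sum; that \(\blankdash\vrdot s\) is a commutative-monoid endomorphism, to push \(\vrdot s_i\) inside the inner sum and to handle empty inner lists via \(\vzero\vrdot s=\vzero\); and the action law \((v\vrdot r)\vrdot s=v\vrdot(r\rcdot s)\), matching the order in which the distributive law multiplies weights. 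Since \(\cat[]{X}\) is a general bicartesian closed category, I would carry this out componentwise on each coproduct summand indexed by \(n\) and by the inner lengths \((k_1,\dots,k_n)\). On each summand both composites are explicit composites of products of structure maps, and the identification follows from the monoid and semimodule diagrams together with distributivity of products over coproducts. This bookkeeping is routine but is the only place where real content enters; the rest is a direct invocation of Lemma~\ref{lem:factorisation-of-strong-monad-morphisms}.
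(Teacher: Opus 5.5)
Your proposal is correct and is essentially the paper's proof: apply Lemma~\ref{lem:factorisation-of-strong-monad-morphisms} to \( \ws^{\cat[]{X}} \), with closure of \( \mathcal{E} \) under \( \AST^{\cat[]{X}} \) coming from closure under binary products and closure of left classes under coproducts. The paper simply asserts that \( \bary^{\cat[]{X}} \) is an \( \AST^{\cat[]{X}} \)-algebra, which is what makes \( \ws^{\cat[]{X}} \) a strong monad morphism via Lemma~\ref{lem:algebras-and-strong-monad-morphisms-to-the-continuation-monad}; your check of the unit and multiplication laws from the semimodule axioms correctly fills in that step.
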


\begin{proof}
	Apply Lemma~\ref{lem:factorisation-of-strong-monad-morphisms} to
	\( \ws^{\cat[]{X}}:\AST^{\cat[]{X}}\to\cont{V}^{\cat[]{X}} \). The required
	closure of \( \mathcal{E} \) under \( \AST^{\cat[]{X}} \) follows from
	closure under binary products and from closure of left classes under
	coproducts.
\end{proof}

With \( \cat[]{X}=\Set \), \(R=\R\), \(V=\R\), and the
(epi, regular mono) factorisation, the associated FAD monad
\(M^{\Set}\) is canonically isomorphic to the FAD monad on \( \Set \). The
point of the preceding abstraction is that the same construction can now be
instantiated in categories with additional semantic structure. The first
such instance is \( \Fam{\Vect[o]} \), where the associated FAD monad refines
the FAD monad on the underlying set by adding cotangent fibres for atom
locations and log-weights.

\subsection{FAD Monad on \texorpdfstring{\( \Fam{\Vect[o]} \)}{Fam(VectOp)}}
\label{sec:fad-monad-on-fam-vectop}

We now exhibit a FAD factorisation structure on
\( \cat[]{F}:=\Fam{\Vect[o]} \), and write \(M^{\cat[]{F}}\) for its associated
FAD monad. This is the semantic category for differentiated values in
reverse-mode CHAD: an object carries a set of possible values together with a
specified cotangent space over each value, and a morphism carries both the
underlying function on values and the corresponding contravariant action on
cotangents.
We think of this action as a transposed derivative.

The category \( \Fam{\Vect[o]} \) is the free coproduct completion of
\( \Vect[o] \)~\cite{adamekHowNiceAre2020}. We use it as the category of sets
equipped with cotangent fibres. An object is a pair \( \tuple{X,A} \), where
\(X\) is a set and \(A:X\to\Vect[o]\) is a functor, viewing \(X\) as a
discrete category. Equivalently, \(A\) is a family
\( \{\Fun[np]{A}{x}\}_{x\in X} \) of vector spaces indexed by \(X\).

A morphism \( \tuple{X,A}\to\tuple{Y,B} \) is a pair
\( \tuple{f,\alpha} \), where \(f:X\to Y\) is a function and
\( \alpha:A\to B\circ f \) is a natural transformation in \( \Vect[o] \).
Equivalently, for each \(x\in X\), it determines a linear map
\[
\Nat[nn]{\alpha}{x}:
\Fun[np]{B}{\fn[n]{f}{x}}
\to
\Fun[np]{A}{x}.
\]
Thus \(f\) acts on base points, while \( \Nat[nn]{\alpha}{x} \) acts
contravariantly on cotangents. In the intended interpretation,
\( \Fun[np]{A}{x} \) is the cotangent space at \(x\), and
\( \Nat[nn]{\alpha}{x} \) plays the role of the transposed derivative of \(f\)
at \(x\). We write
\(
\base{\tuple{X,A}}=X,
\quad
\fibr{\tuple{X,A}}=A,
\quad
\base{\tuple{f,\alpha}}=f,
\quad
\fibr{\tuple{f,\alpha}}=\alpha
\)
for the base and fibre parts. We use this notation throughout the
construction below.

The category \( \Fam{\Vect[o]} \) is complete and cocomplete. In particular,
for any family \( \{\tuple{X_i,A_i}\}_{i\in I} \) of objects, products and
coproducts are given by
\begin{equation}
	\label{eq:definition-product-coproduct-fam-vectop}
	\prod_{i \in I} \tuple{ X_{i}, A_{i} } =
	\tuple{
		\left(
		\prod_{i \in I} X_{i}
		\right),
		\left(
		\left\{ x_{i} \right\}_{i \in I}
		\mapsto
		\bigoplus_{i \in I} \Fun[nn]{ A_{i} }{ x_{i} }
		\right)
	}
	\qq{and}
	\coprod_{i \in I} \tuple{ X_{i}, A_{i} } =
	\tuple{
		\left(
		\coprod_{i \in I} X_{i}
		\right),
		\cotupling{A_{i}}_{i \in I}
	}.
\end{equation}
Moreover, \( \Fam{\Vect[o]} \) has exponentials
\cite[Proposition~74]{lucatellinunesCHADExpressiveTotal2023}:
\([\tuple{X,A},\tuple{Y,B}] = \tuple{\prod_{x\in X}\sum_{y \in Y}\Vect(\Fun[nn]{B}{y}, \Fun[nn]{A}{x}),f\mapsto \bigoplus_{x\in X}\Fun[np]{B}{
\fn[n]{\pi_1\!}{\fn[n]{f\!}{x}}}}\) .
Hence \( \cat[]{F}=\Fam{\Vect[o]} \) is bicartesian closed with countable
coproducts, as required by the axiomatic construction.

We use the (epi, regular mono) factorisation system on \( \Fam{\Vect[o]} \).
It admits the following explicit description.

\begin{lemma}
	Epis and regular monos form a factorisation system on \( \Fam{\Vect[o]} \).
	For a morphism
	\( \tuple{f,\alpha}:\tuple{X,A}\to\tuple{Y,B} \), membership in the left
	class \( \mathcal{E} \) is equivalent to \(f:X\to Y\) being surjective and
	\( \alpha \) being jointly injective, in the sense that, for each \(y\in Y\),
	the linear map
	\[
	\tupling{\Nat[nn]{ \alpha }{ x }}_{\substack{x \in X \\ \fn[n]{ f }{ x } = y}}:
	\Fun[nn]{ B }{ y }
	\to
	\prod_{\substack{x \in X \\ \fn[n]{ f }{ x } = y}}
	\Fun[nn]{ A }{ x }
	\]
	is injective. Membership in the right class \( \mathcal{M} \) is equivalent
	to \(f:X\to Y\) being injective and \( \alpha \) being pointwise surjective,
	in the sense that, for each \(x\in X\), the linear map
	\[
	\Nat[nn]{ \alpha }{ x }:
	\Fun[np]{ B }{ \fn[n]{ f }{ x } }
	\to
	\Fun[nn]{ A }{ x }
	\]
	is surjective. Moreover, the left class \( \mathcal{E} \) is closed under
	binary products.
\end{lemma}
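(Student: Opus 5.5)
Several things need proving: the characterisation of epis, the characterisation of regular monos, the existence of factorisations with orthogonality, and closure of \( \mathcal{E} \) under binary products. We work directly with the explicit description of morphisms \( \tuple{f,\alpha} \) in \( \Fam{\Vect[o]} \).

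\textbf{The left class.} If \( \tuple{f,\alpha} \) is epi, then \(f\) is surjective. To see this, test against maps into objects with zero fibres; the base functor \( \base{\blankdash} \) has a right adjoint \(X\mapsto\tuple{X,0}\), so it preserves epis. Next fix \(y\) and test against maps \( \tuple{Y,B}\to\tuple{Y,C} \) that are the identity on base. Choose \(C\) to agree with \(B\) except \(C(y)=\R\) (one could also use \(C=B\) at every point). Two such maps are determined by linear maps \(\R\to B(y)\). Precomposition with \( \tuple{f,\alpha} \) then gives the composites \( \Nat[nn]{\alpha}{x}\circ\beta \) for \(x\) in the fibre over \(y\). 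So cancellability forces the tupled map \( \tupling{\alpha_x}_{f(x)=y} \) to be injective. Conversely, if \(f\) is surjective and the tupled maps are injective, equality of two composites gives equal base functions (by surjectivity) and equal fibre maps (by joint injectivity). Note that joint injectivity into the product is the same as into the tupled map, since a vector killed by all \(\alpha_x\) is killed by the tupling.

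\textbf{The right class.} Regular monos are equalisers. Equalisers in \( \Fam{\Vect[o]} \) are computed as follows: take the equaliser set \(E\subseteq X\) on the base, and over \(e\in E\) take the coequaliser in \( \Vect \) (a quotient) of the two fibre maps. So a regular mono has injective base map and fibre maps that are quotient maps, i.e.\ surjective. Conversely, given \(f\) injective and each \(\alpha_x\) surjective, I exhibit \( \tuple{f,\alpha} \) as the equaliser of its cokernel pair, computed concretely. The base is the pushout \(Y\amalg_X Y\). Over points in the image of \(f\), the fibre is \(B(f(x))\times_{A(x)}B(f(x))\), since pushouts in \( \Vect[o] \) are pullbacks in \( \Vect \). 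Checking that the equaliser recovers \(A(x)=B(f(x))/\ker\alpha_x\) is routine linear algebra.

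\textbf{Factorisation and orthogonality.} Given \( \tuple{f,\alpha} \), factor \(f\) as \(X\twoheadrightarrow f[X]\hookrightarrow Y\). Over \(y\in f[X]\), the middle fibre is \(C(y):=B(y)/\bigcap_{f(x)=y}\ker\alpha_x\), which is the image of the tupled map. The right factor is the quotient \(B(y)\to C(y)\), which is surjective. The left factor has components \(C(y)\to A(x)\) induced by \(\alpha_x\), and these are jointly injective by construction.

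For orthogonality, given a square \(v\circ e=m\circ u\), the base diagonal comes from orthogonality of (surjection, injection) in \( \Set \). The fibre diagonal at each point is obtained from the linear square. One reads the equation contravariantly: the fibre of \(e\) is jointly injective, the fibre of \(m\) is surjective, and the unique linear lift exists because a surjection followed by an injection is orthogonal in \( \Vect \). Alternatively, one can cite that in any category with these (co)limits, (epi, regular mono) factorisations that exist are orthogonal.

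The main obstacle I expect is carefully matching the variance in this lift. The diagonal must satisfy two conditions: the fibre of \(u\) factors through joint injectivity, and the fibre of \(v\) descends along the surjection \(\alpha\). Both must be checked, including the well-definedness of the descended map.

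\textbf{Closure under binary products.} Take \( \tuple{f,\alpha}\times\tuple{g,\beta} \). The base \(f\times g\) is surjective. At \((y,y')\), the fibre map \(B(y)\oplus B'(y')\to\prod A(x)\oplus A'(x')\), indexed over \((x,x')\) in the fibre, is the tupled map \(\tupling{\alpha_x\oplus\beta_{x'}}\). Suppose \((b,b')\) lies in its kernel. Since the fibres of \(f\) and \(g\) are nonempty by surjectivity, fixing \(x'\) and varying \(x\) gives \(\alpha_x b=0\) for all \(x\), hence \(b=0\). Symmetrically \(b'=0\). So the product map is again in \( \mathcal{E} \).
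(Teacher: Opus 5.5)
The paper states this lemma without proof (it is asserted and then used to make $\Fam{\Vect[o]}$ a FAD factorisation structure), so there is no argument in the paper to compare with. Your proposal is correct and supplies the missing verification. Each part goes through:
\begin{itemize}
\item the epi characterisation, with surjectivity from the adjunction $(X,A)\mapsto X \dashv Z\mapsto (Z,0)$ and joint injectivity from test maps that differ only in the fibre over one $y$;
\item the fibrewise computation of equalisers, showing that regular monos have injective base and surjective fibre maps;
\item the converse via the cokernel pair, whose equaliser has fibre $B(f(x))/\ker\alpha_x\cong A(x)$ over $f(x)$, because $\operatorname{im}(\pi_1-\pi_2)=\ker\alpha_x$ on $B(f(x))\times_{A(x)}B(f(x))$;
\item the factorisation through $B(y)/\bigcap_{f(x)=y}\ker\alpha_x$;
\item the product-closure argument using non-empty fibres.
\end{itemize}

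Two minor remarks.

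First, your ``alternative'' for orthogonality is the cleaner route, and it needs no (co)limit hypothesis. In any category, suppose $m$ equalises $p,q$ and $ve=mu$ with $e$ epi. Then $pve=qve$ gives $pv=qv$, so $v=mw$ for a unique $w$, and $mwe=mu$ gives $we=u$. Once you have identified the two classes, the fibrewise diagonal is therefore redundant.

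Second, if you keep the fibrewise construction, write $\varepsilon,\mu,\upsilon,\nu$ for the fibre parts of $e,m,u,v$ and $w_0$ for the base diagonal, and set $z=w_0(y)$. The relevant square in $\Vect$ at $y$ is then $\langle\varepsilon_x\rangle_x\circ\nu_y=\langle\upsilon_x\rangle_x\circ\mu_z$. This square only makes sense because $u_0(x)=w_0(e_0(x))=z$ for every $x$ over $y$, so the $\upsilon_x$ share the domain of $\mu_z$. The lift then exists because surjections are left orthogonal to injections in $\Vect$, not because ``a surjection followed by an injection is orthogonal''.
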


Therefore \( \cat[]{F} \) has the categorical structure and factorisation
system required for a FAD factorisation structure. It remains to choose the
monoid of weights and the barycentre semimodule. Both are obtained by lifting
the corresponding \( \Set \)-structures to the real cotangent bundle
\[
\TR := \tuple{\R,\const\R},\qquad\text{where}\qquad \Fun[pn]{\const Y}{\ x}\ =\ Y.
\]

The monoid of weights is \( \TR \). Its unit and multiplication are
\begin{equation}
	\label{eq:monoid-of-weights-operations-in-fam-vectop}
	\begin{array}{lrcl}
		\runit &: &\tuple{ 1, \const 0 } &\to \TR
		\vspace{6pt}
		\\
		\base{\runit} &:& 1 &\to \R
		\\
		&&\ast &\mapsto 0
		\vspace{6pt}
		\\
		\Nat[nn]{ \fibr{\runit} }{\ast} &: &\R &\to 0
		\\
		&& v &\mapsto 0
	\end{array}
	\qquad
	\begin{array}{lccl}
		\rcdot &: &\TR \times \TR &\to \TR
		\vspace{6pt}
		\\
		\base{\rcdot} &:& \R \times \R &\to \R
		\\
		&&\tuple{ a, b } &\mapsto a + b
		\vspace{6pt}
		\\
		\Nat[nn]{ \fibr{\rcdot} }{ \tuple{ a, b } } &: &\R &\to \R \oplus \R
		\\
		&&v &\mapsto \tuple{ v, v } .
	\end{array}
\end{equation}
The base parts agree with
Equation~\eqref{eq:monoid-of-weights-operations-in-set}; the fibre parts are
the corresponding transposed derivatives.

We also take the barycentre semimodule to be \( \TR \).\footnote{As in
	\( \Set \), one may instead take \( \TR[p] \) for some \( p\in\N \).} Its zero
element and addition are
\begin{equation}
	\label{eq:barycentre-semimodule-operations-in-fam-vectop}
	\begin{array}{lrcl}
		\vzero &: &\tuple{ 1, \const 0 } &\to \TR
		\vspace{6pt}
		\\
		\base{\vzero} &:& 1 &\to \R
		\\
		&&\ast &\mapsto 0
		\vspace{6pt}
		\\
		\Nat[nn]{ \fibr{\vzero} }{\ast} &: &\R &\to 0
		\\
		&&v &\mapsto 0
	\end{array}
	\qquad
	\begin{array}{lrcl}
		\vplus &: &\TR \times \TR &\to \TR
		\vspace{6pt}
		\\
		\base{\vplus} &:& \R \times \R &\to \R
		\\
		&&\tuple{ a, b } &\mapsto a + b
		\vspace{6pt}
		\\
		\Nat[nn]{ \fibr{\vplus} }{ \tuple{ a, b } } &: &\R &\to \R \oplus \R
		\\
		&&v &\mapsto \tuple{ v, v } .
	\end{array}
\end{equation}
Again, the base parts are those of
Equation~\eqref{eq:barycentre-semimodule-operations-in-set}, and the fibre
parts are their transposed derivatives. The action of the monoid of weights on
the barycentre semimodule is
\begin{equation}
	\label{eq:barycentre-semimodule-monoid-action-in-fam-vectop}
	\begin{array}{lrcl}
		\vrdot &: &\TR \times \TR &\to \TR
		\vspace{6pt}
		\\
		\base{\vrdot} &:& \R \times \R &\to \R
		\\
		&&\tuple{ v, r } &\mapsto v \cdot \exp r
		\vspace{6pt}
		\\
		\Nat[nn]{ \fibr{\vrdot} }{ \tuple{ v, r } } &: &\R &\to \R \oplus \R
		\\
		&&t &\mapsto \tuple{ t \cdot \exp r, t \cdot v \cdot \exp r } .
	\end{array}
\end{equation}
The base part is the action used in \( \Set \), while the fibre part is its
transposed derivative.

These choices make \( \cat[]{F}=\Fam{\Vect[o]} \) into a FAD factorisation
structure. Let \(M^{\cat[]{F}}\) denote the associated FAD monad. For an object
\( \tuple{X,A} \), the factorisation construction gives
\begin{equation}
	\label{eq:fad-monad-on-objects-in-fam-vectop}
	\Fun[nn]{ M^{\cat[]{F}} }{ \tuple{ X, A } } :=
	\tuple{
		\Fun[nn]{ M }{ X },
		\left(
		w \mapsto \bigoplus_{x \in \supp w} \left( \Fun[np]{ A }{ x } \oplus \R \right)
		\right)
	}.
\end{equation}
Thus the base of \( \Fun[nn]{ M^{\cat[]{F}} }{ \tuple{X,A} } \) is the set of
finite atomic distributions on \(X\). The fibre over
\( w\in\Fun[nn]{M}{X} \) has one summand for each atom \(x\in\supp w\): the
component \( \Fun[np]{A}{x} \) carries cotangent information for the position
of the atom, while the real component carries cotangent information for
\( \log \fn[n]{w}{x} \).

For a morphism
\( \tuple{f,\alpha}:\tuple{X,A}\to\tuple{Y,B} \), the base part is
\[
\base{ \Fun[nn]{ M^{\cat[]{F}} }{ \tuple{ f, \alpha } } }
=
\Fun[nn]{ M }{ f } .
\]
For \( w\in\Fun[nn]{M}{X} \), the fibre part is
\begin{equation}
	\label{eq:fad-monad-on-morphisms-in-fam-vectop}
	\begin{aligned}
		\Nat[nn]{ \fibr{\Fun[nn]{ M^{\cat[]{F}} }{ \tuple{ f, \alpha } }} }{ w }:
		\bigoplus_{y \in \fn[n]{ f }{ \supp w }} \left( \Fun[np]{ B }{ y } \oplus \R \right)
		&\to
		\bigoplus_{x \in \supp w} \left( \Fun[np]{ A }{ x } \oplus \R \right)
		\\
		\sum_{y \in \supp \fn[n]{ \Fun{ M }{ f } }{ w }}
		\left(
		b_{y} \cdot \vect{ e }_{y}
		+
		s_{y} \cdot \vect{ \varepsilon }_{y}
		\right)
		&\mapsto
		\sum_{x \in \supp w}
		\left(
		\frac{
			\fn[n]{ w }{ x } \cdot
			\left(
			\fn[n]{ \Nat{ \alpha }{ x } }{ b_{\fn[n]{ f }{ x }} }
			\cdot \vect{ e }_{x}
			+
			s_{\fn[n]{ f }{ x }}
			\cdot \vect{ \varepsilon }_{x}
			\right)
		}{
			\fn[p]{ \fn[n]{ \Fun{ M }{ f } }{ w } }{ \fn[n]{ f }{ x } }
		}
		\right) .
	\end{aligned}
\end{equation}
Here \( \vect{e}_{\bullet} \) denotes the atom-location component and
\( \vect{\varepsilon}_{\bullet} \) the log-weight component in the relevant
direct sum. The formula says that cotangents are pulled back along
\( \tuple{f,\alpha} \) by distributing the cotangent at an image atom among
all atoms mapping to it, with the contribution of \(x\) scaled by the relative
weight
\[
\frac{
	\fn[n]{ w }{ x }
}{
	\fn[p]{ \fn[n]{ \Fun{ M }{ f } }{ w } }{ \fn[n]{ f }{ x } }
} .
\]
The atom component is then transported through \( \Nat[nn]{\alpha}{x} \),
while the log-weight component is copied to the corresponding
\( \vect{\varepsilon}_{x} \)-summand.

Under the base projection, the monoid of weights, the barycentre semimodule,
and the monad \(M^{\cat[]{F}}\) recover the corresponding structures in
\( \Set \). Thus \(M^{\cat[]{F}}\) is a lift of the FAD monad on \( \Set \)
along the base projection, with the additional fibre information required for
reverse-mode differentiation. Over a distribution \(w\in\Fun[nn]{M}{X}\), this
fibre consists of one cotangent component for each atom \(x\in\supp w\), and
one real cotangent component for the log-weight \(\log\fn[n]{w}{x}\).

\subsubsection{Expected-value gradients and score contributions}

We now spell out how the lifted FAD monad separates the usual contributions to
the derivative of an expected value. The point is not to introduce an external
analytic argument, but to show that the decomposition is already encoded by
the fibre maps in \( \Fam{\Vect[o]} \). In particular, for a function
\( \fn[n]{f}{\theta,z} \), with \(z\) sampled from a finite distribution
\( \fn[n]{q}{\theta} \) depending on the parameter \( \theta \), the lifted
semantics separates three sources of variation: the explicit dependence of
\(f\) on \( \theta \), the dependence of the atoms \(z\) on \( \theta \), and
the dependence of the log-weights on \( \theta \).

To make this comparison readable, we use differential notation for objects and
morphisms in \( \Fam{\Vect[o]} \). Thus we write
\[
T^{\ast}U
=
\tuple{
	U,
	\left\{T^{\ast}_{u}U\right\}_{u\in U}
}
\]
and write morphisms as
\begin{equation*}
	\hat{h} = \tuple{ h, \grad h }: T^{\ast} X \to T^{\ast} Y,
	\qquad
	\left( \grad h \right)_{x} :
	T^{\ast}_{\fn[n]{h}{x}}Y \to T^{\ast}_{x}X,
\end{equation*}
where \( \Nat[pn]{ \grad h }{x} \) denotes the \(x\)-fibre of the fibre part.

Consider the lifted real-valued expectation algebra
\[
\hat{\opn{E}}_{\R}^{\cat[]{F}}:
\Fun[np]{M^{\cat[]{F}}}{\TR}
\to
\TR .
\]
For \(u\in\Fun[nn]{M}{\R}\), its base and fibre parts are
\begin{equation}
	\label{eq:lifted-real-expectation-fibre}
	\fn[n]{ \opn{E}_{\R}^{\cat[]{F}} }{u}
	=
	\sum_{r\in\supp u} \fn[n]{ u }{ r } \cdot r,
	\qquad
	\fn[n]{ \Nat[pn]{ \grad \opn{E}_{\R}^{\cat[]{F}} }{ u } }{ t }
	=
	\sum_{r \in \supp u}
	\left(
	\fn[n]{ u }{ r } \cdot t \cdot\vect{e}_{r}
	+
	\fn[n]{ u }{ r } \cdot r \cdot t \cdot \vect{\varepsilon}_{r}
	\right).
\end{equation}
The two summands in the fibre component have different origins. The
\( \vect{e}_{r} \)-term is the cotangent contribution of the atom value \(r\);
the \( \vect{\varepsilon}_{r} \)-term is the cotangent contribution of the
log-weight \( \log \fn[n]{u}{r} \).

Let
\[
\hat{g}=\tuple{g,\grad g}:T^{\ast}X\to\TR,
\qquad
w\in\Fun[nn]{M}{X},
\qquad
u:=\fn[p]{\Fun[nn]{M}{g}}{w}.
\]
The Frobenius identity gives
\begin{equation*}
	\sum_{r\in\supp u} \fn[n]{ u }{ r } \cdot r
	=
	\sum_{x\in\supp w} \fn[n]{ w }{ x } \cdot \fn[n]{ g }{ x }.
\end{equation*}
On fibres, composing the map induced by
\( \Fun[nn]{M^{\cat[]{F}}}{\hat{g}} \) with
Equation~\eqref{eq:lifted-real-expectation-fibre} gives
\begin{equation}
	\label{eq:frobenius-cotangent-rule}
	\fn[n]{
		\Nat[pn]{
			\grad \! \left(
			\hat{ \opn{E} }_{\R}^{\cat[]{F}}
			\circ
			\Fun[nn]{ M^{\cat[]{F}} }{ \hat{g} }
			\right)
		}{ w }
	}{t}
	=
	\sum_{x \in \supp w}
	\left(
	\fn[n]{ w }{ x } \cdot \fn[n]{ \Nat[pn]{ \grad g }{ x } }{ t } \cdot \vect{e}_x
	+
	\fn[n]{ w }{ x } \cdot \fn[n]{ g }{ x } \cdot t \cdot \vect{ \varepsilon }_x
	\right).
\end{equation}
Indeed, the fibre map of \( \Fun[nn]{M^{\cat[]{F}}}{\hat{g}} \) contributes,
in the \(x\)-summand, the relative factor
\[
\frac{\fn[n]{w}{x}}{\fn[p]{\fn[n]{\Fun{M}{g}}{w}}{\fn[n]{g}{x}}},
\]
while Equation~\eqref{eq:lifted-real-expectation-fibre} contributes the
corresponding mass
\[
\fn[p]{\fn[n]{\Fun{M}{g}}{w}}{\fn[n]{g}{x}}.
\]
These factors cancel, leaving precisely the expression in
Equation~\eqref{eq:frobenius-cotangent-rule}.

Now let
\[
\hat{q}
=
\tuple{ q, \grad q }:
T^{\ast} \Theta
\to
\Fun[np]{ M^{\cat[]{F}} }{ T^{\ast} Z }
\]
be a parameterised FAD. For each \( \theta\in\Theta \), choose a local
collision-free presentation
\begin{equation*}
	\fn[n]{ q }{ \theta } =
	\fn[n]{ \categ }{
		\tuple{
			\tuple{
				\fn[n]{ z_{i} }{ \theta },
				\fn[n]{ \ell_{i} }{ \theta }
			}
		}_{i \in I}
	}
	\qq{and}
	\fn[n]{ w_{i} }{ \theta } := \exp \fn[n]{ \ell_{i} }{ \theta } .
\end{equation*}
Let
\[
\hat{\underline q}
:=
\Nat[nn]{ \str^{\cat[]{F}} }{ T^{\ast} \Theta, T^{\ast} Z }
\circ
\tupling{ \id_{T^{\ast} \Theta}, \hat{q} } .
\]
Then
\[
\fn[n]{ \underline{q} }{ \theta }
=
\fn[n]{ \categ }{
	\tuple{
		\tuple{ \theta, \fn[n]{ z_{i} }{ \theta } },
		\fn[n]{ \ell_{i} }{ \theta }
	}_{i \in I}
}.
\]
With respect to this local labelling, the fibre part of
\(\hat{\underline{q}}\) is
\begin{equation}
	\label{eq:suggestive-fibre-parametrised-fad}
	\fn[n]{
		\Nat[pn]{ \grad \underline{q} }{ \theta }
	}{
		\sum_{i \in I}
		\left(
		\tuple{d_i,a_i} \cdot \vect{e}_{\tuple{\theta,z_i(\theta)}}
		+
		s_i\cdot\vect{ \varepsilon }_{\tuple{\theta,z_i(\theta)}}
		\right)
	}
	=
	\sum_{i\in I} d_i
	+
	\sum_{i \in I} \fn[n]{ \Nat[pn]{ \grad z_{i} }{ \theta } }{ a_{i} }
	+
	\sum_{i \in I} \fn[n]{ \Nat[pn]{ \grad \ell_{i} }{ \theta } }{ s_{i} },
\end{equation}
where
\[
d_i\in T^{\ast}_{\theta}\Theta,
\qquad
a_i\in T^{\ast}_{\fn[n]{z_i}{\theta}}Z,
\qquad
s_i\in\R .
\]

Finally, let
\[
\widehat{f}
=
\tuple{f,\grad f}:
T^{\ast}\Theta\times T^{\ast}Z
\to
\TR .
\]
We write the two components of its fibre part as
\begin{equation*}
	\fn[n]{ \Nat[pn]{ \grad f }{\tuple{ \theta, z }} }{ t }
	=
	\tuple{
		\fn[n]{ \Nat[nn]{ \left( \grad f \right)^{\Theta} }{ \tuple{ \theta, z } } }{ t },
		\fn[n]{ \Nat[nn]{ \left( \grad f \right)^{Z} }{ \tuple{ \theta, z } } }{ t }
	}.
\end{equation*}
Apply Equation~\eqref{eq:frobenius-cotangent-rule} to the composite
\[
T^{\ast} \Theta
\xrightarrow{\hat{\underline q}}
\Fun[np]{ M^{\cat[]{F}} }{ T^{\ast} \Theta \times T^{\ast} Z }
\xrightarrow{\Fun[nn]{ M^{\cat[]{F}} }{ \widehat{f} }}
\Fun[nn]{ M^{\cat[]{F}} }{ \TR }
\xrightarrow{\opn{E}^{\cat[]{F}}_{\R}}
\TR
\]
and then use Equation~\eqref{eq:suggestive-fibre-parametrised-fad}. This gives
\begin{equation}
	\label{eq:three-term-weighted-particle-gradient}
	\begin{aligned}
		\fn[n]{
			\Nat[pn]{
				\grad\!
				\left(
				\hat{ \opn{E} }_{\R}^{\cat[]{F}}
				\circ
				\Fun[nn]{ M^{\cat[]{F}} }{ \widehat{f} }
				\circ
				\hat{\underline{q}}
				\right)
			}{ \theta }
		}{ t }
		&=
		\sum_{i \in I}
		\fn[n]{ w_{i} }{ \theta }
		\cdot
		\left[
		\fn[n]{
			\Nat[nn]{ \left( \grad f \right)^{\Theta} }{
				\tuple{ \theta, \fn[n]{ z_{i} }{ \theta } }
			}
		}{ t }
		+
		\fn[n]{
			\Nat[pn]{ \grad z_{i} }{ \theta }
		}{
			\fn[n]{
				\Nat[nn]{ \left( \grad f \right)^{Z} }{
					\tuple{ \theta, \fn[n]{ z_{i} }{ \theta } }
				}
			}{ t }
		}
		\right.
		\\
		&\qquad \qquad \qquad
		+
		\left.
		\fn[n]{ f }{ \tuple{ \theta, \fn[n]{ z_{i} }{ \theta } } }
		\cdot
		\fn[n]{ \Nat[pn]{ \grad \ell_{i} }{ \theta } }{ t }
		\right].
	\end{aligned}
\end{equation}
The three terms inside the brackets have distinct origins. The first is the
explicit parameter contribution, coming from the dependence of
\(\fn[n]{f}{\theta,z}\) on \( \theta \). The second is the moving-atom
contribution, obtained by transporting cotangents through the dependence of
\(z_i(\theta)\) on \( \theta \). The third is the score contribution, coming
from the dependence of the log-weight
\(\ell_i(\theta)=\log \fn[n]{w_i}{\theta}\) on \( \theta \).

This calculation explains the operational role of the lifted FAD monad in
\( \Fam{\Vect[o]} \). It separates the reverse-mode contributions carried by a
finite probabilistic computation into atom-location terms and log-weight
terms, and it does so internally, through the fibre maps of the associated FAD
monad. The next subsection lifts the construction once more, from
\( \Fam{\Vect[o]} \) to the logical-relations category used to prove that the
syntactic CHAD transformation computes these cotangent contributions
correctly.

\subsection{FAD Monad on the Logical-Relations Category}
\label{sec:fad-monad-on-the-logical-relations-category}

The final instance of the FAD factorisation construction is the
logical-relations category used in the correctness proof. Whereas
\( \Fam{\Vect[o]} \) records the differentiated semantics of programs,
\( \cat[]{LR} \) relates this differentiated semantics to the ordinary
semantics in \( \Set \). The purpose of this subsection is to equip
\( \cat[]{LR} \) with a FAD factorisation structure, and hence with an
associated FAD monad \(M^{\cat[]{LR}}\). This is the semantic ingredient needed
later to interpret the source language in the logical-relations model and to
prove correctness of the reverse-mode CHAD transformation.

We use the comma-category presentation of logical relations from
\cite[Section~11]{lucatellinunesCHADExpressiveTotal2023}. Let
\( G:\cat[]{C}\to\cat[]{D} \) be a functor. We write
\[
\cat[]{LR}:=\commacat{\cat[]{D}}{G}.
\]
An object of \( \cat[]{LR} \) is a triple \( \tuple{D,C,f} \), where
\(D\) is an object of \( \cat[]{D} \), \(C\) is an object of \( \cat[]{C} \),
and \(f:D\to\Fun[nn]{G}{C}\) is a morphism in \( \cat[]{D} \). A morphism
\[
\tuple{u,v}:\tuple{D,C,f}\to\tuple{D',C',f'}
\]
consists of morphisms \(u:D\to D'\) in \( \cat[]{D} \) and
\(v:C\to C'\) in \( \cat[]{C} \) such that
\(
\Fun[nn]{G}{v}\circ f = f'\circ u.
\)
There is therefore a forgetful functor
\[
U:\cat[]{LR}\to\cat[]{D}\times\cat[]{C},
\qquad
\textnormal{defined by}
\qquad
\Fun[nn]{U}{\tuple{D,C,f}}:=\tuple{D,C},
\qquad
\Fun[nn]{U}{\tuple{u,v}}:=\tuple{u,v}.
\]

For the correctness argument in this paper, we take
\[
\cat[]{D}:=\Set,
\qquad
\cat[]{C}:=\Set\times\Fam{\Vect[o]},
\]
and define \(G\) to be the hom-functor
\begin{equation}
	\label{eq:definition-g-functor-logical-relations}
	G := \fn[n]{ \cat[]{C} }{ \tuple{ \E, \TE }, \blankdash },
\end{equation}
where \( \E \) is a fixed finite-dimensional real vector space, regarded as a
set, and
\(
\TE:=\tuple{\E,\const\E^{\ast}}
\)
is its cotangent bundle.

The space \( \E \) is used as a probe object. For example, when
\( \E=\R \), morphisms from \( \tuple{\E,\TE} \) into an object of
\( \cat[]{C} \) can be read as parametrised curves equipped with cotangent
data. In this way, differentiability of a map can be tested by its action on
such probes: postcomposition should send admissible probes in the domain to
admissible probes in the codomain.

Thus, in the present instance, an object of \( \cat[]{LR} \) is a triple
\(
\tuple{D,C,p}
\)
where
\(
p:D\to\fn[n]{\cat[]{C}}{\tuple{\E,\TE},C}
\)
is a function. When \(p\) is injective, \(D\) may be regarded as a specified
class of probes \( \tuple{\E,\TE}\to C \), namely those satisfying the
predicate represented by \(p\). A morphism
\(
\tuple{u,v}:\tuple{D,C,p}\to\tuple{D',C',p'}
\)
asserts that postcomposition with \(v:C\to C'\) sends probes satisfying
\(p\) to probes satisfying \(p'\).

We write
\[
\dompt{\tuple{D,C,p}}:=D,
\qquad
\codpt{\tuple{D,C,p}}:=C,
\qquad
\dompt{\tuple{u,v}}:=u,
\qquad
\codpt{\tuple{u,v}}:=v
\]
for the domain and codomain parts of objects and morphisms of \( \cat[]{LR} \).

In the correctness proof, the predicates represented in \( \cat[]{LR} \) will
express differentiability and correctness of the syntactic reverse derivative.
Accordingly, the source language must admit an interpretation in
\( \cat[]{LR} \). This requires, in particular, that \( \cat[]{LR} \) carry
the categorical structure needed for the language: finite products and
coproducts, the FAD monad, and the distinguished objects and operations used
to interpret base types and primitives.\footnote{For the description of the
	source language and its interpretation requirements, see
	Section~\ref{sec:source-language}.} The remainder of this subsection verifies
the FAD part of this structure by applying the FAD factorisation construction
of Section~\ref{sec:generalising-the-set-procedure}.

The main structural input is the monadic-comonadic analysis of these comma
categories from the CHAD logical-relations construction.

\begin{lemma}[{\cite[Corollary~99]{lucatellinunesCHADExpressiveTotal2023}}]
	\label{lem:monadicity-and-comonadicity-of-u}
	Assume that \( \cat[]{C} \) has binary coproducts and \( \cat[]{D} \) has
	binary products. Then
	\( U:\commacat{\cat[]{D}}{G}\to\cat[]{D}\times\cat[]{C} \) is comonadic and
	monadic, provided that \(G\) has a left adjoint.
\end{lemma}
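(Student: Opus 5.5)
Since the lemma is cited from Corollary~99 of the CHAD paper, the plan is to reconstruct its argument from the two Beck-type criteria: an explicit left adjoint and right adjoint to \(U\), followed by Beck's (co)monadicity theorem.

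\textbf{Right adjoint.} I would construct the right adjoint \(R:\cat[]{D}\times\cat[]{C}\to\commacat{\cat[]{D}}{G}\) directly, using only binary products in \( \cat[]{D} \). Set
\[
R\tuple{D,C}:=\tuple{D\times\Fun[nn]{G}{C},\,C,\,\pi_2}.
\]
A morphism \(\tuple{u,v}:\tuple{D',C',f'}\to R\tuple{D,C}\) in the comma category consists of \(v:C'\to C\) and \(u=\tupling{u_1,u_2}\) with \(u_2=\Fun[nn]{G}{v}\circ f'\). It is therefore exactly a pair \(\tuple{u_1,v}\), which is a morphism \(U\tuple{D',C',f'}\to\tuple{D,C}\). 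This bijection is natural, so \(U\dashv R\).

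\textbf{Left adjoint.} Dually, I would use the left adjoint \(L\dashv G\) and binary coproducts in \( \cat[]{C} \). Set
\[
\Lambda\tuple{D,C}:=\tuple{D,\,L D\amalg C,\,\Fun[nn]{G}{\iota_1}\circ\eta_D},
\]
where \( \eta \) is the unit of \(L\dashv G\). A morphism \(\Lambda\tuple{D,C}\to\tuple{D',C',f'}\) consists of \(u:D\to D'\) and \(v=\cotupling{v_1,v_2}\) satisfying \(\Fun[nn]{G}{v_1}\circ\eta_D=f'\circ u\). By the adjunction, this equation determines \(v_1\) uniquely from \(u\) as the transpose of \(f'\circ u\). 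Hence such morphisms correspond to pairs \(\tuple{u,v_2}\), giving \(\Lambda\dashv U\).

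\textbf{Beck's conditions.} It remains to check that \(U\) reflects isomorphisms and that it creates the relevant (co)limits.

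For conservativity, suppose \(\tuple{u,v}\) is a morphism of the comma category with \(u\) and \(v\) both invertible. Then \(\tuple{u^{-1},v^{-1}}\) is again a comma morphism: apply \(\Fun[nn]{G}{v^{-1}}\) on the left and \(u^{-1}\) on the right of the commuting square. So \(\tuple{u^{-1},v^{-1}}\) is the inverse of \(\tuple{u,v}\).

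For monadicity, I would show that \(U\) creates coequalisers of \(U\)-split pairs. Take a parallel pair whose image under \(U\) is split in \(\cat[]{D}\times\cat[]{C}\). Split coequalisers are absolute, so \(G\) preserves the split coequaliser in the \(\cat[]{C}\) component. The structure map on the coequaliser in \(\cat[]{D}\) is then induced by the universal property of that split coequaliser. Uniqueness of this induced map gives creation.

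Comonadicity is the dual argument with \(U\)-split equalisers. Absoluteness again guarantees preservation by \(G\). This time the structure map \(E\to\Fun[nn]{G}{C_E}\) is defined by composing \(E\to D\to\Fun[nn]{G}{C}\) and using the fact that \(G\) applied to the split equaliser is still an equaliser.

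\textbf{Main obstacle.} The comonadicity half is where I expect to take the most care. In that direction the structure map must be factored through \(G\) of an equaliser, and it is essential that the splitting makes the equaliser absolute. If the splitting were not used, \(G\) would have to preserve that equaliser, which need not hold here, since \(G\) is a right adjoint and does not preserve equalisers in general in the needed form. The monadic half instead relies on the simpler fact that coequalisers live in the domain of the structure map.
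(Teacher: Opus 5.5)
Your argument is correct. The paper only cites Corollary~99 without proof, and your route (explicit adjoints \(\Lambda\dashv U\dashv R\), conservativity, creation of (co)equalisers of \(U\)-split pairs, then Beck's theorem) is the standard one; the only unwritten step, that the lifted (co)cone is universal in the comma category, is routine because the coequaliser in \(\cat[]{D}\) is epic and \(G\) applied to the split equaliser in \(\cat[]{C}\) is monic.

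One correction to your closing remark: \(G\), being a right adjoint, preserves all equalisers, so the splitting is not forced in the comonadic half; what it buys is that comonadicity then needs only binary products in \(\cat[]{D}\), not the left adjoint of \(G\).
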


In the present case, \( \cat[]{C}=\Set\times\Fam{\Vect[o]} \) is complete and
cocomplete; hence so is \( \Set\times\cat[]{C} \). Since \( \cat[]{C} \) has
copowers, the hom-functor \(G\) has a left adjoint, and
Lemma~\ref{lem:monadicity-and-comonadicity-of-u} applies. Thus
\[
U:\cat[]{LR}\to\Set\times\cat[]{C}
\]
is both monadic and comonadic. The monadic part implies that \(U\) creates
limits, while the comonadic part implies that \(U\) creates colimits. Hence
\( \cat[]{LR} \) is complete and cocomplete, and its limits and colimits are
computed by taking the corresponding limits and colimits in the underlying
\( \Set \)- and \( \cat[]{C} \)-components, with the unique induced
logical-relation structure.

The same componentwise description gives the factorisation system required by
Section~\ref{sec:generalising-the-set-procedure}. A morphism
\( \tuple{u,v} \) lies in the left class precisely when \(u\) and \(v\) lie in
the corresponding left classes in \( \Set \) and
\( \Set\times\Fam{\Vect[o]} \); similarly for the right class. Writing
\( v=\tuple{v_1,v_2} \), the left-class condition says that \(u\) and \(v_1\)
are surjective and that \(v_2\) lies in the left class described in
Section~\ref{sec:fad-monad-on-fam-vectop}. The right-class condition says that
\(u\) and \(v_1\) are injective and that \(v_2\) lies in the corresponding
right class. This is the usual image factorisation, equivalently the
(epi, regular mono) factorisation system in \( \cat[]{LR} \). Since the left
classes in \( \Set \) and \( \Fam{\Vect[o]} \) are closed under binary
products, and \(U\) creates products, the left class in \( \cat[]{LR} \) is
closed under binary products as well.

It remains to specify the monoid of weights and the barycentre semimodule in
\( \cat[]{LR} \). Both are obtained by imposing the differentiability
predicate on the corresponding structures in
\( \cat[]{C}=\Set\times\Fam{\Vect[o]} \). The monoid of weights is
$\tuple{
	\opn{Diff} \left(\E,\R\right),
	\tuple{\R,\TR},
	\operatorname{dv}_{\R}
}$
where \( \opn{Diff} \left(\E,\R\right) \) denotes the set of
differentiable functions \( \E\to\R \), and \( \operatorname{dv}_{\R} \) is
the usual differentiability predicate. More generally, for each \(k\in\N\),
\begin{equation}
	\label{eq:definition-predicate-monoid-of-weights-in-lr}
	\operatorname{dv}_{\R[k]}:
	\opn{Diff} \left( \E, \R[k] \right)
	\to
	\fn[n]{ \Set }{ \E, \R[k] } \times \fn[n]{ \Fam{\Vect[o]} }{ \TE, \TR[k] }
	\qq{s.t.}
	f \mapsto \tuple{ f, \tuple{ f, \grad f } }.
\end{equation}
The barycentre semimodule is given by
$
\tuple{
	\opn{Diff} \left(\E,\R\right),
	\tuple{\R,\TR},
	\operatorname{dv}_{\R}
}$.\footnote{For the corresponding \(p\)-dimensional barycentre semimodule, one
	takes
	\( \tuple{ \opn{Diff} \left( \E,\R[p]\right),
		\tuple{\R[p],\TR[p]},\operatorname{dv}_{\R[p]} } \).}

The structure maps for the monoid and the semimodule are induced componentwise
from their counterparts in \( \Set \) and \( \Fam{\Vect[o]} \), with the
domain part given by postcomposition on differentiable functions.
For example, the operation \( \vplus \) is defined by
\begin{equation}
	\label{eq:definition-vplus-in-lr}
	\begin{array}{lllcl}
		\dompt{\vplus}
		&:=
		\vplus^{\Set} \circ \blankdash
		&:&
		\opn{Diff} \left( \E, \R[2] \right) &\to \opn{Diff} \left( \E, \R \right)
		\vspace{6pt}
		\\
		\codpt{\vplus}
		&:= \tuple{ \vplus^{\Set}, \vplus^{\cat[]{F}} }
		&:&
		\tuple{ \R[2], \TR[2] } &\to \tuple{ \R, \TR } .
	\end{array}
\end{equation}
Here we use the canonical identifications
\[
\opn{Diff} \left(\E,\R\right)\times \opn{Diff} \left(\E,\R\right)
\simeq
\opn{Diff} \left(\E,\R[2]\right)
\quad
\text{and}
\quad
\tuple{\R,\TR}\times\tuple{\R,\TR}
\simeq
\tuple{\R[2],\TR[2]}.
\]
The remaining structural morphisms are defined analogously. In each case, the
codomain part is the corresponding morphism in
\( \cat[]{C}=\Set\times\Fam{\Vect[o]} \), while the domain part restricts it
to the differentiable probes.

This componentwise construction is the sense in which the logical-relations
structure is fibred over \( \cat[]{C} \): objects, structural maps, limits,
colimits, and factorisations in \( \cat[]{LR} \) project to their corresponding
constructions in \( \cat[]{C} \).

We may therefore apply the FAD factorisation construction of
Section~\ref{sec:generalising-the-set-procedure} to \( \cat[]{LR} \). The
resulting monad \(M^{\cat[]{LR}}\) is fibred over
\( \cat[]{C}=\Set\times\Fam{\Vect[o]} \). Concretely, for an object
\( \tuple{D,C,p} \) and a morphism
\( \tuple{u,v}:\tuple{D,C,p}\to\tuple{D',C',p'} \) in \( \cat[]{LR} \),
\begin{equation}
	\label{eq:fad-monad-on-lr-fibred-over-c}
	\codpt{\Fun[nn]{ M^{\cat[]{LR}} }{ \tuple{ D, C, p } }}
	= \Fun[pn]{ M \times M^{\cat[]{F}} }{ C }
	\qq{and}
	\codpt{\Fun[nn]{ M^{\cat[]{LR}} }{ \tuple{ u, v } }}
	= \Fun[pn]{ M \times M^{\cat[]{F}} }{ v }.
\end{equation}

The domain part records the admissible differentiable probes into the FAD
object. Explicitly,
\begin{equation}
	\label{eq:fad-monad-on-lr-object-map-domain-part}
	\dompt{\Fun[nn]{ M^{\cat[]{LR}} }{ \tuple{ D, C, p } }} =
	\left\{
	\omega: D \to \opn{Diff} \left( \E, \R \right)
	\st
	\begin{aligned}
		&\abs{ \supp \omega } \lt \infty
		\text{ and }
		\\
		& \forall \left( x \in \supp \omega \right),
		\forall \left( v \in \E \right),
		\fn[n]{ \fn[n]{ \omega }{ x } }{ v } \gt 0
	\end{aligned}
	\right\}.
\end{equation}
Thus the domain part consists of finitely supported
\(\opn{Diff} \left(\E,\R\right)\)-valued weight distributions on \(D\), with
strictly positive non-zero weights at every probe point. On morphisms, it is
given by the usual pushforward of finite support:
\begin{equation}
	\label{eq:fad-monad-on-lr-morphism-map-domain-part}
	\dompt{\Fun[nn]{ M^{\cat[]{LR}} }{ \tuple{ u, v } }}:
	\dompt{\Fun[nn]{ M^{\cat[]{LR}} }{ \tuple{ D, C, p } }}
	\to
	\dompt{\Fun[nn]{ M^{\cat[]{LR}} }{ \tuple{ D', C', p' } }}
	\qq{s.t.}
	\omega
	\mapsto
	\left(
	y \mapsto
	\sum_{\substack{x \in \supp \omega \\ \fn[n]{ u }{ x } = y}} \fn[n]{ \omega }{ x }
	\right).
\end{equation}

Let \(X\) denote the \( \Set \)-component of \(C\). When \( \E=\R \), the
domain set
\(
\dompt{ \Fun[nn]{ M^{\cat[]{LR}} }{ \tuple{ D, C, p } } }
\)
can be read as the class of differentiable curves
\(h:\E\to\Fun[nn]{M}{X}\) selected by the logical relation. Locally, such a
curve is represented by a finite weighted presentation
\begin{equation*}
	\fn[n]{ h }{ v } =
	\fn[n]{ \categ }{
		\tuple{ \tuple{ \fn[n]{ f_{j} }{ v }, \fn[n]{ g_{j} }{ v } } }_{j = 1}^{n}
	},
\end{equation*}
where the curves \( f_j:\E\to X \) are represented by elements of \(D\) through
\(p\), and the functions \( g_j:\E\to\R \) are differentiable. The logical
relation simultaneously records the corresponding transposed derivatives.

We have therefore constructed the associated FAD monad
\(M^{\cat[]{LR}}\) on the logical-relations category. It is compatible with the
codomain projection to \( \Set\times\Fam{\Vect[o]} \), while its domain part
selects the differentiable probes into finite atomic distributions, including
differentiable atom data and differentiable log-weights. This completes the
semantic preparation for the syntactic categories: the source language will be
interpreted in \( \Set \), the differentiated target semantics in
\( \Fam{\Vect[o]} \), and the correctness argument in \( \cat[]{LR} \). The
next section introduces the syntactic category of the source language and the
structure that these semantic interpretations must preserve.

\section{Syntactic categories}

The CHAD perspective is to formulate automatic differentiation as a semantic
construction. Rather than defining a derivative transformation merely as an \textit{ad hoc } transformation, one first identifies the categorical structure
freely generated by the language and then defines a functor preserving that
structure. In this way, a code transformation is controlled by a categorical
transformation. The syntactic category is the object that makes this passage
precise.

Given a type theory, its syntactic category has as objects the types generated
by the theory\footnote{This assumes that the theory has product types. Without
	product types, one can instead work with the usual construction based on
	equivalence classes of contexts.} and as morphisms the
\( \alpha \)-equivalence classes of programs. Composition is given by
capture-avoiding substitution. The categorical structure of the syntactic
category records the corresponding type-forming operations: for instance, products,
coproducts, and exponentials interpret product, sum, and function types,
respectively. An interpretation of the theory in a category \( \cat[]{X} \) is
then a functor from the syntactic category to \( \cat[]{X} \) preserving the
chosen structure.

In this paper we use two type theories, the source language and the target
language. The source language is the language of programs to be differentiated.
The target language extends it with the additional structure needed to express
reverse-mode derivatives. The CHAD transformation sends source programs to
target programs by the corresponding structure-preserving functor between their
syntactic categories.
\subsection{Source language}
\label{sec:source-language}

In this section, we describe the \emph{source language}. Its syntactic category is the free distributive category generated by basic types \( \srcreal[n] \), \( n \in \N \), and basic built-in operations \( \srcop \), equipped with a chosen strong monad, a categorical operation, and a handler algebra.

Types in this language are given according to the following context-free grammar,
\begin{equation}
  \label{eq:source-language-type-construction-grammar}
  \tau, \sigma, \rho
  ::=
  0
  \mid
  1
  \mid
  \tau \times \sigma
  \mid
  \tau \sqcup \sigma
  \mid
  \srcreal[k] \left( k \in \N \right)
  \mid
  \opn[b]{M} \tau
\end{equation}
with \( \srcreal \) being used as abbreviation for \( \srcreal[1] \).

For terms, we denote variables by \( x, y, z, \dots \) and contexts as \( \Gamma = x_{1} : \tau_{1}, \cdots, x_{n} : \tau_{n} \). Term construction is defined by the following grammar
\begin{equation}
  \label{eq:source-language-term-construction-grammar}
  \begin{aligned}
    t, s, r ::=
    &
      \left.
      x
      \mid
      \tuple{\;}
      \mid
      \tuple{t, s}
      \mid
      \opn[b]{pr}_{1} t
      \mid
      \opn[b]{pr}_{2} t
      \mid
      \opn[b]{inl} t
      \mid
      \opn[b]{inr} t
      \right.
    \\
    &
      \left.
      \mid
      \casest{t}{x}{s}{y}{r}
      \right.
    \left.
    \mid
    \fn[n]{ \opn[b]{abort} }{ t }
    \right.
    \\
    &
      \left.
      \mid
      \letst{x}{t}{s}
      \right.
    \left.
    \mid
    \fn[n]{ \srcop }{ t_{1}, \dots, t_{k} }
    \right.
      \left.
      \mid
      \fn[n]{ \opn[b]{return} }{ t }
      \right.
    \\
    &
      \left.
      \mid
      \fn[n]{ \opn[b]{bind} }{ t; x . s }
      \right.
      \left.
      \mid
      \fn[n]{ \opn[b]{categorical} }{
      \tuple{
      \tuple{
      t_{i}, w_{i}
      }
      }_{i = 1}^{n}
      }
      \right.
    \left.
    \mid
    \fn[n]{ \opn[b]{E}_{\nu} }{ t }
    \right.
  \end{aligned}
\end{equation}
where \( \nu = \srcreal[p] \), for some \( p \in \N \), is a fixed type on which the handler algebra \( \opn[b]{E}_{\nu} \) is defined, and \( \srcop \) ranges over a family of chosen built-in operations, each with a signature \( \srcop: \srcreal[n_{1}] \times \cdots \times \srcreal[n_{k}] \to \srcreal[m] \), representing the basic operations available in the source language. This family is required to include all affine maps and pointwise linear algebra on products and sums, but extra operations may be added depending on the use case: a differential equation solver might require Bessel functions and spherical harmonics, while a signal analyser might require trigonometric functions, for example.

Typing judgements for the \( 0 \), \( 1 \), product, and coproduct types are standard, and they may be found in Appendix~\ref{sec:appendix-source-language-for-chad}. Any operation \( \srcop: \srcreal[n_{1}] \times \cdots \times \srcreal[n_{k}] \to \srcreal[m] \) defines a typing rule
\begin{equation*}
\prftree[r]{\textsc{Operations}}{
\Gamma \vdash t_{j}: \srcreal[n_{j}]
\quad
\left( j \in \left\{ 1, \dots, k \right\} \right)
}{
\Gamma \vdash
\fn[n]{ \srcop }{ t_{1}, \dots, t_{k} } :
\srcreal[m]
}
\end{equation*}

For types of the form \( \opn[b]{M} \tau \), we add the following extra rules
\begin{equation}
  \label{eq:source-language-monad-typing-rules}
  \begin{gathered}
    \prftree[r]{\( \opn[b]{M} \text{E} \)}{
      \Gamma \vdash t : \opn[b]{M} \nu
    }{
      \Gamma \vdash \fn[n]{ \opn[b]{E}_{\nu} }{ t } : \nu
    }
    \qquad
    \prftree[r]{\( \opn[b]{M} \text{I1} \)}{
      \Gamma \vdash t : \opn[b]{M} \sigma
    }{
      \Gamma, x : \sigma \vdash s : \opn[b]{M} \tau
    }{
      \Gamma \vdash
      \fn[n]{ \opn[b]{bind} }{ t; x . s } : \opn[b]{M} \tau
    }
    \\
    \prftree[r]{\( \opn[b]{M} \text{I2} \)}{
      \Gamma \vdash t : \tau
    }{
      \Gamma \vdash \fn[n]{ \opn[b]{return} }{ t }: \opn[b]{M} \tau
    }
    \qquad
    \prftree[r]{\( \opn[b]{M} \text{I3} \)}{
      \Gamma \vdash t_{i}: \tau
    }{
      \Gamma \vdash w_{i}: \srcreal
    }{
      \left( i \in \left\{ 1, \cdots, n \right\} \right)
    }
    {
      \Gamma \vdash
      \fn[n]{ \opn[b]{categorical} }{
        \tuple{
          \tuple{t_{i}, w_{i}}
        }_{i = 1}^{n}
      } :
      \opn[b]{M} \tau
    }
  \end{gathered}
\end{equation}

Finally, we add equations. In addition to the standard equational theory for the \( 0 \), \( 1 \), product, and coproduct types, we add equations for the bind, return, and algebra constructions. First, the Kleisli laws for the bind and return constructions are
\begin{equation}
  \label{eq:source-language-kleisli-laws}
  \begin{gathered}
    \fn[n]{ \opn[b]{bind} }{ \fn[n]{ \opn[b]{return} }{ t }; x . s }
    \equiv
    \casubst{s}{x}{t}
    \qquad
    \fn[n]{ \opn[b]{bind} }{ t; x . \fn[n]{ \opn[b]{return} }{ x } }
    \equiv t
    \\
    \fn[n]{ \opn[b]{bind} }{
      \fn[n]{ \opn[b]{bind} }{ t; x . s };
      y . r
    }
    \equiv
    \fn[n]{ \opn[b]{bind} }{
      t;
      x . \fn[n]{ \opn[b]{bind} }{ s ; y . r }
    }
  \end{gathered}
\end{equation}
where we use \( \casubst{s}{x}{t} \) for the capture-avoiding substitution of the term \( t \) for the variable \( x \) in the expression \( s \).

There are also equational rules for the handler algebra \( \opn[b]{E}_{\nu} \), which ensure that it behaves as an Eilenberg-Moore algebra for \( \opn[b]{M} \) over the type \( \nu = \srcreal[p] \),
\begin{equation}
  \label{eq:source-language-eilenberg-moore-algebra-laws}
  \fn[n]{ \opn[b]{E}_{\nu} }{ \fn[n]{ \opn[b]{return} }{ v } }
  \equiv v
  \qquad
  \fn[n]{ \opn[b]{E}_{\nu} }{ \fn[n]{ \opn[b]{bind} }{ t; x . u } }
  \equiv
  \fn[n]{ \opn[b]{E}_{\nu} }{
    \fn[n]{ \opn[b]{bind} }{
      t; x . \fn[n]{ \opn[b]{return} }{ \fn[n]{ \opn[b]{E}_{\nu} }{ u } }
    }
  }
\end{equation}

We define the category \( \Syn \) as the syntactic category of this source language. In categorical terms, \( \Syn \) can be seen as the freely generated distributive category on the objects \( \srcreal[n] \), \( n \in \N \), and operations \( \srcop \), equipped with a fixed strong monad, a chosen \( \opn[b]{categorical} \) operation, and a chosen Eilenberg--Moore algebra \( \opn[b]{E}_{\nu} \) over some object \( \nu = \srcreal[p] \) for some \( p \in \N \).

Given any other distributive category \( \cat[]{X} \), with a strong monad \( M^{\cat[]{X}}: \cat[]{X} \to \cat[]{X} \), chosen categorical constructors, and an Eilenberg--Moore algebra \( \opn{alg}: \Fun[nn]{ M^{\cat[]{X}} }{ V } \to V \) for some object \( V \), we may define an interpretation functor \( \genint{\cdot}: \Syn \to \cat[]{X} \) by choosing, for each \( n \in \N \), an object \( X_{n} \) in \( \cat[]{X} \) such that \( X_{p} = V \). The object map of this functor is given by the type interpretations, which are inductively defined on the structure of the types as
\begin{equation}
  \label{eq:source-language-interpretation-inductive-definition}
  \begin{aligned}
    \genint{0} &= 0_{\cat[]{X}}
    \\
    \genint{\tau \times \sigma} &= \genint{\tau} \times \genint{\sigma}
  \end{aligned}
  \quad
  \begin{aligned}
    \genint{1} &= 1_{\cat[]{X}}
    \\
    \genint{\tau \sqcup \sigma} &= \genint{\tau} \amalg \genint{\sigma}
  \end{aligned}
  \quad
  \begin{aligned}
    \genint{\srcreal[n]} &= X_{n}
    \\
    \genint{\opn[b]{M} \tau} &= \Fun[nn]{ M^{\cat[]{X}} }{ \genint{\tau} }
  \end{aligned}
\end{equation}
where \( 0_{\cat[]{X}} \) and \( 1_{\cat[]{X}} \) denote chosen initial and terminal objects in \( \cat[]{X} \), and \( \times \) and \( \amalg \) denote chosen product and coproduct structures, respectively.

Contexts are interpreted as products of the underlying types, that is, \( \genint{x_{1}: \tau_{1}, \cdots, x_{n} : \tau_{n}} = \prod_{i = 1}^{n} \genint{\tau_{i}} \), and a typing judgement \( \Gamma \vdash t : \tau \) is then interpreted as a morphism \( \genint{\Gamma \vdash t : \tau}: \genint{\Gamma} \to \genint{\tau} \).

Variables, constants, and constructors for the distributive structure are interpreted in the standard way (see Appendix~\ref{sec:appendix-source-language-for-chad}). A built-in operator \( \srcop: \srcreal[n_{1}] \times \cdots \times \srcreal[n_{k}] \to \srcreal[m] \) must be associated with a morphism \( \genint{\srcop}: \prod_{j = 1}^{k} X_{n_{j}} \to X_{m} \), so that
\begin{equation}
  \label{eq:source-language-built-in-operator-interpretation}
  \genint{\Gamma \vdash \fn[n]{ \srcop }{ t_{1}, \cdots, t_{k} } : \srcreal[m]}
  = \genint{\srcop}
  \circ
  \tupling{\genint{\Gamma \vdash t_{j} : \srcreal[n_{j}]}}_{j = 1}^{k}:
  \genint{\Gamma}
  \to
  X_{m}
\end{equation}
and the monadic operations are interpreted in terms of the unit \( \eta^{\cat[]{X}} \), multiplication \( \mu^{\cat[]{X}} \), and strength \( \str^{\cat[]{X}} \) of the monad \( M^{\cat[]{X}} \),
\begin{equation*}
  \begin{aligned}
    \genint{\Gamma \vdash \fn[n]{ \opn[b]{return} }{ t } : \opn[b]{M} \tau}
    &=
      \Nat[nn]{ \eta^{\cat[]{X}} }{ \genint{\tau} }
      \circ
      \genint{\Gamma \vdash t : \tau}
    \\
    \genint{\Gamma \vdash \fn[n]{ \opn[b]{bind} }{ t; x . s } : \opn[b]{M} \tau}
    &=
      \mu^{\cat[]{X}}_{\genint{\tau}}%
      \circ
      \Fun[nn]{ M^{\cat[]{X}} }{
      \genint{\Gamma, x : \sigma \vdash s : \opn[b]{M} \tau}
      }
      \circ
      \str^{\cat[]{X}}_{ \genint{\Gamma}, \genint{\sigma} }%
      \circ
      \tupling{
      \id_{\genint{\Gamma}},
      \genint{\Gamma \vdash t : \opn[b]{M} \sigma}
      }
  \end{aligned}
\end{equation*}

Terms constructed with \( \opn[b]{categorical} \) are interpreted using the chosen morphisms \( \categ_{\tau,n}^{\cat[]{X}} \).
\begin{equation*}
  \genint{
    \Gamma \vdash
    \fn[n]{ \opn[b]{categorical} }{
      \tuple{ \tuple{t_{i}, w_{i}} }_{i = 1}^{n}
    } :
    \opn[b]{M} \tau
  }
  =
  \categ_{\tau,n}^{\cat[]{X}}
  \circ
  \tupling{
    \tupling{
      \genint{\Gamma \vdash t_{i} : \tau},
      \genint{\Gamma \vdash w_{i} : \srcreal}
    }
  }_{i = 1}^{n}.
\end{equation*}
For the FAD language treated here, no additional equations for these constructors are imposed.

Finally, the handler algebra \( \opn[b]{E}_{\nu} \) is interpreted as the chosen algebra on \( X_{p} \),
\begin{equation}
  \label{eq:source-language-handler-algebra-interpretation}
  \genint{\Gamma \vdash \fn[n]{ \opn[b]{E}_{\nu} }{ t } : \nu}
  = \opn{alg} \circ \genint{\Gamma \vdash t : \opn[b]{M} \nu}
\end{equation}

Following this prescription, one can prove the following \emph{soundness theorem},
\begin{lemma}[Soundness of source language interpretations]
  \label{lem:soundness-of-source-language-interpretations}

  If \( \Gamma \vdash t \equiv s : \tau \) is derivable in the equational theory above, then \( \genint{\Gamma \vdash t : \tau} = \genint{\Gamma \vdash s : \tau} \) as morphisms \( \genint{\Gamma} \to \genint{\tau} \) in \( \cat[]{X} \).
\end{lemma}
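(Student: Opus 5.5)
The proof goes by induction on the derivation of \( \Gamma \vdash t \equiv s : \tau \). The equational theory is the least congruence containing the standard \(0,1\), product and coproduct equations, the Kleisli laws \eqref{eq:source-language-kleisli-laws}, and the algebra laws \eqref{eq:source-language-eilenberg-moore-algebra-laws}. So it suffices to check three things: each generating axiom holds in the interpretation; the congruence rules (reflexivity, symmetry, transitivity, and closure under every term former) are preserved; and the weakening and substitution structure is respected. Congruence closure is immediate. The interpretation of each term former is defined compositionally from the interpretations of its immediate subterms, as in the clauses for \( \srcop \), \( \opn[b]{return} \), \( \opn[b]{bind} \), \( \opn[b]{categorical} \) and \( \opn[b]{E}_{\nu} \). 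Hence equal interpretations of subterms yield equal interpretations of the compound term.

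The key auxiliary step is a \emph{substitution lemma}: for \( \Gamma, x:\sigma \vdash s:\tau \) and \( \Gamma\vdash t:\sigma \),
\[
\genint{\Gamma\vdash \casubst{s}{x}{t}:\tau}
=
\genint{\Gamma, x:\sigma\vdash s:\tau}\circ\tupling{\id_{\genint{\Gamma}},\genint{\Gamma\vdash t:\sigma}},
\]
together with the corresponding weakening lemma, which says that the interpretation of a term in an extended context is the original interpretation precomposed with a projection. Both are proved by induction on \(s\). For the binding constructs (\(\mathsf{case}\), \(\mathsf{let}\), \( \opn[b]{bind} \)) one uses naturality of the strength \( \str^{\cat[]{X}} \) in its first argument, and naturality of the distributivity isomorphism for \(\mathsf{case}\). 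The \( \opn[b]{categorical} \), \( \srcop \) and \( \opn[b]{E}_{\nu} \) cases are purely compositional, because their interpretations are postcompositions of tuplings.

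With substitution in hand, the axioms are checked in turn.

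The \(\beta\) and \(\eta\) laws for products, coproducts, \(0\) and \(1\) are the universal properties of the chosen (co)products. The coproduct laws use distributivity of \( \cat[]{X} \) to interpret \(\mathsf{case}\) in context.

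The first Kleisli law reduces to
\[
\mu\circ M(\genint{s})\circ\str\circ\tupling{\id,\eta\circ\genint{t}}
= \mu\circ M(\genint{s})\circ\eta\circ\tupling{\id,\genint{t}}
= \genint{s}\circ\tupling{\id,\genint{t}} .
\]
The first step uses the strength-unit axiom \( \str\circ(\id\times\eta)=\eta \). The second uses naturality of \(\eta\) and the monad law \( \mu\circ\eta M=\id \). The substitution lemma then finishes it.

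The second Kleisli law uses \(M(\eta\circ\pi_2)\), the monad law \(\mu\circ M\eta=\id\), and the strength axiom \(M\pi_2\circ\str=\pi_2\).

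The associativity law is the main obstacle. One needs the strength-multiplication axiom \( \str\circ(\id\times\mu)=\mu\circ M\str\circ\str \), naturality of \(\str\) and \(\mu\), associativity of \(\mu\), and the strength-associativity axiom relating \(\str_{A\times B,C}\) to \(\str_{A,B\times C}\). These are needed to rewrite the inner context \( \genint{\Gamma}\times\genint{\sigma} \) coming from the weakened \(r\). I would carry this out as a diagram chase, first reducing both sides to the Kleisli-extension form \( (-)^{\dagger}\circ\str\circ\tupling{\id,-} \). This is the standard argument of Moggi's computational \(\lambda\)-calculus soundness proof.

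Finally, the algebra laws. The unit law is \( \opn{alg}\circ\eta_V=\id_V \). The bind law needs
\[
\opn{alg}\circ\mu\circ M\genint{u}\circ\str\circ\tupling{\id,\genint{t}}
=\opn{alg}\circ\mu\circ M(\eta\circ\opn{alg}\circ\genint{u})\circ\str\circ\tupling{\id,\genint{t}} .
\]
On the right-hand side, \( \mu\circ M\eta=\id \), so the right-hand side becomes \( \opn{alg}\circ M\opn{alg}\circ M\genint{u}\circ\cdots \). The Eilenberg--Moore associativity law \( \opn{alg}\circ\mu_V=\opn{alg}\circ M\opn{alg} \) turns this into the left-hand side. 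Since no equations are imposed on \( \opn[b]{categorical} \), no further checks are required.
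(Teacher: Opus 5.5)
Your proposal is correct and takes the same route as the paper. The paper only sketches a routine check of each generating equation: the pure-fragment laws from the distributive-category structure, the \(\opn[b]{return}\)/\(\opn[b]{bind}\) laws from the strong-monad coherence laws applied to the strength-based interpretation of \(\opn[b]{bind}\), and the handler laws from the Eilenberg--Moore axioms, with no \(\opn[b]{categorical}\) equations to check. You fill in more of the detail (congruence closure, the substitution and weakening lemma, and the explicit Kleisli and algebra computations), but the argument is the same.
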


We may now state the universal property of the syntactic category of the source language.

\begin{lemma}[Universal property of \( \Syn \)]
  \label{lem:universal-property-of-the-syntactic-category-of-the-source-language}

  Let \( \cat[]{X} \) be any distributive category equipped with
  \begin{itemize}
  \item A strong monad \( \tuple{ M^{\cat[]{X}}, \eta^{\cat[]{X}}, \mu^{\cat[]{X}}, \str^{\cat[]{X}} } \);
  \item For every \( n \in \N \), chosen objects \( X_{n} \);
  \item An Eilenberg-Moore algebra \( \opn{alg}: \Fun[nn]{ M^{\cat[]{X}} }{ X_{p} } \to X_{p} \) over the object \( X_{p} \);
  \item A suitably typed interpretation \( \genint{\srcop} \) of every built-in operation \( \srcop \);
  \item For each \( n \in \N \) and each object \( A \) interpreting a type \( \tau \), an arrow \( \categ^{\cat[]{X}}_{\tau, n}: \left( A \times X_{1} \right)^{n} \to \Fun[nn]{ M^{\cat[]{X}} }{ A } \), used to interpret \( \opn[b]{categorical} \) terms built from lists of length \( n \) of atoms of type \( \tau \) and weights in \( X_{1} \).
  \end{itemize}

  Then there exists a unique strict functor \( \genint{ \cdot }: \Syn \to \cat[]{X} \) preserving finite products and coproducts, the strong monad structure, the handler algebra, and \( \opn[b]{categorical} \) as above, and sending every equation of \( \Syn \) to an identity in \( \cat[]{X} \).
\end{lemma}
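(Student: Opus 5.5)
The plan is the standard free-model argument: existence by structural recursion on derivations, uniqueness by induction on terms. For \emph{existence}, I will define the object map on types by the inductive clauses of Equation~\eqref{eq:source-language-interpretation-inductive-definition}, using the chosen \(X_n\) (with \(X_p\) the carrier of \(\opn{alg}\)) and the chosen (co)products and \(M^{\cat[]{X}}\). Contexts become the corresponding finite products. On terms, I will define \(\genint{\Gamma\vdash t:\tau}\) by recursion on the typing derivation. Variables, tuples, projections, injections, \(\opn[b]{abort}\) and \(\mathsf{let}\) are interpreted in the standard way. For \(\mathsf{case}\) I will use the distributivity isomorphism \(\genint{\Gamma}\times(A\amalg B)\cong(\genint{\Gamma}\times A)\amalg(\genint{\Gamma}\times B)\) followed by a cotupling; this is where the hypothesis that \(\cat[]{X}\) is distributive is needed. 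Operations are interpreted by Equation~\eqref{eq:source-language-built-in-operator-interpretation}, \(\opn[b]{return}\) and \(\opn[b]{bind}\) by the displayed unit/multiplication/strength formulas, \(\opn[b]{categorical}\) via \(\categ^{\cat[]{X}}_{\tau,n}\), and \(\opn[b]{E}_\nu\) by Equation~\eqref{eq:source-language-handler-algebra-interpretation}. Since typing derivations are syntax-directed, this is a well-defined function on \(\alpha\)-equivalence classes of terms.

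To descend to \(\Syn\), whose morphisms are terms modulo the equational theory, I will invoke Lemma~\ref{lem:soundness-of-source-language-interpretations}: equal terms receive equal interpretations. Functoriality requires a substitution lemma, \(\genint{s[t/x]}=\genint{s}\circ\tupling{\id,\genint{t}}\), proved by induction on \(s\). Its only nonroutine cases are \(\opn[b]{bind}\) and \(\mathsf{case}\), which need naturality of the strength in its first argument and naturality of the distributivity map, respectively. With the substitution lemma, composition in \(\Syn\) (substitution) goes to composition in \(\cat[]{X}\), and identities (variables) go to identities. Strict preservation of products, coproducts, the monad, \(\opn[b]{categorical}\) and the algebra then holds by construction, because each type former and structural constant of \(\Syn\) is mapped to the chosen structure in \(\cat[]{X}\).

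For \emph{uniqueness}, suppose \(F:\Syn\to\cat[]{X}\) is another such functor. Strict preservation forces \(F\) to agree with \(\genint{\cdot}\) on every type, by induction on the grammar~\eqref{eq:source-language-type-construction-grammar}. By induction on terms, each term constructor is expressed in \(\Syn\) as a composite of structural maps of \(\Syn\): projections, tuplings, the distributivity inverse, cotupling, unit, multiplication, strength, the \(\opn[b]{categorical}\) arrow, the algebra, and the generating operations. \(F\) must send each of these to the chosen counterpart, so \(F\) and \(\genint{\cdot}\) agree on every morphism.

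The main obstacle I expect is the \(\opn[b]{bind}\) and \(\opn[b]{E}_\nu\) bookkeeping. One must check that the Kleisli laws~\eqref{eq:source-language-kleisli-laws} hold under the strength-based interpretation; this uses the strong-monad coherence axioms for \(\str\) with \(\eta\) and \(\mu\). One must also check that the algebra laws~\eqref{eq:source-language-eilenberg-moore-algebra-laws} follow from \(\opn{alg}\) being an Eilenberg--Moore algebra together with naturality of the strength. My first approach would be to reduce both to diagram chases in the Kleisli category presented via strength, following the standard proof for Moggi's computational metalanguage. These checks are already packaged in Lemma~\ref{lem:soundness-of-source-language-interpretations}, so what remains genuinely new here is the substitution lemma and the uniqueness induction.
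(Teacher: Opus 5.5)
Your proposal is correct and takes the same route as the paper, which only sketches this as a ``standard algebraic presentation argument'': existence via the displayed interpretation clauses, passed to \(\Syn\) using soundness, and uniqueness by induction on the structure of morphisms, since the equations of \(\Syn\) are exactly those of the freely generated structure. Your substitution lemma, the use of distributivity for \(\mathsf{case}\), and the strength-naturality check for \(\opn[b]{bind}\) just make explicit what the paper leaves implicit. One small caveat: terms such as \(\opn[b]{pr}_1 t\), \(\mathsf{let}\), \(\mathsf{case}\) and \(\opn[b]{bind}\) carry no type annotations, so derivations are syntax-directed but not unique; you should therefore either read terms as type-annotated or note that the interpretation does not depend on the derivation.
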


The proof of existence and uniqueness follows from the prescribed construction above.

\subsection{Interpretation of the source language in \texorpdfstring{\( \Set \)}{Set}}
\label{sec:interpretation-of-source-language-in-set}

Applying the universal property of the source language (Lemma~\ref{lem:universal-property-of-the-syntactic-category-of-the-source-language}) to the data below gives a unique interpretation functor \( \setint{\cdot}: \Syn \to \Set \) interpreting the source language in the category of sets. For every \( n \in \N \), this interpretation is characterised by
\begin{equation}
  \label{eq:interpretation-reals-in-set}
  \setint{\srcreal[n]} := \R[n]
\end{equation}
and, for any built-in operator \( \srcop: \srcreal[n_{1}] \times \cdots \times \srcreal[n_{k}] \to \srcreal[m] \) in the source language, we interpret \( \srcop \) as a differentiable function
\begin{equation}
  \label{eq:interpretation-operations-in-set}
  \setint{\srcop} : \R[n_{1}] \times \cdots \times \R[n_{k}] \to \R[m].
\end{equation}
The strong monad \( M^{\Set} \) is the FAD monad \( M \), with its unit, multiplication, and strength. The handler algebra for \( \nu = \srcreal[p] \) is interpreted as the expectation algebra
\begin{equation*}
  \opn{alg}^{\Set}_{\nu}: \Fun[nn]{ M }{ \R[p] } \to \R[p],
  \qquad
  w \mapsto \sum_{x \in \supp w} x \cdot \fn[n]{ w }{ x }.
\end{equation*}
Finally, \( \opn[b]{categorical} \) is interpreted using the construction from Equation~\eqref{eq:categorical-definition-in-set}: for each type \( \tau \), we use
\begin{equation*}
  \categ^{\Set}_{\tau}:\coprod_{n \in \N} \left( \setint{\tau} \times \R \right)^{n} \to \Fun[nn]{ M }{ \setint{\tau} }.\end{equation*}

\subsection{Target language}
\label{sec:target-language}

Code written in the source language (Section~\ref{sec:source-language}) can be automatically differentiated using the CHAD algorithm. The transformed code is written in a separate language, called the \emph{target language}.

The target language is dependently typed, split into a cartesian type theory and, for every context of cartesian types, a linear type theory. Semantically, the cartesian fragment of the target language allows one to reason about base spaces, while the linear type theories describe cotangents.

We denote by \( \CSyn \) the term model of the cartesian fragment of the target language. Crucially, \( \Syn \) can be seen as a full subcategory of \( \CSyn \). For each context \( \Gamma \) of cartesian types, we denote by \( \LSyn[\Gamma] \) the syntactic category of the associated linear type theory. Since any substitution \( \Gamma \to \Gamma' \) between cartesian contexts \( \Gamma \) induces a functor \( \LSyn[\Gamma'] \to \LSyn[\Gamma] \), we have a strict indexed category \( \LSyn: \CSyn[o] \to \Cat \). The syntactic category of the target language is then given by the Grothendieck construction \( \Tgt \) (see Section 2.3 of \cite{lucatellinunesCHADExpressiveTotal2023}). An explicit description of objects and morphisms in this category can be seen in Section~\ref{sec:interpretation-of-the-target-language-in-fam-vectop}.

As in the source language, we write \( \tau, \sigma, \rho \) for the \emph{cartesian types}, and reserve underlined Greek letters \( \lintype{\tau}, \lintype{\sigma}, \lintype{\rho} \) for linear types. We then have \emph{cartesian typing judgements} of the form \( \Gamma \vdash t : \tau \) and \emph{linear typing judgements} of the form \( \Gamma ; v : \lintype{\sigma} \vdash s : \lintype{\rho} \).

We can divide the description of the target language into three layers:
\begin{itemize}
\item \textbf{Non-monadic core}: deterministic operations. For brevity, we omit the description of this layer here (see Appendix~~\ref{sec:appendix-target-language-for-chad}).
\item \textbf{Oplax strong-monad core}: rules for a strong monad once a compatible cotangent lift has been supplied.
\item \textbf{Distribution and handler}: language features that are specific to the FAD monad.
\end{itemize}
In doing so, we isolate the data needed to apply CHAD to effects.

Cartesian types are described by the following grammar:
\begin{equation}
  \label{eq:target-language-cartesian-type-construction-grammar}
  \tau, \sigma, \rho
  ::=
  \left( \text{As in Equation~\eqref{eq:source-language-type-construction-grammar}} \right)
  \mid
  \lintype{\sigma} \linto \lintype{\rho}
  \mid
  \Sigma \left( x : \sigma \right) . \tau'
\end{equation}
meaning that, in the cartesian fragment, we also have function types between linear types and dependent sums. For linear types, we use the grammar
\begin{equation}
  \label{eq:target-language-linear-type-construction-grammar}
  \lintype{\tau}, \lintype{\sigma}, \lintype{\rho} ::=
  \lintype{\srcreal[k]}
  \; \left( k \in \N \right)
  \mid
  \lintype{1}
  \mid
  \lintype{\tau} \times \lintype{\sigma}
  \mid
  \opn[b]{case} \; t \; \opn[b]{of} \; \left\{ \opn[b]{in}_{i} x_{i} \mapsto \lintype{\sigma}_{i} \right\}_{i = 1}^{n}
  \mid
  \opn[b]{M}^{w}_{p : \opn[b]{M} \tau}
  \left[ x \mapsto \lintype{ \sigma } \right]
\end{equation}
meaning that the base Euclidean-space types have associated linear types (in the sense that the cotangent space of a Euclidean space at any point can be identified with the base space itself), as well as the fibre-wise zero object \( 1 \). Moreover, we allow fibre-wise binary biproducts \( \times \) and families of linear types indexed by finite disjunctions, and we introduce the cotangent linear type associated with monadic types.

We also distinguish \emph{cartesian} and \emph{linear terms}. The cartesian terms are described by the grammar
\begin{equation}
  \label{eq:target-language-cartesian-term-construction-grammar}
  t, s, r ::=
  \left( \text{As in Equation~\eqref{eq:source-language-term-construction-grammar}} \right)
  \mid
  \tuple{ a, t }
  \mid
  \opn[b]{case} \; p \; \opn[b]{of} \; \tuple{ x, y } \mapsto s
  \mid
  \lambda v \; . \; s
\end{equation}
where the constructor \( \tuple{ a, t } \) and the eliminator \( \opn[b]{case} \; p \; \opn[b]{of} \; \tuple{ x, y } \mapsto s \) are used in the typing rules for the cartesian \( \Sigma \)-type, and \( \lambda v \; . \; s \) is used for abstraction over the unique linear variable \( v \).

For linear terms, we use the following grammar:
\begin{equation}
  \label{eq:target-language-linear-term-construction-grammar}
  \begin{aligned}
    t, s, r
    ::=
    v
    &\mid
      \letst{v}{t}{s}
      \mid
      \tuple{ t, s }
      \mid
      \opn[b]{pr}_{i} t
      \mid
      \opn[b]{0}
      \mid
      t + s
      \mid
      \linapp{r}{t}
    \\
    &
      \mid
      \fn[n]{ \fn[n]{ \tgtop }{ t_{1}, \cdots, t_{k} } }{v}
      \mid
      \fn[n]{ \opn[b]{return}^{w}_{t} }{s}
      \mid
      \fn[n]{ \opn[b]{bind}^{w}_{s; x \; . \; t} }{r; u}
    \\
    &
      \mid
      \fn[n]{ \opn[b]{categorical}^{w}_{\tuple{ \tuple{ t_{i}, w_{i} } }_{i = 1}^{n}} }{
      \tuple{ \tuple{ t_{i}', w_{i}' } }_{i = 1}^{n}
      }
      \mid
      \fn[n]{ \opn[b]{E}^{w}_{\nu, t} }{ s }
  \end{aligned}
\end{equation}
The three lines correspond to the three layers: the first line consists of the non-monadic core operations (linear variables, let bindings, pairings and projections, the zero vector, sums, and \( \linapp{r}{t} \) for linear application with \( r \) a cartesian term of type \( \lintype{\sigma} \linto \lintype{\rho} \)). The second line has the derivatives of the built-in operations \( \srcop \) of the source language, as well as the linear fibre components of the monad operations \( \opn[b]{return} \) and \( \opn[b]{bind} \). Finally, the third line has the linear fibre components of \( \opn[b]{categorical} \) and of the handler algebra \( \opn[b]{E}_{\nu} \).

The typing rules for the non-monadic core are standard, so we omit them here (see Appendix~\ref{sec:appendix-target-language-for-chad}).

There are two typing rules for the monadic core, corresponding to the \( \opn[b]{return} \) and \( \opn[b]{bind} \) operations. For \( \opn[b]{return} \), the rule is
\begin{equation}
  \label{eq:target-language-return-typing-law}
  \prftree{
    \Gamma \vdash t : \tau
  }{
    \Gamma, x : \tau \vdash \lintype{\sigma} \text{ type}
  }{
    \Gamma; v : \casubst{\lintype{\sigma}}{x}{t} \vdash s : \lintype{\rho}
  }{
    \Gamma; v : \opn[b]{M}^{w}_{\fn[n]{ \opn[b]{return} }{ t } : \Fun[nn]{ \opn[b]{M} }{ \tau }} \left[ x \mapsto \lintype{\sigma} \right]
    \vdash
    \fn[n]{ \opn[b]{return}^{w}_{t} }{ s }:
    \lintype{\rho}
  }
\end{equation}
and for \( \opn[b]{bind} \), the rule is
\begin{equation}
  \label{eq:target-language-bind-typing-law}
  \begin{gathered}
    \Gamma, x : \sigma \vdash \lintype{\sigma} \text{ type}
    \qquad
    \Gamma, y : \tau \vdash \lintype{\tau} \text{ type}
    \qquad
    \Gamma \vdash \lintype{\rho} \text{ type}
    \qquad
    \Gamma \vdash s : \Fun[nn]{ \opn[b]{M} }{ \sigma }
    \\
    \prftree{
      \Gamma, x : \sigma \vdash t : \Fun[nn]{ \opn[b]{M} }{ \tau }
    }{
      \Gamma; v : \opn[b]{M}^{w}_{s : \Fun[nn]{ \opn[b]{M} }{ \sigma }}
      \left[ x \mapsto \lintype{\sigma} \right]
      \vdash
      r : \lintype{\rho}
    }{
      \Gamma, x : \sigma; v : \opn[b]{M}^{w}_{t : \Fun[nn]{ \opn[b]{M} }{ \tau }} \left[ y \mapsto \lintype{\tau} \right]
      \vdash u : \lintype{\rho} \times \lintype{\sigma}
    }{
      \Gamma; v : \opn[b]{M}^{w}_{\fn[n]{ \opn[b]{bind} }{ s; x \; . \; t } : \Fun[nn]{ \opn[b]{M} }{ \tau }}
      \left[ y \mapsto \lintype{\tau} \right]
      \vdash
      \fn[n]{ \opn[b]{bind}^{w}_{s; x \; . \; t} }{ r; u } : \lintype{\rho}
    }
  \end{gathered}
\end{equation}

These typing rules are the target-language interface for the chosen lifted strong monad. For the FAD monad, the lift \(M^{\cat[]{F}}\) also supplies transpose rules for \( \opn[b]{categorical} \) and for the handler algebra \( \opn[b]{E}_{\nu} \). For \( \opn[b]{categorical} \), we have, for all \( n \in \N \),
\begin{equation}
  \label{eq:target-language-typing-rule-categorical}
  \begin{gathered}
    \Gamma, x : \tau \vdash \lintype{\tau} \text{ type}
    \qquad
    \Gamma \vdash \lintype{\rho} \text{ type}
    \\
    \prftree{
      \Gamma \vdash w_{i} : \srcreal
    }{
      \Gamma \vdash t_{i}: \tau
    }{
      \Gamma; v : \lintype{\srcreal}
      \vdash
      w_{i}' : \lintype{\rho}
    }{
      \Gamma; v : \casubst{\lintype{\tau}}{x}{t_{i}}
      \vdash
      t_{i}' : \lintype{\rho}
    }{
      \left( 1 \le i \le n \right)
    }{
      \Gamma; v : \opn[b]{M}^{w}_{
        \fn[n]{ \opn[b]{categorical} }{
          \tuple{ \tuple{ t_{i}, w_{i} } }_{i = 1}^{n}
        } : \Fun[nn]{ \opn[b]{M} }{ \tau }
      } \left[ x \mapsto \lintype{\tau} \right]
      \vdash
      \fn[n]{
        \opn[b]{categorical}^{w}_{
          \tuple{ \tuple{ t_{i}, w_{i} } }_{i = 1}^{n}
        }
      }{
          \tuple{ \tuple{ t_{i}', w_{i}' } }_{i = 1}^{n}
      }
      : \lintype{\rho}
    }
  \end{gathered}
\end{equation}
and, for the handler algebra, recalling that \( \nu = \srcreal[p] \), we define \( \lintype{\nu} := \lintype{\srcreal[p]} \). Then,
\begin{equation}
  \label{eq:target-language-typing-rule-handler-algebra}
  \prftree{
    \Gamma \vdash t : \Fun[nn]{ \opn[b]{M} }{ \nu }
  }{
    \Gamma; v : \opn[b]{M}^{w}_{t: \Fun[nn]{ \opn[b]{M} }{ \nu }}
    \left[ \underline{\hphantom{x}} \mapsto \lintype{\nu} \right]
    \vdash s : \lintype{\rho}
  }{
    \Gamma; v : \lintype{\nu}
    \vdash
    \fn[n]{ \opn[b]{E}^{w}_{\nu, t} }{ s } : \lintype{\rho}
  }
\end{equation}

The equations for the cartesian fragment are standard, whereas those for the linear fragment can be seen as originating from a formal derivative procedure over the cartesian base counterparts (see Appendix~\ref{sec:appendix-target-language-for-chad}). For instance, the right unit law for \( \opn[b]{return} \) and  \( \opn[b]{bind} \) (see Equation~\eqref{eq:source-language-kleisli-laws})
\begin{equation*}
  \fn[n]{\opn[b]{bind}}{
    s;
    x.\fn[n]{\opn[b]{return}}{x}
  }
  \equiv
  s.
\end{equation*}
When imposed on the cartesian fragment, this law gives rise to the equality
\begin{equation}
  \label{eq:target-language-monad-w-right-unit}
  \fn[n]{
    \opn[b]{bind}^{w}_{
      s;
      x.\fn[n]{\opn[b]{return}}{x}
    }
  }{
    r;
    \tuple{
      \opn[b]{0},
      \fn[n]{\opn[b]{return}^{w}_{x}}{v}
    }
  }
  \equiv
  r .
\end{equation}

\subsection{Interpretation of the target language  in \texorpdfstring{\( \Fam{\Vect[o]} \)}{Fam(VectOp)}}
\label{sec:interpretation-of-the-target-language-in-fam-vectop}

In Section~\ref{sec:interpretation-of-source-language-in-set}, we provided set-based semantics for the source language via the interpretation functor \( \setint{\cdot} : \Syn \to \Set \). Here we provide concrete semantics for the target language syntactic category \( \Tgt \) by interpreting it in \( \Fam{\Vect[o]} \) (see Section~\ref{sec:fad-monad-on-fam-vectop}).

An object of \( \Tgt \) may be seen as a pair \( \tuple{ p : \tau, \lintype{\sigma} } \) (up to \( \alpha \)-equivalence) such that \( \lintype{\sigma} \) is a linear type in \( \LSyn[p : \tau] \). A morphism \( \tuple{ t, s }: \tuple{ p : \tau, \lintype{\rho} } \to \tuple{ q : \sigma, \lintype{\chi} } \) is a pair of a cartesian arrow \( t: p \to q \) in \( \CSyn \), and of a linear arrow \( s: \casubst{\lintype{\chi}}{q}{t} \to \lintype{\rho} \) in \( \LSyn[p : \tau] \).

We start by providing an interpretation of the cartesian fragment \( \CSyn \) of the target language. This fragment is interpreted in \( \Set \) by extending the interpretation of \( \Syn \), as described in Section~\ref{sec:interpretation-of-source-language-in-set}. For that reason, we keep the same notation, writing \( \setint{0} := \emptyset \), \( \setint{\srcreal[k]} := \R[k] \), \( \setint{\tau \times \sigma} := \setint{\tau} \times \setint{\sigma} \), and so on.
The linear interpretation of a linear type under a cartesian context \( p : \tau \) is also described inductively over the structure of the linear types, and is described in Appendix~\ref{sec:appendix-fam-vectop-denotational-semantics}. In particular, we give
\(
\fn[n]{
  \linint{
    \lintype{\srcreal[k]}
  }{p : \tau}
}{ \gamma } := \R[k]
\)
and, for any cartesian judgement
\(
p: \tau \vdash q:\Fun[nn]{\opn[b]{M}}{\rho}
\),
\(
\fn[n]{
  \linint{
    \opn[b]{M}^{w}_{q : \Fun[nn]{ \opn[b]{M} }{ \rho }}
    \left[ x \mapsto \lintype{\sigma} \right]
  }{p : \tau}
}{\gamma}
=
\bigoplus_{x \in \supp \fn[n]{ \setint{q} }{\gamma} }
\left(
  \fn[n]{ \linint{\lintype{\sigma}}{p : \tau, x : \rho} }{ \gamma, x }
  \oplus \R
\right)
\)
for any cartesian variable \( p : \tau \) and \( \gamma \in \setint{\tau} \).

So, for any object \( \tuple{ p : \tau, \lintype{\sigma} } \) in \( \Tgt \), we define its interpretation in \( \Fam{\Vect[o]} \) as
\begin{equation}
  \label{eq:target-language-fam-vectop-interpretation-objects}
  \fvoint{\tuple{ p : \tau, \lintype{\sigma} }} :=
  \tuple{
    \setint{\tau},
    \left(
    \gamma \mapsto
    \fn[n]{
    \linint{\lintype{\sigma}}{p : \tau}
    }{ \gamma }
    \right)
  }
\end{equation}
and morphisms are interpreted, displaying the corresponding ordinary \(\Vect\)-map in the fibre, as
\begin{equation}
  \label{eq:target-language-fam-vectop-interpretation-morphisms}
  \fvoint{\tuple{ t, s }} :=
  \tuple{
    \setint{t},
    \left(
      \gamma \mapsto
      \left(
        \opmor[p]{
          \fn[n]{ \linint{s}{p : \tau} }{ \gamma }
        }
        \in
        \fn[n]{ \Vect }{
          \fn[n]{ \linint{\lintype{\chi}}{q : \sigma} }{ \fn[n]{ \setint{t} }{ \gamma } },
          \fn[n]{ \linint{\lintype{\rho}}{p : \tau} }{ \gamma }
        }
      \right)
    \right)
  }
\end{equation}

\subsection{CHAD code transformation}

The core idea of CHAD is the translation of the automatic differentiation code transformation into a structure-preserving categorical transformation, i.e. a functor. This is a functor \( \chad: \Syn \to \Tgt \). For each term in context \( \Gamma \vdash t : \tau \) in the source language, we get a term under context
\begin{equation}
  \label{eq:chad-code-transformation-terms-under-context}
  \Fun[np]{ \chad }{ \Gamma }_{1}
  \vdash
  \Fun[np]{ \chad[\Gamma] }{ t } :
  \Sigma \left(
    p :
    \Fun[np]{ \chad }{ \tau }_{1}
  \right)
  \; . \;
  \left(
  \Fun[np]{ \chad }{ \tau }_{2}
  \linto
  \Fun[np]{ \chad }{ \Gamma }_{2}
  \right)
\end{equation}
In other words, the term under context \( \Gamma \vdash t : \tau \) is translated as a term of this dependent-sum type: its first projection has type \( \Fun[np]{ \chad }{ \tau }_{1} \), and the corresponding fibres are given by the function types \( \Fun[np]{ \chad }{ \tau }_{2} \linto \Fun[np]{ \chad }{ \Gamma }_{2} \), corresponding to transposed derivatives (backpropagators).

Similar to denotational semantics, type and term transformations are defined inductively over the structure of the types and terms. The definitions for base types, as well as for products and coproducts is straightforward, and can be seen in Appendix~\ref{sec:appendix-code-transformation-rules}. For the monadic types, we define
\begin{equation}
  \label{eq:chad-code-transformation-measure-type-derivative}
  \Fun[np]{ \chad }{ \Fun[nn]{ \opn[b]{M} }{ \tau } }_{1} :=
  \Fun[nn]{ \opn[b]{M} }{ \Fun[nn]{ \chad }{ \tau }_{1} }
  \qq{and}
  \Fun[np]{ \chad }{ \Fun[nn]{ \opn[b]{M} }{ \tau } }_{2} :=
  \opn[b]{M}^{w}_{p : \Fun[nn]{ \opn[b]{M} }{ \Fun[np]{ \chad }{ \tau }_{1} }}
  \left[
    x \mapsto \casubst{ \Fun[np]{ \chad }{ \tau }_{2} }{p}{x}
  \right]
\end{equation}
and the terms related to the monadic structure have their derivatives defined as
\begin{align}
  \Fun[np]{ \chad[\Gamma] }{ \fn[n]{ \opn[b]{return} }{ t } }
  &:=
  \opn[b]{case} \; \Fun[np]{ \chad[\Gamma] }{ t } \; \opn[b]{of} \;
  \tuple{ p, b } \to
  \tuple{
    \fn[n]{ \opn[b]{return} }{ p },
    \lambda v \; . \; \fn[n]{ \opn[b]{return}^{w}_{p} }{ \fn[n]{ b }{ v } }
  }
  \label{eq:chad-code-transformation-return}
  \\
  \Fun[np]{ \chad[\Gamma] }{ \fn[n]{ \opn[b]{bind} }{ t; x \; . \; s } }
  &:=
  \opn[b]{case} \; \Fun[nn]{ \chad[\Gamma] }{ t } \; \opn[b]{of}
  \tuple{ p, b } \to
  \opn[b]{let} \; r =
  \opn[b]{case} \; \Fun[nn]{ \chad[\Gamma, x : \sigma] }{ s } \; \opn[b]{of} \;
  \tuple{ q, c } \to q
  \; \opn[b]{in}
  \label{eq:chad-code-transformation-bind}
  \\
  &\qquad
  \tuple{
    \fn[n]{ \opn[b]{bind} }{ p; x \; . \; r },
    \lambda v \; . \;
    \fn[n]{ \opn[b]{bind}^{w}_{p; x \; . \; r} }{
      \fn[n]{ b }{ v };
      \fn[p]{
        \opn[b]{case} \; \Fun[nn]{ \chad[\Gamma, x : \sigma] }{ s } \; \opn[b]{of} \;
        \tuple{ q, c } \to c
      }{ v }
    }
  }
  \notag
\end{align}
while the term constructors specific to the FAD monad have their code transformations defined as
\begin{equation}
  \label{eq:chad-code-transformation-categorical}
  \begin{aligned}
  &\Fun[np]{ \chad[\Gamma] }{
    \fn[n]{ \opn[b]{categorical} }{ \tuple{ \tuple{ t_{i}, w_{i} } }_{i = 1}^{n} }
  } :=
    \\
  & \qquad
    \opn[b]{case} \; \Fun[nn]{ \chad[\Gamma] }{ t_{1} } \; \opn[b]{of}
    \tuple{ u_{1}, u_{1}' } \to
    \cdots
    \opn[b]{case} \; \Fun[nn]{ \chad[\Gamma] }{ t_{n} } \; \opn[b]{of}
    \tuple{ u_{n}, u_{n}' } \to
    \\
  & \qquad
    \opn[b]{case} \; \Fun[nn]{ \chad[\Gamma] }{ w_{1} } \; \opn[b]{of}
    \tuple{ a_{1}, a_{1}' } \to
    \cdots
    \opn[b]{case} \; \Fun[nn]{ \chad[\Gamma] }{ w_{n} } \; \opn[b]{of}
    \tuple{ a_{n}, a_{n}' } \to
    \\
  & \qquad \qquad
    \tuple{
    \fn[n]{ \opn[b]{categorical} }{ \tuple{ \tuple{ u_{i}, a_{i} } }_{i = 1}^{n} },
    \lambda v \; . \;
    \fn[n]{ \opn[b]{categorical}^{w}_{ \tuple{ \tuple{ u_{i}, a_{i} } }_{i = 1}^{n} } }{
    \tuple{ \tuple{ \fn[n]{ u_{i}' }{ v }, \fn[n]{ a_{i}' }{ v } } }_{i = 1}^{n}
    }
    }
  \end{aligned}
\end{equation}
\begin{equation}
  \label{eq:chad-code-transformation-handler-algebra}\qquad\qquad\qquad
  \Fun[np]{ \chad[\Gamma] }{ \fn[n]{ \opn[b]{E}_{\nu} }{ t } } :=
  \opn[b]{case} \; \Fun[np]{ \chad[\Gamma] }{ t } \; \opn[b]{of}
  \tuple{ p, b } \to
  \tuple{
    \fn[n]{ \opn[b]{E}_{\nu} }{ p },
    \lambda v \; . \; \fn[n]{ \opn[b]{E}^{w}_{\nu, p} }{ \fn[n]{ b }{ v } }
  }
\end{equation}

\section{Logical relations and correctness}

\subsection{Interpretation of source language in the category of logical relations}
\label{sec:interpretation-of-source-language-in-lr}

As in \( \Set \), we can use the universal property of the syntactic category of the source language to construct an interpretation of the source language in the category \( \cat[]{LR} \) of logical relations. It is enough to specify the interpretations of the elementary real types, the built-in operations, the monadic type constructor, \( \opn[b]{categorical} \), and the handler algebra.

For the elementary real number types, we define, for all \( n \in \N \),
\begin{equation}
  \label{eq:interpretation-reals-in-lr}
  \lrint{\srcreal[n]} :=
  \tuple{
    \opn{Diff} \left( \E, \R[n] \right),
    \tuple{ \R[n], \TR[n] },
    \operatorname{dv}_{\R[n]}
  }
\end{equation}
and, for any built-in operation \( \srcop: \srcreal[n_{1}] \times \cdots \times \srcreal[n_{k}] \to \srcreal[m] \) in the source language, its domain part is given by postcomposition with \( \setint{\srcop} \), while its codomain part is the base function together with its transposed derivative,
\begin{equation}
  \label{eq:interpretation-operations-in-lr}
  \begin{aligned}
    \dompt{\lrint{\srcop}}
    &:= \left( \setint{ \srcop } \circ \blankdash \right):
      \underbrace{
      \opn{Diff} \left( \E, \R[n_{1}] \right)
      \times
      \cdots
      \times
      \opn{Diff} \left( \E, \R[n_{k}] \right)
      }_{
      \text{identified with } \opn{Diff} \left( \E, \R[n_{1}] \times \cdots \times \R[n_{k}] \right)
      }
      \to
    \opn{Diff} \left( \E, \R[m] \right)
    \\
    \codpt{\lrint{\srcop}}
    &:=
      \tuple{ \setint{ \srcop }, \tuple{ \setint{ \srcop }, \grad \setint{ \srcop } } }:
      \tuple{
      \prod_{j = 1}^{k} \R[n_{j}],
      \prod_{j = 1}^{k} \TR[n_{j}]
      }
      \to
      \tuple{\R[m], \TR[m]}
  \end{aligned}
\end{equation}

The strong monad is taken to be the FAD monad \( M^{\cat[]{LR}} \) on \( \cat[]{LR} \) (see Section~\ref{sec:fad-monad-on-the-logical-relations-category}), with the corresponding interpretations of unit, multiplication, and strength. The interpretation of \( \opn[b]{categorical} \) is given by the factorisation procedure on \( \cat[]{LR} \) described in Section~\ref{sec:fad-monad-on-the-logical-relations-category}. For \( \nu = \srcreal[p] \), the handler algebra \( \opn[b]{E}_{\nu} \) is interpreted by the corresponding \( M^{\cat[]{LR}} \)-algebra
\(
  \opn{alg}^{\cat[]{LR}}_{\nu}:
  \Fun[nn]{ M^{\cat[]{LR}} }{ \lrint{\nu} }
  \to
  \lrint{\nu},
\)
whose codomain part is the pair consisting of the expectation algebra in \( \Set \) and the lifted expectation algebra in \( \Fam{\Vect[o]} \), and whose domain part sends a differentiable function of FADs to the differentiable expectation function obtained by applying that algebra pointwise.

It then follows from the universal property of the syntactic category of the source language (Lemma~\ref{lem:universal-property-of-the-syntactic-category-of-the-source-language}) that these data define a unique interpretation functor \( \lrint{\cdot}: \Syn \to \cat[]{LR} \).

Crucially, by composing this interpretation functor with \( U_{\Set} := \Pi_{1} \circ \Pi_{2} \circ U \), where \( U: \cat[]{LR} \to \Set \times \cat[]{C} \) is the comma-category forgetful functor, \( \Pi_{2}: \Set \times \cat[]{C} \to \cat[]{C} \) is the second projection into \( \cat[]{C} = \Set \times \Fam{\Vect[o]} \), and \( \Pi_{1}: \cat[]{C} \to \Set \) is the first projection, we get another distributive functor. Indeed, \( U \) creates limits and colimits, \( \Pi_{1} \) and \( \Pi_{2} \) preserve them, and the codomain part of \( M^{\cat[]{LR}} \) is fibred over the FAD monads in \( \Set \) and \( \Fam{\Vect[o]} \). This functor \( U_{\Set} \circ \lrint{\cdot} \) satisfies the same data as the \( \Set \)-interpretation, in particular Equations~\eqref{eq:interpretation-reals-in-set} and \eqref{eq:interpretation-operations-in-set}. Hence, by uniqueness,
\begin{equation}
  \label{eq:lr-interpretation-fibred-over-set}
  U_{\Set} \circ \lrint{\cdot} = \setint{\cdot}
\end{equation}

Analogously, composing \( \lrint{\cdot}: \Syn \to \cat[]{LR} \) with \( U_{\cat[]{F}} := \Pi_{2} \circ \Pi_{2} \circ U: \cat[]{LR} \to \Fam{\Vect[o]} \), we get a distributive functor \( \Syn \to \Fam{\Vect[o]} \) that interprets the monad type constructor (and its associated unit and multiplication) as the FAD monad in \( \Fam{\Vect[o]} \). It follows from the universal property of \( \Syn \) that
\begin{equation}
  \label{eq:lr-interpretation-fibred-over-fam-vectop}
  U_{\cat[]{F}} \circ \lrint{\cdot} = \fvoint{\cdot} \circ \chad
\end{equation}

\subsection{Correctness Theorem}

In this section, we use the previously defined machinery to prove the following correctness theorem.

\begin{theorem}[Correctness Theorem]
  \label{th:correctness-theorem}
  For any well-typed source term of the form
  \(
    x : \srcreal[n]
    \vdash
    t : \srcreal[m]
  \),
  where \( n, m \in \N \), we have that
  \begin{equation}
    \label{eq:correctness-equation}
    \setint{t}: \R[n] \to \R[m]\textnormal{ is differentiable and }
    \fvoint{\Fun[np]{ \chad }{ t }} =
    \tuple{
      \setint{t}, \grad{\setint{t}}
    }.
  \end{equation}
\end{theorem}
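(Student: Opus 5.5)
The proof is the standard CHAD logical-relations argument, instantiated with \( \E := \R[n] \) as the probe space. We use the interpretation \( \lrint{\cdot}:\Syn\to\cat[]{LR} \) of Section~\ref{sec:interpretation-of-source-language-in-lr} together with the two fibredness identities \eqref{eq:lr-interpretation-fibred-over-set} and \eqref{eq:lr-interpretation-fibred-over-fam-vectop}.

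For the term \( x:\srcreal[n]\vdash t:\srcreal[m] \), the morphism \( \lrint{t}:\lrint{\srcreal[n]}\to\lrint{\srcreal[m]} \) is a pair \( \tuple{u,v} \). Its codomain part is
\[
v=\tuple{\setint{t},\fvoint{\chad(t)}}.
\]
This uses \eqref{eq:lr-interpretation-fibred-over-set}, and for the \( \Fam{\Vect[o]} \)-component \eqref{eq:lr-interpretation-fibred-over-fam-vectop}, after identifying \( \fvoint{\chad(\srcreal[k])}=\TR[k] \); that identification follows directly from the type translation for real types in the appendix. Its domain part is some function \( u:\opn{Diff}(\R[n],\R[n])\to\opn{Diff}(\R[n],\R[m]) \). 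The comma-square condition
\[
G(v)\circ\operatorname{dv}_{\R[n]}=\operatorname{dv}_{\R[m]}\circ u
\]
then says the following: for every differentiable probe \( f:\R[n]\to\R[n] \), write \( g:=u(f) \), which is differentiable. Then
\[
\setint{t}\circ f=g
\qquad\text{and}\qquad
\fvoint{\chad(t)}\circ\tuple{f,\grad f}=\tuple{g,\grad g}
\]
in \( \Fam{\Vect[o]} \).

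Now take the identity probe \( f=\id_{\R[n]} \), whose transposed derivative is the identity on each fibre. The first equation gives \( \setint{t}=u(\id) \), so \( \setint{t} \) is differentiable. The second gives \( \fvoint{\chad(t)}=\tuple{u(\id),\grad u(\id)}=\tuple{\setint{t},\grad\setint{t}} \), which is \eqref{eq:correctness-equation}.

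All the real work is therefore in making sure \( \lrint{\cdot} \) exists. By Lemma~\ref{lem:universal-property-of-the-syntactic-category-of-the-source-language}, existence reduces to checking the following data in \( \cat[]{LR} \).
\begin{enumerate}
\item \( \cat[]{LR} \) is distributive. This holds because \( U \) creates limits and colimits.
\item \( M^{\cat[]{LR}} \) is a strong monad. This is the Proposition of Section~\ref{sec:generalising-the-set-procedure}, applied to the FAD factorisation structure constructed on \( \cat[]{LR} \).
\item The interpretations of the operations \( \srcop \) are well-defined morphisms. This is the chain rule: postcomposing a differentiable probe with a differentiable \( \setint{\srcop} \) yields a differentiable probe, and its transposed derivative is obtained by composing the transposed derivatives.
\item The \( \categ \) morphisms land in the domain part \eqref{eq:fad-monad-on-lr-object-map-domain-part}. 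This holds because exponentiated weights are strictly positive and differentiable.
\item \( \opn{alg}^{\cat[]{LR}}_{\nu} \) is an Eilenberg--Moore algebra.
\end{enumerate}

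I expect the last point, together with the compatibility of \( \categ \), to be the main obstacle. For the algebra, the domain part sends a differentiable family of FADs to a differentiable expectation. The comma-square condition then requires that the lifted expectation \eqref{eq:lifted-real-expectation-fibre}, composed with the fibre data of the probe, equals the gradient of the expectation. My plan is to represent the probe locally, as in the discussion after \eqref{eq:fad-monad-on-lr-morphism-map-domain-part}, by a collision-free presentation \( \categ\tuple{\tuple{f_j(\theta),g_j(\theta)}}_j \) with differentiable \( f_j,g_j \). I would then reuse the three-term computation \eqref{eq:three-term-weighted-particle-gradient}, which exactly matches the product-rule derivative of \( \sum_j f_j(\theta)e^{g_j(\theta)} \).

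The algebra laws in \( \cat[]{LR} \) are inherited componentwise from \( \Set \) and \( \Fam{\Vect[o]} \). Because \( U \) is faithful, equations between morphisms in \( \cat[]{LR} \) can be checked after applying \( U \), in the \( \Set \)- and \( \Fam{\Vect[o]} \)-components.

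Uniqueness in the universal property then also gives the two projection identities, which were stated before the theorem and which I take as established.
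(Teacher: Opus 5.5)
Your proposal is correct and follows essentially the same argument as the paper. Like the paper, you instantiate the probe space as \( \E=\R[n] \), read off the comma-square condition for \( \lrint{t} \), evaluate it at the identity probe, and use the two fibredness identities \( U_{\Set}\circ\lrint{\cdot}=\setint{\cdot} \) and \( U_{\cat[]{F}}\circ\lrint{\cdot}=\fvoint{\cdot}\circ\chad \) to identify the components; your remark that differentiability of \( \setint{t} \) follows from \( u(\id)\in\opn{Diff}(\R[n],\R[m]) \) makes explicit what the paper leaves implicit, and your checklist for the existence of \( \lrint{\cdot} \) goes beyond the theorem's proof, since the paper asserts those data in Section~\ref{sec:interpretation-of-source-language-in-lr} before stating the theorem.
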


To prove this, fix a well-typed source term \( x: \srcreal[n] \vdash t : \srcreal[m] \). Interpreting this judgement in the logical-relations category via the functor \( \lrint{\cdot}: \Syn \to \cat[]{LR} \), we get a morphism
\begin{equation*}
  \lrint{t}:
  \tuple{
    \opn{Diff} \left( \E, \R[n] \right),
    \tuple{ \R[n], \TR[n] },
    \operatorname{dv}_{\R[n]}
  }
  \to
  \tuple{
    \opn{Diff} \left( \E, \R[m] \right),
    \tuple{ \R[m], \TR[m] },
    \operatorname{dv}_{\R[m]}
  }
\end{equation*}
in \( \cat[]{LR} \). By the defining commutative square for morphisms in a comma category,
\begin{equation*}
  \Fun[nn]{ G }{ \codpt{\lrint{t}} } \circ \operatorname{dv}_{\R[n]}
  = \operatorname{dv}_{\R[m]} \circ \dompt{\lrint{t}}:
  \opn{Diff} \left( \E, \R[n] \right)
  \to \fn[n]{ \cat[]{C} }{ \tuple{ \E, \TE }, \tuple{ \R[m], \TR[m] } }.
\end{equation*}

Since \( \E \) was an arbitrary finite-dimensional real vector space, we may take \( \E = \R[n] \). In this case, we apply both sides of the equality to \( \id_{\R[n]} \in \opn{Diff} \left( \R[n], \R[n] \right) \). Using \( \operatorname{dv}_{\R[n]} \left( \id_{\R[n]} \right) = \id_{\tuple{ \R[n], \TR[n] }} \), the left-hand side is
\begin{equation*}
  \fn[p]{ \Fun[nn]{ G }{ \codpt{\lrint{t}} } \circ \operatorname{dv}_{\R[n]} }{ \id_{\R[n]} }
    =
    \codpt{\lrint{t}}
    \circ
    \id_{\tuple{ \R[n], \TR[n] }}
    =
    \codpt{\lrint{t}}
    =
    \Fun[pp]{ \Pi_{2} \circ U }{ \lrint{t} },
\end{equation*}
and the right-hand side is
\begin{equation*}
  \fn[p]{ \operatorname{dv}_{\R[m]} \circ \dompt{\lrint{t}} }{ \id_{\R[n]} }
  =
  \tuple{
    \fn[n]{ \dompt{\lrint{t}} }{ \id_{\R[n]} },
    \tuple{
      \fn[n]{ \dompt{\lrint{t}} }{ \id_{\R[n]} },
      \grad \left(
        \fn[n]{ \dompt{\lrint{t}} }{ \id_{\R[n]} }
      \right)
    }
  }.
\end{equation*}
Applying the first projection \( \Pi_{1}: \cat[]{C} \to \Set \) to both expressions, it follows from Equation~\eqref{eq:lr-interpretation-fibred-over-set} that
\begin{equation*}
  \fn[n]{ \dompt{\lrint{t}} }{ \id_{\R[n]} }
  =
  \Fun[pp]{ \Pi_{1} \circ \Pi_{2} \circ U }{ \lrint{t} }
  =
  \Fun[nn]{ U_{\Set} }{ \lrint{t} }
  =
  \setint{t}
\end{equation*}
and thus, we can prove Theorem~\ref{th:correctness-theorem}, by using Equation~\eqref{eq:lr-interpretation-fibred-over-fam-vectop}:
\begin{equation*}
  \begin{aligned}
    \fvoint{\Fun[np]{ \chad }{ t }}\hspace{-2pt}
    &=
      \Fun[nn]{ U_{\cat[]{F}} }{ \lrint{t} }
    =
      \Fun[np]{ \Pi_{2} }{ \Fun[pp]{ \Pi_{2} \circ U }{ \lrint{t} } }
    =
      \tuple{
      { \dompt{\lrint{t}} }{\hspace{-13pt} (\id_{\R[n]}}),
      \grad \left(
       { \dompt{\lrint{t}} }{\hspace{-13pt} (\id_{\R[n]}) }
      \right)
      }=
      \tuple{ \setint{t}, \grad{\setint{t}} }.
  \end{aligned}
\end{equation*}

\section{Returning to the example}
\label{sec:returning-to-the-example}

In this section, we return to the stochastic optimisation problem discussed in Section~\ref{sec:motivating-example}. The goal is to encode it in the source language and then apply the code transformation. We then study how the transformed code relates to the mathematical description of the derivative semantics.

\subsection{Encoding the example in the source language}
\label{sec:encoding-the-example-in-the-source-language}

As a first step in encoding the example in the source language, we must set up the family of built-in operations. In addition to the basic operations for representing affine maps and linear algebra, we also add five nullary operations (i.e. constants) \( s_{m}, s_{p}, m_{0}, p_{0}, P : \srcreal \) and two unary operations \( \opn[b]{sig}, \opn[b]{lsig}: \srcreal \to \srcreal \), representing the sigmoid and the logarithm of the sigmoid function, respectively. Then, in the context \( \Gamma = p : \srcreal, m : \srcreal \), the program is given by the term
\begin{equation}
  \label{eq:example-source-code-encoding}
  \begin{aligned}
    &\opn[b]{let} \;
      s =
      \opn[b]{categorical}
      \left(
      \left \langle
      \tuple{ 0, \fn[n]{ \opn[b]{lsig} }{ -s_{m} \cdot \left( m - m_{0} \right) } },
      \right.
      \right.
    \\
    &\qquad \qquad \qquad \qquad \hphantom{a}
      \left.
      \left.
      \tuple{
      P \cdot \fn[n]{ \opn[b]{sig} }{ -s_{p} \cdot \left( p - p_{0} \right) },
      \fn[n]{ \opn[b]{lsig} }{ s_{m} \cdot \left( m - m_{0} \right) }
      }
      \right\rangle
      \right)
    \\
    &\opn[b]{in} \;
      \opn[b]{let} \;
      t = \fn[n]{ \opn[b]{bind} }{s; x . \fn[n]{ \opn[b]{return} }{ p \cdot x - m } }
      \; \opn[b]{in} \;
      \fn[n]{ \opn[b]{E}_{\srcreal} }{ t }
  \end{aligned}
\end{equation}
where we have chosen \( \nu = \srcreal \). First, we introduce a distribution \( s \) for the two possible total-sales scenarios. Then, we map the function \( x \mapsto p \cdot x - m \) over those scenarios using \( \opn[b]{bind} \) and \( \opn[b]{return} \), giving the profit distribution \( t \). Finally, we compute the expected total profit using the \( \opn[b]{E}_{\srcreal} \) handler.

In this simple example, the final result is an expected value. However, the source language allows different expected values to be mixed within the same expression. For instance, suppose that one wants to maximise a risk-adjusted profit given by the ratio of the square of the expected profit to the profit variance. The source-language version of this calculation is
\begin{equation*}
  \begin{aligned}
    &\opn[b]{let} \;
      s =
      \opn[b]{categorical}
      \left(
      \left \langle
      \tuple{ 0, \fn[n]{ \opn[b]{lsig} }{ -s_{m} \cdot \left( m - m_{0} \right) } },
      \right.
      \right.
    &&\text{\textcolor{lightgray}{// Sales distribution}}
    \\
    &\qquad \qquad \qquad \qquad \hphantom{a}
      \left.
      \left.
      \tuple{
      P \cdot \fn[n]{ \opn[b]{sig} }{ -s_{p} \cdot \left( p - p_{0} \right) },
      \fn[n]{ \opn[b]{lsig} }{ s_{m} \cdot \left( m - m_{0} \right) }
      }
      \right\rangle
      \right)
    \\
    & \opn[b]{in} \;
      \opn[b]{let} \;
      t = \fn[n]{ \opn[b]{bind} }{s; x . \fn[n]{ \opn[b]{return} }{ p \cdot x - m } }
      &&\text{\textcolor{lightgray}{// Profit distribution}}
    \\
    & \opn[b]{in} \;
      \opn[b]{let} \; a = \fn[n]{ \opn[b]{E}_{\srcreal} }{ t }
      &&\text{\textcolor{lightgray}{// Expected profit}}
    \\
    & \opn[b]{in} \;
      \opn[b]{let} \;
      b = \fn[n]{ \opn[b]{bind} }{ t; x . \fn[n]{ \opn[b]{return} }{ x^{2} } }
      &&\text{\textcolor{lightgray}{// Squared profit}}
    \\
    & \opn[b]{in} \;
      \opn[b]{let} \;
      v = \fn[n]{ \opn[b]{E}_{\srcreal} }{ b } - a^{2}
      &&\text{\textcolor{lightgray}{// Variance}}
    \\
    & \opn[b]{in} \; a^{2} / v
  \end{aligned}
\end{equation*}
The source language also supports more complex combinations of expected values, including expected values of expressions that themselves contain expected values.

\subsection{Code transformation}
\label{sec:code-transformation}

We now consider the differentiated code from the example in Equation~\eqref{eq:example-source-code-encoding}. The full code can be seen in Appendix~\ref{sec:appendix-example-of-differentiated-code}. Here we highlight some of the main features.

Recall that the differentiation is done with respect to the context \( \Gamma = p : \srcreal, m : \srcreal \), and we adopt the following abbreviations:
\begin{equation*}
  \begin{aligned}
    a_{0} &:= 0
    \\
    a_{1} &:= P \cdot \fn[n]{ \opn[b]{sig} }{ -s_{p} \cdot \left( p - p_{0} \right) }
  \end{aligned}
  \qquad
  \begin{aligned}
    \gamma_{0} &:= \fn[n]{ \opn[b]{lsig} }{ -s_{m} \cdot \left( m - m_{0} \right) }
    \\
    \gamma_{1} &:= \fn[n]{ \opn[b]{lsig} }{ s_{m} \cdot \left( m - m_{0} \right) }
  \end{aligned}
\end{equation*}

The first lines of the differentiated code are case statements that pattern match on
\begin{equation*}
  \begin{aligned}
    &\opn[b]{case} \; \tuple{ a_{0}, \grad a_{0} } \; \opn[b]{of}
      \tuple{ \theta_{0}, \theta_{0}' } \to
    \opn[b]{case} \; \tuple{ a_{1}, \grad a_{1} } \; \opn[b]{of}
      \tuple{ \theta_{1}, \theta_{1}' } \to
    \\
    &\opn[b]{case} \; \tuple{ \gamma_{0}, \grad \gamma_{0} } \; \opn[b]{of}
      \tuple{ \rho_{0}, \rho_{0}' } \to
    \opn[b]{case} \; \tuple{ \gamma_{1}, \grad \gamma_{1} } \; \opn[b]{of}
      \tuple{ \rho_{1}, \rho_{1}' } \to
    \\
    &\qquad
      \tuple{
      \fn[n]{ \opn[b]{categorical} }{
      \tuple{ \tuple{ \theta_{j}, \rho_{j} } }_{j = 0}^{1}
      },
      \lambda v \; . \;
      \fn[n]{
      \opn[b]{categorical}^{w}_{
      \tuple{ \tuple{ \theta_{j}, \rho_{j} } }_{j = 0}^{1}
      }
      }{
      \tuple{ \tuple{ \fn[n]{ \theta_{j}' }{ v }, \fn[n]{ \rho_{j}' }{ v } } }_{j = 0}^{1}
      }
      }
  \end{aligned}
\end{equation*}
These lines correspond to the derivative of \( \opn[b]{categorical} \). Since \( a_{i} = \theta_{i} \) and \( \gamma_{i} = \rho_{i} \), the first entry of the tuple is simply a recalculation of the same categorical distribution of sales. The second entry uses the \( \opn[b]{categorical}^{w} \) linear term construction to construct a cotangent value combining the transposed derivative of the atoms (\( \theta_{i}' \)) and of the log-weights (\( \rho_{i}' \)), as is the case with the denotational semantics in \( \Fam{\Vect[o]} \) (Section~\ref{sec:fad-monad-on-fam-vectop}).

Another point to highlight is the block concerning the transposed derivative of the handler:
\begin{equation*}
  \begin{array}{lll}
    &
      \opn[b]{case}
    &
      \opn[b]{case}
      \tuple{ t, \lambda v \; . \; \opn[b]{coproj}_{\fn[n]{ \opn{idx} }{ t; \Gamma, s, t }} v}
    \\
    &
    &
      \opn[b]{of} \tuple{ \zeta, \zeta' } \to
      \tuple{
      \fn[n]{ \opn[b]{E}_{\srcreal} }{ \zeta },
      \lambda v \; . \;
      \fn[n]{ \opn[b]{E}_{\srcreal, \zeta}^{w} }{ \fn[n]{ \zeta' }{ v } }
      }
    \\
    &
      \opn[b]{of}
    &\tuple{ \epsilon, \epsilon' }
      \to
      \tuple{
      \casubst{\epsilon}{t}{\delta},
      \lambda v \; . \;
      \letst{w}{
      \fn[n]{ \casubst{\epsilon'}{t}{\delta} }{ v }
      }{
      \opn[b]{pr}_{1} w + \fn[n]{ \delta' }{ \opn[b]{pr}_{2} w }
      }
      }
  \end{array}
\end{equation*}
Here, the transposed derivative appears as the transposed derivative of the variable \( t \) with respect to the extended context \( \Gamma, s : \Fun[nn]{ \opn[b]{M} }{ \srcreal }, t : \Fun[nn]{ \opn[b]{M} }{ \srcreal } \). We then compute its expected value, and use the \( \opn[b]{E}^{w}_{\srcreal} \) term constructor to compute the derivative of the handler algebra. In the last line, \( t \) is substituted by the \( \delta \) term, whose definition is given in a surrounding case construction.

\section{Beyond probability: discrete-output algebraic effects}
\label{sec:beyond-probability}

The FAD monad is the main case study, but our construction is not intrinsically probabilistic. The reusable data are a strong monad presented by discrete-output operations, a chosen Eilenberg--Moore handler algebra, and compatible lifts of the monad, the operations, and the handler to \( \Fam{\Vect[o]} \) and to the logical-relations category. With these data, the cotangent rules for algebraic operations, generic \( \opn[b]{return} \) and \( \opn[b]{bind} \), and handlers from Sections~\ref{sec:source-language}--\ref{sec:interpretation-of-the-target-language-in-fam-vectop} have the same form.

For a typed algebraic signature \(\Sigma\) with generic effects \( \omega:I_{\omega}\to T O_{\omega} \), the parameter types \(I_\omega\) may be differentiable: \(\fvoint{I_{\omega}}\) may have non-trivial cotangent fibres. The direct fibred-polynomial construction requires the operation output type to be finite and discrete \(O_\omega= 1\sqcup \cdots\sqcup 1\) (so \(\fvoint{O_{\omega}}\) has trivial fibre). Before quotienting by equations, the free AST monad has the form
\[
  \AST_{\Sigma}X
  \simeq
  \mu Y.\;
  X \sqcup
  \coprod_{\omega\in\Sigma}
  I_{\omega}\times
  \prod_{o\in O_{\omega}}Y:
\]
we take a finite product of continuation subtrees. The lift to \( \Fam{\Vect[o]} \) adds cotangents  for the parameter and branches, while the branch index \(o\) has zero fibre. (Outputs with non-trivial fibres can destroy fibredness.) Appendix~\ref{sec:appendix-additional-effects} gives the corresponding generic CHAD theorem.

\paragraph{Finite non-determinism}
For finite non-determinism, we replace the FAD monad with the finite multiset monad
\(
  \MFin X = \coprod_{k \in \N} \left(X^{k}/\mathfrak{S}_{k}\right),
\)
where multiplicity is retained but order is forgotten. From the signature with no equations, we can take the corresponding \( \opn{AST} \) monad as the list monad (where order is relevant). This is analogous to what happens to FADs: where FADs can be seen as lists of point/log-weight pairs modulo collisions and reordering, multisets can be seen as lists modulo reordering. These additional equations are then imposed by an \( \opn{AST} \)-algebra.

For multisets, a \( \opn{AST} \)-algebra (i.e. a \( \opn{List} \)-algebra) is precisely a monoid structure on the underlying object. By making this monoid commutative, one then imposes order-irrelevance. Thus \( \tuple{\R[n],0,+} \), for example, induces a \( \opn{List} \)-algebra \( \opn{sum}: \Fun[np]{ \opn{List} }{ \R[n] } \to \R[n] \) for multiplicity-sensitive finite non-deterministic computations valued in \( \R[n] \). The ordinary lifted monad also quotients the fibres: over a multiset represented by \( \tuple{x_{1},\dots,x_{k}} \), the fibre is the stabiliser-invariant part of \( \bigoplus_i A_{x_i} \), equivalently one cotangent slot per support point after collisions. Hence repeated indistinguishable outcomes do not keep independent occurrence cotangents. The finite-choice reverse rule is the FAD categorical rule with the log-weight components removed, followed by this quotienting of cotangents.

\paragraph{Exceptions}
For exceptions, \( \Fun[nn]{ M_{E} }{ X } = X \amalg E \) may be taken as its own \( \opn{AST} \). In general, a handler on a carrier \(A\) is determined by a recovery map \(h:E\to A\), giving the algebra \( [\id_A,h]:A\amalg E\to A\).
To recover the monad from the CPS factorisation, it is enough to use the free algebra on carrier \(A=\Fun[nn]{ M_E }{ 2 }=2\amalg E\) as a canonical algebra: monad multiplication \( \Nat[nn]{ \mu }{ 2 }: \left( 2 \amalg E \right) \amalg E \to 2 \amalg E \).
If \(E\) has continuous structure, a raised exception carries an exception-label cotangent until it is handled, and the handler uses the transposed derivative of the recovery map.
Since here the monad and the \( \opn{AST} \) coincide, exceptions fit the same pattern without requiring a quotient construction.

\paragraph{Accumulation}
For writer accumulation, let \( \Fun[nn]{ W_{M} }{ X } = X \times M\) be the writer monad associated to a monoid \(M\). For the basic case \( M = \R[n] \) we have a cancellable monoid, and again it may be taken as its own \( \opn{AST} \). Its algebras are right \(M\)-actions \(A\times M\to A\), and so for the \( \opn{AST} \)-algebra we use the carrier \(\R[n]\) with action \( \R[n] \times \R[n] \to \R[n] \), \( \tuple{ u, v } \mapsto u + v \).
The lifted writer monad attaches to \(\tuple{x,r}\) the fibre \(A_x\oplus\R[n]\), and the handler rule is the transposed derivative of addition.

\paragraph{Remark (source syntax for the additional first-order effects).}
For finite multiset non-determinism we add operations \(\opn[b]{choose}_k\) with permutation and commutative-monoid laws, with a fold handler for the chosen commutative monoid; operationally, a choice node can be implemented as a parallel fork and the handler as the corresponding join/reduction. For exceptions one may add \(\fn[n]{\opn[b]{raise}}{e}:\opn[b]{M}\tau\), with handlers determined by maps \(E\to\tau\). For the \(\R[n]\)-writer monad one may add \(\fn[n]{\opn[b]{tell}}{r}:\opn[b]{M}1\), together with the basic handler \(\opn[b]{M}\srcreal[n]\to\srcreal[n]\) induced by addition, which consumes the accumulated output. The target language then adds the corresponding linear terms for the transposed derivatives of operations and handlers; the generic \(\opn[b]{return}^{w}\) and \(\opn[b]{bind}^{w}\) rules are unchanged. These fork--join and accumulation patterns are common in high-performance code that we want to differentiate.

\section{Related work}
\label{sec:related-work}

This work connects probabilistic programming semantics, stochastic gradient
estimation, and semantic accounts of AD. The continuation view of distributions
has classical roots in Schwartz's theory of distributions~\cite{schwartzTheorieDistributions1950}.
In probabilistic programming, Vákár, Kammar, and Staton construct a
probabilistic powerdomain by factoring integration into a continuation
monad~\cite{vakarDomainTheoryStatistical2019}, while Kammar and McDermott
develop the factorisation-systems technology for monadic lifting and logical
relations~\cite{kammarFactorisationSystemsLogical2018}.

Stochastic gradient estimation is central in machine learning;
\cite{mohamedMonteCarloGradient2020} surveys score, pathwise, and
measure-valued estimators. Stochastic computation graphs showed how these
compose by AD-like backpropagation~\cite{schulmanGradientEstimationUsing2015},
and Storchastic allows different estimators at different stochastic
nodes~\cite{vankriekenStorchasticFrameworkGeneral2021}. ADEV is especially
close, extending forward-mode AD to expectations of probabilistic programs
with discrete and continuous distributions~\cite{lewADEVSoundAutomatic2023}.
We contribute the corresponding reverse-mode account: reverse~mode
exposes cotangent-flow and fibredness issues that do not arise
in forward mode. In particular, continuous probability is not merely the
discrete construction with sums replaced by integrals; the relevant
continuation/image construction need not be fibred over primal outcomes, so we
leave it to future work. Compared to ADEV, we also use image factorisation to
carve out distributions and their reverse-derivative structure from the
continuation monad, and formulate expectation as an algebraic-effect handler
\cite{plotkinAlgebraicOperationsGeneric2003,plotkinHandlingAlgebraicEffects2013}.

There is also growing work on deterministic reverse AD with algebraic effects
and handlers. De Vilhena and Pottier verify a define-by-run reverse-mode AD
library based on effect handlers~\cite{devilhenaVerifyingEffectHandlerBased2023},
while Sigal studies effect-handler implementations of AD in Frank, OCaml, and
related settings~\cite{sigalAutomaticDifferentiationEffects2021,sigalAutomaticDifferentiationEffects2024}.
Our focus is different: leaving implementation strategy open, with Efficient
CHAD as a natural starting point~\cite{smedingEfficientCHAD2024}, we give a
denotationally correct source transformation for languages whose source terms
themselves have algebraic effects. This builds on the CHAD paradigm for pure
source languages~\cite{vakarReverseADHigher2021,vakarCHADCombinatoryHomomorphic2022,lucatellinunesCHADExpressiveTotal2023}
and its extension to iteration and partiality~\cite{lucatellinunesUnravelingIterativeCHAD2025},
showing that the technique also applies to impure, discrete-output algebraic
effects and handlers.

\section*{Funding}

This research was supported by the ERC project FoRECAST. The second author was
also supported by the Centre for Mathematics of the University of Coimbra
(CMUC, \href{https://doi.org/10.54499/UID/00324/2025}{doi:10.54499/UID/00324/2025}),
funded by the Funda\c{c}\~ao para a Ci\^encia e a Tecnologia (FCT) through
grants UID/00324/2025 and UID/PRR/00324/2025.

\clearpage
\appendix
\section{Source language for CHAD}
\label{sec:appendix-source-language-for-chad}

In this section, we present the complete type theory of the source language for CHAD, whose abbreviated introduction can be found in Section~\ref{sec:source-language}.

This is a first-order language whose term model can be understood as a free distributive category generated by the basic types \( \srcreal[k] \), \( k \in \N \), and by a family of chosen operations \( \srcop \), and equipped with a strong monad, a categorical constructor, and a specific Eilenberg--Moore algebra.

\subsection{Type and term grammar}

The type grammar for the source language is the one displayed in Equation~\ref{eq:appendix-source-language-type-construction-grammar}
\begin{equation}
  \label{eq:appendix-source-language-type-construction-grammar}
  \tau, \sigma, \rho
  ::=
  0
  \mid
  1
  \mid
  \tau \times \sigma
  \mid
  \tau \sqcup \sigma
  \mid
  \srcreal[k] \left( k \in \N \right)
  \mid
  \opn[b]{M} \tau
\end{equation}
with \( \srcreal \) being used as an abbreviation for \( \srcreal[1] \).

Similarly, the term construction grammar is the one in Equation~\ref{eq:appendix-source-language-term-construction-grammar}, denoting variables by \( x, y, z, \dots \) and contexts as \( \Gamma = x_{1} : \tau_{1}, \cdots, x_{n} : \tau_{n} \)
\begin{equation}
  \label{eq:appendix-source-language-term-construction-grammar}
  \begin{aligned}
    t, s, r ::=
    &
      \left.
      x
      \mid
      \tuple{\;}
      \mid
      \tuple{t, s}
      \mid
      \opn[b]{pr}_{1} t
      \mid
      \opn[b]{pr}_{2} t
      \mid
      \opn[b]{inl} t
      \mid
      \opn[b]{inr} t
      \right. 
    \\
    &
      \left.
      \mid
      \casest{t}{x}{s}{y}{r}
      \right. 
    \left.
    \mid
    \fn[n]{ \opn[b]{abort} }{ t }
    \right. 
    \\
    &
      \left.
      \mid
      \letst{x}{t}{s}
      \right. 
    \left.
    \mid
    \fn[n]{ \srcop }{ t_{1}, \dots, t_{k} }
    \right. 
      \left.
      \mid
      \fn[n]{ \opn[b]{return} }{ t }
      \right. 
    \\
    &
      \left.
      \mid
      \fn[n]{ \opn[b]{bind} }{ t; x . s }
      \right. 
      \left.
      \mid
      \fn[n]{ \opn[b]{categorical} }{
      \tuple{
      \tuple{
      t_{i}, w_{i}
      }
      }_{i = 1}^{n}
      }
      \right. 
    \left.
    \mid
    \fn[n]{ \opn[b]{E}_{\nu} }{ t }
    \right. 
  \end{aligned}
\end{equation}
where \( \nu = \srcreal[p] \), for some \( p \in \N \), is the fixed type on which the handler algebra \( \opn[b]{E}_{\nu} \) is defined, and \( \srcop \) ranges over a family of chosen built-in operations, each with a signature \( \srcop: \srcreal[n_{1}] \times \cdots \times \srcreal[n_{k}] \to \srcreal[m] \), representing the basic operations available in the source language. These operations should be thought of as \emph{deterministic operations}, and are chosen according to the intended usage of the language. In the \( \opn[b]{categorical} \) constructor, the second components \( w_i \) are real-valued log-weights; the corresponding nonnegative weights are produced by the interpretation of \( \opn[b]{categorical} \).

\subsection{Typing rules}
\label{sec:appendix-source-language-typing-rules}

The typing rules for the language are as follows.

\subsubsection{Nullary product and coproduct}

The elimination rule for nullary sum can be seen as a type-theoretic instance of the explosion principle, whereas the introduction rule for the nullary product states the existence of an element of \(1\) in any context,
\begin{equation*}
  \prftree[r]{\( 0 \text{E} \)}{\Gamma \vdash t : 0}{\Gamma \vdash \fn[n]{ \opn[b]{abort} }{ t } : \rho}
  \qquad
  \prftree[r]{\( 1 \text{I} \)}{\;}{\Gamma \vdash \tuple{\;} : 1}
\end{equation*}

\subsubsection{Binary products}

The introduction and elimination rules of binary products behave similarly to their set-based semantics, cartesian products.
\begin{equation*}
  \prftree[r]{\( \times \text{E1} \)}{
    \Gamma \vdash u: \tau \times \sigma
  }{
    \Gamma \vdash \opn[b]{pr}_{1} u : \tau
  }
  \qquad
  \prftree[r]{\( \times \text{E2} \)}{
    \Gamma \vdash u: \tau \times \sigma
  }{
    \Gamma \vdash \opn[b]{pr}_{2} u : \sigma
  }
  \qquad
  \prftree[r]{\( \times \text{I} \)}{\Gamma \vdash t: \tau}{\Gamma \vdash s: \sigma}{
    \Gamma \vdash \tuple{t, s} : \tau \times \sigma
  }
\end{equation*}

\subsubsection{Binary sums}

As with binary products, binary sums typing rules are based on their intended set-based semantics, the disjoint union. The elimination rule is
\begin{equation*}
  \prftree[r]{\( \sqcup \text{E} \)}{
    \Gamma \vdash t: \tau \sqcup \sigma
  }{
    \Gamma, x: \tau \vdash u : \rho
  }{
    \Gamma, y: \sigma \vdash v : \rho
  }{
    \Gamma \vdash
    \casest{t}{x}{u}{y}{v}: \rho
  }
\end{equation*}
and the introduction rules are
\begin{equation*}
  \prftree[r]{\( \sqcup \text{I1} \)}{
    \Gamma \vdash t: \tau
  }{
    \Gamma \vdash \opn[b]{inl} t : \tau \sqcup \sigma
  }
  \qquad
  \prftree[r]{\( \sqcup \text{I2} \)}{
    \Gamma \vdash s: \sigma
  }{
    \Gamma \vdash \opn[b]{inr} s : \tau \sqcup \sigma
  }
\end{equation*}

\subsubsection{Operations}

Any built-in operation \( \srcop: \srcreal[n_{1}] \times \cdots \times \srcreal[n_{k}] \to \srcreal[m] \) defines a typing rule
\begin{equation*}
\prftree[r]{\textsc{Operations}}{
\Gamma \vdash t_{j}: \srcreal[n_{j}]
\quad
\left( j \in \left\{ 1, \dots, k \right\} \right)
}{
\Gamma \vdash
\fn[n]{ \srcop }{ t_{1}, \dots, t_{k} } :
\srcreal[m]
}
\end{equation*}

\subsubsection{Structural rules}

We add a structural rule for the let-binding,
\begin{equation*}
  \prftree[r]{\textsc{Let introduction}}{
    \Gamma \vdash t: \sigma
  }{
    \Gamma, x : \sigma \vdash s : \rho
  }{
    \Gamma \vdash \letst{x}{t}{s} : \rho
  }
\end{equation*}
and another one for assumptions, meaning that variables already in the context may be taken as terms,
\begin{equation*}
  \prftree[r]{\textsc{ Assumption }}{
    \left( x : \tau \right) \in \Gamma
  }{
    \Gamma \vdash x : \tau
  }
\end{equation*}

\subsubsection{Syntactic sugar and pattern-matching}

The \(n\)-ary product and coproduct notation used below is syntactic
sugar for the binary grammar.  We take both products and coproducts to be
right-associated:
\[
  \tau_1 \times \cdots \times \tau_n
  :=
  \begin{cases}
    1, & n=0,\\
    \tau_1, & n=1,\\
    \tau_1 \times (\tau_2 \times \cdots \times \tau_n), & n\geq 2,
  \end{cases}
\]
and similarly
\[
  \tau_1 \sqcup \cdots \sqcup \tau_n
  :=
  \begin{cases}
    0, & n=0,\\
    \tau_1, & n=1,\\
    \tau_1 \sqcup (\tau_2 \sqcup \cdots \sqcup \tau_n), & n\geq 2.
  \end{cases}
\]

The corresponding term notation is defined recursively by
\[
  \tuple{t_1,\dots,t_n}
  :=
  \begin{cases}
    \tuple{\;}, & n=0,\\
    t_1, & n=1,\\
    \tuple{t_1,\tuple{t_2,\dots,t_n}}, & n\geq 2,
  \end{cases}
\]
and by
\[
  \opn[b]{in}^{n}_{1}t:=\opn[b]{inl}t
  \quad (n\geq 2),
  \qquad
  \opn[b]{in}^{n}_{i}t
  :=
  \opn[b]{inr}(\opn[b]{in}^{n-1}_{i-1}t)
  \quad (2\leq i\leq n),
\]
with \(\opn[b]{in}^{1}_{1}t:=t\).  We write
\(\opn[b]{in}_{i}t\) when \(n\) is clear.

The \(n\)-ary coproduct case expression is defined by nested binary
cases:
\[
  \casenst{t}{i}{1}{x_i}{s_i}
  :=
  \letst{x_1}{t}{s_1},
\]
and, for \(n\geq 2\),
\[
  \casenst{t}{i}{n}{x_i}{s_i}
  :=
  \casest{t}
    {x_1}{s_1}
    {y}{
      \casenst{y}{j}{n-1}{x_{j+1}}{s_{j+1}}
    },
\]
where \(y\) is fresh.

Similarly, \(n\)-ary product pattern matching is defined by nested
projections:
\[
  \caseprod{t}{x_1}{s}
  :=
  \letst{x_1}{t}{s},
\]
and, for \(n\geq 2\),
\[
  \caseprod{t}{x_1,\dots,x_n}{s}
  :=
  \letst{z}{t}{
    \letst{x_1}{\opn[b]{pr}_{1}z}{
      \caseprod{\opn[b]{pr}_{2}z}{x_2,\dots,x_n}{s}
    }
  },
\]
where \(z\) is fresh.  With these definitions, the displayed
\(n\)-ary product and coproduct typing rules are admissible.

\subsubsection{Monadic types}

For monadic types, we add typing rules for the \( \opn[b]{return} \) and \( \opn[b]{bind} \) term constructions. Together with their equational rules, this ensures that they behave as the unit and Kleisli extension of a strong monad.
\begin{equation*}
  \prftree[r]{\( \opn[b]{M} \text{E} \)}{
    \Gamma \vdash t : \opn[b]{M} \sigma
  }{
    \Gamma, x : \sigma \vdash s : \opn[b]{M} \tau
  }{
    \Gamma \vdash
    \fn[n]{ \opn[b]{bind} }{ t; x . s } : \opn[b]{M} \tau
  }
  \qquad
  \prftree[r]{\( \opn[b]{M} \text{I} \)}{
    \Gamma \vdash t : \tau
  }{
    \Gamma \vdash \fn[n]{ \opn[b]{return} }{ t }: \opn[b]{M} \tau
  }
\end{equation*}

\subsubsection{Measure-specific rules}

All the rules above would remain the same regardless of which monadic effect one wishes to implement in the source language. The ones that follow are specific to the FAD monad and describe the construction of terms using \( \opn[b]{categorical} \) with real-valued log-weights and the handler algebra \( \opn[b]{E}_{\nu} \),
\begin{equation*}
  \prftree[r]{\textsc{ Categ. }}{
    \Gamma \vdash t_{i}: \tau
    \quad
    \left( i \in \left\{ 1, \dots, n \right\} \right)
  }{
    \Gamma \vdash w_{i}: \srcreal
    \quad
    \left( i \in \left\{ 1, \dots, n \right\} \right)
  }{
    \Gamma \vdash
    \fn[n]{ \opn[b]{categorical} }{
      \tuple{
        \tuple{t_{i}, w_{i}}
      }_{i = 1}^{n}
    } :
    \opn[b]{M} \tau
  }
  \quad
  \prftree[r]{\textsc{ Handler }}{
    \Gamma \vdash t : \opn[b]{M} \nu
  }{
    \Gamma \vdash \fn[n]{ \opn[b]{E}_{\nu} }{ t } : \nu
  }
\end{equation*}
Here the terms \( w_i \) are log-weights rather than weights themselves; in the \( \Set \)-interpretation below, \( \opn[b]{categorical} \) exponentiates them, so no positivity premise is needed at the source type level.

\subsection{Equational theory}
\label{sec:appendix-source-language-equational-theory}

The pure fragment has the usual \( \beta \)- and \( \eta \)-laws for \( 1 \), \( \times \), and \( \sqcup \), which are omitted (see Section II.2 of \cite{lambekIntroductionHigherOrder1986}). For the monadic and the handler fragments, we add the following rules. No additional definitional equations are imposed on the \( \opn[b]{categorical} \) constructors in the FAD source language treated here.

\subsubsection{Kleisli laws}

We consider the usual equational theory of strong monads.
In what follows, \( \equiv \) denotes definitional equality,
\begin{equation}
  \label{eq:appendix-source-language-kleisli-laws}
  \begin{gathered}
    \fn[n]{ \opn[b]{bind} }{ \fn[n]{ \opn[b]{return} }{ t }; x \; . \; s }
    \equiv
    \casubst{s}{x}{t}
    \qquad
    \fn[n]{ \opn[b]{bind} }{ t; x \; . \; \fn[n]{ \opn[b]{return} }{ x } }
    \equiv t
    \\
    \fn[n]{ \opn[b]{bind} }{
      \fn[n]{ \opn[b]{bind} }{ t; x \; . \; s };
      y \; . \; r
    }
    \equiv
    \fn[n]{ \opn[b]{bind} }{
      t;
      x \; . \; \fn[n]{ \opn[b]{bind} }{ s ; y \; . \; r }
    }
  \end{gathered}
\end{equation}
where we use \( \casubst{s}{x}{t} \) for the capture-avoiding substitution of the term \( t \) for the variable \( x \) in the expression \( s \).

\subsubsection{Eilenberg--Moore algebra laws}

We also add laws ensuring that the handler algebra \( \opn[b]{E}_{\nu} \) acts as an Eilenberg--Moore algebra.
\begin{equation}
  \label{eq:appendix-source-language-eilenberg-moore-algebra-laws}
  \fn[n]{ \opn[b]{E}_{\nu} }{ \fn[n]{ \opn[b]{return} }{ v } }
  \equiv v
  \qquad
  \fn[n]{ \opn[b]{E}_{\nu} }{ \fn[n]{ \opn[b]{bind} }{ t; x \; . \; u } }
  \equiv
  \fn[n]{ \opn[b]{E}_{\nu} }{
    \fn[n]{ \opn[b]{bind} }{
      t; x \; . \; \fn[n]{ \opn[b]{return} }{ \fn[n]{ \opn[b]{E}_{\nu} }{ u } }
    }
  }
\end{equation}

\subsection{Denotational semantics}
\label{sec:appendix-source-language-denotational-semantics}

We denote by \( \Syn \) the term model of the source language, i.e., the \emph{syntactic category of the source language}. Universally, it is the freely generated distributive category on the objects \( \srcreal[k] \), \( k \in \N \), with operations \( \srcop \) ranging in the family of built-in operations, equipped with a strong monad, a specific algebra on \( \srcreal[p] \) for some \( p \in \N \), and a categorical operation. This means that any category with this structure admits a structure-preserving functor from \( \Syn \), as is spelled out in Lemma~\ref{lem:appendix-universal-property-of-the-syntactic-category-of-the-source-language}.

In particular, we may interpret \( \Syn \) in \( \Set \), thus providing set-based semantics to the source language with a functor \( \setint{\cdot}: \Syn \to \Set \). This semantics is defined recursively on the term and type structures, and is spelled out as follows.

\subsubsection{Types}

\begin{equation*}
  \begin{aligned}
    \setint{0} &= \emptyset
    \\
    \setint{1} &= \left\{ \ast \right\}
    \\
    \setint{\srcreal[n]} &= \R[n]
  \end{aligned}
  \qquad
  \begin{aligned}
    \setint{\tau \times \sigma} &= \setint{\tau} \times \setint{\sigma}
    \\
    \setint{\tau \sqcup \sigma} &= \setint{\tau} \amalg \setint{\sigma}
    \\
    \setint{\opn[b]{M} \tau} &= \Fun[nn]{ M }{ \setint{\tau} }
  \end{aligned}
\end{equation*}

\subsubsection{Contexts and typing judgements}

Using the interpretations of types, we may define interpretations of contexts and of typing judgements: a context \( \Gamma = x_{1} : \tau_{1}, \cdots, x_{n} : \tau_{n} \) is interpreted as \( \setint{\Gamma} = \prod_{i = 1}^{n} \setint{\tau_{i}} \), and a typing judgement \( \Gamma \vdash t : \tau \) as a function \( \setint{\Gamma \vdash t : \tau}: \setint{\Gamma} \to \setint{\tau} \), whose definition is provided inductively on the term structure.

\subsubsection{Deterministic fragment}

Variables, constants, and constructors of the deterministic fragment are interpreted as follows, for all \( \gamma = \tuple{\gamma_{1}, \dots, \gamma_{n}} \in \setint{\Gamma} \),
\begin{equation*}
  \begin{aligned}
    \fn[n]{ \setint{\Gamma \vdash x_{j} : \tau_{j}} }{\tuple{\gamma_{1}, \dots, \gamma_{n}}}
    &= \gamma_{j}
      && \left( j \in \left\{ 1, \dots, n \right\} \right)
    \\
    \fn[n]{ \setint{\Gamma \vdash \tuple{\;} : 1} }{\gamma}
    &= \ast
    \\
    \fn[n]{ \setint{\Gamma \vdash \opn[b]{pr}_{i} u : \tau_{i}} }{\gamma}
    &= \fn[n]{ \pi_{i} }{ \fn[n]{ \setint{\Gamma \vdash u : \tau_{1} \times \tau_{2}} }{ \gamma } }
      && \left( i \in \left\{ 1, 2 \right\} \right)
    \\
    \fn[n]{ \setint{\Gamma \vdash \tuple{ t, s } : \tau \times \sigma} }{\gamma}
    &=
      \tuple{
      \fn[n]{ \setint{ \Gamma \vdash t : \tau } }{ \gamma },
      \fn[n]{ \setint{ \Gamma \vdash s : \sigma } }{ \gamma }
      }
    \\
    \fn[n]{ \setint{\Gamma \vdash \opn[b]{inl} u : \tau \sqcup \sigma} }{\gamma}
    &= \fn[n]{ \iota_{1} }{ \fn[n]{ \setint{\Gamma \vdash u: \tau} }{ \gamma } }
    \\
    \fn[n]{ \setint{\Gamma \vdash \opn[b]{inr} u : \tau \sqcup \sigma} }{\gamma}
    &= \fn[n]{ \iota_{2} }{ \fn[n]{ \setint{\Gamma \vdash u: \sigma} }{ \gamma } }
    \\
    \fn[n]{ \setint{\Gamma \vdash \opn[b]{abort} t : \rho} }{\gamma}
    &= \fn[n]{ \opn{abs}_{\setint{\rho}} }{
      \fn[n]{ \setint{\Gamma \vdash t : 0} }{\gamma}
    }
    \\
    \fn[n]{ \setint{\Gamma \vdash \letst{x}{t}{s} : \rho} }{ \gamma }
    &=
      \fn[n]{ \setint{\Gamma, x : \sigma \vdash s : \rho} }{
      \gamma,
      \fn[n]{ \setint{\Gamma \vdash t : \sigma} }{ \gamma }
      }
  \end{aligned}
\end{equation*}
where, for any set \( S \), \( \opn{abs}_{S}: \emptyset \to S \) is the unique function. Case statements are interpreted as branching functions,
\begin{equation*}
  \begin{aligned}
    &\fn[n]{ \setint{\Gamma \vdash \casest{p}{x}{u}{y}{v} : \rho} }{ \gamma }
      \qquad
    \\
    &=
      \begin{cases}
        \fn[n]{ \setint{\Gamma, x : \tau \vdash u : \rho} }{ \gamma, a }
        &\text{if }
          \fn[n]{ \setint{\Gamma \vdash p : \tau \sqcup \sigma} }{ \gamma }
          = \fn[n]{ \iota_{1} }{ a }
        \\
        \fn[n]{ \setint{\Gamma, y : \sigma \vdash v : \rho} }{ \gamma, b }
        &\text{if }
          \fn[n]{ \setint{\Gamma \vdash p : \tau \sqcup \sigma} }{ \gamma }
          = \fn[n]{ \iota_{2} }{ b }
      \end{cases}
  \end{aligned}
\end{equation*}
and the built-in operations are interpreted as their representing counterparts, that is, for any operation \( \srcop: \srcreal[n_{1}] \times \cdots \times \srcreal[n_{k}] \to \srcreal[m] \),
\begin{equation*}
  \fn[n]{ \setint{ \Gamma \vdash \fn[n]{ \srcop }{ t_{1}, \cdots, t_{k} } : \srcreal[m] } }{ \gamma }
  =
  \fn[n]{ \setint{\srcop} }{
    \fn[n]{ \setint{\Gamma \vdash t_{1} : \srcreal[n_{1}]} }{ \gamma },
    \cdots,
    \fn[n]{ \setint{\Gamma \vdash t_{k} : \srcreal[n_{k}]} }{ \gamma }
  }
\end{equation*}
where \( \setint{\srcop} \) is required to be a differentiable function, as we are interested in computing derivatives.

\subsubsection{Monadic fragment}

For the interpretation of monadic types and their associated terms, we assume that the category on which the semantics is to be interpreted has a strong monad \( \tuple{ M, \eta, \mu, \str } \), with strength components \( \str_{A,B}: A \times M B \to M(A \times B) \). In the case of \( \Set \), all monads are canonically strong. We interpret \( \opn[b]{return} \) and \( \opn[b]{bind} \) as
\begin{equation*}
  \begin{aligned}
    \setint{\Gamma \vdash \fn[n]{ \opn[b]{return} }{ t } : \opn[b]{M} \tau} 
    &= 
      \Nat[nn]{ \eta }{ \setint{\tau} }
      \circ
      \setint{\Gamma \vdash t : \tau}
    \\
    \setint{\Gamma \vdash \fn[n]{ \opn[b]{bind} }{ t; x \; . \; s } : \opn[b]{M} \tau} 
    &=
      \Nat[nn]{ \mu }{ \setint{\tau} }
      \circ
      \Fun[np]{ M }{
      \setint{\Gamma, x : \sigma \vdash s : \opn[b]{M} \tau}
      }
      \circ
      \Nat[nn]{ \str }{ \setint{\Gamma}, \setint{\sigma} }
      \circ
      \tupling{
      \id_{\setint{\Gamma}},
      \setint{\Gamma \vdash t : \opn[b]{M} \sigma}
      }
  \end{aligned}
\end{equation*}
Notice that these formulas are also applicable for interpretations on other distributive categories with strong monads.

\subsubsection{Measure-specific terms}

Finally, we present the interpretation of terms constructed with \( \opn[b]{categorical} \) or with the handler algebra \( \opn[b]{E}_{\nu} \). For \( \opn[b]{categorical} \), the second components are interpreted as log-weights:
\begin{equation*}
  \begin{aligned}
    &\fn[n]{ \setint{\Gamma \vdash \fn[n]{ \opn[b]{categorical} }{ \tuple{ \tuple{ t_{i}, w_{i} } }_{i = 1}^{n} } : \opn[b]{M} \tau} }{ \gamma }
    \\
    &\qquad =
      \fn[n]{ \Nat[nn]{ \categ }{ \setint{\tau} } }{
      \tuple{
        \tuple{
          \fn[n]{ \setint{\Gamma \vdash t_{i} : \tau} }{ \gamma },
          \fn[n]{ \setint{\Gamma \vdash w_{i} : \srcreal} }{ \gamma }
        }
      }_{i = 1}^{n}
      }
  \end{aligned}
\end{equation*}
and
\begin{equation*}
  \begin{aligned}
    \fn[n]{ \setint{\Gamma \vdash \fn[n]{ \opn[b]{E}_{\nu} }{ t } : \nu }}{ \gamma }
    &=
    \fn[n]{ \opn{alg}^{\Set}_{\nu} }{
    \fn[n]{ \setint{\Gamma \vdash t : \opn[b]{M} \nu} }{ \gamma }
    }
    \\
    &=
      \sum_{
      v \in 
      \opn{supp}\!
      \left(
      \fn[n]{ \setint{\Gamma \vdash t : \opn[b]{M} \nu} }{ \gamma }
      \right)
      }
      \fn[n]{
      \fn[n]{ \setint{\Gamma \vdash t : \opn[b]{M} \nu} }{ \gamma }
      }{ v }
      \cdot v
  \end{aligned}
\end{equation*}

\subsection{Soundness and initiality}

As claimed, \( \Syn \) can be seen as a free distributive category with a chosen strong monad on the objects \( \srcreal[k] \), \( k \in \N \), and the built-in family of operations \( \srcop \), as well as \( \opn[b]{categorical} \). As a freely generated category, it has a universal property, described in Lemma~\ref{lem:appendix-universal-property-of-the-syntactic-category-of-the-source-language} (restatement of Lemma~\ref{lem:universal-property-of-the-syntactic-category-of-the-source-language}), which describes the existence of unique interpretation functors. These interpretations are also \emph{sound}, meaning that any equation from the equational theory holds semantically, as stated in Lemma~\ref{lem:appendix-soundness-of-source-language-interpretations} (restatement of Lemma~\ref{lem:soundness-of-source-language-interpretations}).

\begin{lemma}[Soundness of source language interpretations]
  \label{lem:appendix-soundness-of-source-language-interpretations}

  For any interpretation \( \genint{\cdot}: \Syn \to \cat[]{X} \), if \( \Gamma \vdash t \equiv s : \tau \) is derivable in the equational theory above, then \( \genint{\Gamma \vdash t : \tau} = \genint{\Gamma \vdash s : \tau} \) as morphisms \( \genint{\Gamma} \to \genint{\tau} \) in \( \cat[]{X} \).
\end{lemma}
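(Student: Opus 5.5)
The plan is to argue by induction on the derivation of \( \Gamma \vdash t \equiv s : \tau \), closing under the rules that generate definitional equality. The first step is to fix what an interpretation \( \genint{\cdot}: \Syn \to \cat[]{X} \) consists of, namely the data of Lemma~\ref{lem:universal-property-of-the-syntactic-category-of-the-source-language}, with terms interpreted compositionally as prescribed. Two auxiliary facts are needed before the induction, and I would prove both by structural induction on terms.

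The first is a \emph{weakening lemma}: \( \genint{\Gamma, y:\rho \vdash t:\tau} = \genint{\Gamma \vdash t:\tau}\circ\pi_1 \). The second is a \emph{substitution lemma}:
\[
\genint{\Gamma \vdash \casubst{s}{x}{t} : \tau} = \genint{\Gamma, x:\sigma \vdash s:\tau}\circ\tupling{\id_{\genint{\Gamma}},\genint{\Gamma\vdash t:\sigma}}.
\]
For the non-monadic constructors these are routine. For \( \opn[b]{bind} \), \( \opn[b]{case} \) and \( \opn[b]{let} \), which bind variables, the step uses naturality of the strength \( \str^{\cat[]{X}} \) in its first argument, and, for case, distributivity of \( \cat[]{X} \). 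For \( \opn[b]{categorical} \) and \( \opn[b]{E}_\nu \), the interpretation is postcomposition with a fixed morphism, so the claim is immediate.

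With these lemmas in hand, the induction has three kinds of cases.

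\textbf{Structural rules.} Reflexivity, symmetry, transitivity and congruence under each constructor hold because the interpretation is defined compositionally.

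\textbf{Pure \( \beta\eta \)-laws.} The laws for \( 1,\times,\sqcup \) follow from the universal properties of the terminal object, products and coproducts in \( \cat[]{X} \), together with the substitution lemma. For \( \sqcup \) one also needs the distributivity isomorphism \( \Gamma\times(A\amalg B)\cong \Gamma\times A\amalg\Gamma\times B \), used to interpret case in a context.

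\textbf{Monadic laws.} The Kleisli laws~\eqref{eq:source-language-kleisli-laws} are translated into diagram chases.
\begin{itemize}
\item The left unit law uses \( \str\circ(\id\times\eta)=\eta \) and \( \mu\circ\eta M=\id \), followed by the substitution lemma.
\item The right unit law uses \( \mu\circ M\eta=\id \) and the fact that \( M\pi_2\circ\str = \pi_2 \).
\item Associativity uses associativity of \( \mu \), naturality of \( \mu \) and \( \str \), and the strength–multiplication coherence \( \str\circ(\id\times\mu)=\mu\circ M\str\circ\str \). It also needs the strength–associativity axiom to handle the duplicated context \( \Gamma \), in combination with the weakening lemma for \( r \), which does not mention \( x \).
\end{itemize}
The Eilenberg–Moore laws~\eqref{eq:source-language-eilenberg-moore-algebra-laws} follow from \( \opn{alg}\circ\eta=\id \) and \( \opn{alg}\circ\mu=\opn{alg}\circ M\opn{alg} \), combined with naturality of \( \mu \) and the same bind expansion.

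I expect the associativity law to be the main obstacle. It is the one case where the context is duplicated through \( \tupling{\id,\blankdash} \) inside a strength, so the chase must carefully commute \( \str_{\Gamma,\blankdash} \) past \( M\tupling{\id,\genint{s}} \). My first attempt would be to rewrite everything in terms of Kleisli extension in context, \( f^{\dagger}:=\mu\circ Mf\circ\str \) for \( f:\Gamma\times A\to MB \). I would prove the single identity
\[
g^{\dagger}\circ\tupling{\pi_1,f^{\dagger}}=(g^{\dagger}\circ\tupling{\pi_1,f})^{\dagger}
\]
once from the strong monad axioms. All three Kleisli laws then become standard consequences of this identity.
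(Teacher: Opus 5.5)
Your proposal is correct and follows the same route as the paper. The paper only sketches the argument as a routine induction on derivations of \(t \equiv s\): the pure laws come from the distributive structure of \(\cat[]{X}\), the Kleisli laws from the strong-monad coherence axioms via the strength-based interpretation of bind, and the handler laws from the Eilenberg--Moore axioms; your substitution lemma and contextual Kleisli identity \(g^{\dagger}\circ\tupling{\pi_1,f^{\dagger}}=(g^{\dagger}\circ\tupling{\pi_1,f})^{\dagger}\) just make that verification explicit. When writing it out, there are three small fixes:
\begin{itemize}
\item State the weakening and substitution lemmas for arbitrary context morphisms. The last-variable versions are not preserved under binders, and the associativity law needs \(r\) weakened from \(\Gamma,y:\tau\) to \(\Gamma,x:\sigma,y:\tau\).
\item Your contextual identity yields only associativity. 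The unit laws rest on the separate unit axioms you listed, not on that identity.
\item The second Eilenberg--Moore law also needs \(\mu\circ M\eta=\id\); naturality of \(\mu\) is not what is used there.
\end{itemize}
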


The proof follows by routine verification on the ways to obtain a definitional equality \( t \equiv s \) in the source language. The soundness of the (omitted) pure fragment laws follows from the distributive category laws; there are no additional \( \opn[b]{categorical} \)-equations to check.

On the other hand, the soundness of term equalities with \( \opn[b]{return} \) and \( \opn[b]{bind} \) follows from the interpretation of \( \opn[b]{bind} \) with the strength \( \str \) of the monad \( M \), together with the Kleisli equations and coherence laws of strong monads, while the definitional equalities for the handler algebra follow from the Eilenberg--Moore algebra axioms.

The universal property of the syntactic category can be given explicitly as follows.

\begin{lemma}[Universal property of \( \Syn \)]
  \label{lem:appendix-universal-property-of-the-syntactic-category-of-the-source-language}

  Let \( \cat[]{X} \) be any distributive category equipped with
  \begin{itemize}
  \item A strong monad \( \tuple{ M^{\cat[]{X}}, \eta^{\cat[]{X}}, \mu^{\cat[]{X}}, \str^{\cat[]{X}} } \), with \( \str^{\cat[]{X}}_{A,B}: A \times \Fun[nn]{M^{\cat[]{X}}}{B} \to \Fun[nn]{M^{\cat[]{X}}}{A \times B} \);
  \item For every \( n \in \N \), chosen objects \( X_{n} \);
  \item An Eilenberg--Moore algebra \( \opn{alg}: \Fun[nn]{ M^{\cat[]{X}} }{ X_{p} } \to X_{p} \) over the object \( X_{p} \);
  \item For every built-in operation \( \srcop: \srcreal[n_{1}] \times \cdots \times \srcreal[n_{k}] \to \srcreal[m] \), a chosen morphism \( \genint{\srcop}: X_{n_{1}} \times \cdots \times X_{n_{k}} \to X_{m} \);
  \item For each \( n \in \N \) and each object \( A \) interpreting a type \( \tau \), an arrow \( \categ^{\cat[]{X}}_{\tau, n}: \left( A \times X_{1} \right)^{n} \to \Fun[nn]{ M^{\cat[]{X}} }{ A } \), used to interpret \( \opn[b]{categorical} \) terms built from lists of length \( n \) of atoms of type \( \tau \) and log-weights in \( X_{1} \).
  \end{itemize}

  Then there exists a unique strict functor \( \genint{ \cdot }: \Syn \to \cat[]{X} \) preserving finite products and coproducts, the strong monad structure, the handler algebra, and \( \opn[b]{categorical} \) as above, and sending every equation of \( \Syn \) to an identity in \( \cat[]{X} \). 
\end{lemma}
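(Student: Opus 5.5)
The plan is the standard initiality argument for a term model: build \( \genint{\cdot} \) by structural induction, check that it respects definitional equality, and show uniqueness by induction on derivations.

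\emph{Step 1: existence on types and raw judgements.} We define \( \genint{\cdot} \) on objects by recursion on the type grammar~\eqref{eq:appendix-source-language-type-construction-grammar}. We send \( 0,1,\times,\sqcup \) to the chosen initial object, terminal object, products and coproducts of \( \cat[]{X} \), send \( \srcreal[n] \) to \( X_n \), and send \( \opn[b]{M}\tau \) to \( \Fun[nn]{M^{\cat[]{X}}}{\genint{\tau}} \). Contexts go to finite products. On typing derivations we recurse on the term grammar exactly as in the \( \Set \)-semantics of Appendix~\ref{sec:appendix-source-language-denotational-semantics}, replacing each set-theoretic construction by its categorical counterpart:
\begin{itemize}
\item tuplings and projections for products;
\item for case, the canonical iso \( \genint{\Gamma}\times(A\amalg B)\cong(\genint{\Gamma}\times A)\amalg(\genint{\Gamma}\times B) \) supplied by distributivity, followed by a cotupling;
\item the unique map out of \( 0 \) for \( \opn[b]{abort} \);
\item the chosen \( \genint{\srcop} \) for built-in operations;
\item \( \eta \), \( \mu\circ M(-)\circ\str\circ\tupling{\id,-} \) for \( \opn[b]{return} \) and \( \opn[b]{bind} \);
\item \( \categ^{\cat[]{X}}_{\tau,n} \) for \( \opn[b]{categorical} \), and \( \opn{alg}\circ- \) for \( \opn[b]{E}_\nu \).
\end{itemize}
Since typing is syntax-directed, each well-typed term has an essentially unique derivation, so this is well defined on terms. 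A standard substitution lemma holds: \( \genint{s[t/x]}=\genint{s}\circ\tupling{\id,\genint{t}} \). We prove it by induction on \( s \); the only nonroutine cases are \( \opn[b]{bind} \) and case, which use naturality of the strength and of the distributivity iso.

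\emph{Step 2: well-definedness on \( \Syn \).} Morphisms of \( \Syn \) are equivalence classes modulo \( \equiv \), so we invoke Lemma~\ref{lem:appendix-soundness-of-source-language-interpretations}. The \( \beta\eta \)-laws for \( 1,\times,\sqcup \) follow from the universal properties together with the substitution lemma. The Kleisli laws~\eqref{eq:appendix-source-language-kleisli-laws} follow from the monad and strength coherence axioms. The Eilenberg--Moore laws~\eqref{eq:appendix-source-language-eilenberg-moore-algebra-laws} follow from \( \opn{alg}\circ\eta=\id \) and \( \opn{alg}\circ\mu=\opn{alg}\circ M\opn{alg} \), combined with naturality of \( \str \). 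Functoriality then holds because composition in \( \Syn \) is substitution, which Step 1 turns into composition in \( \cat[]{X} \). Preservation of products, coproducts, monad data, algebra and \( \opn[b]{categorical} \) holds by construction, and strictly so, because we used the chosen structure.

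\emph{Step 3: uniqueness.} Let \( F \) be any strict functor with the stated preservation properties. Preservation forces \( F \) to agree with \( \genint{\cdot} \) on every type, by induction on types. On terms we argue by induction on the term. Each term former is expressible in \( \Syn \) as a composite of structural morphisms: projections, pairings, coproduct injections and cotuplings through the distributivity iso, \( \eta \), \( \mu \), \( \str \), \( M(-) \), \( \opn{alg} \), \( \categ \), and the generators \( \srcop \). \( F \) preserves each of these by hypothesis, so it must take the same value as \( \genint{\cdot} \).

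\emph{Main obstacle.} I expect the delicate part to be case analysis under a context. Preserving binary coproducts alone is not enough; we need \( F \) to preserve the distributivity isomorphism. This follows because a product- and coproduct-preserving functor between distributive categories automatically maps the canonical comparison map to the canonical one, and hence its inverse to the inverse. The same care is needed with the strength in \( \opn[b]{bind} \), where \( F \) is assumed to preserve \( \str \) as part of the strong monad structure. All remaining verifications are routine coherence checks.
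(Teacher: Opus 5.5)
Your proposal follows the same route as the paper, which only sketches this proof. Like the paper, you define \(\genint{\cdot}\) by the displayed interpretation clauses. You get well-definedness from the soundness lemma: strong-monad coherence for the Kleisli laws, the algebra axioms for the handler laws, and distributivity for the sum laws. You get uniqueness by induction over the freely generated structure. Your observation that a strictly product- and coproduct-preserving functor automatically preserves the distributivity isomorphism, and hence the interpretation of \(\opn[b]{case}\) in a context, is correct and fills in a detail the paper leaves implicit.

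One justification is false and needs replacing. Each term former has exactly one rule, but derivations of \(\Gamma\vdash t:\tau\) are not unique. The reason is that \(\opn[b]{inl}\), \(\opn[b]{inr}\), \(\opn[b]{abort}\) and \(\opn[b]{categorical}\) carry no type annotations, and the intermediate types in \(\opn[b]{pr}_{i}\), case, let and bind are not determined by the conclusion. This affects the denotation. The closed term \(\fn[n]{\opn[b]{bind}}{\fn[n]{\opn[b]{categorical}}{\tuple{\;}}; x.\fn[n]{\opn[b]{return}}{\tuple{\;}}}\) can be typed with \(x:0\) or with \(x:1\). Work in \(\Set\) with the FAD monad, but choose \(\categ_{1,0}\) to be \(\eta_{1}\), which the lemma allows. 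Then the first derivation denotes the zero distribution on \(1\), because \(M(\emptyset)\) is a singleton, while the second denotes the Dirac distribution. So you must define \(\genint{\cdot}\) on typing derivations, equivalently on Church-style annotated terms. That is what the morphisms of \(\Syn\) have to be for the lemma to hold at all. Read that way, the rest of your argument goes through as written.
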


The proof follows by standard algebraic presentation arguments, using the formulas for interpretation above, with \( \genint{\srcreal[n]} = X_{n} \) for all \( n \in \N \), and induction over the structure of the morphisms in \( \Syn \). Uniqueness follows from the fact that the equations in \( \Syn \) are precisely those required to enforce its freely generated structure.

\section{Target language for CHAD}
\label{sec:appendix-target-language-for-chad}

In this section we present the complete type-theoretic description of the target language for CHAD.

As explained in the abridged version in Section~\ref{sec:target-language}, the core idea of CHAD is to encode a code transformation as a categorical transformation between syntactic categories. As such code written in the source language (see Section~\ref{sec:appendix-source-language-for-chad}) is systematically converted by induction on its structural components, giving their differentiated versions. The target language is the language in which these transformed codes are written.

The target language is a dependently typed functional programming language, and it is split into a cartesian type theory and, for every context of cartesian types (i.e. a \emph{cartesian context}), an associated linear type theory.

Denoting by \( \CSyn \) the term model of the cartesian fragment, each cartesian context \( \Gamma \) (viewed as an object of \( \CSyn \), up to variable renaming) determines a syntactic category \( \LSyn[\Gamma] \) for the associated linear type theory. Any substitution \( \Gamma \to \Gamma' \) induces a reindexing functor \( \LSyn[\Gamma'] \to \LSyn[\Gamma] \) between these linear term models. Thus \( \LSyn \) is a strict indexed category \( \CSyn[o] \to \Cat \).

The syntactic category of the target language is then given by the Grothendieck construction \( \Tgt \) (see Section 2.3 of \cite{lucatellinunesCHADExpressiveTotal2023}).

As in the source language, we write \( \tau, \sigma, \rho \) for the \emph{cartesian types}, and reserve underlined Greek letters \( \lintype{\tau}, \lintype{\sigma}, \lintype{\rho} \) for linear types. We then have \emph{cartesian typing judgements} of the form \( \Gamma \vdash t : \tau \) and \emph{linear typing judgements} of the form \( \Gamma ; v : \lintype{\sigma} \vdash s : \lintype{\rho} \), with exactly one distinguished linear variable, whose name may be \(\alpha\)-renamed.

We can divide the description of the target language into three layers:
\begin{itemize}
\item \textbf{Non-monadic core}: deterministic operations. The description of this part is independent of any monad-related components.
\item \textbf{Oplax strong-monad core}: liftings possible for any strong monad in the source category.
\item \textbf{Distribution and handler}: language features that are specific to the FAD monad.
\end{itemize}
By breaking the target language into these layers, we get a generic procedure for introducing (strong) monadic effects into the automatic differentiation framework of CHAD.

\subsection{Type construction grammar}

Cartesian types are described by the following grammar:
\begin{equation}
  \label{eq:appendix-target-language-cartesian-type-construction-grammar}
  \tau, \sigma, \rho
  ::=
  \left( \text{As in Equation~\eqref{eq:source-language-type-construction-grammar}} \right)
  \mid
  \lintype{\sigma} \linto \lintype{\rho}
  \mid
  \Sigma \left( x : \sigma \right) . \tau'
\end{equation}
meaning that, in the cartesian fragment, we also have function types between linear types and dependent sums. For linear types, we use the grammar
\begin{equation}
  \label{eq:appendix-target-language-linear-type-construction-grammar}
  \lintype{\tau}, \lintype{\sigma}, \lintype{\rho} ::=
  \lintype{\srcreal[k]}
  \; \left( k \in \N \right) 
  \mid
  \lintype{1}
  \mid
  \lintype{\tau} \times \lintype{\sigma}
  \mid
  \opn[b]{case} \; t \; \opn[b]{of} \; \left\{ \opn[b]{in}_{i} x_{i} \mapsto \lintype{\sigma}_{i} \right\}_{i = 1}^{n}
  \mid
  \opn[b]{M}^{w}_{p : \Fun[nn]{ \opn[b]{M} }{ \tau }}
  \left[ x \mapsto \lintype{ \sigma } \right]
\end{equation}
meaning that the base Euclidean-space types have associated linear types (in the sense that the cotangent space of a Euclidean space at any point can be identified with the base space itself), as well as the fibre-wise zero object \( 1 \). Moreover, we allow fibre-wise binary biproducts \( \times \) and families of linear types indexed by finite disjunctions, and we introduce the cotangent linear type associated with monadic types. 

The goal of highlighting the different layers is to make explicit which constructions are general for the implementation of algebraic monadic effects and which ones are specific to the FAD monad.

\subsection{Term construction grammar}

As with types, we also distinguish between \emph{cartesian} and \emph{linear terms}. The cartesian terms are described by the grammar
\begin{equation}
  \label{eq:appendix-target-language-cartesian-term-construction-grammar}
  t, s, r ::=
  \left( \text{As in Equation~\eqref{eq:appendix-source-language-term-construction-grammar}} \right) 
  \mid
  \tuple{ a, t }
  \mid
  \opn[b]{case} \; p \; \opn[b]{of} \; \tuple{ x, y } \mapsto s
  \mid
  \lambda v \; . \; s
\end{equation}
where the constructor \( \tuple{ a, t } \) and the eliminator \( \opn[b]{case} \; p \; \opn[b]{of} \; \tuple{ x, y } \mapsto s \) are used in the typing rules for the cartesian \( \Sigma \)-type, and \( \lambda v \; . \; s \) is used for abstraction over the unique linear variable \( v \).

Since all term constructions from the source language (Equation~\eqref{eq:appendix-source-language-term-construction-grammar}) are present in the cartesian fragment of the target language, with the same typing rules and equational theory, the source syntactic category \( \Syn \) can be seen as a full subcategory of \( \CSyn \).

For linear terms, we use the following grammar:
\begin{equation}
  \label{eq:appendix-target-language-linear-term-construction-grammar}
  \begin{aligned}
    t, s, r
    ::=
    v
    &\mid
      \letst{v}{t}{s}
      \mid
      \tuple{ t, s }
      \mid
      \opn[b]{pr}_{i} t
      \mid
      \opn[b]{0}
      \mid
      t + s
      \mid
      \linapp{r}{t}
    \\
    &
      \mid
      \fn[n]{ \fn[n]{ \tgtop }{ t_{1}, \cdots, t_{k} } }{v}
      \mid
      \fn[n]{ \opn[b]{return}^{w}_{t} }{s}
      \mid
      \fn[n]{ \opn[b]{bind}^{w}_{s; x \; . \; t} }{r; u}
    \\
    &
      \mid
      \fn[n]{ \opn[b]{categorical}^{w}_{\tuple{ \tuple{ t_{i}, w_{i} } }_{i = 1}^{n}} }{
      \tuple{ \tuple{ t_{i}', w_{i}' } }_{i = 1}^{n}
      }
      \mid
      \fn[n]{ \opn[b]{E}^{w}_{\nu, t} }{ s }
  \end{aligned}
\end{equation}
The three lines correspond to the three layers: the first line consists of the non-monadic core operations (linear variables, let bindings, pairings and projections, the zero vector, sums, and \( \linapp{r}{t} \) for linear application with \( r \) a cartesian term of type \( \lintype{\sigma} \linto \lintype{\rho} \)). The second line has the derivatives of the built-in operations \( \srcop \) of the source language, as well as the linear fibre components of the monad operations \( \opn[b]{return} \) and \( \opn[b]{bind} \). Finally, the third line has the linear fibre components of \( \opn[b]{categorical} \) and of the handler algebra \( \opn[b]{E}_{\nu} \).

\subsection{Typing rules}
\subsubsection{Non-monadic core}

The cartesian typing rules are the same as the ones of the source language (see Section~\ref{sec:appendix-source-language-typing-rules}) for the common term constructors. For the dependent sums, we have the rules
\begin{equation*}
  \prftree{
    \Gamma \vdash a : \sigma
  }{
    \Gamma \vdash t : \casubst{\tau}{x}{a}
  }{
    \Gamma \vdash \tuple{ a, t }: \Sigma \left( x : \sigma \right) . \tau
  }
  \qquad
  \prftree{
    \Gamma \vdash \rho \text{ type}
  }{
    \Gamma \vdash p : \Sigma \left( x : \sigma \right) . \tau
  }{
    \Gamma, x : \sigma, y : \tau \vdash s : \rho
  }{
    \Gamma \vdash \opn[b]{case} \; p \; \opn[b]{of} \; \tuple{ x, y } \to s : \rho
  }
\end{equation*}
and for the linear function types, we introduce the typing rules
\begin{equation*}
  \prftree{
    \Gamma; v : \lintype{\sigma} \vdash s : \lintype{\rho}
  }{
    \Gamma \vdash \lambda v \; . \; s : \lintype{\sigma} \linto \lintype{\rho}
  }
  \qquad
  \prftree{
    \Gamma \vdash r : \lintype{\sigma} \linto \lintype{\rho}
  }{
    \Gamma; v : \lintype{\tau} \vdash t : \lintype{\sigma}
  }{
    \Gamma; v : \lintype{\tau} \vdash
    \linapp{r}{t} : \lintype{\rho}
  }
\end{equation*}

There are also non-monadic-layer typing rules for linear terms. 

\paragraph{Linear variable}

\begin{equation*}
  \prftree{}{
    \Gamma; v : \lintype{\sigma} \vdash v : \lintype{\sigma}
  }
  \qquad\qquad
  \textit{(one distinguished linear variable)}
\end{equation*}

\paragraph{Linear let}

\begin{equation*}
  \prftree{
    \Gamma; v : \lintype{\tau} \vdash t : \lintype{\sigma}
  }{
    \Gamma; v : \lintype{\sigma} \vdash s : \lintype{\rho}
  }{
    \Gamma; v : \lintype{\tau} \vdash \letst{v}{t}{s} : \lintype{\rho}
  }
\end{equation*}

\paragraph{Linear biproduct}

\begin{equation*}
  \begin{gathered}
  \prftree{}{
    \Gamma ; v : \lintype{\rho} \vdash \opn[b]{0} : \lintype{\sigma}
  }
  \qquad
  \prftree{
    \Gamma; v : \lintype{\rho} \vdash s : \lintype{\sigma}
  }{
    \Gamma; v : \lintype{\rho} \vdash t : \lintype{\sigma}
  }{
    \Gamma; v : \lintype{\rho} \vdash s + t : \lintype{\sigma}
  }
  \\
  \prftree{
    \Gamma; v : \lintype{\rho} \vdash s : \lintype{\tau}
  }{
    \Gamma; v : \lintype{\rho} \vdash t : \lintype{\sigma}
  }{
    \Gamma; v : \lintype{\rho} \vdash \tuple{ s, t } : \lintype{\tau} \times \lintype{\sigma}
  }
  \qquad
  \prftree{
    \Gamma; v : \lintype{\rho} \vdash
    u : \lintype{\tau}_{1} \times \lintype{\tau}_{2}
  }{
    \Gamma; v : \lintype{\rho} \vdash
    \opn[b]{pr}_{i} u : \lintype{\tau}_{i}
  }
  \end{gathered}
\end{equation*}
where \( i \in \left\{ 1, 2 \right\} \).

\paragraph{Linear case over cartesian sums}

\begin{equation*}
  \prftree{
    \Gamma \vdash t : \tau_{1} \sqcup \cdots \sqcup \tau_{n}
  }{
    \left\{
      \Gamma, x_{i} : \tau_{i}; v : \lintype{\sigma}_{i} \vdash r_{i} : \lintype{\rho}_{i}
    \right\}_{i=1}^{n} 
  }
  {
    \Gamma; v:
    \casenst{t}{i}{n}{x_{i}}{\lintype{\sigma}_{i}}
    \vdash
    \casenst{t}{i}{n}{x_{i}}{r_{i}} :
    \casenst{t}{i}{n}{x_{i}}{\lintype{\rho}_{i}}
   }
\end{equation*}

\paragraph{Built-in operations}

For any built-in operation \( \srcop: \srcreal[n_{1}] \times \cdots \times \srcreal[n_{k}] \to \srcreal[m] \) in the source language, we associate a term constructor \( \opn[b]{Dop} \), whose intended interpretation is as the transposed Jacobian of \( \srcop \). It has the following associated typing rule.
\begin{equation*}
  \prftree{
    \left\{
    \Gamma \vdash t_{j} : \srcreal[n_{j}] 
    \right\}_{j = 1}^{k}
  }{
    \Gamma ; v : \lintype{\srcreal[m]} \vdash
    \fn[n]{ \fn[n]{ \opn[b]{Dop} }{ t_{1}, \cdots, t_{k} } }{ v } :
    \lintype{\srcreal[n_{1}]}
    \times \cdots \times
    \lintype{\srcreal[n_{k}]}
  }
\end{equation*}

\subsubsection{Oplax strong-monad core}

The following typing rules concern a generic strong monad, together with its
chosen cotangent lift.  In every rule below, all displayed cartesian contexts are
assumed to be well formed.  A judgement such as
\( \Gamma, x : \tau \vdash \lintype{\sigma} \text{ type} \) means that
\( \lintype{\sigma} \) is a linear type in the fibre
\( \LSyn[\Gamma,x:\tau] \); a judgement
\( \Gamma \vdash \lintype{\rho} \text{ type} \) means that
\( \lintype{\rho} \) is a linear type in \( \LSyn[\Gamma] \).  The linear
variables occurring in the linear arguments of
\( \opn[b]{return}^{w} \), \( \opn[b]{bind}^{w} \),
\( \opn[b]{categorical}^{w} \), and \( \opn[b]{E}_{\nu}^{w} \) are local binding
occurrences; following the single-linear-variable convention, we write each of
them as \(v\), with scope determined by the surrounding judgement and with the
usual capture-avoiding substitution convention.

\paragraph{Type formation}

\begin{equation*}
  \prftree{
    \Gamma \vdash p : \Fun[nn]{ \opn[b]{M} }{ \tau }
  }{
    \Gamma,x:\tau \vdash \lintype{\sigma} \text{ type}
  }{
    \Gamma \vdash
    \opn[b]{M}^{w}_{p : \Fun[nn]{ \opn[b]{M} }{ \tau }}
    \left[ x \mapsto \lintype{\sigma} \right]
    \text{ type}
  }
\end{equation*}
Here \(x\) is the returned value on which the continuation cotangent type
may depend.

\paragraph{\( w \)-\( \opn[b]{return} \)}

\begin{equation*}
  \prftree{
    \Gamma \vdash t: \tau
  }{
    \Gamma, x : \tau \vdash \lintype{\sigma} \text{ type}
  }{
    \Gamma \vdash \lintype{\rho} \text{ type}
  }{
    \Gamma; v : \casubst{ \lintype{\sigma} }{x}{t}
    \vdash  s : \lintype{\rho}
  }{
    \Gamma; v : \opn[b]{M}^{w}_{\fn[n]{ \opn[b]{return} }{ t } : \Fun[nn]{ \opn[b]{M} }{ \tau }}
    \left[ x\mapsto \lintype{\sigma} \right]
    \vdash
    \fn[n]{ \opn[b]{return}^{w}_{t} }{ s }: \lintype{\rho}
  }
\end{equation*}
Thus the argument \(s\) is typed in the returned-value fibre
\(\casubst{\lintype{\sigma}}{x}{t}\), while the result consumes a cotangent of
\(\fn[n]{\opn[b]{return}}{t}\) whose continuation fibre is
\(x\mapsto\lintype{\sigma}\).

\paragraph{\( w \)-\( \opn[b]{bind} \)}

\begin{equation*}
  \begin{gathered}
    \Gamma \vdash s : \Fun[nn]{ \opn[b]{M} }{ \sigma }
    \qquad
    \Gamma, x : \sigma \vdash t : \Fun[nn]{ \opn[b]{M} }{ \tau }
    \\
    \Gamma, x : \sigma \vdash \lintype{\sigma} \text{ type}
    \qquad
    \Gamma, y : \tau \vdash \lintype{\tau} \text{ type}
    \qquad
    \Gamma \vdash \lintype{\rho} \text{ type}
    \\
    \prftree{
      \Gamma; v : \opn[b]{M}^{w}_{s : \Fun[nn]{ \opn[b]{M} }{ \sigma }}
      \left[ x\mapsto \lintype{\sigma} \right]
      \vdash r : \lintype{\rho}
    }{
      \Gamma, x : \sigma ; v : \opn[b]{M}^{w}_{t : \Fun[nn]{ \opn[b]{M} }{ \tau }}
      \left[ y \mapsto \lintype{\tau} \right]
      \vdash u : \lintype{\rho} \times \lintype{\sigma}
    }{
      \Gamma; v : \opn[b]{M}^{w}_{\fn[n]{ \opn[b]{bind} }{ s; x \; . \; t } : \Fun[nn]{ \opn[b]{M} }{ \tau }}
      \left[ y \mapsto \lintype{\tau} \right]
      \vdash \fn[n]{ \opn[b]{bind}^{w}_{s; x \; . \; t} }{ r; u } : \lintype{\rho}
    }
  \end{gathered}
\end{equation*}
Here \(\lintype{\sigma}\) is the continuation-cotangent type for the first
computation, depending on its returned value \(x:\sigma\), and
\(\lintype{\tau}\) is the continuation-cotangent type for the composite
computation, depending on its final returned value \(y:\tau\).  The second
linear argument \(u\) returns both a contribution to the external cotangent
\(\lintype{\rho}\) and a cotangent in the fibre \(\lintype{\sigma}\) of the
bound value \(x\).

\subsubsection{Measure-specific layer}

These typing rules relate specifically to the FAD monad, providing rules for the
transposed derivative of \( \opn[b]{categorical} \) and for the handler algebra
\( \opn[b]{E}_{\nu} \).

\paragraph{\( w \)-\( \opn[b]{categorical} \)}

For all \( n \in \N \), the finite atomic distribution constructor has the
following cotangent rule.
\begin{equation*}
  \begin{gathered}
    \Gamma, x : \tau \vdash \lintype{\sigma} \text{ type}
    \qquad
    \Gamma \vdash \lintype{\rho} \text{ type}
    \\
    \prftree{
      \left\{
        \Gamma \vdash t_{i} : \tau
      \right\}_{i = 1}^{n}
    }{
      \left\{
        \Gamma \vdash w_{i} : \srcreal
      \right\}_{i = 1}^{n}
    }{
      \left\{
        \Gamma; v : \casubst{\lintype{\sigma}}{x}{t_{i}}
        \vdash
        t_{i}' : \lintype{\rho}
      \right\}_{i = 1}^{n}
    }{
      \left\{
        \Gamma; v : \lintype{\srcreal} \vdash
        w_{i}' : \lintype{\rho}
      \right\}_{i = 1}^{n}
    }{
      \Gamma;
      v : \opn[b]{M}^{w}_{
        \fn[n]{ \opn[b]{categorical} }{
          \tuple{ \tuple{ t_{i}, w_{i} } }_{i = 1}^{n}
        } :
        \Fun[nn]{ \opn[b]{M} }{ \tau }
      }
      \left[ x \mapsto \lintype{\sigma} \right]
      \vdash
      \fn[n]{
        \opn[b]{categorical}^{w}_{
          \tuple{ \tuple{ t_{i}, w_{i} } }_{i = 1}^{n}
        }
      }{
        \tuple{ \tuple{ t_{i}', w_{i}' } }_{i = 1}^{n}
      } :
      \lintype{\rho}
    }
  \end{gathered}
\end{equation*}
Thus \(\lintype{\sigma}\) is the cotangent family for the sampled value, while
\(\lintype{\rho}\) is the ambient cotangent type to which each atom and weight
backpropagator contributes.

\paragraph{Transposed derivative of the handler}

Since \( \nu = \srcreal[p] \) for some \( p \in \N \), write
\( \lintype{\tau}_{\nu} := \lintype{\srcreal[p]} \) for its associated
linear type.  The constant family
\(y:\nu\mapsto\lintype{\tau}_{\nu}\) is a well-formed linear type in
\(\LSyn[\Gamma,y:\nu]\).  We impose the following typing rule.
\begin{equation*}
  \prftree{
    \Gamma \vdash t : \Fun[nn]{ \opn[b]{M} }{ \nu }
  }{
    \Gamma \vdash \lintype{\rho} \text{ type}
  }{
    \Gamma; v : \opn[b]{M}^{w}_{t : \Fun[nn]{ \opn[b]{M} }{ \nu }}
    \left[ y \mapsto \lintype{\tau}_{\nu} \right]
    \vdash s : \lintype{\rho}
  }{
    \Gamma; v : \lintype{\tau}_{\nu}
    \vdash \fn[n]{ \opn[b]{E}^{w}_{\nu, t} }{ s } : \lintype{\rho}
  }
\end{equation*}
\subsection{Equational theory}

In addition to the equations from the equational theory of the source language (Section~\ref{sec:appendix-source-language-equational-theory}), which also apply for the cartesian fragment, the following equations describe the equational theory of the target language. They are stated up to the usual capture-avoiding substitution on the bound variables.

\paragraph{Cartesian layer (\( \Sigma \), products, and sums)}

\begin{equation*}
  \begin{array}{rrl}
    \left( \Sigma \beta \right)
    &\quad
      \caseprod{\tuple{ a, t }}{x, y}{s}
    &\equiv
      s \left[ x := a, y := t \right]
    \\
    \left( \Sigma \eta \right)
    &\quad
      \caseprod{p}{x, y}{\tuple{ x, y }}
    &\equiv
      p
  \end{array}
\end{equation*}

\paragraph{Linear biproduct (fibrewise) and additivity}

\begin{equation*}
  \begin{array}{rrl}
    \left( \oplus \beta \right)
    &\quad
      \opn[b]{pr}_{1} \tuple{ s, t }
    &\equiv
      s
    \\
    &\quad
      \opn[b]{pr}_{2} \tuple{ s, t }
    &\equiv
      t
    \\
    \left( \oplus \eta \right)
    &\quad
      \tuple{ \opn[b]{pr}_{1} t, \opn[b]{pr}_{2} t }
    &\equiv
      t
    \\
    \left( \text{Add-unit} \right) 
    &\quad
      s + \opn[b]{0}
    &\equiv
      s
    \\
    &\quad
      \opn[b]{0} + s
    &\equiv
      s
    \\
    \left( \text{Add-assoc/comm} \right) 
    &\quad
      \left( r + s \right) + t
    &\equiv
      r + \left( s + t \right)
    \\
    &\quad
      s + t
    &\equiv
      t + s
    \\
    \left( \text{Add-lin} \right)
    &\quad
      \casubst{r}{v}{\opn[b]{0}}
    &\equiv
      \opn[b]{0}
    \\
    &\quad
      \casubst{r}{v}{s + t}
    &\equiv
      \casubst{r}{v}{s} + \casubst{r}{v}{t}
  \end{array}
\end{equation*}

\paragraph{Linear structural}

\begin{equation*}
  \begin{array}{rr}
    \left( \text{Lin. struct.} \right) 
    &\quad
      \letst{v}{t}{s}
    \equiv
      \casubst{s}{v}{t}
  \end{array}
\end{equation*}

\paragraph{Linear function space}

\begin{equation*}
  \begin{array}{rrl}
    \left( \linto \beta \right)
    &\quad
      \fn[p]{ \lambda v \; . \; s }{ t }
    &\equiv
      \casubst{s}{v}{t}
    \\
    \left( \linto \eta \right)
    &\quad
      \lambda v \; . \; \fn[n]{ r }{ v }
    &\equiv
      r
      \qq{when} \left( v \not\in \fn[n]{ \opn{FV} }{ r } \right)
  \end{array}
\end{equation*}

\paragraph{Built-in operations}
For every built-in operation \( \srcop: \srcreal[n_{1}] \times \cdots \times \srcreal[n_{k}] \to \srcreal[m] \), its associated transposed Jacobian term constructor \( \opn[b]{Dop} \) is additive in its linear argument and respects cartesian substitution.

\paragraph{Oplax fibred strong-monad and algebra equations}

We impose the following equations whenever all displayed terms are well-typed.
They are stated modulo the Kleisli equations and Eilenberg--Moore algebra equations 
in the cartesian fragment, so
that the linear types on the two sides are identified by the corresponding
cartesian equality.  All occurrences of the distinguished linear variable in
arguments of
\[
  \fn[n]{\opn[b]{return}^{w}_{t}}{r},
  \qquad
  \fn[n]{\opn[b]{bind}^{w}_{s;x.t}}{r;u},
  \qquad
  \fn[n]{\opn[b]{E}^{w}_{\nu,t}}{r}
\]
are binding occurrences and are implicitly \(\alpha\)-renamed as necessary to
avoid capture.
The equations we impose are precisely those of an oplax indexed strong monad and an oplax indexed Eilenberg--Moore algebra.
They are also the equations that are required to ensure that the interpretation of the source language in the target language is well-defined, i.e. the transposed derivatives of the Kleisli and Eilenberg--Moore algebra equations for the source language.

The \(w\)-left-unit law is the transpose of
\[
  \fn[n]{\opn[b]{bind}}{
    \fn[n]{\opn[b]{return}}{a};
    x.t
  }
  \equiv
  \casubst{t}{x}{a}.
\]
Suppose
\[
  \Gamma\vdash a:\sigma,
  \qquad
  \Gamma,x:\sigma\vdash t:\Fun[nn]{\opn[b]{M}}{\tau},
\]
\[
  \Gamma,x:\sigma\vdash \lintype{\sigma}\ \mathrm{type},
  \qquad
  \Gamma,y:\tau\vdash \lintype{\tau}\ \mathrm{type},
  \qquad
  \Gamma\vdash \lintype{\rho}\ \mathrm{type},
\]
and
\[
  \Gamma;
  v:\casubst{\lintype{\sigma}}{x}{a}
  \vdash r:\lintype{\rho},
  \qquad
  \Gamma,x:\sigma;
  v:\opn[b]{M}^{w}_{t:\Fun[nn]{\opn[b]{M}}{\tau}}
    \left[y\mapsto\lintype{\tau}\right]
  \vdash u:\lintype{\rho}\times\lintype{\sigma}.
\]
Then
\begin{equation}
  \label{eq:appendix-target-language-monad-w-left-unit}
  \begin{aligned}
    &
    \fn[n]{
      \opn[b]{bind}^{w}_{
        \fn[n]{\opn[b]{return}}{a};
        x.t
      }
    }{
      \fn[n]{\opn[b]{return}^{w}_{a}}{r};
      u
    }
    \\
    &\qquad\equiv
    \letst{v}{\casubst{u}{x}{a}}{
      \opn[b]{pr}_{1}v
      +
      \casubst{r}{v}{\opn[b]{pr}_{2}v}
    } .
  \end{aligned}
\end{equation}

The \(w\)-right-unit law is the transpose of
\[
  \fn[n]{\opn[b]{bind}}{
    s;
    x.\fn[n]{\opn[b]{return}}{x}
  }
  \equiv
  s.
\]
Suppose
\[
  \Gamma\vdash s:\Fun[nn]{\opn[b]{M}}{\sigma},
  \qquad
  \Gamma,x:\sigma\vdash \lintype{\sigma}\ \mathrm{type},
  \qquad
  \Gamma\vdash \lintype{\rho}\ \mathrm{type},
\]
and
\[
  \Gamma;
  v:\opn[b]{M}^{w}_{s:\Fun[nn]{\opn[b]{M}}{\sigma}}
    \left[x\mapsto\lintype{\sigma}\right]
  \vdash r:\lintype{\rho}.
\]
Then
\begin{equation}
  \label{eq:appendix-target-language-monad-w-right-unit}
  \fn[n]{
    \opn[b]{bind}^{w}_{
      s;
      x.\fn[n]{\opn[b]{return}}{x}
    }
  }{
    r;
    \tuple{
      \opn[b]{0},
      \fn[n]{\opn[b]{return}^{w}_{x}}{v}
    }
  }
  \equiv
  r .
\end{equation}
In the second argument, the term
\[
  \fn[n]{\opn[b]{return}^{w}_{x}}{v}
\]
is typed using the returned-value fibre
\[
  y:\sigma\mapsto \casubst{\lintype{\sigma}}{x}{y}.
\]
Thus it produces a term of type \(\lintype{\sigma}\) in the context
\(\Gamma,x:\sigma\).

The \(w\)-associativity law is the transpose of
\[
  \fn[n]{\opn[b]{bind}}{
    \fn[n]{\opn[b]{bind}}{s;x.t};
    y.q
  }
  \equiv
  \fn[n]{\opn[b]{bind}}{
    s;
    x.\fn[n]{\opn[b]{bind}}{t;y.q}
  }.
\]
Suppose
\[
  \Gamma\vdash s:\Fun[nn]{\opn[b]{M}}{\sigma},
  \qquad
  \Gamma,x:\sigma\vdash t:\Fun[nn]{\opn[b]{M}}{\tau},
  \qquad
  \Gamma,y:\tau\vdash q:\Fun[nn]{\opn[b]{M}}{\upsilon},
\]
\[
  \Gamma,x:\sigma\vdash \lintype{\sigma}\ \mathrm{type},
  \qquad
  \Gamma,y:\tau\vdash \lintype{\tau}\ \mathrm{type},
  \qquad
  \Gamma,z:\upsilon\vdash \lintype{\upsilon}\ \mathrm{type},
  \qquad
  \Gamma\vdash \lintype{\rho}\ \mathrm{type},
\]
and
\[
  \Gamma;
  v:\opn[b]{M}^{w}_{s:\Fun[nn]{\opn[b]{M}}{\sigma}}
    \left[x\mapsto\lintype{\sigma}\right]
  \vdash r:\lintype{\rho},
\]
\[
  \Gamma,x:\sigma;
  v:\opn[b]{M}^{w}_{t:\Fun[nn]{\opn[b]{M}}{\tau}}
    \left[y\mapsto\lintype{\tau}\right]
  \vdash u:\lintype{\rho}\times\lintype{\sigma},
\]
\[
  \Gamma,y:\tau;
  v:\opn[b]{M}^{w}_{q:\Fun[nn]{\opn[b]{M}}{\upsilon}}
    \left[z\mapsto\lintype{\upsilon}\right]
  \vdash e:\lintype{\rho}\times\lintype{\tau}.
\]
Then
\begin{equation}
  \label{eq:appendix-target-language-monad-w-associativity}
  \begin{aligned}
    &
    \fn[n]{
      \opn[b]{bind}^{w}_{
        \fn[n]{\opn[b]{bind}}{s;x.t};
        y.q
      }
    }{
      \fn[n]{\opn[b]{bind}^{w}_{s;x.t}}{r;u};
      e
    }
    \\
    &\qquad\equiv
    \fn[n]{
      \opn[b]{bind}^{w}_{
        s;
        x.\fn[n]{\opn[b]{bind}}{t;y.q}
      }
    }{
      r;
      \fn[n]{\opn[b]{bind}^{w}_{t;y.q}}{
        u;
        \letst{v}{e}{
          \tuple{
            \tuple{\opn[b]{pr}_{1}v,\opn[b]{0}},
            \opn[b]{pr}_{2}v
          }
        }
      }
    } .
  \end{aligned}
\end{equation}
On the right-hand side, \(q\) and \(e\) are weakened from
\(\Gamma,y:\tau\) to \(\Gamma,x:\sigma,y:\tau\), and the inserted zero has
type \(\lintype{\sigma}\).

Side note: the oplax comparison term for cartesian reindexing is definable.  It is
defined by the pure-continuation special case of \( \opn[b]{bind}^{w} \).  If
\[
  \Gamma\vdash p:\Fun[nn]{\opn[b]{M}}{\sigma},
  \qquad
  \Gamma,x:\sigma\vdash h:\tau,
  \qquad
  \Gamma,y:\tau\vdash \lintype{\tau}\ \mathrm{type},
\]
then
\begin{equation}
  \label{eq:appendix-target-language-monad-w-alpha}
  \begin{aligned}
    &
    \Gamma;
    v:
    \opn[b]{M}^{w}_{
      \fn[n]{\opn[b]{bind}}{
        p;
        x.\fn[n]{\opn[b]{return}}{h}
      }
      :
      \Fun[nn]{\opn[b]{M}}{\tau}
    }
    \left[y\mapsto\lintype{\tau}\right]
    \vdash
    \alpha_{p,x.h,\lintype{\tau}}(v)
    :
    \opn[b]{M}^{w}_{p:\Fun[nn]{\opn[b]{M}}{\sigma}}
    \left[
      x\mapsto\casubst{\lintype{\tau}}{y}{h}
    \right],
    \\
    &
    \alpha_{p,x.h,\lintype{\tau}}(v)
    :=
    \fn[n]{
      \opn[b]{bind}^{w}_{
        p;
        x.\fn[n]{\opn[b]{return}}{h}
      }
    }{
      v;
      \tuple{
        \opn[b]{0},
        \fn[n]{\opn[b]{return}^{w}_{h}}{v}
      }
    }.
  \end{aligned}
\end{equation}
The identity and composition equations for these comparison maps follow from
\eqref{eq:appendix-target-language-monad-w-right-unit} and
\eqref{eq:appendix-target-language-monad-w-associativity}, respectively.
Thus the displayed equations present an oplax fibred monad over the source
monad.  Since they are schematic in an arbitrary cartesian context, the usual
contextual strength is inherited from the given strength of the source monad.

The handler \(w\)-unit law is the transpose of
\[
  \fn[n]{\opn[b]{E}_{\nu}}{
    \fn[n]{\opn[b]{return}}{a}
  }
  \equiv
  a.
\]
Suppose
\[
  \Gamma\vdash a:\nu,
  \qquad
  \Gamma\vdash \lintype{\rho}\ \mathrm{type},
  \qquad
  \Gamma;v:\lintype{\tau}_{\nu}\vdash r:\lintype{\rho}.
\]
Then
\begin{equation}
  \label{eq:appendix-target-language-handler-w-unit}
  \fn[n]{
    \opn[b]{E}^{w}_{
      \nu,
      \fn[n]{\opn[b]{return}}{a}
    }
  }{
    \fn[n]{\opn[b]{return}^{w}_{a}}{r}
  }
  \equiv
  r .
\end{equation}

The handler \(w\)-multiplication law is the transpose of
\[
  \fn[n]{\opn[b]{E}_{\nu}}{
    \fn[n]{\opn[b]{bind}}{t;x.u}
  }
  \equiv
  \fn[n]{\opn[b]{E}_{\nu}}{
    \fn[n]{\opn[b]{bind}}{
      t;
      x.\fn[n]{\opn[b]{return}}{
        \fn[n]{\opn[b]{E}_{\nu}}{u}
      }
    }
  }.
\]
Suppose
\[
  \Gamma\vdash t:\Fun[nn]{\opn[b]{M}}{\sigma},
  \qquad
  \Gamma,x:\sigma\vdash u:\Fun[nn]{\opn[b]{M}}{\nu},
\]
\[
  \Gamma,x:\sigma\vdash \lintype{\sigma}\ \mathrm{type},
  \qquad
  \Gamma\vdash \lintype{\rho}\ \mathrm{type},
\]
and
\[
  \Gamma;
  v:
  \opn[b]{M}^{w}_{t:\Fun[nn]{\opn[b]{M}}{\sigma}}
  \left[x\mapsto\lintype{\sigma}\right]
  \vdash r:\lintype{\rho},
\]
\[
  \Gamma,x:\sigma;
  v:
  \opn[b]{M}^{w}_{u:\Fun[nn]{\opn[b]{M}}{\nu}}
  \left[y\mapsto\lintype{\tau}_{\nu}\right]
  \vdash s:\lintype{\rho}\times\lintype{\sigma}.
\]
Then
\begin{equation}
  \label{eq:appendix-target-language-handler-w-multiplication}
  \begin{aligned}
    &
    \fn[n]{
      \opn[b]{E}^{w}_{
        \nu,
        \fn[n]{\opn[b]{bind}}{t;x.u}
      }
    }{
      \fn[n]{\opn[b]{bind}^{w}_{t;x.u}}{
        r;
        s
      }
    }
    \\
    &\qquad\equiv
    \fn[n]{
      \opn[b]{E}^{w}_{
        \nu,
        \fn[n]{\opn[b]{bind}}{
          t;
          x.\fn[n]{\opn[b]{return}}{
            \fn[n]{\opn[b]{E}_{\nu}}{u}
          }
        }
      }
    }{
      \fn[n]{
        \opn[b]{bind}^{w}_{
          t;
          x.\fn[n]{\opn[b]{return}}{
            \fn[n]{\opn[b]{E}_{\nu}}{u}
          }
        }
      }{
        r;
        \fn[n]{
          \opn[b]{return}^{w}_{
            \fn[n]{\opn[b]{E}_{\nu}}{u}
          }
        }{
          \fn[n]{\opn[b]{E}^{w}_{\nu,u}}{s}
        }
      }
    } .
  \end{aligned}
\end{equation}

\subsection{Syntactic category of the target language}
\label{sec:appendix-syntactic-category-of-the-target-language}

With the proper type theory spelled out, we may now provide the proper definition of \( \LSyn[\tau] \) for each cartesian type \( \tau \):
\begin{itemize}
\item \emph{Objects}: linear types \( \lintype{\sigma} \) well-formed in the context \( p: \tau \).
\item \emph{Morphisms}: a morphism \( \lintype{\rho} \to \lintype{\sigma} \) is a linear term \( p: \tau; v : \lintype{\rho} \vdash s : \lintype{\sigma} \), considered modulo \( \alpha \)-equivalence. \emph{Composition} is performed by linear let bindings, and \emph{identities} are given by the linear variable \( v \) itself (see the \emph{Linear variable} typing rule).
\end{itemize}

Given a term \( p: \tau \vdash t : \sigma \) in \( \CSyn \), write
\(q:\sigma\) for the distinguished variable in the fibre over \(\sigma\).
Its reindexing \( t^{\ast}: \LSyn[\sigma] \to \LSyn[\tau] \) acts by substitution on objects and terms:
\begin{equation*}
  \begin{array}{rrl}
    \left( \text{Objects} \right) 
    &\quad
      \Fun[nn]{ t^{\ast} }{ \lintype{\sigma} }
    &:=
      \casubst{\lintype{\sigma}}{q}{t}
    \\
    \left( \text{Morphisms} \right) 
    &\quad
      \Fun[nn]{ t^{\ast} }{
      \left( q : \sigma; v : \lintype{\rho} \vdash s : \lintype{\sigma} \right)
      }
    &:=
      p : \tau; v : \casubst{\lintype{\rho}}{q}{t}
      \vdash
      \casubst{s}{q}{t} : \casubst{\lintype{\sigma}}{q}{t}
  \end{array}
\end{equation*}

\begin{theorem}[Split fibration structure]
  The projection \( \pi: \Tgt \to \CSyn \) is a split fibration with chosen cartesian lifts given by reindexing \( t^{\ast} \) on the nose. Identities and composition for \( \LSyn \) are strictly preserved by reindexing.
\end{theorem}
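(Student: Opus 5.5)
We first establish that reindexing \(t^{\ast}:\LSyn[\sigma]\to\LSyn[\tau]\) is a well-defined functor, and then that \(t\mapsto t^{\ast}\) is strictly functorial in \(t\), so that \(\LSyn:\CSyn[o]\to\Cat\) is a strict indexed category. With these two facts, the split fibration structure of \(\pi:\Tgt\to\CSyn\) is the standard property of the Grothendieck construction of a strict indexed category (Section~2.3 of \cite{lucatellinunesCHADExpressiveTotal2023}).

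The first step is a \emph{substitution lemma} for the target type theory. If \(q:\sigma;v:\lintype{\rho}\vdash s:\lintype{\sigma}\) is derivable and \(p:\tau\vdash t:\sigma\), then
\[
  p:\tau;\,v:\casubst{\lintype{\rho}}{q}{t}\vdash\casubst{s}{q}{t}:\casubst{\lintype{\sigma}}{q}{t}
\]
is derivable. Similarly, equations \(s\equiv s'\) are preserved under substitution. We prove this by simultaneous induction on derivations of linear type formation, linear typing and linear equality.

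Most cases are routine. The ones requiring care are the dependent rules: the linear case over cartesian sums, \(\opn[b]{M}^{w}\) type formation, and \(\opn[b]{return}^{w}\), \(\opn[b]{bind}^{w}\), \(\opn[b]{categorical}^{w}\) and \(\opn[b]{E}^{w}_{\nu}\). In these rules a premise lives in an extended context \(\Gamma,x:\sigma\). We handle them by weakening \(t\) and using the usual commutation \(\casubst{\casubst{A}{x}{a}}{q}{t}=\casubst{\casubst{A}{q}{t}}{x}{\casubst{a}{q}{t}}\) for \(x\) fresh. The schematic equations (the \(w\)-unit, \(w\)-associativity and handler laws) are stated in an arbitrary context \(\Gamma\), so their substitution instances are again instances of the same equations. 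This settles congruence on \(\alpha\)-classes.

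Functoriality of each \(t^{\ast}\) is then immediate. Identities are preserved because \(\casubst{v}{q}{t}=v\). Composition is given by linear let, and \(\casubst{(\letst{v}{s_1}{s_2})}{q}{t}\) is syntactically \(\letst{v}{\casubst{s_1}{q}{t}}{\casubst{s_2}{q}{t}}\). So composition is preserved on the nose, even before using (Lin.\ struct.).

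Strict functoriality in \(t\) comes next. We have \(\id^{\ast}=\Id\), since substituting \(q\) for \(q\) is the identity, and \((t'\circ t)^{\ast}=t^{\ast}\circ t'^{\ast}\), because composition in \(\CSyn\) is substitution and capture-avoiding substitution composes strictly: \(\casubst{\casubst{A}{q'}{t'}}{q}{t}=\casubst{A}{q'}{\casubst{t'}{q}{t}}\). These are equalities of \(\alpha\)-classes, hence strict.

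For the fibration itself, given \(t:\tau\to\sigma\) in \(\CSyn\) and an object \(\tuple{q:\sigma,\lintype{\chi}}\), the chosen lift is \(\tuple{t,v}:\tuple{p:\tau,\casubst{\lintype{\chi}}{q}{t}}\to\tuple{q:\sigma,\lintype{\chi}}\). Here the linear identity is read in \(\LSyn[\tau]\) as a map \(\casubst{\lintype{\chi}}{q}{t}\to\casubst{\lintype{\chi}}{q}{t}\). To show this lift is cartesian, take a morphism \(\tuple{t\circ h,s}\) and factor it uniquely as \(\tuple{h,s}\). This works because composition in \(\Tgt\) reindexes the linear part along \(h\) and composes with identity. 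Splitness, namely that chosen lifts compose to chosen lifts and identities lift to identities, follows from the strict equalities of the previous step.

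The main obstacle is the substitution lemma in the presence of the dependent monadic linear types. There we must check that the cartesian equalities used to identify linear types on the two sides of the oplax equations are themselves stable under substitution, which relies on soundness of substitution for the cartesian fragment. If a direct induction becomes unwieldy, the fallback is to present the linear rules in a context-parametric form, so that stability under substitution holds by construction.
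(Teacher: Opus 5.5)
Your proposal is correct and is the argument the paper relies on. The paper gives no written proof of this theorem; it rests on the remark that substitution makes $\LSyn$ a strict indexed category $\CSyn[o]\to\Cat$ whose Grothendieck construction is $\Tgt$, and you spell out the substitution lemma, strict functoriality, and cartesian-lift check that this entails. One detail is worth adding, though (so this runs to three sentences): $\CSyn$ is taken modulo its equational theory (it contains $\Syn$, where the Kleisli laws hold), so $t^{\ast}$ must depend only on the $\equiv$-class of $t$, i.e.\ $t\equiv t'$ should give $\casubst{\lintype{\sigma}}{q}{t}=\casubst{\lintype{\sigma}}{q}{t'}$ up to cartesian conversion and $\casubst{s}{q}{t}\equiv\casubst{s}{q}{t'}$, whereas your argument treats $t$ only up to $\alpha$-equivalence.
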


\section{\texorpdfstring{\( \Fam{\Vect[o]} \)}{Fam(VectOp)} denotational semantics}
\label{sec:appendix-fam-vectop-denotational-semantics}

In Section~\ref{sec:interpretation-of-the-target-language-in-fam-vectop}, we discuss briefly the functor \( \fvoint{\cdot}: \Tgt \to \Fam{\Vect[o]} \), which interprets the target language in the category \( \Fam{\Vect[o]} \). The goal of this Appendix is to describe this functor in further detail.

Thus cartesian types are interpreted as sets (the base part of the objects in \( \Fam{\Vect[o]} \)), whereas linear judgements over cartesian contexts are interpreted in \( \Vect[o] \). For convenience, given a morphism \( f: U \to V \) in \( \Vect[o] \), we denote by \( \opmor{f}: V \to U \) the corresponding linear map in \( \Vect \). This lets us display fibre maps in the ordinary \( \Vect \) orientation.

\subsection{Type semantics}

The semantics of the \( \CSyn \) fragment establishes the base cartesian semantics, interpreting cartesian contexts as the base sets of objects of \( \Fam{\Vect[o]} \). The term constructors shared with the source language have the same interpretation as in the source-language semantics in \( \Set \) (Sections~\ref{sec:interpretation-of-source-language-in-set} and \ref{sec:appendix-source-language-denotational-semantics}),
\begin{equation*}
  \begin{aligned}
    \setint{0} &= \emptyset
    \\
    \setint{1} &= \left\{ \ast \right\}
    \\
    \setint{\srcreal[n]} &= \R[n]
  \end{aligned}
  \qquad
  \begin{aligned}
    \setint{\tau \times \sigma} &= \setint{\tau} \times \setint{\sigma}
    \\
    \setint{\tau \sqcup \sigma} &= \setint{\tau} \amalg \setint{\sigma}
    \\
    \setint{\opn[b]{M} \tau} &= \Fun[nn]{ M }{ \setint{\tau} }
  \end{aligned}
\end{equation*}
with deterministic primitives interpreted as their chosen total functions (which we assume to be differentiable), and \( \tuple{ \opn[b]{M}, \opn[b]{return}, \opn[b]{bind} } \) as \( \tuple{ M, \eta, \mu } \). For a cartesian judgement under the context \( p : \tau \), dependent sums are interpreted as follows: for every \( \gamma \in \setint{\tau} \),
\begin{equation}
  \label{eq:appendix-target-language-csyn-dependent-sum-interpretation}
  \fn[n]{
  \setint{\Sigma x : \sigma \; . \; \tau'}
  }{ \gamma }
  := \coprod_{a \in \fn[n]{\setint{\sigma}}{\gamma}}
  \fn[n]{ \setint{\tau'} }{ \gamma, a }
\end{equation}
and the linear function types are interpreted as the set of linear morphisms between the linear interpretations of the respective linear types,
\begin{equation}
  \label{eq:appendix-target-language-csyn-linear-function-interpretation}
  \fn[n]{ \setint{ \lintype{\sigma} \linto \lintype{\rho} } }{ \gamma } :=
  \fn[n]{
    \Vect
  }{
    \fn[n]{ \linint{\lintype{\sigma}}{p : \tau} }{ \gamma },
    \fn[n]{ \linint{\lintype{\rho}}{p : \tau} }{ \gamma }
  }
\end{equation}
where the linear interpretation \( \linint{\cdot}{p : \tau} \) of linear types dependent on a cartesian variable \( p : \tau \) provides a real vector space interpretation for those linear types, making them into (non-topological) vector bundles over the base set \( \setint{\tau} \). As such, for every linear type \( \lintype{\sigma} \) well-formed in the context \( p: \tau \), we get a functor (i.e. a family)
\begin{equation*}
  \linint{\lintype{\sigma}}{p : \tau}:
  \setint{\tau} \to \Vect[o]
\end{equation*}
The linear interpretation is also described inductively over the structure of linear types, for every \( \gamma \in \setint{\tau} \),
\begin{equation}
  \label{eq:appendix-target-language-interpretation-inductive-definition}
  \begin{aligned}
    \Fun[np]{
    \linint{\lintype{\srcreal[k]}}{p : \tau}
    }{ \gamma }
    &= \R[k]
    \\
    \Fun[np]{
    \linint{\lintype{1}}{p : \tau}
    }{ \gamma }
    &= 0
    \\
    \Fun[np]{
    \linint{\lintype{\sigma} \times \lintype{\rho}}{p : \tau}
    }{ \gamma }
    &= 
      \Fun[np]{
      \linint{\lintype{\sigma}}{p : \tau}
      }{ \gamma }
      \oplus
      \Fun[np]{
      \linint{\lintype{\rho}}{p : \tau}
      }{ \gamma }
    \\
    \Fun[np]{
    \linint{
    \casenst{t}{i}{n}{x_{i}}{\lintype{\sigma}_{i}}
    }{p : \tau}
    }{ \gamma }
    &=
      \begin{cases}
        \Fun[np]{
        \linint{\lintype{\sigma}_{i}}{p : \tau, x_{i} : \tau_{i}}
        }{ \gamma, a }
        &\text{if }
          \fn[n]{ \setint{t} }{ \gamma } = \opn[b]{in}_{i} a
      \end{cases}
    \\
    \Fun[np]{
    \linint{
    \opn[b]{M}^{w}_{q : \Fun[nn]{ \opn[b]{M} }{ \rho }}
    \left[ x \mapsto \lintype{\sigma} \right]
    }{p : \tau}
    }{ \gamma }
    &=
      \bigoplus_{x \in \supp \fn[n]{ \setint{q} }{\gamma} }
      \left(
      \fn[n]{ \linint{\lintype{\sigma}}{p : \tau, x : \rho} }{ \gamma, x } \oplus \R
      \right)
  \end{aligned}
\end{equation}
Compare the definition of the monadic cotangent type semantics to the definition of the fibre part of the monad on objects in \( \Fam{\Vect[o]} \) given in Equation~\ref{eq:fad-monad-on-objects-in-fam-vectop}.

With these definitions, we get a functor \( \fvoint{\cdot}: \Tgt \to \Fam{\Vect[o]} \) defined on objects, which, as described in Section~\ref{sec:appendix-syntactic-category-of-the-target-language}, are given by a pair \( \tuple{ p : \tau, \lintype{\sigma} } \) where \( p : \tau \) is a cartesian context and \( \lintype{\sigma} \) a well-formed linear type under this context, as
\begin{equation}
  \label{eq:appendix-target-language-interpretation-functor-on-objects}
  \fvoint{\tuple{ p : \tau, \lintype{\sigma} }} :=
  \tuple{
    \setint{\tau},
    \linint{\lintype{\sigma}}{p : \tau}
  }
\end{equation}
and on any morphism \( \tuple{ t, s }: \tuple{ p : \tau, \lintype{\rho} } \to \tuple{ q : \sigma, \lintype{\chi} } \), which consists of a cartesian arrow \( t : p \to q \) and a linear arrow in the fibre \( s: \casubst{\lintype{\chi}}{q}{t} \to \lintype{\rho} \) in \( \LSyn[ p : \tau ] \), as
\begin{equation}
  \label{eq:appendix-target-language-fam-vectop-interpretation-morphisms}
  \fvoint{\tuple{ t, s }} :=
  \tuple{
    \setint{t},
    \left(
      \gamma \mapsto
      \left(
        \opmor[p]{
          \fn[n]{ \linint{s}{p : \tau} }{ \gamma }
        }
        \in
        \fn[n]{ \Vect }{
          \fn[n]{ \linint{\lintype{\chi}}{q : \sigma} }{ \fn[n]{ \setint{t} }{ \gamma } },
          \fn[n]{ \linint{\lintype{\rho}}{p : \tau} }{ \gamma }
        }
      \right) 
    \right) 
  }
\end{equation}

\subsection{Term semantics}

Next, we provide interpretations for the terms themselves, which will be mapped to morphisms in \( \Fam{\Vect[o]} \). Again, this is done inductively over the structure of the terms, and the cartesian part of the language is interpreted in \( \Set \) as the base part of the morphisms.

\paragraph{Cartesian fragment}
For the term constructors in common with the source language, the interpretation is the same as the one described in Section~\ref{sec:appendix-source-language-denotational-semantics}. For the other cartesian term constructors, in a cartesian context \(\Gamma=p:\tau\), we define, for all \( \gamma \in \setint{\tau} \),
\begin{equation}
  \label{eq:appendix-target-language-new-cartesian-term-constructors-semantics}
  \begin{aligned}
    \fn[n]{
    \setint{
    \Gamma \vdash
    \tuple{ a, t }: \Sigma \left( x : \sigma \right) . \tau'
    }
    }{ \gamma }
    &=
      \tuple{
      \fn[n]{ \setint{a} }{ \gamma },
      \fn[n]{ \setint{t} }{ \gamma }
      }
    \\
    \fn[n]{
    \setint{
    \Gamma \vdash
    \opn[b]{case} \; p \; \opn[b]{of} \; \tuple{x,y} \to s : \rho
    }
    }{ \gamma }
    &=
      \fn[n]{ \setint{s} }{ \gamma, a, b }
      \qq{where}
      \tuple{ a, b } = \fn[n]{ \setint{p} }{ \gamma }
    \\
    \fn[n]{
    \setint{
    \Gamma \vdash
    \lambda v \; . \; s:
    \lintype{\sigma} \linto \lintype{\rho}
    }
    }{ \gamma }
    &=
      \opmor[p]{ \fn[n]{ \linint{s}{p : \tau} }{ \gamma } }
  \end{aligned}
\end{equation}
Linear application is a linear term constructor; its semantics is given below.

\paragraph{Linear core}
Linear terms are interpreted fibrewise in \( \Vect[o] \), or, equivalently, as linear maps in \( \Vect \), but with domain and codomain reversed. The first display below is written in the \(\Vect[o]\)-orientation; when we need the ordinary \(\Vect\)-orientation, we apply \(\opmor{\blankdash}\). In what follows, we consider the linear terms under a cartesian context \( \Gamma = p : \tau \),
{
  \allowdisplaybreaks
  \begin{alignat*}{2}
    \fn[n]{
    \linint{
    \Gamma; v : \lintype{\sigma}
    \vdash v : \lintype{\sigma}
    }{p : \tau}
    }{ \gamma }
    &=
      \id_{
      \Fun[np]{
      \linint{
      \lintype{\sigma}
      }{p : \tau}
      }{ \gamma }
      }
    \\
    \fn[n]{
    \linint{
    \opn[b]{0}
    }{p : \tau}
    }{ \gamma }
    &= 0
    \\
    \fn[n]{
    \linint{
    s + t
    }{p : \tau}
    }{ \gamma }
    &=
      \fn[n]{
      \linint{
      s
      }{p : \tau}
      }{ \gamma }
      +
      \fn[n]{
      \linint{
      t
      }{p : \tau}
      }{ \gamma }
    \\
    \fn[n]{
    \linint{
    \tuple{ s, t }
    }{p : \tau}
    }{ \gamma }
    &=
      \cotupling{
      \fn[n]{
      \linint{
      s
      }{p : \tau}
      }{ \gamma },
      \fn[n]{
      \linint{
      t
      }{p : \tau}
      }{ \gamma }
      }
    \\
    \fn[n]{
    \linint{
    \opn[b]{pr}_{i} u
    }{p : \tau}
    }{ \gamma }
    &=
      \fn[n]{
      \linint{
      u
      }{p : \tau}
      }{ \gamma }
      \circ
      \iota_{i}
      &&
      \quad
      \left( i \in \left\{ 1, 2 \right\} \right) 
    \\
    \fn[n]{
    \linint{
    \letst{v}{t}{s}
    }{p : \tau}
    }{ \gamma }
    &=
      \fn[n]{ \linint{t}{p : \tau} }{ \gamma }
      \circ
      \fn[n]{ \linint{s}{p : \tau} }{ \gamma }
      &&
      \quad
      \text{(Composition in \( \Vect[o] \))}
    \\
    \fn[n]{
    \linint{
    \casenst{t}{i}{n}{x_{i}}{r_{i}}
    }{p : \tau}
    }{ \gamma }
    &=
      \fn[n]{ \linint{r_{i}}{p : \tau, x_{i} : \tau_{i}} }{ \gamma, a }
      &&
      \quad
      \text{if }
         \fn[n]{ \setint{t} }{ \gamma } = \fn[n]{ \opn[b]{in}_{i} }{ a }
  \end{alignat*}
}
For linear application, where \( \Gamma \vdash r : \lintype{\sigma} \linto \lintype{\rho} \) and \( \Gamma; v : \lintype{\alpha} \vdash t : \lintype{\sigma} \), we use the ordinary \(\Vect\)-orientation
\begin{equation*}
  \opmor[p]{
    \fn[n]{
      \linint{\linapp{r}{t}}{p : \tau}
    }{ \gamma }
  }
  =
  \fn[n]{ \setint{r} }{ \gamma }
  \circ
  \opmor[p]{
    \fn[n]{ \linint{t}{p : \tau} }{ \gamma }
  }.
\end{equation*}
Finally, for any built-in operation \( \opn[b]{op}: \srcreal[n_{1}] \times \cdots \times \srcreal[n_{k}] \to \srcreal[m] \),
\begin{equation*}
    \opmor[p]{
    \fn[n]{
    \linint{
    \fn[n]{ \fn[n]{ \opn[b]{Dop} }{ t_{1}, \cdots, t_{k} } }{v}
    }{p : \tau}
    }{ \gamma }
    }
    =
    \fn[p]{ \grad \setint{\opn[b]{op}} }{
      \fn[n]{ \setint{t_{1}} }{\gamma},
      \cdots,
      \fn[n]{ \setint{t_{k}} }{\gamma}
    }:
    \R[m] \to \R[n_{1}] \oplus \cdots \oplus \R[n_{k}]
\end{equation*}
where \( \setint{\opn[b]{op}} \) is the intended set-based interpretation of the operation \( \opn[b]{op} \), assumed to be differentiable, so that taking its transpose derivative \( \grad \setint{\opn[b]{op}} \) is well-defined.

\paragraph{Transposed derivative of \( \opn[b]{return} \)}
Let the cartesian context be \(\Gamma=p:\chi\).  Suppose
\(p:\chi\vdash t:\tau\),
\(p:\chi,x:\tau\vdash \lintype{\sigma}\text{ type}\), and
\(p:\chi; v:\casubst{\lintype{\sigma}}{x}{t}\vdash s:\lintype{\rho}\).
For \(\gamma\in\setint{\chi}\), put
\[
  a:=\fn[n]{\setint{t}}{\gamma},
  \qquad
  A_{a}:=\fn[n]{\linint{\lintype{\sigma}}{p:\chi,x:\tau}}{\gamma,a},
  \qquad
  R_{\gamma}:=\fn[n]{\linint{\lintype{\rho}}{p:\chi}}{\gamma}.
\]
Then the ordinary \(\Vect\)-orientation of the fibre map is
\begin{equation}
  \label{eq:appendix-target-language-transpose-derivative-of-return}
  \opmor[p]{
    \fn[n]{
      \linint{
        \fn[n]{ \opn[b]{return}^{w}_{t} }{ s }
      }{p : \chi}
    }{ \gamma }
  }
  =
  \opmor[p]{
    \fn[n]{ \linint{s}{p : \chi} }{ \gamma }
  }
  \circ
  \pi_{A_{a}}
  :
  A_{a}\oplus\R
  \to
  R_{\gamma}.
\end{equation}
Here \(\pi_{A_{a}}:A_{a}\oplus\R\to A_{a}\) is the projection onto the returned-value cotangent.  The log-weight component is discarded, as expected for the derivative of a unit distribution with constant log-weight.

\paragraph{Transposed derivative of \( \opn[b]{bind} \)}
Let the cartesian context again be \(\Gamma=p:\chi\), and consider a well-typed term
\(
  \fn[n]{ \opn[b]{bind}^{w}_{s; x \; . \; t}}{ r ; u }
\)
with
\(p:\chi\vdash s:\Fun[nn]{\opn[b]{M}}{\sigma}\) and
\(p:\chi,x:\sigma\vdash t:\Fun[nn]{\opn[b]{M}}{\tau}\).
Fix \(\gamma\in\setint{\chi}\), and define
\begin{equation*}
  \begin{aligned}
    w
    &:= \fn[n]{\setint{s}}{\gamma}
      &&\in \Fun[nn]{M}{\setint{\sigma}},
    \\
    \mu_{x}
    &:= \fn[n]{\setint{t}}{\gamma,x}
      &&\in \Fun[nn]{M}{\setint{\tau}},
    \\
    f_{\gamma}(x)
    &:= \mu_{x}
      &&:\setint{\sigma}\to\Fun[nn]{M}{\setint{\tau}},
    \\
    \widehat w
    &:= \fn[n]{\Fun[nn]{M}{f_{\gamma}}}{w}
      &&\in \Fun[nn]{M}{\Fun[nn]{M}{\setint{\tau}}},
    \\
    \bar w
    &:= \fn[n]{\Nat[nn]{\mu}{\setint{\tau}}}{\widehat w}
      &&\in \Fun[nn]{M}{\setint{\tau}}.
  \end{aligned}
\end{equation*}
Thus \(\bar w\) is the set-level denotation of
\(\fn[n]{\opn[b]{bind}}{s;x.t}\) at \(\gamma\).  Put
\begin{equation*}
  A_{x}:=\fn[n]{\linint{\lintype{\sigma}}{p:\chi,x:\sigma}}{\gamma,x},
  \qquad
  B_{y}:=\fn[n]{\linint{\lintype{\tau}}{p:\chi,y:\tau}}{\gamma,y},
  \qquad
  R_{\gamma}:=\fn[n]{\linint{\lintype{\rho}}{p:\chi}}{\gamma}.
\end{equation*}
The two linear arguments give ordinary-orientation maps
\begin{equation*}
  \begin{aligned}
    r^{\sharp}_{\gamma}
    &:={}
      \opmor[p]{
        \fn[n]{\linint{r}{p:\chi}}{\gamma}
      }
      :
      \bigoplus_{x\in\supp w}
      \left(A_{x}\oplus\R\right)
      \to R_{\gamma},
    \\
    u^{\sharp}_{\gamma,x}
    &:={}
      \opmor[p]{
        \fn[n]{\linint{u}{p:\chi,x:\sigma}}{\gamma,x}
      }
      :
      \bigoplus_{y\in\supp \mu_{x}}
      \left(B_{y}\oplus\R\right)
      \to R_{\gamma}\oplus A_{x}.
  \end{aligned}
\end{equation*}
The family \(\{u^{\sharp}_{\gamma,x}\}_{x\in\setint{\sigma}}\) is the fibre part of a morphism in \(\Fam{\Vect[o]}\)
\begin{equation*}
  \widehat f_{\gamma}:
  \tuple{
    \setint{\sigma},
    x\mapsto R_{\gamma}\oplus A_{x}
  }
  \to
  \Fun[nn]{M^{\cat[]{F}}}{
    \tuple{
      \setint{\tau},
      y\mapsto B_{y}
    }
  },
\end{equation*}
whose base map is \(f_{\gamma}\).  Applying the lifted FAD monad to this morphism and evaluating the fibre map at \(w\) gives
\begin{equation*}
  \Theta^{u}_{w}
  :=
  \fn[n]{
    \Nat[nn]{
      \fibr{\Fun[nn]{M^{\cat[]{F}}}{\widehat f_{\gamma}}}
    }{w}
  }:
  \bigoplus_{\xi\in\supp \widehat w}
  \left(
    \left(
      \bigoplus_{y\in\supp \xi}
      \left(B_{y}\oplus\R\right)
    \right)
    \oplus\R
  \right)
  \to
  \bigoplus_{x\in\supp w}
  \left(
    \left(R_{\gamma}\oplus A_{x}\right)
    \oplus\R
  \right).
\end{equation*}
Similarly, let
\begin{equation*}
  \varpi^{B}_{\widehat w}
  :=
  \fn[n]{
    \Nat[nn]{
      \fibr{
        \Nat[nn]{\mu^{\cat[]{F}}}{
          \tuple{
            \setint{\tau},
            y\mapsto B_{y}
          }
        }
      }
    }{\widehat w}
  }:
  \bigoplus_{y\in\supp \bar w}
  \left(B_{y}\oplus\R\right)
  \to
  \bigoplus_{\xi\in\supp \widehat w}
  \left(
    \left(
      \bigoplus_{y\in\supp \xi}
      \left(B_{y}\oplus\R\right)
    \right)
    \oplus\R
  \right)
\end{equation*}
be the ordinary-orientation fibre map of the lifted monad multiplication at \(\widehat w\).
For \(C_{x}:=R_{\gamma}\oplus A_{x}\), define
\begin{equation*}
  \begin{aligned}
    \Omega_{\gamma}
    &: 
    \bigoplus_{x\in\supp w}\left(C_{x}\oplus\R\right)
    \to R_{\gamma},
    \\
    \Omega_{\gamma}
    &:={}
    \left(
      \nabla_{R_{\gamma}}
      \circ
      \bigoplus_{x\in\supp w}
      \left(\pi_{R_{\gamma}}\circ\pi_{1}\right)
    \right)
    +
    \left(
      r^{\sharp}_{\gamma}
      \circ
      \bigoplus_{x\in\supp w}
      \tupling{
        \pi_{A_{x}}\circ\pi_{1},
        \pi_{2}
      }
    \right).
  \end{aligned}
\end{equation*}
Here \(\pi_{1}:C_{x}\oplus\R\to C_{x}\), \(\pi_{2}:C_{x}\oplus\R\to\R\), and \(\pi_{R_{\gamma}}\), \(\pi_{A_x}\) are the projections from \(C_x=R_\gamma\oplus A_x\).  The first summand of \(\Omega_\gamma\) adds the direct ambient cotangents produced by the continuation backpropagators; the second feeds the cotangent of the bound value and the outer log-weight cotangent through \(r\).

The semantics of \(\opn[b]{bind}^{w}\) is then
\begin{equation}
  \label{eq:appendix-target-language-transpose-derivative-of-bind}
  \opmor[p]{
    \fn[n]{
      \linint{
        \fn[n]{ \opn[b]{bind}^{w}_{s; x \; . \; t}}{ r ; u }
      }{p : \chi}
    }{ \gamma }
  }
  =
  \Omega_{\gamma}
  \circ
  \Theta^{u}_{w}
  \circ
  \varpi^{B}_{\widehat w}
  :
  \bigoplus_{y\in\supp \bar w}\left(B_{y}\oplus\R\right)
  \to R_{\gamma}.
\end{equation}
Because \(\Theta^{u}_{w}\) and \(\varpi^{B}_{\widehat w}\) are the actual fibre maps of the lifted FAD functor and multiplication, this definition also covers collisions of atoms and collisions among the intermediate FADs \(\mu_x\).

\paragraph{Transposed derivative of \( \opn[b]{categorical} \)}
Again work in a cartesian context \(p:\chi\).  Consider a well-typed term
\[
  \fn[n]{
    \opn[b]{categorical}^{w}_{
      \tuple{\tuple{t_{i},w_{i}}}_{i=1}^{n}
    }
  }{
    \tuple{\tuple{t_{i}',w_{i}'}}_{i=1}^{n}
  }
\]
generated by the rule in the target language.  Fix \(\gamma\in\setint{\chi}\), and put
\[
  a_{i}:=\fn[n]{\setint{t_{i}}}{\gamma},
  \qquad
  \ell_{i}:=\fn[n]{\setint{w_{i}}}{\gamma},
  \qquad
  q:=
  \fn[n]{\setint{
    \fn[n]{\opn[b]{categorical}}{
      \tuple{\tuple{t_{i},w_{i}}}_{i=1}^{n}
    }
  }}{\gamma}.
\]
Let
\[
  A_{x}:=\fn[n]{\linint{\lintype{\sigma}}{p:\chi,x:\tau}}{\gamma,x},
  \qquad
  R_{\gamma}:=\fn[n]{\linint{\lintype{\rho}}{p:\chi}}{\gamma}.
\]
The lifted categorical operation \(\categ^{\cat[]{F}}\) has an ordinary-orientation fibre map at the list \(\tuple{\tuple{a_{i},\ell_{i}}}_{i=1}^{n}\), which we write as
\begin{equation*}
  \kappa_{\gamma}:
  \bigoplus_{x\in\supp q}\left(A_{x}\oplus\R\right)
  \to
  \bigoplus_{i=1}^{n}\left(A_{a_i}\oplus\R\right).
\end{equation*}
This is the fibre part of the left factor in the FAD factorisation; in particular, it performs the correct splitting when several displayed atoms denote the same value.  Define
\begin{equation*}
  \begin{aligned}
    t^{\sharp}_{i,\gamma}
    &:={}
    \opmor[p]{
      \fn[n]{\linint{t_{i}'}{p:\chi}}{\gamma}
    }
    : A_{a_i}\to R_{\gamma},
    \\
    w^{\sharp}_{i,\gamma}
    &:={}
    \opmor[p]{
      \fn[n]{\linint{w_{i}'}{p:\chi}}{\gamma}
    }
    : \R\to R_{\gamma}.
  \end{aligned}
\end{equation*}
Then
\begin{equation}
  \label{eq:appendix-target-language-transpose-derivative-of-categorical}
  \begin{aligned}
  &\opmor[p]{
    \fn[n]{
      \linint{
        \fn[n]{
          \opn[b]{categorical}^{w}_{
            \tuple{\tuple{t_{i},w_{i}}}_{i=1}^{n}
          }
        }{
          \tuple{\tuple{t_{i}',w_{i}'}}_{i=1}^{n}
        }
      }{p:\chi}
    }{\gamma}
  }
  \\
  &\qquad=
  \nabla_{R_{\gamma}}
  \circ
  \left(
    \bigoplus_{i=1}^{n}
    \left(
      t^{\sharp}_{i,\gamma}\circ\pi_{A_{a_i}}
      +
      w^{\sharp}_{i,\gamma}\circ\pi_{\R}
    \right)
  \right)
  \circ
  \kappa_{\gamma}.
  \end{aligned}
\end{equation}

\paragraph{Transposed derivative of the handler algebra}
Let \(\nu=\srcreal[d]\), and write \(V_{\nu}:=\R[d]\) for the fibre of \(\lintype{\tau}_{\nu}\).  Suppose \(p:\chi\vdash t:\Fun[nn]{\opn[b]{M}}{\nu}\), and fix \(\gamma\in\setint{\chi}\).  Put \(m:=\fn[n]{\setint{t}}{\gamma}\).  The lifted handler algebra \(\opn[b]{E}^{\cat[]{F}}_{\nu}:\Fun[nn]{M^{\cat[]{F}}}{\TR[d]}\to\TR[d]\) has ordinary-orientation fibre map
\begin{equation*}
  \epsilon_{m}:V_{\nu}\to
  \bigoplus_{y\in\supp m}\left(V_{\nu}\oplus\R\right),
\end{equation*}
which, for the expectation handler, is explicitly
\begin{equation*}
  \fn[n]{\epsilon_{m}}{c}
  =
  \sum_{y\in\supp m}
  \left(
    \fn[n]{m}{y}\,c\cdot\vect{e}_{y}
    +
    \fn[n]{m}{y}\,\left\langle c,y\right\rangle\cdot\vect{\varepsilon}_{y}
  \right),
  \qquad c\in V_{\nu}.
\end{equation*}
Here \(\vect{e}_{y}\) denotes the atom-cotangent summand and \(\vect{\varepsilon}_{y}\) denotes the log-weight summand.  Therefore, for
\(p:\chi;v:\opn[b]{M}^{w}_{t}\left[y\mapsto\lintype{\tau}_{\nu}\right]\vdash s:\lintype{\rho}\),
\begin{equation}
  \label{eq:appendix-target-language-transpose-derivative-of-handler}
  \opmor[p]{
    \fn[n]{
      \linint{
        \fn[n]{\opn[b]{E}^{w}_{\nu,t}}{s}
      }{p:\chi}
    }{\gamma}
  }
  =
  \opmor[p]{
    \fn[n]{\linint{s}{p:\chi}}{\gamma}
  }
  \circ
  \epsilon_{m}
  : V_{\nu}\to
  \fn[n]{\linint{\lintype{\rho}}{p:\chi}}{\gamma}.
\end{equation}

\section{Code transformation rules}
\label{sec:appendix-code-transformation-rules}

This section gives the full reverse-mode CHAD translation used in
Equation~\eqref{eq:chad-code-transformation-terms-under-context}.  For every source judgement
\( \Gamma \vdash t : \tau \), the translation is a target-language term
\begin{equation}
  \label{eq:appendix-chad-term-typing-shape}
  \Fun[np]{\chad}{\Gamma}_{1}
  \vdash
  \Fun[np]{\chad[\Gamma]}{t}
  :
  \Sigma \left( p : \Fun[np]{\chad}{\tau}_{1} \right)
  .
  \left(
    \casubst{\Fun[np]{\chad}{\tau}_{2}}{p_{\tau}}{p}
    \linto
    \Fun[np]{\chad}{\Gamma}_{2}
  \right).
\end{equation}
The first component is the primal computation and the second component is the
backpropagator.  To avoid capture in dependent cotangent types, we write
\(p_{\tau} : \Fun[np]{\chad}{\tau}_{1}\) for the distinguished primal
variable of the type \(\tau\).  Thus \(\Fun[np]{\chad}{\tau}_{2}\) is read
in the cartesian context
\(p_{\tau} : \Fun[np]{\chad}{\tau}_{1}\), and every substitution into a
cotangent type below is understood after this \(\alpha\)-renaming.

\paragraph{Convention for the distinguished linear variable.}
The target language has one distinguished linear variable.  In clauses involving
\(\opn[b]{return}^{w}\), \(\opn[b]{bind}^{w}\),
\(\opn[b]{categorical}^{w}\), or \(\opn[b]{E}^{w}\), the variable bound by the
outer expression \(\lambda v\; .\;(-)\) has the conclusion type of the
corresponding \(w\)-typing rule.  The occurrences of \(v\) inside the arguments
of the \(w\)-constructor are checked at the premise types of that rule.  This is
exactly the typing convention used in the rules of
Section~\ref{sec:appendix-target-language-for-chad}.

\subsection{Type and context translation}

\paragraph{Base types, products, and sums.}
The primal translation \(\Fun[np]{\chad}{-}_{1}\) is cartesian, while
\(\Fun[np]{\chad}{-}_{2}\) gives a linear cotangent type indexed by the
corresponding distinguished primal variable:
\begin{equation}
  \label{eq:appendix-chad-type-translation-core-basis}
  \begin{array}{rcl}
    \Fun[np]{\chad}{1}_{1} &:=& 1
    \\[2pt]
    \Fun[np]{\chad}{0}_{1} &:=& 0
    \\[2pt]
    \Fun[np]{\chad}{\srcreal[k]}_{1} &:=& \srcreal[k]
    \\[6pt]
    \Fun[np]{\chad}{\tau \times \sigma}_{1}
    &:=&
    \Fun[np]{\chad}{\tau}_{1} \times \Fun[np]{\chad}{\sigma}_{1}
    \\[6pt]
    \Fun[np]{\chad}{\tau \sqcup \sigma}_{1}
    &:=&
    \Fun[np]{\chad}{\tau}_{1} \sqcup \Fun[np]{\chad}{\sigma}_{1}
  \end{array}
\end{equation}
and
\begin{equation}
  \label{eq:appendix-chad-type-translation-core-cotangents}
  \begin{array}{rcl}
    \Fun[np]{\chad}{1}_{2} &:=& \lintype{1}
    \\[2pt]
    \Fun[np]{\chad}{0}_{2} &:=& \lintype{1}
    \\[2pt]
    \Fun[np]{\chad}{\srcreal[k]}_{2} &:=& \lintype{\srcreal[k]}
    \\[6pt]
    \Fun[np]{\chad}{\tau \times \sigma}_{2}
    &:=&
    \casubst{\Fun[np]{\chad}{\tau}_{2}}{p_{\tau}}{\opn[b]{pr}_{1}p_{\tau\times\sigma}}
    \times
    \casubst{\Fun[np]{\chad}{\sigma}_{2}}{p_{\sigma}}{\opn[b]{pr}_{2}p_{\tau\times\sigma}}.
    \\[6pt]
    \Fun[np]{\chad}{\tau \sqcup \sigma}_{2}
    &:=&
    \opn[b]{case}\;p_{\tau\sqcup\sigma}\;\opn[b]{of}
    \left\{
    \begin{array}{l}
      \opn[b]{inl}\,x \mapsto
      \casubst{\Fun[np]{\chad}{\tau}_{2}}{p_{\tau}}{x}
      \\
      \opn[b]{inr}\,y \mapsto
      \casubst{\Fun[np]{\chad}{\sigma}_{2}}{p_{\sigma}}{y}
    \end{array}
    \right\}.
  \end{array}
\end{equation}

For the linear case-type former we use the usual dependent \(\beta\)-rules.  For
example,
\begin{equation*}
  \casubst{\Fun[np]{\chad}{\tau \sqcup \sigma}_{2}}{p_{\tau\sqcup\sigma}}{\opn[b]{inl}a}
  \equiv
  \casubst{\Fun[np]{\chad}{\tau}_{2}}{p_{\tau}}{a},
  \quad
  \casubst{\Fun[np]{\chad}{\tau \sqcup \sigma}_{2}}{p_{\tau\sqcup\sigma}}{\opn[b]{inr}b}
  \equiv
  \casubst{\Fun[np]{\chad}{\sigma}_{2}}{p_{\sigma}}{b}.
\end{equation*}
The same convention applies to the \(n\)-ary coproduct notation introduced in
Section~\ref{sec:appendix-source-language-for-chad}.

\paragraph{The monadic type.}
For the FAD monad, the primal remains a monadic computation and the cotangent is
the lifted monadic cotangent type of Section~\ref{sec:appendix-target-language-for-chad}:
\begin{equation}
  \label{eq:appendix-chad-type-translation-monad}
  \begin{aligned}
    \Fun[np]{\chad}{\Fun[nn]{\opn[b]{M}}{\tau}}_{1}
    &:=
    \Fun[nn]{\opn[b]{M}}{\Fun[np]{\chad}{\tau}_{1}},
    \\
    \Fun[np]{\chad}{\Fun[nn]{\opn[b]{M}}{\tau}}_{2}
    &:=
    \opn[b]{M}^{w}_{p_{\opn[b]{M}\tau} : \Fun[nn]{\opn[b]{M}}{\Fun[np]{\chad}{\tau}_{1}}}
    \left[
      x \mapsto \casubst{\Fun[np]{\chad}{\tau}_{2}}{p_{\tau}}{x}
    \right].
  \end{aligned}
\end{equation}
Here \(p_{\opn[b]{M}\tau}\) is the primal variable for the source type
\(\Fun[nn]{\opn[b]{M}}{\tau}\), while \(p_{\tau}\) is the primal variable
for the returned value type \(\tau\).  This distinction is necessary because the
continuation cotangent family is indexed by returned values, not by the whole
monadic computation.

\paragraph{Contexts.}
Contexts are translated componentwise on primals and by fibrewise biproducts on
cotangents:
\begin{equation}
  \label{eq:appendix-chad-context-translation}
  \begin{aligned}
    \Fun[np]{\chad}{\cdot}_{1} &:= \cdot,
    &
    \Fun[np]{\chad}{\Gamma, x : \tau}_{1}
    &:=
    \Fun[np]{\chad}{\Gamma}_{1},\; x : \Fun[np]{\chad}{\tau}_{1},
    \\
    \Fun[np]{\chad}{\cdot}_{2} &:= \lintype{1},
    &
    \Fun[np]{\chad}{\Gamma, x : \tau}_{2}
    &:=
    \Fun[np]{\chad}{\Gamma}_{2}
    \times
    \casubst{\Fun[np]{\chad}{\tau}_{2}}{p_{\tau}}{x}.
  \end{aligned}
\end{equation}
The first factor \(\Fun[np]{\chad}{\Gamma}_{2}\) is implicitly reindexed
along the cartesian weakening
\(\Fun[np]{\chad}{\Gamma}_{1},x:\Fun[np]{\chad}{\tau}_{1}\to
\Fun[np]{\chad}{\Gamma}_{1}\).  Thus \(\opn[b]{pr}_{1}\) selects the cotangent
tuple for the previous context and \(\opn[b]{pr}_{2}\) selects the cotangent for
the newest variable.

For \(\Gamma = x_{1}:\tau_{1},\dots,x_{n}:\tau_{n}\), define the derived
injection of a single variable cotangent into the context cotangent by recursion.
Writing \(\fn[n]{\opn{idx}}{x_{j};\Gamma}=j\), if
\(\Gamma=\Gamma',x_{n}:\tau_{n}\), then
\begin{equation}
  \label{eq:appendix-chad-coproj-definition}
  \begin{aligned}
    \opn[b]{coproj}^{\Gamma}_{n}(v)
    &:=
    \tuple{
      \opn[b]{0}_{\Fun[np]{\chad}{\Gamma'}_{2}},
      v
    },
    \\
    \opn[b]{coproj}^{\Gamma}_{j}(v)
    &:=
    \tuple{
      \opn[b]{coproj}^{\Gamma'}_{j}(v),
      \opn[b]{0}_{\casubst{\Fun[np]{\chad}{\tau_{n}}_{2}}{p_{\tau_n}}{x_{n}}}
    }
    \qquad (1\leq j<n).
  \end{aligned}
\end{equation}
We write \(\opn[b]{coproj}_{\fn[n]{\opn{idx}}{x;\Gamma}}v\) when \(\Gamma\) is
clear from context.

\subsection{Term translation}

All zero terms below are the typed zeroes of the relevant fibrewise biproduct.
The term translation is by structural recursion on the source typing derivation.

\paragraph{Nullary terms and variables.}
\begin{align}
  \label{eq:appendix-chad-unit}
  \Fun[np]{\chad[\Gamma]}{\tuple{\;}}
  &:=
  \tuple{
    \tuple{\;},
    \lambda v\; .\; \opn[b]{0}_{\Fun[np]{\chad}{\Gamma}_{2}}
  },
  \\
  \label{eq:appendix-chad-abort}
  \Fun[np]{\chad[\Gamma]}{\fn[n]{\opn[b]{abort}}{t}}
  &:=
  \opn[b]{case}\; \Fun[np]{\chad[\Gamma]}{t}\; \opn[b]{of}\;
  \tuple{p,b}\to
  \fn[n]{\opn[b]{abort}}{p},
  \\
  \label{eq:appendix-chad-variable}
  \Fun[np]{\chad[\Gamma]}{x}
  &:=
  \tuple{
    x,
    \lambda v\; .\;
    \opn[b]{coproj}_{\fn[n]{\opn{idx}}{x;\Gamma}}v
  }.
\end{align}

\paragraph{Products.}
\begin{align}
  \label{eq:appendix-chad-pair}
  \Fun[np]{\chad[\Gamma]}{\tuple{t,s}}
  &:=
  \opn[b]{case}\;\Fun[np]{\chad[\Gamma]}{t}\;\opn[b]{of}\;\tuple{p,b}\to
  \opn[b]{case}\;\Fun[np]{\chad[\Gamma]}{s}\;\opn[b]{of}\;\tuple{q,c}\to
  \notag\\[-2pt]
  &\qquad
  \tuple{
    \tuple{p,q},
    \lambda v\; .\;
    \fn[n]{b}{\opn[b]{pr}_{1}v}
    +
    \fn[n]{c}{\opn[b]{pr}_{2}v}
  },
  \\
  \label{eq:appendix-chad-pr1}
  \Fun[np]{\chad[\Gamma]}{\opn[b]{pr}_{1}t}
  &:=
  \opn[b]{case}\;\Fun[np]{\chad[\Gamma]}{t}\;\opn[b]{of}\;\tuple{p,b}\to
  \notag\\[-2pt]
  &\qquad
  \tuple{
    \opn[b]{pr}_{1}p,
    \lambda v\; .\;
    \fn[n]{b}{
      \tuple{
        v,
        \opn[b]{0}_{\casubst{\Fun[np]{\chad}{\sigma}_{2}}{p_{\sigma}}{\opn[b]{pr}_{2}p}}
      }
    }
  }
  \quad (t: \tau\times\sigma),
  \\
  \label{eq:appendix-chad-pr2}
  \Fun[np]{\chad[\Gamma]}{\opn[b]{pr}_{2}t}
  &:=
  \opn[b]{case}\;\Fun[np]{\chad[\Gamma]}{t}\;\opn[b]{of}\;\tuple{p,b}\to
  \notag\\[-2pt]
  &\qquad
  \tuple{
    \opn[b]{pr}_{2}p,
    \lambda v\; .\;
    \fn[n]{b}{
      \tuple{
        \opn[b]{0}_{\casubst{\Fun[np]{\chad}{\tau}_{2}}{p_{\tau}}{\opn[b]{pr}_{1}p}},
        v
      }
    }
  }
  \quad (t: \tau\times\sigma).
\end{align}

\paragraph{Coproducts.}
For \(n\)-ary sums, the following introduction rule specializes to
\(\opn[b]{inl}\) and \(\opn[b]{inr}\) in the binary grammar:
\begin{equation}
  \label{eq:appendix-chad-sum-intro}
  \Fun[np]{\chad[\Gamma]}{\opn[b]{in}_{i}t}
  :=
  \opn[b]{case}\;\Fun[np]{\chad[\Gamma]}{t}\;\opn[b]{of}\;\tuple{x,x'}\to
  \tuple{\opn[b]{in}_{i}x,x'}.
\end{equation}
For case analysis, suppose
\(\Gamma\vdash t:\tau_{1}\sqcup\cdots\sqcup\tau_{n}\) and
\(\Gamma,x_{i}:\tau_{i}\vdash s_{i}:\rho\).  Then
\begin{align}
  \label{eq:appendix-chad-sum-elim}
  &\Fun[np]{\chad[\Gamma]}{
    \opn[b]{case}\;t\;\opn[b]{of}\;
    \left\{\opn[b]{in}_{i}x_{i}\to s_{i}\right\}_{i=1}^{n}
  }
  :=
  \opn[b]{case}\;\Fun[np]{\chad[\Gamma]}{t}\;\opn[b]{of}\;\tuple{y,y'}\to
  \notag\\
  &\quad
  \opn[b]{case}\;y\;\opn[b]{of}
  \left\{
  \begin{array}{l}
    \opn[b]{in}_{i}x_{i}\mapsto
    \opn[b]{case}\;\Fun[np]{\chad[\Gamma,x_{i}:\tau_{i}]}{s_{i}}\;\opn[b]{of}\;\tuple{z_{i},z_{i}'}\to
    \\
    \qquad
    \tuple{
      z_{i},
      \lambda v\; .\;
      \letst{v}{\fn[n]{z_{i}'}{v}}{
        \opn[b]{pr}_{1}v
        +
        \fn[n]{y'}{\opn[b]{pr}_{2}v}
      }
    }
  \end{array}
  \right\}_{i=1}^{n}.
\end{align}
The application of \(y'\) in the branch uses the case-type \(\beta\)-rule above,
so \(\opn[b]{pr}_{2}v\) has precisely the cotangent type of the active summand.

\paragraph{Let bindings and deterministic operations.}
If \(\Gamma\vdash t:\sigma\) and \(\Gamma,x:\sigma\vdash s:\rho\), then
\begin{align}
  \label{eq:appendix-chad-let}
  \Fun[np]{\chad[\Gamma]}{\letst{x}{t}{s}}
  &:=
  \opn[b]{case}\;\Fun[np]{\chad[\Gamma]}{t}\;\opn[b]{of}\;\tuple{p,b}\to
  \opn[b]{case}\;\casubst{\Fun[np]{\chad[\Gamma,x:\sigma]}{s}}{x}{p}\;\opn[b]{of}\;\tuple{q,c}\to
  \notag\\[-2pt]
  &\qquad
  \tuple{
    q,
    \lambda v\; .\;
    \letst{v}{\fn[n]{c}{v}}{
      \opn[b]{pr}_{1}v + \fn[n]{b}{\opn[b]{pr}_{2}v}
    }
  }.
\end{align}
For a built-in operation
\(\srcop: \srcreal[n_{1}]\times\cdots\times\srcreal[n_{k}]\to\srcreal[m]\),
write \(\tgtop\) for the chosen transposed Jacobian term constructor.  Its CHAD
clause is
\begin{align}
  \label{eq:appendix-chad-op}
  &\Fun[np]{\chad[\Gamma]}{\fn[n]{\srcop}{t_{1},\dots,t_{k}}}
  :=
  \notag\\
  &\quad
  \opn[b]{case}\;\Fun[np]{\chad[\Gamma]}{t_{1}}\;\opn[b]{of}\;\tuple{x_{1},x_{1}'}\to
  \cdots
  \opn[b]{case}\;\Fun[np]{\chad[\Gamma]}{t_{k}}\;\opn[b]{of}\;\tuple{x_{k},x_{k}'}\to
  \notag\\[-2pt]
  &\quad\qquad
  \tuple{
    \fn[n]{\srcop}{x_{1},\dots,x_{k}},
    \lambda v\; .\;
    \letst{v}{\fn[n]{\fn[n]{\tgtop}{x_{1},\dots,x_{k}}}{v}}{
      \sum_{j=1}^{k}
      \fn[n]{x_{j}'}{\opn[b]{pr}_{j}v}
    }
  }.
\end{align}
Here \(\opn[b]{pr}_{j}\) denotes the derived projection from the cotangent product
returned by \(\tgtop\).

\paragraph{Monad unit and multiplication.}
The unit clause is the one displayed in Equation~\eqref{eq:chad-code-transformation-return}:
\begin{equation}
  \label{eq:appendix-chad-return}
  \Fun[np]{\chad[\Gamma]}{\fn[n]{\opn[b]{return}}{t}}
  :=
  \opn[b]{case}\;\Fun[np]{\chad[\Gamma]}{t}\;\opn[b]{of}\;\tuple{p,b}\to
  \tuple{
    \fn[n]{\opn[b]{return}}{p},
    \lambda v\; .\;
    \fn[n]{\opn[b]{return}^{w}_{p}}{\fn[n]{b}{v}}
  }.
\end{equation}
For bind, let \(\Gamma\vdash t:\Fun[nn]{\opn[b]{M}}{\sigma}\) and
\(\Gamma,x:\sigma\vdash s:\Fun[nn]{\opn[b]{M}}{\tau}\).  In the scope of
\(x\), put
\[
  r_{x}
  :=
  \opn[b]{case}\;\Fun[np]{\chad[\Gamma,x:\sigma]}{s}\;\opn[b]{of}\;\tuple{q,c}\to q,
  \qquad
  c_{x}
  :=
  \opn[b]{case}\;\Fun[np]{\chad[\Gamma,x:\sigma]}{s}\;\opn[b]{of}\;\tuple{q,c}\to c.
\]
Then the bind clause is
\begin{equation}
  \label{eq:appendix-chad-bind}
  \Fun[np]{\chad[\Gamma]}{\fn[n]{\opn[b]{bind}}{t; x\; .\; s}}
  :=
  \opn[b]{case}\;\Fun[np]{\chad[\Gamma]}{t}\;\opn[b]{of}\;\tuple{p,b}\to
  \tuple{
    \fn[n]{\opn[b]{bind}}{p; x\; .\; r_{x}},
    \lambda v\; .\;
    \fn[n]{\opn[b]{bind}^{w}_{p; x\; .\; r_{x}}}{
      \fn[n]{b}{v};
      \fn[n]{c_{x}}{v}
    }
  }.
\end{equation}
The first argument of \(\opn[b]{bind}^{w}\) accumulates the cotangent contribution
through the sampled computation \(t\), while the second argument accumulates the
cotangent contribution through the continuation \(s\), including the cotangent of
the bound variable \(x\).

\paragraph{FAD constructors and handler.}
For the finite atomic distribution constructor, assume
\(\Gamma\vdash t_{i}:\tau\) and \(\Gamma\vdash w_{i}:\srcreal\) for
\(1\leq i\leq n\).  Then
\begin{align}
  \label{eq:appendix-chad-categorical}
  &\Fun[np]{\chad[\Gamma]}{
    \fn[n]{\opn[b]{categorical}}{\tuple{\tuple{t_{i},w_{i}}}_{i=1}^{n}}
  }
  :=
  \notag\\
  &\quad
  \opn[b]{case}\;\Fun[np]{\chad[\Gamma]}{t_{1}}\;\opn[b]{of}\;\tuple{u_{1},u_{1}'}\to
  \cdots
  \opn[b]{case}\;\Fun[np]{\chad[\Gamma]}{t_{n}}\;\opn[b]{of}\;\tuple{u_{n},u_{n}'}\to
  \notag\\
  &\quad
  \opn[b]{case}\;\Fun[np]{\chad[\Gamma]}{w_{1}}\;\opn[b]{of}\;\tuple{a_{1},a_{1}'}\to
  \cdots
  \opn[b]{case}\;\Fun[np]{\chad[\Gamma]}{w_{n}}\;\opn[b]{of}\;\tuple{a_{n},a_{n}'}\to
  \notag\\[-2pt]
  &\quad\qquad
  \tuple{
    \fn[n]{\opn[b]{categorical}}{\tuple{\tuple{u_{i},a_{i}}}_{i=1}^{n}},
    \lambda v\; .\;
    \fn[n]{
      \opn[b]{categorical}^{w}_{\tuple{\tuple{u_{i},a_{i}}}_{i=1}^{n}}
    }{
      \tuple{\tuple{\fn[n]{u_{i}'}{v},\fn[n]{a_{i}'}{v}}}_{i=1}^{n}
    }
  }.
\end{align}
Finally, for the handler algebra \(\opn[b]{E}_{\nu}\), where
\(\nu=\srcreal[p]\), we define
\begin{equation}
  \label{eq:appendix-chad-handler}
  \Fun[np]{\chad[\Gamma]}{\fn[n]{\opn[b]{E}_{\nu}}{t}}
  :=
  \opn[b]{case}\;\Fun[np]{\chad[\Gamma]}{t}\;\opn[b]{of}\;\tuple{p,b}\to
  \tuple{
    \fn[n]{\opn[b]{E}_{\nu}}{p},
    \lambda v\; .\;
    \fn[n]{\opn[b]{E}^{w}_{\nu,p}}{\fn[n]{b}{v}}
  }.
\end{equation}

\subsection{Typing and universal characterization}

\begin{theorem}[Well-typedness and uniqueness of the CHAD translation]
  \label{th:appendix-chad-well-typedness-uniqueness}
  For every well-typed source judgement \(\Gamma\vdash t:\tau\), the term
  \(\Fun[np]{\chad[\Gamma]}{t}\) defined above is well typed as in
  Equation~\eqref{eq:appendix-chad-term-typing-shape}.  Moreover, these clauses
  determine the unique structure-preserving functor \(\chad:\Syn\to\Tgt\) that
  preserves finite products and coproducts, sends each source monadic
  type \(\Fun[nn]{\opn[b]{M}}{\tau}\) to the displayed pair consisting of the
  primal \(\opn[b]{M}\)-type and its cotangent lift \(\opn[b]{M}^{w}\),
  interprets each \(\srcop\) together with its transposed Jacobian
  \(\tgtop\), and interprets the FAD constructors and the handler algebra by
  the displayed pairs \(\left(\opn[b]{categorical},\opn[b]{categorical}^{w}\right)\)
  and \(\left(\opn[b]{E}_{\nu},\opn[b]{E}^{w}_{\nu}\right)\), respectively.
\end{theorem}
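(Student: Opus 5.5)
The plan has two halves: well-typedness by structural induction on the source typing derivation, and existence and uniqueness by applying the universal property of \( \Syn \) (Lemma~\ref{lem:universal-property-of-the-syntactic-category-of-the-source-language}) with \( \cat[]{X}=\Tgt \).

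For well-typedness, I would prove a slightly stronger statement by induction on the derivation of \( \Gamma\vdash t:\tau \). The claim is that \( \Fun[np]{\chad[\Gamma]}{t} \) has the type in Equation~\eqref{eq:appendix-chad-term-typing-shape}. I would also show that weakening and substitution in the source commute with the translation up to cartesian reindexing in \( \LSyn \); this is needed for the \( \opn[b]{let} \) and \( \opn[b]{bind} \) clauses, where \( \Fun[np]{\chad[\Gamma,x:\sigma]}{s} \) is used with \( x \) instantiated.

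The non-monadic clauses are handled as in \cite{lucatellinunesCHADExpressiveTotal2023}. The case clause uses the dependent \( \beta \)-rule for the linear case-type, so that \( \opn[b]{pr}_{2}v \) lands in the active summand. For \( \opn[b]{return} \) and \( \opn[b]{categorical} \), the premises of the \(w\)-rules are met with \( \lintype{\sigma}:=\casubst{\Fun[np]{\chad}{\tau}_{2}}{p_\tau}{x} \) and \( \lintype{\rho}:=\Fun[np]{\chad}{\Gamma}_{2} \). For the handler, one checks that \( \Fun[np]{\chad}{\nu}_2=\lintype{\srcreal[p]} \) is constant in \( x \).

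The main obstacle is the \( \opn[b]{bind} \) clause. There the \(w\)-bind rule needs \( c_x(v) \) to have type \( \lintype{\rho}\times\lintype{\sigma} \) in context \( \Gamma,x:\sigma \). This holds because \( \Fun[np]{\chad}{\Gamma,x:\sigma}_2=\Fun[np]{\chad}{\Gamma}_2\times\casubst{\Fun[np]{\chad}{\sigma}_2}{p_\sigma}{x} \), by Equation~\eqref{eq:appendix-chad-context-translation}, with the first factor weakened. I would therefore state explicitly, and prove first, the lemma that the context translation commutes with weakening.

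For existence and uniqueness, I would equip \( \Tgt \) with the data listed in Lemma~\ref{lem:universal-property-of-the-syntactic-category-of-the-source-language}:
\begin{itemize}
\item distributive structure, given as in the Grothendieck-construction CHAD of \cite{lucatellinunesCHADExpressiveTotal2023}, with \( \Sigma \)-types for products and case-types for coproducts;
\item a strong monad sending \( \tuple{p:\tau,\lintype{\sigma}} \) to the pair \( \tuple{\opn[b]{M}\tau,\opn[b]{M}^w[\ldots]} \), with unit and Kleisli extension given by \( \opn[b]{return}^w \) and \( \opn[b]{bind}^w \);
\item the \( \opn[b]{categorical} \) arrows from \( \opn[b]{categorical}^w \);
\item the algebra from \( \opn[b]{E}^w_\nu \).
\end{itemize}
The monad and algebra laws hold precisely because the \(w\)-equations \eqref{eq:appendix-target-language-monad-w-left-unit}--\eqref{eq:appendix-target-language-handler-w-multiplication} are the transposes of the source laws. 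Strength comes from the schematic contextual form, via the comparison map \eqref{eq:appendix-target-language-monad-w-alpha}.

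The universal property then yields a unique structure-preserving functor. The final step is to unfold its inductive definition and check that, on each term former, it coincides with the displayed clauses. This comparison is routine $\beta\eta$-reasoning with $\Sigma$-types and linear $\lambda$.
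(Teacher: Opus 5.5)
Your proposal follows the paper's own route: induction on the source typing derivation for well-typedness, where the only non-structural cases are discharged by the $w$-typing rules for $\opn[b]{return}^{w}$, $\opn[b]{bind}^{w}$, $\opn[b]{categorical}^{w}$ and $\opn[b]{E}^{w}_{\nu}$. Existence and uniqueness then come from the universal property of $\Syn$ applied to the corresponding structure on $\Tgt$, with the target $w$-equations guaranteeing the Kleisli and Eilenberg--Moore laws, and your version is more explicit than the paper's sketch. One small correction: identifying the universal functor's interpretation of $\opn[b]{bind}$, namely $\mu\circ M(f)\circ\str\circ\langle\id,t\rangle$, with the single displayed $\opn[b]{bind}^{w}$ clause is not pure $\beta\eta$-reasoning. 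It also needs the $w$-unit and $w$-associativity equations, together with naturality of $\opn[b]{bind}^{w}$ and $\opn[b]{return}^{w}$ in their ambient linear type; the paper likewise leaves this step implicit.
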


The typing statement follows by induction on the source typing derivation.  The
only non-structural cases are the monadic and FAD constructors, where the typing
rules for \(\opn[b]{return}^{w}\), \(\opn[b]{bind}^{w}\),
\(\opn[b]{categorical}^{w}\), and \(\opn[b]{E}^{w}\) are exactly the target rules
used in the displayed clauses.  The uniqueness statement is the universal
property of the freely generated source syntactic category
(Lemma~\ref{lem:appendix-universal-property-of-the-syntactic-category-of-the-source-language})
applied to the corresponding structure in \(\Tgt\).  The target equations of
Section~\ref{sec:appendix-target-language-for-chad} ensure that the source
Kleisli, distributive, and Eilenberg--Moore equations are preserved.

\section{Example of differentiated code}
\label{sec:appendix-example-of-differentiated-code}

This section gives the complete differentiated example code from Equation~\eqref{eq:example-source-code-encoding}, following the inductive rules of code differentiation outlined in Appendix~\ref{sec:appendix-code-transformation-rules}. As in Section~\ref{sec:returning-to-the-example}, we adopt the following abbreviations:
\begin{equation*}
  \begin{aligned}
    a_{0} &:= 0
    \\
    a_{1} &:= P \cdot \fn[n]{ \opn[b]{sig} }{ -s_{p} \cdot \left( p - p_{0} \right) }
  \end{aligned}
  \qquad
  \begin{aligned}
    \gamma_{0} &:= \fn[n]{ \opn[b]{lsig} }{ -s_{m} \cdot \left( m - m_{0} \right) }
    \\
    \gamma_{1} &:= \fn[n]{ \opn[b]{lsig} }{ s_{m} \cdot \left( m - m_{0} \right) }
  \end{aligned}
\end{equation*}
The full code transformation is then given by
{
  \allowdisplaybreaks
  \begin{alignat*}{2}
    &\opn[b]{case}
    \\
    &\qquad \opn[b]{case} \; \tuple{ a_{0}, \grad a_{0} } \; \opn[b]{of}
      \tuple{ \theta_{0}, \theta_{0}' } \to
    \\
    &\qquad \opn[b]{case} \; \tuple{ a_{1}, \grad a_{1} } \; \opn[b]{of}
      \tuple{ \theta_{1}, \theta_{1}' } \to
    \\
    &\qquad \opn[b]{case} \; \tuple{ \gamma_{0}, \grad \gamma_{0} } \; \opn[b]{of}
      \tuple{ \rho_{0}, \rho_{0}' } \to
    \\
    &\qquad \opn[b]{case} \; \tuple{ \gamma_{1}, \grad \gamma_{1} } \; \opn[b]{of}
      \tuple{ \rho_{1}, \rho_{1}' } \to
    \\
    &\qquad \qquad
      \tuple{
      \fn[n]{ \opn[b]{categorical} }{
      \tuple{ \tuple{ \theta_{j}, \rho_{j} } }_{j = 0}^{1}
      },
      \lambda v \; . \;
      \fn[n]{
      \opn[b]{categorical}^{w}_{
      \tuple{ \tuple{ \theta_{j}, \rho_{j} } }_{j = 0}^{1}
      }
      }{
      \tuple{ \tuple{ \fn[n]{ \theta_{j}' }{ v }, \fn[n]{ \rho_{j}' }{ v } } }_{j = 0}^{1}
      }
      }
    \\
    &\opn[b]{of} \tuple{ \alpha, \alpha' } \to
    \\
    &\qquad
      \opn[b]{case}
    \\
    &\qquad \qquad
      \opn[b]{case}
    \\
    &\qquad \qquad \qquad
      \opn[b]{case}
      \tuple{ s, \lambda v \; . \; \opn[b]{coproj}_{\fn[n]{ \opn{idx} }{ s; \Gamma, s }} v }
    \\
    &\qquad \qquad \qquad
      \opn[b]{of} \tuple{ \eta, \eta' } \to
    \\
    &\qquad \qquad \qquad \qquad
      \opn[b]{let} \xi =
    \\
    &\qquad \qquad \qquad \qquad \qquad
      \opn[b]{case}
      \\
    &\qquad \qquad \qquad \qquad \qquad \qquad
      \opn[b]{case}
      \tuple{ \left( p \cdot x - m \right), \grad_{p, m, x} \left( p \cdot x - m \right) }
      \\
    &\qquad \qquad \qquad \qquad \qquad \qquad
      \opn[b]{of} \tuple{ \mu, \mu' } \to
      \tuple{
      \fn[n]{ \opn[b]{return} }{ \mu },
      \lambda v \; . \;
      \fn[n]{ \opn[b]{return}^{w}_{\mu} }{ \fn[n]{ \mu' }{ v } }
      }
      \\
    &\qquad \qquad \qquad \qquad \qquad
      \opn[b]{of} \tuple{ \kappa, \kappa' } \to \kappa
    \\
    &\qquad \qquad \qquad \qquad
      \opn[b]{in}
    \\
    &\qquad \qquad \qquad \qquad \qquad
      \left\langle
      \opn[b]{bind}
      \left(
      \eta; x \; . \; \xi
      \right),
      \lambda v \; . \;
      \opn[b]{bind}_{\eta; x \; . \; \xi}^{w}
      \left(
      \fn[n]{ \eta' }{ v };
      \right. 
      \right.
    \\
    &\qquad \qquad \qquad \qquad \qquad \qquad
      \left(
      \opn[b]{case}
      \right.
      \\
    &\qquad \qquad \qquad \qquad \qquad \qquad \qquad
      \opn[b]{case}
      \tuple{ \left( p \cdot x - m \right), \grad_{p, m, x} \left( p \cdot x - m \right) }
      \\
    &\qquad \qquad \qquad \qquad \qquad \qquad \qquad
      \opn[b]{of} \tuple{ \mu, \mu' } \to
      \tuple{
      \fn[n]{ \opn[b]{return} }{ \mu },
      \lambda v \; . \;
      \fn[n]{ \opn[b]{return}^{w}_{\mu} }{ \fn[n]{ \mu' }{ v } }
      }
      \\
    &\qquad \qquad \qquad \qquad \qquad \qquad
      \left.
      \left.\left.
      \opn[b]{of} \tuple{ \kappa, \kappa' } \to \kappa'
      \right) 
      \left( v \right)
      \right) 
      \right\rangle
    \\
    &\qquad \qquad
      \opn[b]{of} \tuple{ \delta, \delta' } \to
    \\
    &\qquad \qquad \qquad
      \opn[b]{case}
    \\
    &\qquad \qquad \qquad \qquad
      \opn[b]{case}
      \tuple{ t, \lambda v \; . \; \opn[b]{coproj}_{\fn[n]{ \opn{idx} }{ t; \Gamma, s, t }} v}
    \\
    &\qquad \qquad \qquad \qquad
      \opn[b]{of} \tuple{ \zeta, \zeta' } \to
      \tuple{
      \fn[n]{ \opn[b]{E}_{\srcreal} }{ \zeta },
      \lambda v \; . \; 
      \fn[n]{ \opn[b]{E}_{\srcreal, \zeta}^{w} }{ \fn[n]{ \zeta' }{ v } }
      }
    \\
    &\qquad \qquad \qquad
      \opn[b]{of} \tuple{ \epsilon, \epsilon' } \to
      \tuple{
      \casubst{\epsilon}{t}{\delta},
      \lambda v \; . \;
      \letst{w}{
      \fn[n]{ \casubst{\epsilon'}{t}{\delta} }{ v }
      }{
      \opn[b]{pr}_{1} w + \fn[n]{ \delta' }{ \opn[b]{pr}_{2} w }
      }
      }
    \\
    &\qquad
      \opn[b]{of} \tuple{ \beta, \beta' } \to
      \tuple{
      \casubst{\beta}{s}{\alpha},
      \lambda v \; . \;
      \letst{ w }{ \fn[n]{ \casubst{\beta'}{s}{\alpha} }{v} }{
      \opn[b]{pr}_{1} w + \fn[n]{ \alpha' }{ \opn[b]{pr}_{2} w }
      } 
      }
  \end{alignat*}
}

\section{Deriving concrete gradient estimation formulas}
We now explain how the usual gradient rule for weighted particles is
already contained in the lifted FAD monad and in the lifted canonical
algebra.

In this section, we write the second component of a morphism in
\(\Fam{\Vect[o]}\) using the symbol \(\nabla\).  Thus a morphism
\(\hat h:T^*X\to T^*Y\) is written as
\[
  \hat h=(h,\nabla h),
\]
where \(h:X\to Y\) is its base part and
\[
  (\nabla h)_x:T^*_{h(x)}Y\to T^*_xX
\]
is the \(x\)-fibre of its fibre part.  This notation is suggestive:
\(\nabla h\) denotes the chosen cotangent map in
\(\Fam{\Vect[o]}\), whether or not \(X\) and \(Y\) carry genuine smooth
structure.

Write
\[
  \opn{E}^{\cat[]{F}}_{\R}:M^{\cat[]{F}}\TR\to\TR
\]
for the lifted canonical algebra on real values.  Its base part is the
unnormalised expectation map
\[
  u\longmapsto
  \sum_{r\in\supp u}u(r)r,
  \qquad
  u\in\Fun[nn]{M}{\R},
\]
and its fibre part is
\begin{equation}
  \label{eq:appendix-lifted-real-expectation-fibre}
  \left(\nabla\opn{E}^{\cat[]{F}}_{\R}\right)_u(t)
  =
  \sum_{r\in\supp u}
  \left(
    u(r)t\cdot\vect{e}_{r}
    +
    u(r)r t\cdot\vect{\varepsilon}_{r}
  \right).
\end{equation}
Here the \(\vect{e}_{r}\)-component is the cotangent of the real value
\(r\), while the \(\vect{\varepsilon}_{r}\)-component is the cotangent of
the log-weight attached to \(r\).

Let
\[
  T^*X=\tuple{X,\{T^*_xX\}_{x\in X}}
\]
be an object of \(\Fam{\Vect[o]}\), and let
\[
  \hat g=(g,\nabla g):T^*X\to\TR
\]
be a real-valued test morphism.  

For \(w\in\Fun[nn]{M}{X}\), set
\[
  u:=\fn[n]{\Fun[nn]{M}{g}}{w}.
\]
The Frobenius, or pushforward, identity says on bases that
\[
  \int_X g(x)\,dw
  =
  \int_{\R} r\,d\fn[n]{\Fun[nn]{M}{g}}{w},
\]
or explicitly
\[
  \sum_{x\in\supp w}w(x)g(x)
  =
  \sum_{r\in\supp u}u(r)r.
\]
In categorical form, this is the base part of the composite
\[
  M^{\cat[]{F}}T^*X
  \xrightarrow{M^{\cat[]{F}}\hat g}
  M^{\cat[]{F}}\TR
  \xrightarrow{\opn{E}^{\cat[]{F}}_{\R}}
  \TR.
\]

The fibre part of this composite is obtained by composing
Equation~\eqref{eq:appendix-lifted-real-expectation-fibre} with the fibre part of
\(M^{\cat[]{F}}\hat g\).
For \(x\in\supp w\), the denominator is nonzero, and the
corresponding summand is
\[
  \frac{w(x)}{u(g(x))}
  \left(
    (\nabla g)_x
    \left(
      u(g(x))t
    \right)
    \cdot\vect{e}_{x}
    +
    u(g(x))g(x)t\cdot\vect{\varepsilon}_{x}
  \right).
\]
Since \((\nabla g)_x\) is linear, the factor \(u(g(x))\) in the
atom-cotangent part cancels with the denominator; the log-weight part
cancels directly.  Hence
\begin{equation}
  \label{eq:appendix-frobenius-cotangent-rule}
  \left(
    \nabla\!\left(
      \opn{E}^{\cat[]{F}}_{\R}
      \circ
      M^{\cat[]{F}}\hat g
    \right)
  \right)_w(t)
  =
  \sum_{x\in\supp w}
  \left(
    w(x)(\nabla g)_x(t)\cdot\vect{e}_{x}
    +
    w(x)g(x)t\cdot\vect{\varepsilon}_{x}
  \right).
\end{equation}
Thus the lifted algebra and the lifted FAD monad already produce the two
basic cotangent contributions: a value cotangent weighted by \(w(x)\),
and a log-weight cotangent weighted by \(w(x)g(x)\).

Now apply this to a parameterised family of particles.  Let
\[
  T^*\Theta=\tuple{\Theta,\{T^*_\theta\Theta\}_{\theta\in\Theta}},
  \qquad
  T^*Z=\tuple{Z,\{T^*_zZ\}_{z\in Z}},
\]
and let
\[
  \hat q=(q,\nabla q):T^*\Theta\to M^{\cat[]{F}}T^*Z
\]
be a parameterised FAD, equipped with its chosen cotangent map.  Locally,
away from collisions of particles, choose a labelling of the support and
write the base part of \(\hat q\) as
\[
  \fn[n]{ q}{\theta}
  =
  \sum_{i\in I}w_i(\theta)\delta_{z_i(\theta)}.
\]
The corresponding strengthened FAD over
\(T^*\Theta\times T^*Z\) is
\[
  \hat{\underline q}=\tuple{\underline q,\nabla\underline q}
  :=
  \Nat[nn]{\str^{\cat[]{F}}}{T^*\Theta,T^*Z}
  \circ
  \tupling{\id_{T^*\Theta},\hat q}
  :
  T^*\Theta
  \to
  M^{\cat[]{F}}(T^*\Theta\times T^*Z),
\]
whose base part is
\[
  \fn[n]{\underline q}{\theta}
  =
  \sum_{i\in I}w_i(\theta)\delta_{(\theta,z_i(\theta))}.
\]

In the chosen local labelling, the fibre of
\(M^{\cat[]{F}}(T^*\Theta\times T^*Z)\) at
\(\fn[n]{\underline q}{\theta}\) identifies with
\[
  \bigoplus_{i\in I}
  \left(
    \left(
      T^*_\theta\Theta\oplus T^*_{z_i(\theta)}Z
    \right)
    \oplus\R
  \right).
\]
We write elements of this fibre as finite sums
\[
  \sum_{i\in I}
  \left(
    \tuple{d_i,a_i}\cdot\vect{e}_{(\theta,z_i(\theta))}
    +
    s_i\cdot\vect{\varepsilon}_{(\theta,z_i(\theta))}
  \right),
\]
where
\[
  d_i\in T^*_\theta\Theta,
  \qquad
  a_i\in T^*_{z_i(\theta)}Z,
  \qquad
  s_i\in\R.
\]
Let
\[
  \ell_i:\Theta\to\R,
  \qquad
  \ell_i(\theta)=\log w_i(\theta).
\]
In these coordinates, the fibre part of \(\hat{\underline q}\) is
\begin{equation}
  \label{eq:appendix-suggestive-fibre-parametrised-fad}
  \begin{aligned}
  &
  (\nabla\underline q)_\theta
  \left(
    \sum_{i\in I}
    \left(
      \tuple{d_i,a_i}\cdot\vect{e}_{(\theta,z_i(\theta))}
      +
      s_i\cdot\vect{\varepsilon}_{(\theta,z_i(\theta))}
    \right)
  \right)
  \\
  &\qquad =
  \sum_{i\in I}d_i
  +
  \sum_{i\in I}(\nabla z_i)_\theta(a_i)
  +
  \sum_{i\in I}(\nabla\ell_i)_\theta(s_i).
  \end{aligned}
\end{equation}
Here
\[
  (\nabla z_i)_\theta:T^*_{z_i(\theta)}Z\to T^*_\theta\Theta
\]
is the \(\theta\)-fibre of the fibre part of the chosen morphism
\(\hat z_i=(z_i,\nabla z_i):T^*\Theta\to T^*Z\) representing the
parameterised particle, and
\[
  (\nabla\ell_i)_\theta:\R\to T^*_\theta\Theta
\]
is the \(\theta\)-fibre of the fibre part of the chosen morphism
\(\hat{\ell}_i=(\ell_i,\nabla\ell_i):T^*\Theta\to\TR\) representing the
parameterised log-weight.  Thus an atom cotangent is sent backwards
through the moving particle \(z_i\), while a log-weight cotangent is sent
backwards through \(\ell_i=\log w_i\).

Finally, let
\[
  \widehat f=(f,\nabla f):
  T^*\Theta\times T^*Z\to\TR
\]
be a parameter-dependent test morphism, with base part
\[
  (\theta,z)\longmapsto f(\theta,z).
\]
We write the \((\theta,z)\)-fibre of its fibre part as
\[
  (\nabla f)_{(\theta,z)}(t)
  =
  \tuple{
    (\nabla f)^\Theta_{(\theta,z)}(t),
    (\nabla f)^Z_{(\theta,z)}(t)
  },
\]
where
\[
  (\nabla f)^\Theta_{(\theta,z)}(t)\in T^*_\theta\Theta,
  \qquad
  (\nabla f)^Z_{(\theta,z)}(t)\in T^*_zZ.
\]
Consider the composite
\[
  T^*\Theta
  \xrightarrow{\hat{\underline q}}
  M^{\cat[]{F}}(T^*\Theta\times T^*Z)
  \xrightarrow{M^{\cat[]{F}}\widehat f}
  M^{\cat[]{F}}\TR
  \xrightarrow{\opn{E}^{\cat[]{F}}_{\R}}
  \TR.
\]
By the Frobenius identity, its base part is
\[
  \theta
  \longmapsto
  \sum_{i\in I}
  w_i(\theta)f(\theta,z_i(\theta)).
\]

Applying Equation~\eqref{eq:appendix-frobenius-cotangent-rule} to
\(\widehat f:T^*\Theta\times T^*Z\to\TR\), an output cotangent
\(t\in\R\) is first sent to the following cotangent of the FAD over
\(\Theta\times Z\):
\[
  \sum_{i\in I}
  \left(
    w_i(\theta)
    \tuple{
      (\nabla f)^\Theta_{(\theta,z_i(\theta))}(t),
      (\nabla f)^Z_{(\theta,z_i(\theta))}(t)
    }
    \cdot\vect{e}_{(\theta,z_i(\theta))}
    +
    w_i(\theta)f(\theta,z_i(\theta))t
    \cdot\vect{\varepsilon}_{(\theta,z_i(\theta))}
  \right).
\]
Composing with Equation~\eqref{eq:appendix-suggestive-fibre-parametrised-fad}
therefore gives
\begin{equation}
  \label{eq:appendix-three-term-weighted-particle-gradient}
  \begin{aligned}
  &
  \left(
    \nabla\!\left(
      \opn{E}^{\cat[]{F}}_{\R}
      \circ
      M^{\cat[]{F}}\widehat f
      \circ
      \hat{\underline q}
    \right)
  \right)_\theta(t)
  \\
  &\qquad =
  \sum_{i\in I}
  w_i(\theta)(\nabla f)^\Theta_{(\theta,z_i(\theta))}(t)
  +
  \sum_{i\in I}
  w_i(\theta)
  (\nabla z_i)_\theta
  \left(
    (\nabla f)^Z_{(\theta,z_i(\theta))}(t)
  \right)
  \\
  &\qquad\qquad
  +
  \sum_{i\in I}
  (\nabla\ell_i)_\theta
  \left(
    w_i(\theta)f(\theta,z_i(\theta))t
  \right).
  \end{aligned}
\end{equation}

The three summands in
Equation~\eqref{eq:appendix-three-term-weighted-particle-gradient} correspond to
three different cotangent routes.  The first term comes from the
\(T^*_\theta\Theta\)-component of the test morphism
\(\widehat f:T^*\Theta\times T^*Z\to\TR\), and represents the explicit
dependence of \(f\) on the parameter.  The second term comes from the
\[
  T^*_{z_i(\theta)}Z
\]
part of the atom component inside the fibre of
\(M^{\cat[]{F}}(T^*\Theta\times T^*Z)\), and represents how moving the
particle \(z_i(\theta)\) changes the value of the test function.  The
third term comes from the real component
\[
  \R
\]
inside the same fibre, and represents how changing the log-weight
\(\ell_i(\theta)=\log w_i(\theta)\) changes the weighted sum.  This
third contribution is the usual score, or likelihood-ratio, term.

If the weights are already normalised, \(w_i(\theta)=p_i(\theta)\), and
we write
\[
  \ell_i(\theta)=\log p_i(\theta),
\]
then Equation~\eqref{eq:appendix-three-term-weighted-particle-gradient} gives
\[
  \begin{aligned}
  &
  \left(
    \nabla (\theta\mapsto
    \mathbb E_{i\sim p_\theta}
    \left[
      f(\theta,z_i(\theta))
    \right])
  \right)_\theta(t)
  \\
  &\qquad =
  \mathbb E_{i\sim p_\theta}
  \left[
    (\nabla f)^\Theta_{(\theta,z_i(\theta))}(t)
    +
    (\nabla z_i)_\theta
    \left(
      (\nabla f)^Z_{(\theta,z_i(\theta))}(t)
    \right)
    +
    f(\theta,z_i(\theta))(\nabla\ell_i)_\theta(t)
  \right].
  \end{aligned}
\]

If normalised probabilities are instead obtained from unnormalised
weights by a separate normalisation map, let
\[
  p_i(\theta)=\frac{w_i(\theta)}{\sum_{j\in I}w_j(\theta)},
  \qquad
  \ell_i(\theta)=\log w_i(\theta).
\]
Then the lifted fibre map of that normalisation gives
\[
  \begin{aligned}
  &
  \left( 
    \nabla (\theta\mapsto 
    \mathbb E_{i\sim p_\theta}
    \left[
      f(\theta,z_i(\theta))
    \right])
  \right)_\theta(t)
  \\
  &\qquad =
  \mathbb E_{i\sim p_\theta}
  \left[
    (\nabla f)^\Theta_{(\theta,z_i(\theta))}(t)
    +
    (\nabla z_i)_\theta
    \left(
      (\nabla f)^Z_{(\theta,z_i(\theta))}(t)
    \right)
  \right.
  \\
  &\qquad\qquad\left.
    +
    \left(
      f(\theta,z_i(\theta))
      -
      \mathbb E_{j\sim p_\theta}
      [f(\theta,z_j(\theta))]
    \right)
    (\nabla\ell_i)_\theta(t)
  \right].
  \end{aligned}
\]

\section{Algebras and strong morphisms into the continuation monad}
\label{sec:appendix-algebras-and-continuations}

We give a proof of
Lemma~\ref{lem:algebras-and-strong-monad-morphisms-to-the-continuation-monad}.
Throughout this appendix, let
\[
  K := \cont{V}^{\cat[]{X}} = \inthom{\inthom{\blankdash}{V}}{V}.
\]
We use the internal language of the cartesian closed category
\(\cat[]{X}\).  Thus the elementwise formulas below denote the
corresponding morphism equalities obtained by currying and evaluation.

Recall first the continuation monad structure.  For a morphism
\(f:A\to B\), the functorial action of \(K\) is contravariant
precomposition twice:
\[
  \Fun{K}{f}(c)(\ell)=c(\ell\circ f),
  \qquad
  c:\Fun{K}{A},
  \quad
  \ell:\inthom{B}{V}.
\]
Its unit and multiplication are
\[
  \Nat{\eta^{K}}{A}(x)(k)=k(x),
  \qquad
  x:A,
  \quad
  k:\inthom{A}{V},
\]
and
\[
  \Nat{\mu^{K}}{A}(\Phi)(k)
  =
  \Phi(c\mapsto c(k)),
  \qquad
  \Phi:\Fun{K}{\Fun{K}{A}},
  \quad
  k:\inthom{A}{V}.
\]
The canonical strength of \(K\) is
\[
  \Nat{\str^{K}}{A,B}(x,c)(h)
  =
  c(b\mapsto h(x,b)),
\]
where \(x:A\), \(c:\Fun{K}{B}\), and
\(h:\inthom{A\times B}{V}\).

A \(T\)-algebra structure on \(V\) is a morphism
\[
  a:\Fun{T}{V}\to V
\]
satisfying
\[
  a\circ\Nat{\eta^{T}}{V}=\id_{V}
  \qquad
  \text{and}
  \qquad
  a\circ\Nat{\mu^{T}}{V}
  =
  a\circ\Fun{T}{a}.
\]

Given such an algebra \(a\), define a natural transformation
\[
  \theta^{a}:T\to K
\]
as follows.  Its component at \(A\) is the transpose of the composite
\[
  \Fun{T}{A}\times\inthom{A}{V}
  \xrightarrow{\simeq}
  \inthom{A}{V}\times\Fun{T}{A}
  \xrightarrow{\Nat{\str^{T}}{\inthom{A}{V},A}}
  \Fun[np]{T}{\inthom{A}{V}\times A}
  \xrightarrow{\Fun{T}{\operatorname{ev}_{A,V}}}
  \Fun{T}{V}
  \xrightarrow{a}
  V,
\]
where
\(\operatorname{ev}_{A,V}:\inthom{A}{V}\times A\to V\)
is evaluation.  Equivalently,
\[
  \Nat{\theta^{a}}{A}(t)(k)
  =
  a(\Fun{T}{k}(t)),
  \qquad
  t:\Fun{T}{A},
  \quad
  k:\inthom{A}{V}.
\]
Naturality in \(A\) follows immediately from functoriality of \(T\): for
\(f:A\to B\) and \(\ell:\inthom{B}{V}\),
\[
  \Nat{\theta^{a}}{B}(\Fun{T}{f}(t))(\ell)
  =
  a(\Fun{T}{\ell\circ f}(t))
  =
  \Fun{K}{f}(\Nat{\theta^{a}}{A}(t))(\ell).
\]

The unit law for a monad morphism is
\[
\begin{aligned}
  \Nat{\theta^{a}}{A}(\Nat{\eta^{T}}{A}(x))(k)
  &=
  a(\Nat{\eta^{T}}{V}(k(x)))        \\
  &=
  k(x)                              \\
  &=
  \Nat{\eta^{K}}{A}(x)(k).
\end{aligned}
\]
For multiplication, let \(u:\Fun{T}{\Fun{T}{A}}\).  Then
\[
\begin{aligned}
  \Nat{\theta^{a}}{A}(\Nat{\mu^{T}}{A}(u))(k)
  &=
  a(\Fun{T}{k}(\Nat{\mu^{T}}{A}(u)))                                  \\
  &=
  a(\Nat{\mu^{T}}{V}(\Fun{T}{\Fun{T}{k}}(u)))                          \\
  &=
  a(\Fun{T}{a}(\Fun{T}{\Fun{T}{k}}(u)))                                \\
  &=
  a(\Fun{T}{(a\circ\Fun{T}{k})}(u))                                    \\
  &=
  \Nat{\theta^{a}}{\Fun{T}{A}}(u)
    (s\mapsto \Nat{\theta^{a}}{A}(s)(k))                               \\
  &=
  \bigl(
    \Nat{\mu^{K}}{A}
    \circ
    \Fun{K}{\Nat{\theta^{a}}{A}}
    \circ
    \Nat{\theta^{a}}{\Fun{T}{A}}
  \bigr)(u)(k).
\end{aligned}
\]
Thus \(\theta^{a}\) is a monad morphism.

It remains to check that it is strong.  Let
\(x:A\), \(t:\Fun{T}{B}\), and
\(h:\inthom{A\times B}{V}\).  Using the coherence of the strength of \(T\),
applying \(h\) after adding the pure parameter \(x\) to the computation \(t\)
is the same as applying the function \(b\mapsto h(x,b)\) to the computation
\(t\).  Hence
\[
\begin{aligned}
  \Nat{\theta^{a}}{A\times B}
  (\Nat{\str^{T}}{A,B}(x,t))(h)
  &=
  a(\Fun{T}{h}(\Nat{\str^{T}}{A,B}(x,t)))       \\
  &=
  a(\Fun{T}{(b\mapsto h(x,b))}(t))              \\
  &=
  \Nat{\theta^{a}}{B}(t)(b\mapsto h(x,b))        \\
  &=
  \Nat{\str^{K}}{A,B}
  (x,\Nat{\theta^{a}}{B}(t))(h).
\end{aligned}
\]
Therefore \(\theta^{a}\) preserves strength.

Conversely, suppose that
\[
  \theta:T\to K
\]
is a strong monad morphism.  Define
\[
  a^{\theta}:\Fun{T}{V}\to V
\]
by evaluating \(\Nat{\theta}{V}\) at the identity continuation:
\[
  a^{\theta}
  :=
  \operatorname{ev}_{\id_{V}}\circ\Nat{\theta}{V}.
\]
In internal notation,
\[
  a^{\theta}(t)=\Nat{\theta}{V}(t)(\id_{V}).
\]
Here \(\operatorname{ev}_{\id_{V}}\) denotes evaluation at the point
\(\id_{V}:1\to\inthom{V}{V}\) corresponding to the identity morphism on \(V\).

The algebra unit law follows from the unit law for the monad morphism
\(\theta\):
\[
\begin{aligned}
  a^{\theta}(\Nat{\eta^{T}}{V}(v))
  &=
  \Nat{\theta}{V}(\Nat{\eta^{T}}{V}(v))(\id_{V}) \\
  &=
  \Nat{\eta^{K}}{V}(v)(\id_{V})                  \\
  &=
  v.
\end{aligned}
\]
For multiplication, let \(u:\Fun{T}{\Fun{T}{V}}\).  Then
\[
\begin{aligned}
  a^{\theta}(\Nat{\mu^{T}}{V}(u))
  &=
  \Nat{\theta}{V}(\Nat{\mu^{T}}{V}(u))(\id_{V})                       \\
  &=
  \Nat{\mu^{K}}{V}
  \bigl(
    \Fun{K}{\Nat{\theta}{V}}
    (\Nat{\theta}{\Fun{T}{V}}(u))
  \bigr)(\id_{V})                                                     \\
  &=
  \Nat{\theta}{\Fun{T}{V}}(u)
  (s\mapsto \Nat{\theta}{V}(s)(\id_{V}))                              \\
  &=
  \Nat{\theta}{\Fun{T}{V}}(u)(a^{\theta})                             \\
  &=
  \Nat{\theta}{V}(\Fun{T}{a^{\theta}}(u))(\id_{V})                    \\
  &=
  a^{\theta}(\Fun{T}{a^{\theta}}(u)).
\end{aligned}
\]
Thus \(a^{\theta}\) is a \(T\)-algebra structure on \(V\).

Finally, the two constructions are inverse.  Starting from an algebra \(a\),
we have
\[
  a^{\theta^{a}}(t)
  =
  \Nat{\theta^{a}}{V}(t)(\id_{V})
  =
  a(\Fun{T}{\id_{V}}(t))
  =
  a(t).
\]
Starting from a strong monad morphism \(\theta\), let
\(t:\Fun{T}{A}\) and \(k:\inthom{A}{V}\).  By the definition of
\(\theta^{a^{\theta}}\), by naturality of \(\theta\), and then by preservation
of strength, we obtain
\[
\begin{aligned}
  \Nat{\theta^{a^{\theta}}}{A}(t)(k)
  &=
  a^{\theta}
  \left(
    \Fun{T}{\operatorname{ev}_{A,V}}
    \bigl(
      \Nat{\str^{T}}{\inthom{A}{V},A}(k,t)
    \bigr)
  \right)                                                            \\
  &=
  \Nat{\theta}{\inthom{A}{V}\times A}
  \bigl(
    \Nat{\str^{T}}{\inthom{A}{V},A}(k,t)
  \bigr)
  (\operatorname{ev}_{A,V})                                           \\
  &=
  \Nat{\str^{K}}{\inthom{A}{V},A}
  (k,\Nat{\theta}{A}(t))
  (\operatorname{ev}_{A,V})                                           \\
  &=
  \Nat{\theta}{A}(t)(k).
\end{aligned}
\]
Hence \(\theta^{a^{\theta}}=\theta\).  Therefore \(T\)-algebra structures on
\(V\) are in one-to-one correspondence with strong monad morphisms
\(T\to\cont{V}^{\cat[]{X}}\), as claimed.

\section{Other algebraic effects}
\label{sec:appendix-additional-effects}

This appendix gives the generic CHAD technique for discrete-output effects including the ones mentioned in Section~\ref{sec:beyond-probability}. The main body develops the case of the FAD monad in detail.
Here we identify the reusable part: if an algebraic effect only has values returned to continuations through discrete output types, then
the raw AST monad has a fibred polynomial lift and the CHAD proof extends uniformly as soon as we give a compatible lifted generic effects and a compatible lifted handler algebra.
We assume some standard background about (many-sorted) Lawvere theories, algebraic effects, and continuation monads~\cite{hylandCategoryTheoreticUnderstanding2007,plotkinAlgebraicOperationsGeneric2003,hylandCombiningAlgebraicEffects2007}. 

The presentation is organised as follows. We first explain what we mean by discrete-output effects and we specify what data need to be supplied to apply CHAD to them.
Then we state the source and target language extensions and give a generic CHAD code transformation and correctness theorem.
We next specialise to the three concrete instances from Section~\ref{sec:beyond-probability}.
Finally, we demonstrate why we impose the limitation to non-discrete outputs.

\subsection{Discrete-output setup and admissible data}
\label{sec:appendix-general-effect-recipe}

\subsubsection{Polynomial ASTs with discrete outputs}

Let \( \Sigma \) be an algebraic signature with a small set of sorts. 
A generic effect symbol is written
\[
  \omega:I_{\omega}\to T O_{\omega}.
\]
Here, we call \(I_{\omega}\) the effect parameter type or input type and \(O_{\omega}\) the effect output type.

Equivalently, with a \( \Set \) interpretation, it determines by Yoneda algebraic-operation families
\[
  \setint{I_{\omega}}\times X^{\setint{O_{\omega}}}
  \to TX,
\]
and 
\[
  \setint{I_{\omega}}\times (TX)^{\setint{O_{\omega}}}
  \to TX .
\]
We see that \( O_{\omega} \) gets the role of the type of values returned to the continuation.

Write
\[
  \widehat I_{\omega}=\fvoint{I_{\omega}}
\]
for the chosen interpretation in \( \Fam{\Vect[o]} \), with base set \(\setint{I_{\omega}}\).
We allow \( I_{\omega} \) to be a genuine differentiable type in the sense that it may have non-trivial cotangent fibres. 
We require \( O_{\omega} \) to be a discrete type, in the sense that it has zero cotangent fibres:
\[
  \fvoint{O_{\omega}}=\Delta\setint{O_{\omega}}=\tuple{\setint{O_{\omega}},\underline{1}}.
\]
For the CHAD rules below we use the finitary case, so \(\setint{O_{\omega}}\) is finite; more generally the same formula applies whenever the required products and direct sums exist.

This induces, before quotienting by equations, the raw syntax tree monad with the polynomial form
\[
  \AST_{\Sigma}X
  \simeq
  \mu Y.\;
  X\sqcup
  \coprod_{\omega\in\Sigma}
  \setint{I_{\omega}}\times
  Y^{\setint{O_{\omega}}}
  \simeq
  \mu Y.\;
  X\sqcup
  \coprod_{\omega\in\Sigma}
  \setint{I_{\omega}}\times
  \prod_{o\in\setint{O_{\omega}}}Y .
\]
The corresponding fibred polynomial lift to \(\Fam{\Vect[o]}\), for an object \(A\in\Fam{\Vect[o]}\) is
\[
  \widehat{\AST}_{\Sigma}A
  \simeq
  \mu B.\;
  A\sqcup
  \coprod_{\omega\in\Sigma}
  \widehat I_{\omega}\times
  B^{\Delta\setint{O_{\omega}}} .
\]
Its base is canonically \(\AST_{\Sigma}(\base{A})\). More explicitly, over a tree-node base element \(\omega(i,k)\), where \(i\in\setint{I_{\omega}}\) and \(k:\setint{O_{\omega}}\to\base{B}\), the fibre is
\[
  \Fun[np]{\fibr{ \widehat{I_{\omega}} }}{i}
  \oplus
  \bigoplus_{o\in\setint{O_{\omega}}}
  \Fun[np]{\fibr{B}}{k(o)} .
\]
This is the point at which the discreteness of \(O_{\omega}\) is used: a continuation branch is indexed only by the set-level value \(o\), and the variable \(o:O_{\omega}\) contributes no cotangent of its own. The same polynomial construction applies componentwise to \(\cat[]{LR}\).

Thus continuous operation-inputs are harmless, provided they are treated as parameters with their own cotangent fibres; what matters for this direct fibred-polynomial construction is discreteness of the operation-output object \( O_{\omega} \). This covers finite choice, exception operations \( \opn{raise}:E\to T_E0 \), writer operations such as \( \opn{tell}:\R[n]\to W_{\R[n]}1 \), and state/reader operations over finite discrete state/reader objects. It does not directly cover continuous state, continuous reader, or continuous probability, where the canonical operations have genuine output objects with nonzero fibres. For such an output object \( \widehat O \), a continuation is a morphism \( \widehat O\to B \), not merely a function \( \setint{O}\to\base{B} \); hence the simple set-indexed product \(B^{\Delta\setint O}\) above no longer describes the desired continuation object.


\subsubsection{Admissible lifted data}

A \emph{CHAD-admissible discrete-output presentation} consists of the following data.

First, we assume a signature \( \Sigma \) as above and a set of equations \( E \) presenting a strong monad \(T\) on \(\Set\) as a quotient of the raw tree monad,
\[
  q:\AST_{\Sigma}\twoheadrightarrow T,
\]
with \(q\) a morphism of strong monads. Equivalently, \(T\) is the monad presented by the generic effects of \(\Sigma\) modulo \(E\).
(In fact, the equations \(E\) are often precisely obtained from the algebra/handler below.)

Second, we assume a chosen handler carrier type \(\nu\) and a \(T\)-algebra
\[
  h:T\nu\to \nu .
\]
This algebra is what discharges the effect in real-output programs.
(We can also allow a family of handler carriers and handler algebras, in which case we obtain a strong monad morphism from ASTs to a product of continuation monads, but we avoid this generality to simplify the presentation.)

Third, we assume chosen strong monad lifts
\[
  T^{\cat[]{F}},\quad T^{\cat[]{LR}}
\]
over \(T\), in the sense that their projections to the underlying set semantics are naturally isomorphic to \(T\) and the isomorphisms respect unit, multiplication, and strength. For each operation symbol \(\omega\), these lifted monads are equipped with lifted generic effects, for example in \(\Fam{\Vect[o]}\)
\[
  \widehat\omega^{\cat[]{F}}_A:
  \widehat I_{\omega}\times
  \left(T^{\cat[]{F}}A\right)^{\Delta\setint{O_{\omega}}}
  \to
  T^{\cat[]{F}}A,
\]
and analogously in \(\cat[]{LR}\). Under the base projection, \(\widehat\omega^{\cat[]{F}}_A\) is required to be the set-level generic-effect constructor
\[
  \setint{I_{\omega}}\times
  \left(\Fun[np]{T}{\base{A}}\right)^{\setint{O_{\omega}}}
  \to
  \Fun[np]{T}{\base{A}}
\]
induced by \(q\); the \(\cat[]{LR}\)-operation satisfies the corresponding logical-relations compatibility. The lifted operations must satisfy the equations \(E\) and be coherent with the lifted unit, multiplication, and strength.

Finally, we assume lifted handler algebras
\[
  h^{\cat[]{F}}:T^{\cat[]{F}}\nu^{\cat[]{F}}\to \nu^{\cat[]{F}},
  \qquad
  h^{\cat[]{LR}}:T^{\cat[]{LR}}\nu^{\cat[]{LR}}\to \nu^{\cat[]{LR}}.
\]
They are required to be Eilenberg--Moore algebras for the lifted monads and to project to the set-level handler algebra \(h\), with the evident logical-relations compatibility for \(h^{\cat[]{LR}}\).

These assumptions are exactly the data consumed by the CHAD proof. The proof does not use how the lifted monad was constructed. It may be obtained by a direct quotient/coimage construction, by an image construction, or directly from products and coproducts, provided the resulting operations and handler satisfy the compatibility conditions just listed.

\subsection{Language extension and CHAD theorem}
\label{sec:appendix-language-extension-chad-theorem}

Fix a CHAD-admissible discrete-output presentation as in
Section~\ref{sec:appendix-general-effect-recipe}.  The monadic type
\(\opn[b]{M}\) is now interpreted by the presented monad \(T\), together with
its chosen lifts \(T^{\cat[]{F}}\) and \(T^{\cat[]{LR}}\).  The rules below do
not construct these lifts from the set-level operations alone; they expose the
backwards fibre maps of the chosen lifted generic effects and lifted handler
algebras.  This point is essential when an operation parameter
\(I_{\omega}\) has nonzero cotangent fibres.  All equations involving the added
operations are interpreted modulo the equations \(E\) of the presentation and,
in the target, through the corresponding lifted operations.

\subsubsection{Source operations, equations, and handlers}

For every generic effect symbol \(\omega:I_{\omega}\to T O_{\omega}\), the
source language is extended by the operation form
\[
\prftree[r]{\textsc{Op}\(_\omega\)}{
  \Gamma\vdash i:I_{\omega}
  \qquad
  \Gamma,o:O_{\omega}\vdash k_o:\Fun[nn]{\opn[b]{M}}{\tau}
}{
  \Gamma\vdash
  \fn[n]{\opn[b]{op}_{\omega}}{i;\,o.k_o}
  :\Fun[nn]{\opn[b]{M}}{\tau}
}.
\]
The binder \(o.k_o\) denotes a finite family of continuation terms indexed by
\(\setint{O_{\omega}}\).  Since \(O_{\omega}\) is discrete, extending the context
by \(o:O_{\omega}\) changes the base context but adds no cotangent component.
Write
\[
  \omega_X(i,k)
  :=
  \mu_X\bigl(T(k)(\omega(i))\bigr),
  \qquad
  k:\setint{O_{\omega}}\to TX,
\]
for the quotient-induced Kleisli constructor associated with the generic effect
\(\omega:I_{\omega}\to T O_{\omega}\).  The set-level semantics is
\[
  \setint{\fn[n]{\opn[b]{op}_{\omega}}{i;\,o.k_o}}(\gamma)
  =
  \omega_{\setint{\tau}}
  \left(
    \setint{i}(\gamma),
    o\mapsto \setint{k_o}(\gamma,o)
  \right).
\]

The source equational theory is extended in three ways.  First, the equations
\(E\) presenting \(T\) are imposed on the generic effects.  Secondly, the
operation form is algebraic, i.e. it commutes with Kleisli extension.  Explicitly,
for \(\Gamma,x:\tau\vdash s:\Fun[nn]{\opn[b]{M}}{\sigma}\), we impose the equation
schema
\begin{equation}
  \label{eq:appendix-generic-operation-algebraicity-law}
  \fn[n]{\opn[b]{bind}}{
    \fn[n]{\opn[b]{op}_{\omega}}{i;\,o.k_o};
    x.s
  }
  \equiv
  \fn[n]{\opn[b]{op}_{\omega}}{
    i;\,o.\fn[n]{\opn[b]{bind}}{k_o;x.s}
  } .
\end{equation}
The right-hand side is well typed because the variable \(o:O_{\omega}\) is in
scope in the continuation branch.  Finally, the handler associated with the
chosen algebra \(h:T\nu\to\nu\) is added as
\[
\prftree[r]{\textsc{Handle}\(_h\)}{
  \Gamma\vdash t:\Fun[nn]{\opn[b]{M}}{\nu}
}{
  \Gamma\vdash \fn[n]{\opn[b]{handle}_{h}}{t}:\nu
},
\]
and satisfies the Eilenberg--Moore algebra laws.  In term notation, for
\(\Gamma\vdash v:\nu\), \(\Gamma\vdash t:\Fun[nn]{\opn[b]{M}}{\sigma}\), and
\(\Gamma,x:\sigma\vdash u:\Fun[nn]{\opn[b]{M}}{\nu}\), these are the equation schemas
\begin{equation}
  \label{eq:appendix-generic-handler-em-laws}
  \begin{aligned}
    \fn[n]{\opn[b]{handle}_{h}}{\fn[n]{\opn[b]{return}}{v}}
    &\equiv v,
    \\
    \fn[n]{\opn[b]{handle}_{h}}{
      \fn[n]{\opn[b]{bind}}{t;x.u}
    }
    &\equiv
    \fn[n]{\opn[b]{handle}_{h}}{
      \fn[n]{\opn[b]{bind}}{
        t;x.\fn[n]{\opn[b]{return}}{
          \fn[n]{\opn[b]{handle}_{h}}{u}
        }
      }
    } .
  \end{aligned}
\end{equation}
These laws are exactly the assertion that \(\opn[b]{handle}_{h}\) interprets the
\(T\)-algebra \(h\); the second equation is \(h\circ\mu=h\circ T h\) written
in Kleisli form.

\subsubsection{Linear target operations}

The target language contains, for each generic effect \(\omega\), a linear
constructor
\[
  \fn[n]{\opn[b]{op}^{w}_{\omega,i;\,o.k_o}}{
    u.i' \; ; \; o.z.k'_o
  } .
\]
The notation \(\lintype{I_\omega}\) in the rule below denotes the fibre of the
chosen parameter object \(\widehat I_\omega\); if \(I_\omega\) is discrete, this
is the zero linear type.  The constructor represents the reverse fibre component
of the chosen lifted operation (the transposed derivative in differentiable examples)
\[
  \widehat\omega^{\cat[]{F}}_A:
  \widehat I_{\omega}\times
  \left(T^{\cat[]{F}}A\right)^{\Delta\setint{O_{\omega}}}
  \to
  T^{\cat[]{F}}A,
\]
and similarly in \(\cat[]{LR}\).  Thus, for
\(i\in\setint{I_\omega}\) and a branch family
\(p:\setint{O_\omega}\to\base{T^{\cat[]{F}}A}\), writing \(p_o=p(o)\), its reverse fibre map is
\[
  \Fun[np]{\fibr{ T^{\cat[]{F}}A }}{\omega(i,p)}
  \longrightarrow
  \Fun[np]{\fibr{ \widehat{ I_\omega } }}{i}
  \oplus
  \bigoplus_{o\in\setint{O_\omega}}
  \Fun[np]{\fibr{ T^{\cat[]{F}}A }}{p(o)},
\]
where \(\omega(i,p)=\mu_{\base{A}}(T(p)(\omega(i)))\) is the induced Kleisli constructor.  In
one-sorted notation it has the following rule, in the same continuation-passing
style as \(\opn[b]{categorical}^{w}\):
\begin{equation}
  \label{eq:appendix-generic-operation-w-typing}
  \begin{gathered}
    \Gamma,y:I_{\omega}\vdash \lintype{I_{\omega}} \text{ type}
    \qquad
    \Gamma,x:\tau\vdash \lintype{\tau} \text{ type}
    \qquad
    \Gamma\vdash \lintype{\rho} \text{ type}
    \\
    \prftree{
      \Gamma\vdash i:I_{\omega}
    }{
      \Gamma,o:O_{\omega}\vdash k_o:\Fun[nn]{\opn[b]{M}}{\tau}
    }{
      \Gamma;u:\casubst{\lintype{I_{\omega}}}{y}{i}
      \vdash i':\lintype{\rho}
    }{
      \Gamma,o:O_{\omega};
      z:
      \opn[b]{M}^{w}_{k_o:\Fun[nn]{\opn[b]{M}}{\tau}}
      \left[x\mapsto \lintype{\tau}\right]
      \vdash k'_o:\lintype{\rho}
    }{
      \Gamma;
      v:
      \opn[b]{M}^{w}_{
        \fn[n]{\opn[b]{op}_{\omega}}{i;\,o.k_o}
        :
        \Fun[nn]{\opn[b]{M}}{\tau}
      }
      \left[x\mapsto \lintype{\tau}\right]
      \vdash
      \fn[n]{\opn[b]{op}^{w}_{\omega,i;\,o.k_o}}{
        u.i' \; ; \; o.z.k'_o
      }
      :
      \lintype{\rho}
    }
  \end{gathered}
\end{equation}
The variable \(u\) receives the cotangent contribution for the operation
parameter, while \(z\) receives the cotangent contribution for the corresponding
continuation branch.  Since \(O_{\omega}\) is discrete, extending the context by
\(o:O_{\omega}\) does not add a linear component; hence each \(k'_o\) returns a
cotangent for the same ambient context \(\Gamma\).  The finite direct sum above
is where the finiteness of \(O_{\omega}\) enters the target syntax.
Semantically, if the reverse fibre map sends the incoming monadic cotangent \(v\) to
\(\tuple{u,(z_o)_{o\in\setint{O_{\omega}}}}\), then the constructor denotes the finite sum
\[
  \fn[n]{i'}{u}
  +
  \sum_{o\in\setint{O_{\omega}}}
  \fn[n]{k'_o}{z_o}.
\]
For \(O_{\omega}=0\), this branch sum is empty.

With this operation former, the CHAD clause for algebraic operations is the
following.  We use the canonical identification
\[
  \Fun[np]{\chad}{\Gamma,o:O_{\omega}}_{2}
  \cong
  \Fun[np]{\chad}{\Gamma}_{2},
\]
since \(O_{\omega}\) is discrete and has zero cotangent fibres (syntactically, \(\lintype{O_{\omega}}\) is the terminal zero-cotangent type \(\underline{1}\)).  Then
\begin{equation}
  \label{eq:appendix-generic-operation-chad}
  \begin{aligned}
    &\Fun[np]{\chad[\Gamma]}{
      \fn[n]{\opn[b]{op}_{\omega}}{i;\,o.k_o}
    }
    :=
    \\
    &\quad
    \opn[b]{case}\;\Fun[np]{\chad[\Gamma]}{i}\;\opn[b]{of}\;
    \tuple{i_0,i'}\to
    \\
    &\quad
    \opn[b]{let}\;\tuple{r_o,r'_o}
    =
    \Fun[np]{\chad[\Gamma,o:O_{\omega}]}{k_o}
    \;\opn[b]{in}
    \\
    &\quad\quad
    \tuple{
      \fn[n]{\opn[b]{op}_{\omega}}{i_0;\,o.r_o},
      \lambda v\; .\;
      \fn[n]{\opn[b]{op}^{w}_{\omega,i_0;\,o.r_o}}{
        u.\fn[n]{i'}{u}
        \; ; \;
        o.z.\fn[n]{r'_o}{z}
      }
    } .
  \end{aligned}
\end{equation}
The \(\opn[b]{let}\)-binding of \(\tuple{r_o,r'_o}\) is meta-notation for the
family obtained by translating the continuation in the context
\(\Gamma,o:O_{\omega}\).  The incoming cotangent \(v\) is consumed by
\(\opn[b]{op}^{w}_{\omega}\), which separates it into the parameter cotangent
and the branch cotangents; the backpropagators \(i'\) and \(r'_o\) then return
these contributions to the ambient context.  No cotangent for \(o\) is omitted:
there is none.

\subsubsection{Linear handler operations}

Similarly, the target language contains a linear constructor
\( \opn[b]{handle}^{w}_{h,p} \).  To avoid overloading the incoming output
cotangent with the monadic cotangent produced by the handler algebra, we write
this constructor with an explicit binder \(w.s\):
\begin{equation}
  \label{eq:appendix-generic-handler-w-typing}
  \begin{gathered}
    \Gamma,x:\nu\vdash \lintype{\nu} \text{ type}
    \qquad
    \Gamma\vdash \lintype{\rho} \text{ type}
    \\
    \prftree{
      \Gamma\vdash p:\Fun[nn]{\opn[b]{M}}{\nu}
    }{
      \Gamma;
      w:
      \opn[b]{M}^{w}_{p:\Fun[nn]{\opn[b]{M}}{\nu}}
      \left[x\mapsto \lintype{\nu}\right]
      \vdash s:\lintype{\rho}
    }{
      \Gamma;
      v:
      \casubst{\lintype{\nu}}{x}{\fn[n]{\opn[b]{handle}_{h}}{p}}
      \vdash
      \fn[n]{\opn[b]{handle}^{w}_{h,p}}{w.s}
      :\lintype{\rho}
    }
  \end{gathered}
\end{equation}
Semantically, if \(B=\nu^{\cat[]{F}}\), the lifted handler algebra
\(h^{\cat[]{F}}:T^{\cat[]{F}}B\to B\) has reverse fibre map
\[
  \Fun[np]{\fibr{B}}{h(p)}
  \longrightarrow
  \Fun[np]{\fibr{ T^{\cat[]{F}}B }}{p}.
\]
The constructor \(\opn[b]{handle}^{w}_{h,p}\) binds \(w\) to the image of
\(v\) under this reverse fibre map and then evaluates \(s\).  Thus it uses the
supplied lifted algebra, not merely the underlying set map \(h\).  In the FAD
instance \(\nu=\srcreal[p]\), this is the same continuation-passing rule as
Equation~\eqref{eq:target-language-typing-rule-handler-algebra}; the explicit
binder only makes the two cotangent types visible.

As for FAD, we again impose the transposed derivatives of the equations for the operations and handlers from the source language, e.g., the Eilenberg--Moore laws for the handler algebra.

The CHAD clause is the exact analogue of the canonical FAD handler
clause:
\begin{equation}
  \label{eq:appendix-generic-handler-chad}
  \Fun[np]{\chad[\Gamma]}{\fn[n]{\opn[b]{handle}_{h}}{t}}
  :=
  \opn[b]{case}\;\Fun[np]{\chad[\Gamma]}{t}\;\opn[b]{of}\;\tuple{p,b}\to
  \tuple{
    \fn[n]{\opn[b]{handle}_{h}}{p},
    \lambda v\; .\;
    \fn[n]{\opn[b]{handle}^{w}_{h,p}}{w.\fn[n]{b}{w}}
  }.
\end{equation}
Here \(b\) is a backpropagator expecting a monadic cotangent over \(p\).  The
incoming output cotangent \(v\) is first pulled back along the lifted handler
algebra to the monadic cotangent \(w\), and only then passed to \(b\).

\subsubsection{The generic CHAD result}

\begin{theorem}[CHAD for discrete-output algebraic effects]
  \label{th:appendix-generic-discrete-output-chad}
  For every CHAD-admissible discrete-output presentation, the source and target
  languages, the CHAD translation, and the denotational semantics of
  Sections~\ref{sec:source-language}--\ref{sec:interpretation-of-the-target-language-in-fam-vectop}
  extend to the operations of \(\Sigma\) and to the handler \(h\).  The
  fundamental logical-relations lemma continues to hold for the extended
  language.  Consequently, for every well-typed real-output program whose added
  effects are discharged by the chosen handler, the conclusion of
  Theorem~\ref{th:correctness-theorem} continues to hold.
\end{theorem}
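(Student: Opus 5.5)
The plan is to rerun the proof of Theorem~\ref{th:correctness-theorem}, with the FAD-specific data replaced by the admissible data of Section~\ref{sec:appendix-general-effect-recipe}. The argument there used exactly three ingredients:
\begin{enumerate}
\item a universal property of the source syntactic category;
\item an interpretation in $\cat[]{LR}$ built from chosen structure;
\item the two fibredness equations \eqref{eq:lr-interpretation-fibred-over-set} and \eqref{eq:lr-interpretation-fibred-over-fam-vectop}, obtained from uniqueness.
\end{enumerate}
So I first extend Lemma~\ref{lem:universal-property-of-the-syntactic-category-of-the-source-language}. The extended $\Syn$ is now the free distributive category with a strong monad, generic effects $\omega:I_\omega\to TO_\omega$ satisfying the equations $E$ and the algebraicity schema \eqref{eq:appendix-generic-operation-algebraicity-law}, and an Eilenberg--Moore algebra $h$. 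Its models are distributive categories carrying a strong monad, Kleisli constructors $\widehat\omega_A:\widehat I_\omega\times (TA)^{\Delta\setint{O_\omega}}\to TA$ natural in $A$ and satisfying $E$, and an algebra. The proof is the same induction on derivations as before. Discreteness and finiteness of $O_\omega$ are used here: the continuation family $o.k_o$ is interpreted as a finite product of morphisms out of $\genint{\Gamma}$, via $\genint{\Gamma}\times\Delta\setint{O_\omega}\cong\coprod_o\genint{\Gamma}$.

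Second, I instantiate this universal property three times.
\begin{itemize}
\item In $\Set$, using $q$, $T$ and $h$.
\item In $\cat[]{LR}$, using $T^{\cat[]{LR}}$, the lifted operations and $h^{\cat[]{LR}}$. The base types and primitives are taken as in Section~\ref{sec:interpretation-of-source-language-in-lr}.
\item In $\Tgt$, via $\chad$. Here well-definedness amounts to checking that \eqref{eq:appendix-generic-operation-chad} and \eqref{eq:appendix-generic-handler-chad} respect the source equations. This holds because the target equations are, by fiat, the transposes of $E$, of algebraicity and of the Eilenberg--Moore laws.
\end{itemize}
The fundamental lemma is then the existence of $\lrint{\cdot}$ itself: every term denotes a morphism of $\cat[]{LR}$, and hence preserves the differentiability predicates.

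Third, I establish $U_{\Set}\circ\lrint{\cdot}=\setint{\cdot}$ and $U_{\cat[]{F}}\circ\lrint{\cdot}=\fvoint{\cdot}\circ\chad$, again by uniqueness. This needs the projections to carry the lifted structure to the chosen structure.
\begin{itemize}
\item For $\Set$, this is the assumption that $T^{\cat[]{LR}}$, its operations and $h^{\cat[]{LR}}$ project to $T$, to the $q$-induced constructors and to $h$.
\item For $\Fam{\Vect[o]}$, I must check that $\fvoint{\cdot}$ interprets $\opn[b]{op}^{w}_\omega$ and $\opn[b]{handle}^w_h$ as exactly the reverse fibre maps of $\widehat\omega^{\cat[]{F}}$ and $h^{\cat[]{F}}$. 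Then $\fvoint{\chad(\opn[b]{op}_\omega(i;o.k_o))}$ equals $\widehat\omega^{\cat[]{F}}$ composed with the tupling of $\fvoint{\chad(i)}$ and the branch family. That check is routine from the semantics of the linear constructors given above.
\end{itemize}
With both equations in hand, the probing argument with $\E=\R[n]$ and $\id_{\R[n]}$ goes through word for word.

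I expect the main obstacle to be the second fibredness equation in the presence of the equations $E$. The projection $U_{\cat[]{F}}$ must preserve the lifted monad only up to the chosen natural isomorphisms, whereas the uniqueness clause is stated for strict functors. I would handle this by choosing the lifts so that the isomorphisms are identities, transporting structure along them if necessary. I would then check the strength and multiplication coherences against the polynomial fibre description $\fibr{\widehat I_\omega}(i)\oplus\bigoplus_o\fibr{B}(k(o))$. Failing that, I would prove a version of the universal property that is unique only up to coherent isomorphism, and note that the fibre maps in \eqref{eq:correctness-equation} are invariant under it.
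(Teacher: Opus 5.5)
Your proposal is correct and follows essentially the same route as the paper. The paper's proof does the following:
\begin{itemize}
\item it interprets $\opn[b]{op}_\omega$ and $\opn[b]{handle}_h$ by the supplied lifted Kleisli constructors and algebras;
\item it obtains soundness of $E$, algebraicity and the Eilenberg--Moore laws from the admissibility assumptions;
\item it handles the two new cases of the fundamental lemma by noting that the fibre components of the $\cat[]{LR}$-lifted operation and algebra are exactly what $\opn[b]{op}^{w}_{\omega}$ and $\opn[b]{handle}^{w}_{h,p}$ denote, so the CHAD clauses are the chain rule;
\item it reuses the argument of Theorem~\ref{th:correctness-theorem} for everything else.
\end{itemize}
You do the same, framed through the universal property and uniqueness as in the main-body proof. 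Your extra care about the lifts projecting to $T$ only up to natural isomorphism, repaired by transport of structure so that the uniqueness argument applies strictly, addresses a point the paper passes over silently.
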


\begin{proof}
  The source interpretation sends \(\opn[b]{op}_{\omega}\) to the
  quotient-induced Kleisli constructor \(\omega_X\), and sends
  \(\opn[b]{handle}_{h}\) to the algebra \(h:T\nu\to\nu\).  The equations \(E\), the algebraicity law
  \eqref{eq:appendix-generic-operation-algebraicity-law}, and the
  Eilenberg--Moore laws \eqref{eq:appendix-generic-handler-em-laws} are sound by
  construction.

  The target interpretation sends \(\opn[b]{M}^{w}\),
  \(\opn[b]{op}^{w}_{\omega}\), and \(\opn[b]{handle}^{w}_{h,p}\) to the fibre
  components of the chosen lifted monads, lifted generic effects, and lifted
  handler algebras in \(\Fam{\Vect[o]}\).  The \(\cat[]{LR}\)-interpretation uses
  the corresponding supplied \(\cat[]{LR}\)-data.  Hence the target equations are
  sound because these data are assumed to be lifted strong monads, lifted
  operations satisfying \(E\), and lifted Eilenberg--Moore algebras.

  It remains to check the fundamental lemma for the new typing rules.  In the
  operation case, the induction hypotheses give the related parameter and the
  related family of branch computations.  Applying the lifted generic effect
  in \(\cat[]{LR}\) gives the related transformed computation.  Its fibre
  component is precisely the operation represented syntactically by
  \(\opn[b]{op}^{w}_{\omega}\).  Because \(O_{\omega}\) is discrete, the branch
  variable contributes no cotangent; the fibre component therefore produces only
  the parameter cotangent and the finite family of branch cotangents appearing in
  Equation~\eqref{eq:appendix-generic-operation-chad}.  The displayed
  backpropagators are exactly the chain rule for these components.

  In the handler case, the \(\cat[]{LR}\)-lifted algebra over \(h\) maps a
  related monadic computation to a related handled value, and its fibre component
  is represented by \(\opn[b]{handle}^{w}_{h,p}\).  Thus
  Equation~\eqref{eq:appendix-generic-handler-chad} is again the chain rule.  All
  other cases are the original cases of Theorem~\ref{th:correctness-theorem}, so
  the correctness statement follows.
\end{proof}

\subsection{Concrete instances}
\label{sec:appendix-additional-effects-examples}

The following examples instantiate the admissible data of
Section~\ref{sec:appendix-general-effect-recipe}.  In each case the point to
check is not merely that the set-level monad is algebraic, but that the chosen
fibred monad, the lifted generic effects, and the lifted handler are genuine
Eilenberg--Moore data over the set-level ones.  We display the
\(\Fam{\Vect[o]}\)-fibre maps; the \(\cat[]{LR}\)-lifts used in the logical
relations proof are obtained by the same constructions with the relation data
carried componentwise.

\subsubsection{Finite non-determinism via the finite multiset monad}
\label{sec:appendix-multiset-nondeterminism}

We model multiplicity-sensitive finite non-determinism by the finite multiset monad
\[
  \MFin X = \coprod_{k \in \N} \left(X^{k}/\mathfrak{S}_{k}\right),
\]
where \( \mathfrak{S}_{k} \) acts by permuting the entries of a \( k \)-tuple.
Multiplicity is retained but order is forgotten.  If one additionally quotients
by idempotence, one obtains the finite-powerset variant of non-determinism,
whose algebras are join-semilattices with bottom rather than commutative
monoids.  We use the multiset, not the idempotent, variant below; in particular,
the additive handler \(\opn{sum}_n\) below would not factor through the
idempotent quotient unless the chosen target algebra were idempotent.

The unit sends \(x\) to the singleton multiset
\(\left\{\!\left\{x\right\}\!\right\}\), and bind maps each occurrence to a
finite multiset and then takes multiset union.  An algebra
\(a:\MFin A\to A\) is equivalently a commutative monoid object on \(A\).  Given
\(a\), the unit is \(a(\varnothing)\) and multiplication is
\((a_1,a_2)\mapsto a(\left\{\!\left\{a_1,a_2\right\}\!\right\})\).  Conversely,
a commutative monoid folds finite multisets by iterated multiplication; the
algebra laws are exactly unit, associativity, and commutativity.

The generic-operation presentation may be taken to have operations
\[
  \opn{choose}_k:1\to \MFin\{1,\dots,k\},
  \qquad
  \ast\mapsto \left\{\!\left\{1,\dots,k\right\}\!\right\},
\]
for \(k\in\N\), modulo the usual unit, associativity, and permutation equations.
The induced Kleisli constructor sends a finite family of branch computations to
their multiset union; \(k=0\) gives the empty choice.

The quotient by permutations must also be reflected in the fibres.  Let
\(\tuple{X,A}\in\Fam{\Vect[o]}\), let
\(m\in\MFin X\), and choose a representative tuple
\(\vec x=(x_1,\dots,x_k)\) of \(m\).  The free commutative-monoid lift obtained
by quotienting the lifted list monad has fibre
\begin{equation}
  \label{eq:appendix-multiset-fibre-stabiliser}
  \Fun[np]{\fibr{ \MFin^{\cat[]{F}}\tuple{X,A} }}{m}
  =
  \left(
    \bigoplus_{i=1}^{k}\Fun[nn]{A}{x_i}
  \right)^{\operatorname{Stab}(\vec x)},
\end{equation}
where \(\operatorname{Stab}(\vec x)\leq\mathfrak{S}_k\) is the subgroup
permuting equal entries of \(\vec x\), acting by the corresponding symmetries of
the finite direct sum.  This description is independent of the representative.
Equivalently, if \(m\) is written as a finite-support multiplicity function,
then the fibre is canonically the direct sum over the support,
\[
  \Fun[np]{\fibr{ \MFin^{\cat[]{F}}\tuple{X,A} }}{m}
  \cong
  \bigoplus_{x\in\operatorname{supp}(m)}\Fun[nn]{A}{x},
\]
with the quotient map from ordered lists copying the support cotangent to every
indistinguishable occurrence.  This last copying is essential: the tempting
formula with an independent summand for every occurrence is a labelled-trace
semantics before quotienting by stabilisers, not the canonical free
commutative-monoid lift on the finite-multiset quotient.

For a morphism \(\tuple{f,\alpha}:\tuple{X,A}\to\tuple{Y,B}\), the base map is
multiset push-forward.  Under the support description, the fibre map over
\(m\in\MFin X\) sends
\[
  (b_y)_{y\in\operatorname{supp}(\MFin f(m))}
  \longmapsto
  \bigl(\Nat[nn]{\alpha}{x}(b_{f(x)})\bigr)_{x\in\operatorname{supp}(m)} .
\]
Thus a cotangent at a target support point is copied to all source support
points that map to it, and the quotient map then copies it to the corresponding
occurrences.  The unit, multiplication, and strength are induced from the lifted
list monad and descend because the permutation equations act by symmetries of
finite direct sums.

For every \(n\), the additive commutative monoid
\(\tuple{\R[n],0,+}\) gives the handler
\[
  \opn{sum}_{n}:\MFin\R[n]\to\R[n],
  \qquad
  \left\{\!\left\{v_{1},\dots,v_{k}\right\}\!\right\}\mapsto
  \sum_{i=1}^{k} v_i .
\]
Its lifted fibre map is characterised by the list quotient: after pulling back
to an ordered representative, it sends an incoming cotangent \(\xi\in\R[n]\) to
one copy of \(\xi\) at every occurrence.  Under the support identification
above, this is the invariant cotangent with component \(\xi\) at every support
point.  Hence, when two syntactic branches produce the same base value, the
same support cotangent is copied back to both branches and the surrounding
context addition accounts for the multiplicity.

The source operation \(\opn[b]{choose}_k\) is therefore interpreted by multiset
union of the \(k\) branch computations.  Its linear operation
\(\opn[b]{choose}^{w}_k\) is the reverse fibre map of the descended union map:
it restricts the incoming invariant multiset cotangent to the branch multisets,
copying it along collisions.  The handler clause for
\(\opn[b]{fold}_{\nu}\), and in particular \(\opn[b]{sum}_n\), is the generic
algebra clause using the lifted commutative-monoid fold.  These data satisfy the
commutative-monoid equations in \(\Fam{\Vect[o]}\), so the generic CHAD theorem
applies.

\subsubsection{Exception monads}
\label{sec:appendix-exception-monads}

Fix an exception object \(\widehat E=\tuple{E,B}\) in \(\Fam{\Vect[o]}\); for a
purely discrete exception set one takes \(B_e=0\).  The underlying set-level
exception monad is
\[
  T_E X=X\sqcup E .
\]
Its signature has one generic effect
\[
  \opn{raise}:E\to T_E0,
\]
equivalently one nullary operation for each point of \(E\).  Since the output
type is \(0\), this is a discrete-output effect even when the parameter object
\(\widehat E\) has nonzero cotangent fibres.  The raw AST monad is already
\[
  \AST_E X
  \simeq
  \mu Y.\;X\sqcup E\times Y^0
  \simeq
  X\sqcup E,
\]
so there is no non-trivial quotient to impose.

A \(T_E\)-algebra on a set \(Y\) is a map \(a:Y\sqcup E\to Y\) whose restriction
to \(Y\) is the identity.  Equivalently it is a recovery map
\[
  h_0:E\to Y,
  \qquad
  a=[\id_Y,h_0].
\]
For instance, taking \(Y=A\sqcup E'\) allows a handler to recover some
exceptions as values in \(A\) and translate others into values tagged by a new
exception set \(E'\); it is still an algebra for the original monad \(T_E\) on
the carrier \(Y\).

The exception monad lifts by the coproduct formula
\[
  T_E^{\cat[]{F}}\tuple{X,A}
  =
  \tuple{
    X\sqcup E,
    z\mapsto
    \begin{cases}
      \Fun[nn]{A}{x}, & z=\opn{inl}(x),\\
      \Fun[nn]{B}{e}, & z=\opn{inr}(e).
    \end{cases}
  }.
\]
The lifted unit, multiplication, and strength are the coproduct maps induced by
those of \(T_E\).  The lifted \(\opn{raise}\) operation
\(\widehat E\to T_E^{\cat[]{F}}\tuple{X,A}\) has reverse fibre map the identity
\[
  \Fun[nn]{B}{e}\longrightarrow \Fun[nn]{B}{e},
\]
because there are no continuation branches.

Let \(\tuple{Y,C}\) be a lifted carrier and let
\(\widehat h:\widehat E\to\tuple{Y,C}\) be a morphism over the recovery map
\(h_0:E\to Y\).  The lifted algebra over \([\id_Y,h_0]\) is case analysis.  Its
reverse fibre map sends \(\gamma\in\Fun[np]{C}{y}\) over a returned value
\(\opn{inl}(y)\) to \(\gamma\), while over an exception \(\opn{inr}(e)\) it sends
\(\gamma\in\Fun[np]{C}{h_0(e)}\) to the fibre map of \(\widehat h\),
\[
  \left( \fibr{ \widehat{h} } \right)_e^{\ast}(\gamma)
  \in
  \Fun[np]{B}{e} .
\]
When \(h_0\) is an ordinary differentiable recovery map and \(\widehat h\) is
its cotangent lift, this fibre map is \((Dh_0(e))^{\mathsf T}\gamma\).  The
Eilenberg--Moore laws are immediate from
\([\id_Y,h_0]\circ\eta=\id_Y\) and from the fact that multiplication for
\(X\sqcup E\) is case analysis that leaves the first encountered exception
unchanged.  Thus exceptions fit the admissible pattern directly; no quotient is
needed for the direct coproduct lift.  The free-algebra/product-continuation
construction also recovers \(T_E\) because free exception algebras separate
returned values and exception labels, but a single chosen recovery handler need
not be separating if it identifies exceptions or misses some cotangent direction.

\subsubsection{Writer monads and \texorpdfstring{\(\R[n]\)}{R[n]}-accumulation}
\label{sec:appendix-writer-accumulation}

Let \(M=\tuple{M,e,\cdot}\) be a monoid.  We use the right-action convention
for the monad on \(X\times M\): bind sends \(\tuple{x,m}\), together
with \(f:X\to Y\times M\), to \(\tuple{y,n\cdot m}\) when
\(\fn[n]{f}{x}=\tuple{y,n}\).  Equivalently, this is the usual writer monad
for the opposite monoid; for the additive monoids \(\R[n]\) used below, the two
conventions coincide.  With this convention, an Eilenberg--Moore algebra for
\(W_M\) is precisely a right \(M\)-action
\[
  a:A\times M\to A
\]
with \(a(a_0,e)=a_0\) and \(a(a(a_0,m),n)=a(a_0,m\cdot n)\).

The generic effect is
\[
  \opn{tell}:M\to W_M1,
  \qquad
  r\mapsto\tuple{\ast,r} .
\]
The induced Kleisli constructor sends a parameter \(r\in M\) and a branch
\(k(\ast)=\tuple{x,s}\in W_MX\) to \(\tuple{x,s\cdot r}\).  Hence the
operation-output type is \(1\), so the writer effect is in the discrete-output
fragment even when the parameter object \(M\) is continuous.

For the accumulation example take \(M=\tuple{\R[n],0,+}\) with its standard
cotangent bundle.  The lifted writer monad maps
\[
  \tuple{X,A}
  \mapsto
  \tuple{
    X\times\R[n],
    \left(\tuple{x,r}\mapsto \Fun[nn]{A}{x}\oplus\R[n]\right)
  }.
\]
For a morphism \(\tuple{f,\alpha}:\tuple{X,A}\to\tuple{Y,B}\), the base map is
\(\tuple{x,r}\mapsto\tuple{\fn[n]{f}{x},r}\), and the reverse fibre map is
\[
  \Fun[np]{B}{\fn[n]{f}{x}}\oplus\R[n]
  \longrightarrow
  \Fun[nn]{A}{x}\oplus\R[n],
  \qquad
  \tuple{\beta,\rho}\mapsto\tuple{\Nat[nn]{\alpha}{x}(\beta),\rho}.
\]
The lifted multiplication is the transpose derivative of
\[
  \tuple{\tuple{x,s},r}\mapsto\tuple{x,s+r},
\]
so an incoming cotangent \(\tuple{x^{\ast},u^{\ast}}\) is sent to
\(\tuple{\tuple{x^{\ast},u^{\ast}},u^{\ast}}\).  This proves the lifted monad
laws because the displayed maps are the transpose derivatives of the unit,
associative addition, and strength maps.

For the lifted \(\opn{tell}\) operation, if the branch computation has base
\(\tuple{x,s}\), then the operation has base \(\tuple{x,s+r}\).  Its reverse
fibre map sends an incoming cotangent \(\tuple{x^{\ast},u^{\ast}}\) to the
parameter cotangent \(u^{\ast}\) and the branch cotangent
\(\tuple{x^{\ast},u^{\ast}}\).  Thus the newly written value and the writer value
produced by the continuation both receive the same accumulator cotangent.

For the commutative additive accumulation example, the useful carrier is
\(\sigma\times\R[n]\) with action
\begin{equation}
  \label{eq:canonical-accumulation-action}
  (\sigma\times\R[n])\times\R[n]
  \to
  \sigma\times\R[n],
  \qquad
  \tuple{\tuple{a,u},r}\mapsto\tuple{a,u+r}.
\end{equation}
If \(\sigma\) is lifted with fibre \(A_a\), the transposed derivative of
Equation~\eqref{eq:canonical-accumulation-action} sends a cotangent
\(\tuple{a^{\ast},u^{\ast}}\) at \(\tuple{a,u+r}\) to
\[
  \tuple{\tuple{a^{\ast},u^{\ast}},u^{\ast}}
  \in
  \left(A_a\oplus\R[n]\right)\oplus\R[n],
\]
where the two copies of \(u^{\ast}\) correspond to the old accumulated value and
the newly written value.  The action laws are exactly the
Eilenberg--Moore algebra laws for the writer monad, so this gives the lifted
handler used by \(\opn[b]{run}_{\tau}\) and its linear operation
\(\opn[b]{run}^{w}_{\tau,p}\).  This primitive handles computations whose
returned values already lie in \(\tau\times\srcreal[n]\); the usual
\(\opn{runWriter}:\opn[b]{M}\tau\to\tau\times\srcreal[n]\) is obtained by
first mapping \(a:\tau\) to \(\tuple{a,0}\) and then applying this algebra.
For a general differentiable monoid object, the same statements hold with
addition replaced by the transpose derivative of the chosen multiplication-order
map.

\subsection{Scope and non-discrete output limitations}
\label{sec:appendix-additional-effects-scope}

The constructions above are deliberately discrete-output constructions. They
allow differentiable operation-input objects, such as continuous exception
labels or continuous writer values, because these are parameters
\(I_{\omega}\) and their cotangents are accounted for at the operation node.
They also allow arbitrary zero-fibre output indices \(\Delta O_{\omega}\),
provided the required set-indexed products and direct sums exist; in the
finitary CHAD theorem these output sets are finite.

They do not cover genuine operation-output objects with nonzero cotangent
fibres. For such an object \(\widehat O_{\omega}\), a continuation is a
morphism \(\widehat O_{\omega}\to B\), not just a function
\(P\widehat O_{\omega}\to PB\), so its base point can contain reverse-linear
fibre data. If syntax or handlers can inspect that data, the base of the lifted
construction need not be the intended set-level monad. The obstruction remains
even if genuine powers \(B^{\widehat O_{\omega}}\) are supplied.

\begin{lemma}[Genuine output fibres can destroy fibredness]
  \label{lem:appendix-nondiscrete-output-not-fibred}
  There are a one-operation raw theory, an output object with nonzero fibre
  \(\widehat O\in\Fam{\Vect[o]}\), and objects
  \(A,\widehat R\in\Fam{\Vect[o]}\) for which the following holds.
  If operation nodes are indexed by genuine continuations into the recursively
  generated syntax object, then the resulting direct raw-syntax construction is
  not fibred over the set-level raw monad. Moreover, assuming a genuine power
  \(\widehat R^{\widehat O}\) whose base points are morphisms
  \(\widehat O\to\widehat R\), there is a \(\Fam{\Vect[o]}\)-algebra
  for the raw operation which does not lie over any set-level unary-operation
  algebra, and whose induced image and coimage have bases that cannot be
  obtained from any quotient of the set-level unary-operation monad.
\end{lemma}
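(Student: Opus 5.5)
I would prove Lemma~\ref{lem:appendix-nondiscrete-output-not-fibred} with one explicit counterexample, keeping every object as small as possible so that all fibre computations reduce to finite linear algebra. Take the raw theory with a single unary-output operation $\omega : 1 \to T O$, with no parameter and no equations. Take $\widehat O := \TR = \tuple{\R,\const\R}$, the simplest output object with nonzero fibre, and $A := \tuple{1,\const 0}$, the terminal object with zero fibre. At the set level, the raw monad applied to $1$ is then the free monad on one $\R$-ary operation. Its elements are well-founded trees whose nodes carry a function $\R \to (\text{subtrees})$.

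For the first claim, I would form the direct raw-syntax object $S \cong A \sqcup S^{\widehat O}$, built either as a colimit of the chain $\emptyset \to A\sqcup\emptyset^{\widehat O} \to\cdots$ or by restricting to depth-one trees, which already suffices. Using the exponential formula quoted from \cite[Proposition~74]{lucatellinunesCHADExpressiveTotal2023}, the base of $S^{\widehat O}$ consists of pairs $(k,\kappa)$. Here $k:\R\to\base S$ is a function and $\kappa_o : \fibr S(k(o)) \to \R$ is a linear map. The point is that $\kappa$ lives in the base and not in the fibre. I would look at depth-one trees with leaves in $A$ and at depth-two trees whose inner nodes are constant continuations. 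Even when $\fibr A=0$, the second-level continuation maps into an object whose fibre over an inner node is $\bigoplus_o \R$ (the exponential fibre) and is nonzero. So $\kappa$ can take more than one value over a single set-level continuation. Hence the base projection $\base{S}\to \AST(1)$ is not injective. No isomorphism of the form $\base{S}\cong\AST(\base A)$ can therefore commute with the structure maps, and the construction is not fibred. The step I expect to be the main obstacle is checking that this projection respects the leaf, node and structure maps, so that the failure of injectivity really contradicts fibredness and is not just a change of presentation. I would handle this by exhibiting two distinct base points with the same image and showing that the unit/operation structure distinguishes them.

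For the second claim, take $\widehat R := \TR$ and use the assumed genuine power $\widehat R^{\widehat O}$, whose base points are morphisms $\tuple{g,\gamma}:\TR\to\TR$. Define an algebra $a:\widehat R^{\widehat O}\to\widehat R$ on bases by
\[
  a(g,\gamma) := g(0) + \gamma_0(1),
\]
so it reads the reverse-linear data at $0$. Choose the fibre part as its transposed derivative in the $g(0)$ component and zero in the data-dependent component; only well-definedness of the base map is needed here.

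Any set-level unary-operation algebra is a function $\R^\R\to\R$ of $g$ alone. I would take $g=0$ with $\gamma_0=0$, and $g=0$ with $\gamma_0=\id$. These give the same $g$ but the values $0$ and $1$, so $a$ lies over no set-level algebra. For the image and coimage, I would apply the (epi, regular mono) factorisation of $\Fam{\Vect[o]}$ to the induced map from raw syntax into $\widehat R$, using the description of the left and right classes in the lemma for $\Fam{\Vect[o]}$. The middle base then separates terms according to $\gamma$-data. Any quotient of the set-level monad is a quotient of $\AST(\base{-})$, which has no access to $\gamma$, so it cannot produce this middle object. I would finish with a cardinality or separation count comparing the two.
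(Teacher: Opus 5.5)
There is a genuine gap. Your choice $A=\tuple{1,\const 0}$, combined with a miscomputed exponential fibre, makes the counterexample collapse. By the exponential formula you cite, the fibre of $B^{\widehat O}$ at a base point $(k,\kappa)$ is $\bigoplus_{o}\fibr{B}(k(o))$. These are the fibres of the \emph{codomain} at the values of the continuation, not $\bigoplus_o\R$; the fibre of $\widehat O$ enters only the base, as the target of $\kappa_o$.

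Hence, by induction on the construction of $S\cong A\sqcup S^{\widehat O}$, if $\fibr{A}=0$ then every fibre of $S$ is $0$. Every $\kappa_o$ is then the zero map out of $0$, and $\base{S}\to\AST(1)$ is a bijection. For your data the direct construction \emph{is} fibred, so it does not witness the first claim.

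The same defect breaks the image/coimage part. For any valuation $A\to\widehat R$ and any syntax term, the composite continuation $\widehat O\to S\to\widehat R$ has reverse fibre map factoring through a zero fibre of $S$. So the $\gamma$ handed to your algebra is always $0$. On everything generated by syntax, $a$ acts as $g\mapsto g(0)$, which is a set-level algebra, and the induced image and coimage have ordinary set-level quotients as bases. Your check that $a$ does not factor through $g$ is correct for the first half of the second claim. But it concerns $a$ on all of $\widehat R^{\widehat O}$, not on the part reached by syntax, which is what the image/coimage claim needs.

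The missing idea is that the reverse-linear data must be supplied by the variables and by the valuation. The paper takes $A=(\{*\},\R)$, so morphisms $k_\lambda:\widehat O\to A$ are indexed by scalars $\lambda\in\R$. They give pairwise distinct depth-one terms $t_\lambda=\omega(\eta_A\circ k_\lambda)$ over a single set-level term. It then uses valuations $\rho_\mu:A\to\widehat R$ whose reverse fibre map is multiplication by $\mu$. Under $h(r,\ell)=\ell(1)$ the term $t_\lambda$ then denotes $\lambda\mu$, and $\mu=1$ separates all $\lambda$.

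The paper also takes $\widehat O=(\{*\},\R)$ rather than $\TR$. The set-level shadow is then genuinely the unary-operation monad $SX\cong\N\times X$ named in the statement, and the final count is countable ($S(PA)\cong\N$) against continuum. With $\widehat O=\TR$ the shadow is a free $\R$-ary monad, already uncountable, so the cardinality count you defer to has no evident form. Even after fixing $A$, you would need a separation argument instead: two terms with the same set-level shadow but different denotations.
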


\begin{proof}
  Work one-sorted and put
  \[
    \widehat O=(\{*\},\R),
    \qquad
    A=(\{*\},\R).
  \]
  Consider the raw theory with one operation \(\omega\), no parameters, output
  object \(\widehat O\), and no equations. Its set-level shadow is the free
  unary-operation theory, since \(P\widehat O=\{*\}\). Hence the set-level raw
  monad is
  \[
    SX\cong \N\times X,
  \]
  and, over the one-point set \(PA=\{*\}\), there is exactly one depth-one
  set-level term.

  In \(\Fam{\Vect[o]}\), a morphism
  \(k:\widehat O\to A\) consists of the unique base map and a reverse fibre map
  \[
    A_*=\R\longrightarrow \widehat O_*=\R.
  \]
  Thus such variable continuations are parametrised by scalars
  \(\lambda\in\R\); write \(k_{\lambda}:\widehat O\to A\) for the morphism whose
  reverse fibre map is multiplication by \(\lambda\). Let \(\widehat S\) denote
  the direct raw construction obtained by replacing the discrete continuation
  object \(B^{\Delta(P\widehat O)}\) by the genuine power \(B^{\widehat O}\).
  The operation constructor in \(\widehat S A\) is therefore indexed by
  morphisms \(\widehat O\to\widehat S A\). Composing the \(k_{\lambda}\)'s with
  the unit \(\eta_A:A\to\widehat S A\) gives distinct morphisms
  \(\widehat O\to\widehat S A\), because the variable summand carries the
  original fibre and \(\eta_A\) has the identity reverse fibre map there. Hence
  we obtain distinct depth-one base terms
  \[
    t_{\lambda}=\omega(\eta_A\circ k_{\lambda}),
    \qquad \lambda\in\R,
  \]
  because the genuine continuation morphism, including its reverse fibre map, is
  part of the operation-node data and there are no equations identifying
  different \(k_{\lambda}\)'s.
  Consequently \(P\widehat S A\) has continuum many depth-one operation nodes,
  and is in particular uncountable, whereas \(S(PA)\cong\N\) is countable. Thus
  the direct raw construction cannot be fibred over the set-level raw monad.

  The same issue appears for image and coimage semantics once handlers are
  allowed to observe the fibre component of a genuine continuation. Assume a
  genuine power object \(\widehat R^{\widehat O}\) is available, with base
  points the morphisms \(\widehat O\to\widehat R\). Let
  \[
    \widehat R=(\R,\R)
  \]
  be the constant one-dimensional family over the set \(\R\). A base point of
  \(\widehat R^{\widehat O}\) is then a morphism
  \(\widehat O\to\widehat R\), equivalently a pair
  \[
    (r,\ell),
    \qquad
    r\in\R,
    \quad
    \ell:\R\to\R \text{ linear},
  \]
  where \(r\) is the base value and \(\ell\) is the reverse fibre map. Define a
  lifted algebra map
  \[
    h:\widehat R^{\widehat O}\to\widehat R
  \]
  for \(\omega\) by the base function
  \[
    (r,\ell)\longmapsto \ell(1),
  \]
  and choose the zero reverse fibre map at every base point. This is a morphism
  in \(\Fam{\Vect[o]}\), since every fibre is a vector space. Its base map
  depends on the reverse fibre map \(\ell\), so it does not factor through the
  projection to the set-level branch value \(r\); hence it is not a lift of any
  set-level unary-operation algebra. Because the raw theory imposes no
  equations, this operation structure nevertheless extends uniquely to an
  algebra for the free raw-syntax monad in \(\Fam{\Vect[o]}\).

  For each \(\mu\in\R\), let
  \(\rho_{\mu}:A\to\widehat R\) be the valuation with base value \(0\) and
  reverse fibre map multiplication by \(\mu\). The composite
  \[
    \widehat O
      \xrightarrow{k_{\lambda}}
    A
      \xrightarrow{\rho_{\mu}}
    \widehat R
  \]
  has base value \(0\) and reverse fibre map multiplication by
  \(\lambda\mu\), where \(\lambda\mu\) denotes the multiplication map by
  that scalar. Therefore the base denotation of the depth-one term
  \(t_{\lambda}=\omega(\eta_A\circ k_{\lambda})\) under the valuation
  \(\rho_{\mu}\) is
  \[
    (P h)(0,\lambda\mu)=\lambda\mu .
  \]
  Taking \(\mu=1\) separates all \(\lambda\). Hence image and coimage contain
  continuum many depth-one denotations, respectively kernel-pair classes. They
  cannot be bases of lifts of any quotient of the set-level unary-operation
  semantics over \(PA\), since every quotient of \(S(PA)\cong\N\) is countable
  and cannot create additional depth-one base points.

  Thus nonzero output fibres can make continuation-fibre data visible on bases,
  exactly what zero-fibre output indices prevent. Genuine nonzero-fibre outputs
  therefore need extra structure or equations forcing syntax and handlers to
  quotient or ignore that data.
\end{proof}


\end{document}